\newcommand{\bra}[1]{\langle#1|}
\newcommand{\ket}[1]{|#1\rangle}
\newcommand{\tr}{\operatorname{Tr}}
\newcommand{\qed}{\openbox}
\title{A MENAGERIE OF SYMMETRY TESTING QUANTUM ALGORITHMS}
\author{Margarite Lynn LaBorde}
\begin{document}

\providecommand{\U}[1]{\protect\rule{.1in}{.1in}}
\newtheorem{theorem}{Theorem}
\newtheorem{acknowledgement}[theorem]{Acknowledgement}
\newtheorem{Algorithm}{Algorithm}
\newtheorem{axiom}[theorem]{Axiom}
\newtheorem{claim}[theorem]{Claim}
\newtheorem{conclusion}[theorem]{Conclusion}
\newtheorem{condition}[theorem]{Condition}
\newtheorem{conjecture}[theorem]{Conjecture}
\newtheorem{corollary}[theorem]{Corollary}
\newtheorem{criterion}[theorem]{Criterion}
\newtheorem{definition}{Definition}
\newtheorem{example}{Example}
\newtheorem{exercise}[theorem]{Exercise}
\newtheorem{lemma}[theorem]{Lemma}
\newtheorem{notation}[theorem]{Notation}
\newtheorem{problem}[theorem]{Problem}
\newtheorem{proposition}{Proposition}
\newtheorem{remark}{Remark}
\newtheorem{solution}[theorem]{Solution}
\newtheorem{summary}[theorem]{Summary}
\newenvironment{proof}[1][Proof]{\noindent\textbf{#1.} }{\ \rule{0.5em}{0.5em}}
\numberwithin{theorem}{section}
\numberwithin{proposition}{section}
\numberwithin{definition}{section}
\numberwithin{example}{section}

\frontmatter

\maketitle

\pagebreak

\doublespacing
\vspace{0.55ex}
To my parents, Rebecca and Raymond LaBorde. My greatest privilege has been being your child.
\vspace{12pt}
\pagebreak

\chapter*{Acknowledgments}
\doublespacing
\vspace{0.55ex}
I would like to take this opportunity to thank everyone who supported me throughout my experiences as a graduate student. I would like to thank Dr. Jonathan Dowling for his mentorship and for welcoming me into quantum physics. He was a great mentor and friend and will be greatly missed. I would especially like to thank Dr. Mark Wilde for his mentorship and guidance. He stepped in to offer support during a tumultuous time---during a pandemic and the loss of an advisor---and I am extremely grateful to have been his student. I would also like to thank Dr. Stephen Shipman, Dr. Ivan Agullo, Dr. Mette Gaarde, and Dr. Feng Chen for taking the time to be a part of my graduate committee.

     Additionally, I would like to acknowledge my husband, Zachary Bradshaw. Many restaurant napkins and whiteboards have seen our discussions, and there isn't an idea in math or physics that I find fascinating that isn't immediately shared with him. He is not only my best friend but my favorite mathematician. 
     
     Special thanks go to my fellow students of the Quantum Science and Technologies group. In particular, I would like to thank Soorya Rethinasamy, Aliza Siddiqui, and Roy Pace for their friendship and conversation. Discussing physics and research with them made being at LSU all the more fulfilling.
     
     All of these people were instrumental in making my graduate studies enjoyable and fruitful. It is through their support that I was able to explore so many opportunities and choose to pursue my doctoral research in physics.
     
     Funding for this work was provided by the Department of Defense SMART Scholarship. Funding for travel to conferences and research visits was provided by

\addcontentsline{toc}{chapter}{\hspace{0em} {Acknowledgements} \vspace{12pt}}
\pagebreak

\tableofcontents

\chapter*{Abstract}
\addcontentsline{toc}{chapter}{\hspace{0em} Abstract}

\vspace{0.55ex}
\doublespacing
     In Chapter 1, we establish the mathematical background used throughout this thesis. We review concepts from group and representation theory. We further establish fundamental concepts from quantum information. This will allow us to then define the different notions of symmetry necessary in the following chapters.

     In Chapter 2, we investigate Hamiltonian symmetries. We propose quantum algorithms capable of testing whether a Hamiltonian exhibits symmetry with respect to a group. Furthermore, we show that this algorithm is that this algorithm is DQC1-Complete. Finally, we execute one of our symmetry-testing algorithms on existing quantum computers for simple examples.
     
     In Chapter 3, we discuss tests of symmetry for quantum states. For the case of testing Bose symmetry of a state, there is a simple and efficient quantum algorithm, while the tests for other kinds of symmetry rely on the aid of a quantum prover. We prove that the acceptance probability of each algorithm is equal to the maximum symmetric fidelity of the state being tested Finally, we establish various generalizations of the resource theory of asymmetry, with the upshot being that the acceptance probabilities of the algorithms are resource monotones.
     
     In Chapter 4, we begin by showing that the analytical form of the acceptance probability of such a test is given by the cycle index polynomial of the symmetric group $S_k$. We derive a family of quantum separability tests, each of which is generated by a finite group; for all such algorithms, we show that the acceptance probability is determined by the cycle index polynomial of the group. Finally, we produce and analyze explicit circuit constructions for these tests, showing that the tests corresponding to the symmetric and cyclic groups can be executed with $O(k^2)$ and $O(k\log(k))$ controlled-SWAP gates, respectively, where $k$ is the number of copies of the state being tested.

     In Chapter 5, we include additional results not previously published; in particular, we give a test for symmetry of a quantum state using density matrix exponentiation, a further result of Hamiltonian symmetry measurements when using Abelian groups, and an alternate Hamiltonian symmetry test construction for a block-encoded Hamiltonian.

\mainmatter

\chapter{Mathematical Background and Introduction to Quantum Symmetry}\label{ch:intro}
\doublespacing
\section{Introduction}

Before we begin discussing in any detail the finer aspects of symmetry-testing algorithms on quantum computers, we must necessarily establish for ourselves the language and tools we will employ to do so. The work described in this thesis rests primarily on three pillars---a basic understanding of group theory with an accompanying idea of group representations, a strong familiarity with quantum information, and concepts of symmetry with regards to both of the former. 

We begin in Section~\ref{sec:mathbg} by introducing only so much math as an uninitiated reader may need to follow along with the more abstract notions in this text. This includes an elementary review of groups, group representations, and a smattering of examples of common groups. Someone well-versed with group theory may very easily skip this section and expect no repercussions in their understanding of the work. It is presented in a spirit of compassion to those with an interest in quantum symmetry without the benefit of some \textit{a priori} knowledge of abstract algebra.

Section~\ref{sec:qibg} continues with introducing concepts of quantum information that will be integral to the comprehension of this thesis. This section considers common terms and concepts such as density matrices, quantum channels, and common norms. Additionally, we give some derivations for important lemmas that will open the gateway to future proofs. 

In Section~\ref{sec:symdef}, we come to the primary function of this chapter, which is to introduce notions of symmetry. We concern ourselves with the symmetries of states, Hamiltonians, and channels. The definitions herein are remarkably similar, but the nuances are great. In this section, we endeavor to delineate them here in appropriate detail as a reference for the rest of the thesis. 
     
\section{Mathematical Background}   \label{sec:mathbg}  
\subsection{A Gentle Overview of Group Theory} \label{sec:grouptheory}
For each section, we will provide a text which guides the knowledge and information reviewed therein. Our inaugural effort is Nicholson's \textit{Introduction to Abstract Algebra} \cite{nicholson2012introduction}, which is a standard undergraduate text on the matter. Here, the focus remains on finite, discrete groups as that is the primary concern of this body of work. (As an aside, continuous, compact groups are also within the capabilities of our future effort, as per the result of \cite{girard2021twirling}.) This section endeavors to introduce elementary aspects of group theory while introducing and defining groups of interest for this work as we proceed.

Let us start by defining a group. A set $G$ equipped with a binary operation, $*$, is called a group if
\begin{enumerate}
    \item $\forall g_1,\, g_2 \in G$, we have $g_1*g_2 \in G$. 
    \item The operation $*$ is associative, i.e., $\forall g_1,\, g_2,\, g_3 \in G$, $g_1*(g_2*g_3)=(g_1*g_2)*g_3$.
    \item There exists a unity element $\epsilon$ such that $\forall g \in G$, $g*\epsilon=\epsilon*g =g$.
    \item $\forall g \in G$, there exists an inverse $g^{-1} \in G$ such that $g*g^{-1}=\epsilon$.
\end{enumerate}
For finite groups, $|G|$ denotes the order of the group, which is the number of unique elements of the group. Additionally, note that we often suppress the binary operation ``$*$" in favor of using juxtaposition. 

This list of axioms encapsulates a powerful but simple mathematical object, best studied via examples. An illustrative but simple example is the integers modulo $n$, where $n$ is some integer. This group is denoted $\mathbb{Z}_n$ and comes equipped with addition as its binary operator. We already know that addition is associative and has an identity, $\epsilon=0$. For inverses, consider
\begin{equation}
    z + z^{-1} = 0\, \textrm{mod}\, n \cong n \,.
\end{equation}
This equation shows that any integer $z$ in $\mathbb{Z}_n$ has an inverse $n-z$ also in $\mathbb{Z}_n$. Elementary investigations will also demonstrate that this group is closed under this modulo addition. We can convince ourselves that many other sets form groups under addition in just such a way, be they the reals, complex numbers, etc., but $\mathbb{Z}_n$ is a nice, finite plaything to illustrate these traits. It will also reappear quite often as we progress throughout this work.

Those familiar with quantum mechanics (or mathematics) likely won't be surprised to learn that not all groups are Abelian (or commutative, if you prefer the term), but we caution the reader anyway. For an intuitive picture of this, consider the dihedral group $D_n$ which has order $2n$. Picture a regular polygon of $n$ sides; $D_n$ is defined to be the set of all transformations that leave the shape invariant. For instance, an equilateral triangle is invariant under flips and rotations of $1/3$ the way around the shape. Obviously, combinations of these actions will also leave the shape invariant. Yet, if we identify each vertex of a triangle, it becomes immediately clear that a flip followed by a rotation is \textit{not} the same as a rotation followed by a flip. Figure~\ref{fig:triangle} visualizes this discrepancy.
 
\begin{figure}[h!]
\begin{center}
\includegraphics[width=\columnwidth]{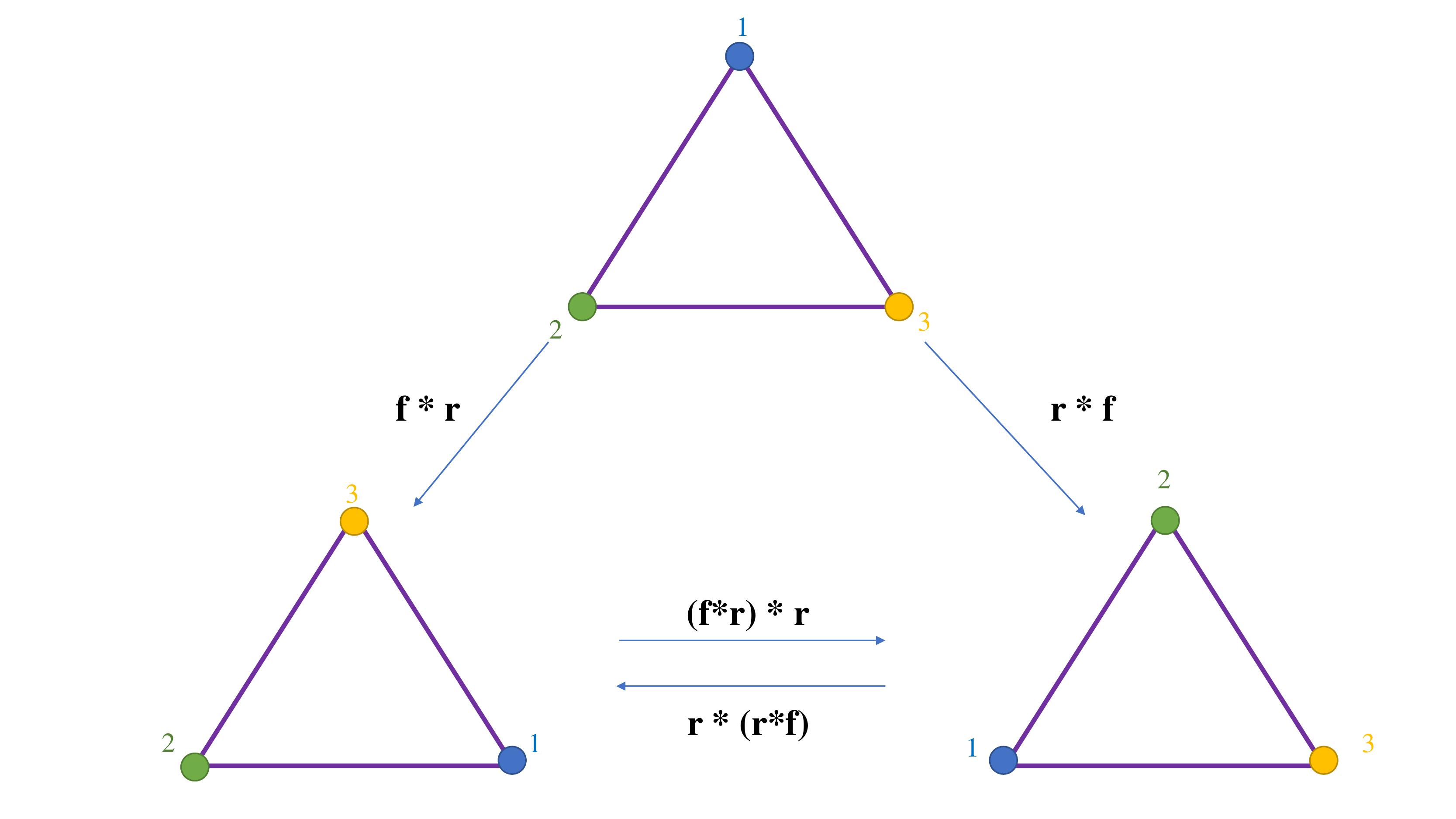} 
\caption{\label{fig:triangle} A cartoon demonstrating that the dihedral group $D_3$ is non-Abelian. Colored circles identify the three vertices, and it becomes visually obvious that the rotation and flip operations do not commute.
}
\end{center}
\end{figure}

Any definition of the groups should encapsulate these observed rules. First, identify the rotations with the symbol $r$ and flips with $f$. Then define the group as
\begin{equation}\label{eq:d3}
    D_3 \coloneqq \langle f,r | f^2 = r^3 = \epsilon\, \textrm{and } rf = f r^2 \rangle \, .
\end{equation}
A quick sanity check indicates that two flips or three rotations give us back the original triangle exactly as expected. Figure~\ref{fig:triangle} demonstrates the last rule regarding the commutation of flips and rotations. Often, non-Abelian groups like the dihedral group will have relationships like this explicitly specified.

Now, given some group $G$, we might also be interested in any subgroup contained within it. We call a subset $H$ a \textit{subgroup} of $G$ if it is also a group under the same operation as $G$. Every group has at least two subgroups---the identity and the entirety of $G$. Since these cases are ubiquitous, they are typically referred to as \textit{trivial} subgroups. Any other valid subgroup is a \textit{proper} subgroup. 

Subgroups feature prominently in the set of permutations on $k$ letters, denoted $S_k$.  Elements of this group take the form $(1\ 2\ \cdots\ k)$ (this as an example; not all permutations involve all $k$ elements) where they are read from left to right within the parentheses but right-to-left without. For example, consider $S_4$ and its elements. The element $(1\ 2)$ says to permute the first object with the second. Now, consider the product $(1\ 2)(2\ 3)= (1\ 2\ 3)$. 

So why is this group interesting? Well, Cayley's theorem states that all finite groups are isomorphic to a subgroup of a permutation group\cite{jordan1870traite}. That is a hefty assertion! However, the totality of Cayley's theorem is even stronger than that. It further states that any group is a subset of a symmetric group even for infinite groups, and it can be given a more general definition as well in terms of cosets. For our needs, the restricted case of finite groups and the standard symmetric group of $S_k$ will suit just fine. 

Consider the example of the alternating group of degree $k$, or $A_k$. This is the set of all even permutations of $S_k$. We can employ the subgroup test \cite[Section 2.3]{nicholson2012introduction} to ensure that $A_k$ is indeed a subgroup. First, it contains the identity permutation $(\ )$. Second, the product of two even permutations $g_1$ and $g_2$ is also even; therefore, $g_1*g_2 \in A_k$. Third, if some permutation $g$ is even, its inverse is also even. This last fact can be verified by writing $g$ as a product of transpositions. Then $g^{-1}$ is just those same transpositions in the opposite order; therefore, if there are an even number of transpositions in $g$, there is also an even number in $g^{-1}$ and $g^{-1}\in A_k$. These three facts tell us $A_k$ is a subgroup of $S_k$. Furthermore, for $k \geq 2$, we know $A_k$ is a proper subgroup because $\forall k \geq 2$, the symmetric group $S_k$ has at least one odd transposition.

Throughout, we have been defining groups using intuitive concepts or words. A diligent student may be discontent with this approach, as it lacks mathematical elegance and succinctness. This is where generators make their appearance. The generating set of a group $G$ is a subset of elements by which every other group element can be produced as a combination of these elements, called generators, and their inverses. 

In \eqref{eq:d3}, we actually made use of the generating set to define $D_3$, but let's consider a simpler example as well. Define the cyclic group of order $k$ to be the groups generated by cyclic shifts of $k$ elements. For example, consider the group $C_4 =\{( \ ),\, (1\ 2\ 3\ 4),\, (1\ 3)(2\ 4),\, (1\ 4\ 3\ 2)  \}$. The elements of this group read as shifting four things by nothing, one spot, two spots, and three spots respectively. However, an equivalent interpretation is to simply apply the element $(1\ 2\ 3\ 4)$ once, twice, thrice, or four times for the identity. Therefore, $(1\ 2\ 3\ 4)$ generates $C_4$. Mathematically, this is denoted by
\begin{equation}
    C_4 = \langle (1\ 2\ 3\ 4) \rangle \,,
\end{equation}
and in general we have
\begin{equation}
    C_k = \langle (1\ 2\ ... \, k) \rangle \, .
\end{equation}
Generating sets are not necessarily unique in the sense that one group may be generated equivalently by different generating sets. This can be understood through some rudimentary representation theory, which we defer until the next section.

There is another important property of groups we should define, which is the notion of a normal subgroup. This we may not use, but it is good to know in general. To do so, we must first describe what it means for a group to be self-conjugate. Suppose that $H$ is a subgroup of $G$. Then if $g\in G$, we can define another subgroup called the \textit{conjugate} of $H$ in $G$ by
\begin{equation}
    gHg^{-1} = \{ ghg^{-1} | h \in H \}\,.
\end{equation}
Of course, since the identity element is always in $G$, every group is a conjugate of itself. A group $H$ is only called self-conjugate or \textit{normal} if it is its only conjugate in $G$. If a group has no proper, normal subgroups, it is called \textit{simple}.

Alas, we come to the final aspect of basic group theory with which we will concern ourselves in this section. As with many fields in mathematics, group theory must be able to address a fundamental question: when is one group different from another? To this purpose, we will discuss the group homomorphism and isomorphism.

A group homomorphism is a map $\phi : G \to H$ from one group $G$ to another $H$ such that the group action is preserved. That is, $\forall g_1,g_2 \in  G$,
\begin{equation}\label{eq:grhomo}
    \phi(g_1 * g_2) = \phi(g_1) * \phi (g_2) \,.
\end{equation}
For $\phi$ to be a group homomorphism it must preserve the identity, inverses, and powers. All these criteria result from the above statement. If $\phi$ is additionally one-to-one and onto (thus a bijection) then $\phi$ is called an isomorphism. If two groups $G$ and $H$ are isomorphic, denoted $G \cong H$, then they are the same up to notation. 

To exemplify this, let's look at two previous examples---$C_k$ and $Z_n$. We will show that $C_k \cong Z_n$ when $k=n$. First, recall that $C_k$ is generated by $(1\ 2\ ...\ k)$. Then each element of $C_k$ is just this element multiplied by itself some $m$ times. Define the homomorphism $\phi$ such that $\phi ((1\ 2\ ...\ k)) = 1 $. Then by \eqref{eq:grhomo}, 
\begin{align}
    \phi ((1\ 2\ ...\ k)^m) &=\phi ((1\ 2\ ...\ k))+...+\phi ((1\ 2\ ...\ k)) = m \, .
\end{align}
Thus we have identified each element of $C_k$ with an element of $Z_k$. Note that the order of both groups is $k$, so  we can restrict $m$ such that $m \in \{0,1,...,k-1\}=Z_k$. Clearly, this map is one-to-one and onto by construction. Thus $\phi$ is a group isomorphism and $C_k \cong Z_k$.

With this, we close our review of group theory in order to progress necessarily to representations of groups. Know that group theory as a topic in mathematics is rich and well worth future study beyond the elementary concepts reviewed here. We merely hope that readers completely unfamiliar with its jargon will now be equipped to follow along with relevant discussions pertinent to the results presented in future sections. 

\subsection{Just Enough Representation Theory to be Dangerous}\label{sec:representin}

Groups are all well and good, but we need to know how to apply them to physical quantities to say anything meaningful. After all, it may not be immediately obvious how a quantum state may exhibit dihedral symmetry if we have only defined $D_n$ geometrically. Representation theory saves the day here. Exactly as might be expected, representation theory lets us \textit{represent} a group on a mathematical space that we care about. For this work, we need only consider finite-dimensional group representations. We offer as further reference the books \textit{Representation Theory of Finite Groups} by Benjamin Steinberg \cite[Chapters~2-4]{steinberg2009representation} and \textit{Representing finite groups: a semisimple introduction} by Ambar Sengupta  \cite{sengupta2011representing} for those interested in learning representation theory beyond the context given here.

We note, as a preface, that representations are usually defined in reference to a vector space (typically $V$ or $W$ in this context) over a field ($F$). In quantum computing and indeed quantum physics at large, we almost always consider $F$ to be the complex numbers and $V$ our Hilbert space. This is application dependent, however, and we will suppress such assumptions for now. 

That being said, let's begin. A \textit{representation} $\phi$ of a group $G$ on a vector space $V$ is defined to be a group homomorphism that preserves the action of the group. That is to say, $\phi$ associates to every group element a linear map $\phi :  G \to \text{GL} (V)$ such that
\begin{align*}
    \phi (g) \phi (h) &= \phi (g h) \ \ \  \textrm{and}\\
    \phi(\epsilon) &= \mathbb{I} \,,
\end{align*}   
$\forall g,h \in G$ and $\epsilon$ the identity element. (As an aside, we have chosen to use $\text{GL}(V)$ instead of $\textrm{End}_{F}(V)$ but the general idea remains the same. The map $\phi$ sends elements of the group $G$ to the set of automorphisms on the vector space $V$. In the context of quantum mechanics, matrix representations will be of utmost importance, and this notation fits nicely with that need.) Indeed, these conditions play well with the definition of group homomorphism introduced in the previous section.

We say that a representation is faithful if $\phi$ is a group isomorphism. This can be succinctly enforced by the rule that $\phi(g) \neq \mathbb{I}$ unless $g = \epsilon$. If the identity uniquely maps to the identity and the action of the group is obeyed, it can be quickly determined that $\phi (G)$ is isomorphic to $G$.

What if this is not the case? Suppose we have a map $\phi_t : G \to 1$ where every element of $G$ is mapped to the number 1. Is this even a representation? Well, yes, it obeys both of the above rules in our definition of a representation, although it certainly isn't a faithful one. This is called the trivial representation, and it is literally a textbook example of not being faithful.

There are two other example representations which it will behoove us to cover. These are the standard representation and matrix representations. Consider first the latter. Suppose that $V = F^n$; then if $\phi : G \to \text{GL}_n (F)$, $\phi$ is called a matrix representation.

Now suppose we have a matrix representation of the group $S_n$ such that $\phi : S_n \to GL_n (\mathbb{C})$. Intuitively, such a representation can be obtained by letting the permutations in $S_n$ permute the basis vectors in $\mathbb{C}^n$. Thus the standard representation of a permutation group is defined to be 
\begin{equation}
    \phi(\pi) \coloneqq [e_{\pi(1)} \ ... \ e_{\pi(n)}]\, ,
\end{equation}
where $e_i$ is the $i$-th basis vector and $\pi \in S_n$ is some permutation. Note that since every permutation group is a subgroup of $S_n$ for some $n$, this defines the standard representation for all of them simultaneously. As an example, consider the standard representation of $( 1\ 3)$ acting on $V = \mathbb{C}^3$:
\begin{align}
    \phi((1\ 3)) = 
    \begin{pmatrix} 
    0 & 0 & 1 \\
    0 & 1 & 0 \\
    1 & 0 & 0\\
    \end{pmatrix}\, ,
\end{align}
where it is clear the first and third basis vectors have been swapped.

Again, we ask ourselves that ever-pertinent question: how do we know when two representations are equivalent to each other? Well, two representations $\phi: G \to \text{GL}(V_1)$ and $\psi: G \to \text{GL}(V_2)$ are considered equivalent if there exists an isomorphism or equivalence $T: V_1 \to V_2$ such that $T$ is a linear map satisfying
\begin{equation}
    \psi (g) \circ T = T \circ \phi (g)\, ,
\end{equation}
for all $g \in G$.

Now that we have a way to identify equivalent representations, we can observe a particularly useful fact; all finite dimensional representations are equivalent to matrix representations. To see this, \cite{sengupta2011representing} notes that whenever a basis in $V$ is chosen, if $\textrm{dim}(V) = n < \infty$, then $\phi(g)$ is encoded in the following matrix:
\begin{align}
    \begin{pmatrix}
    \phi (g)_{11} & \phi (g)_{12} & ... & \phi (g)_{1n} \\
    \phi (g)_{21} & \phi (g)_{22} & ... & \phi (g)_{2n} \\
    \vdots & \vdots  & \vdots & \vdots \\
    \phi (g)_{n1} & \phi (g)_{n2} & ... & \phi (g)_{nn} \\
    \end{pmatrix}\, .
\end{align}
This matrix essentially encodes an equivalence between a representation $\phi$ and a matrix group once a basis is specified. Thus we are guaranteed to have a matrix representation for any finite group. If $F$ is the complex numbers and $V$ is a Hilbert space, as is the case in quantum physics, then we can further guarantee that a unitary representation of $G$ exists, which we typically denote $\{U(g)\}_{g\in G}$. 

As an aside, guaranteed unitary representations do not necessarily mean these representations will be useful for quantum computing purposes. Often, we augment these with the requirement that they are \textit{projective} representations, which means they project into the set space of unitary operators modded out by the identity. In other words, projective unitary representations enforce the equivalence
\begin{equation}
    U \cong \lambda U\, ,
\end{equation}
where $\lambda \in \mathbb{C}$. This reflects the irrelevance of global phases in quantum mechanics; we do not consider $e^{i \phi}\ket{\psi}$ to be different from $\ket{\psi}$, and so this should be reflected in our operators.

We move now to concepts that, while not immediately apparent in later chapters, underpin the intuition behind choosing certain representations over others. It may well be surmised that, in the course of examining group symmetries on quantum computers, we will take a group $G$ and a unitary matrix representation of it to act on our space. However, there are many choices of representation that each have their own benefit. When examples of representations are proffered in future chapters, these considerations inevitably played a role in why and how these examples were formed: Is it faithful? What are the invariant subspaces? Is it irreducible---if not, what are the irreducible representations? 

We have already discussed faithfulness, so let us begin with invariant subspaces. The vector space $V$ on which the elements $\phi(g)$ act is called the representation space of $\phi$. (Sometimes it is simply called the representation $V$, but this terminology can quickly become confusing.) If there exists a subspace $W \subseteq V$ such that, $\forall g \in G$,
\begin{equation*}
    \phi (g) W = W\,,
\end{equation*}
then $W$ is called an invariant subspace of $V$. If the only invariant subspaces of $V$ are itself and 0, then the representation $\phi$ of $G$ on $V$ is said to be \textit{irreducible} as long as $V \neq {0}$. As a side note, every one-dimensional representation is an irreducible representation. 

Invariant subspaces give rise to another representation---the subrepresentation. A subrepresentation $\phi |_{W} : G \to \text{GL}(W)$ formed by the restriction of the representation $\phi$ to $W$ is defined as
\begin{equation}
    (\phi |_{W})(w) \coloneqq \phi(w)\, ,
\end{equation}
for all elements $w \in W$. In the same manner that $V$ is sometimes called the representation, $W$ is occasionally referenced as the subrepresentation itself. Unfortunately, this confusing terminology cannot be abolished once established in mathematical texts, so it pays to be aware of it. 

It turns out that every non-trivial representation of a finite group is either irreducible or decomposable, where decomposable means it can be written as a direct sum of two or more proper, nontrivial subrepresentations. Note that if $\phi_1$ and $\phi_2$ are representations of $G$ on $V_1$ and $V_2$ respectively, then the direct sum $\phi_1 \oplus \phi_2$ is a representation of $G$ on $V_1 \oplus V_2$, and similarly for tensor products. 

Here are the fun statements for which all of this machinery has been building: every finite-dimensional representation on a Hilbert space $V$ is the direct sum of irreducible representations. Furthermore, for unitary representations on Hilbert spaces relevant to quantum physics, all irreducible representations of $G$ are one dimensional if and only if $G$ is Abelian. Identifying the relevant decomposition proves greatly useful for investigations into symmetry in the Hilbert space in question. Furthermore, the intuition used to generate examples in future chapters heavily relies on these facts. 

Throughout this section, we have hopped from concept to concept haphazardly to cover the bare essentials required to comprehend the remainder of the text. To be fair, this review was prefaced to be dangerous in nature and should not be considered the ultimate authority on representation theory. Far from it, we have merely hoped to convey succinctly concepts that are not necessarily part of a standard physics curriculum but will recur within this text nonetheless.

\section{Quantum Information and Computation Background}\label{sec:qibg}
This work is written, first and foremost, as a thesis of work in the field of physics. As such, we take for granted a certain modicum of knowledge in any potential readers. A review must always start somewhere, and while defining the notion of a quantum mechanical state or ideas of entanglement and superposition would prove for a grand and thorough background, such explanations would necessarily extend well beyond our scope of interest. Therefore, in this section we review only so much as a typical physics student might require to acquaint themselves with topics in quantum information and computation. Nonetheless, we shall find no lack of topics to discuss.

In Section~\ref{sec:qcbg}, we undertake the Herculean task of reviewing all relevant terms in quantum computing. We begin at a relatively elementary level by defining basic building blocks such as qubits, density matrices, and quantum channels. Additionally, we go over how to read a quantum circuit intuitively and give reference to how they can be viewed as tensor networks. From this point, we advance to relevant norms and lemmas used throughout the literature on quantum information and computing.

In Section~\ref{sec:ctbg}, we take a moment to review a smattering of complexity theory, only so much as to contextualize later results. This will include a cursory explanation of how complexity classes contextualize computational capability, as well as introduce some relevant complexity classes. In particular, we define as background P versus NP, QIP(n), and DQC1. If all of those acronyms prove impenetrable now, fear not, for they will be contextualized in what follows.

\subsection{Necessarily Thorough Review of Relevant Terms}\label{sec:qcbg}
\subsubsection{1.3.1.1. Qubits, Quantum States, and Quantum Gates}
To begin, we will review the primary building blocks of quantum computing. For a thorough education in quantum computing for someone totally unfamiliar with the field, the textbook by Nelson and Chuang \cite{nielsenchuang} is a fantastic and standard reference. However, we can quickly recount some highlights here for any readers less acquainted with the material. (Quickly being a subjective term, we suggest a more experienced reader may well skip this subsection entirely.)

The natural starting point for quantum computation is the qubit or ``quantum bit". Qubits are the most basic method for encoding quantum states into a form suitable for computation, analogous to the usual bit from classical computing. Consider a pure state, $\ket{\psi}$, a vector that lives in some two-dimensional Hilbert space $\mathcal{H}_A$. (We will often write $\ket{\psi}_A$ to indicate explicitly which Hilbert space our state is contained in.) Then a qubit represents this state as
\begin{equation}
    \ket{\psi}= \alpha \ket{0} + \beta \ket{1} \, ,
\end{equation}
where $|\alpha|^2 + |\beta|^2 =1$, and $\ket{0}= \begin{bmatrix} 1\\ 0 \end{bmatrix}$ and $\ket{1}= \begin{bmatrix} 0\\ 1 \end{bmatrix}$ are referred to as the computational basis vectors. Unless otherwise indicated, we will always use the computational basis to express our quantum states.

The actual physical meaning behind the computational basis vectors is heavily dependent on computational architecture. For instance, in an optical scenario, $\ket{0}$ could correspond to ``no photons" and $\ket{1}$ to ``one photon" in a clear analogy to how classical bits are often representative of a presence or lack of voltage. However, quantum computers vary greatly in their current implementations, so architecture is not considered here. Instead, it is helpful to view a qubit as a vector on the Bloch sphere, as in Figure~\ref{fig:bloch}. For this visualization, we take $\alpha = \cos{\theta/2}$ and $\beta = e^{i \phi}\sin{\theta/2}$.

\begin{figure}[h!]
\begin{center}
\includegraphics[width=3.5 in]{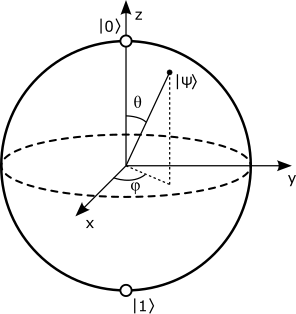} 
\caption{\label{fig:bloch} Illustration of a qubit as a vector on the Bloch Sphere. Image credit: Smite-Meister - Own work, CC BY-SA 3.0, \href{https://commons.wikimedia.org/w/index.php?curid=5829358}{url link} 
}
\end{center}
\end{figure}

Now, given some qubit, how do we act on it to perform computations? This is achieved using unitary operators. Unitary operators can be described intuitively as ``preserving probability" and are pictured as rotations on the Bloch sphere. In mathematics, this idea of preserving probability is more properly referred to as unitaries preserving the inner product. Nevertheless, we can see that acting on a single qubit by a two-by-two square unitary will result in another properly normalized quantum state. Such unitary operators are often referred to as quantum logic gates or simply quantum gates. 

Let us now take the time to define some common quantum gates. First, the Hadamard gate, normally denoted simply as $H$, is given by 
\begin{align}
    H\coloneqq\frac{1}{\sqrt{2}}
    \begin{pmatrix} 
    1 & 1 \\
    1 & -1 \\
    \end{pmatrix}\, .
\end{align}
Note each single qubit gate can be characterized by how it acts on the basis states. The Hadamard gate acts by sending $\ket{0}$ to the state $\ket{+}\coloneqq \frac{1}{\sqrt{2}}(\ket{0}+\ket{1})$ and $\ket{1}$ to the state $\ket{-}\coloneqq \frac{1}{\sqrt{2}}(\ket{0}+\ket{1})$.

Next, the NOT gate is given by
\begin{align}
    X\coloneqq
    \begin{pmatrix} 
    0 & 1 \\
    1 & 0 \\
    \end{pmatrix}\, .
\end{align}
As one might expect, the NOT gate sends each basis state to the other just as the classical NOT does. Why, then, do we denote it with the letter $X$? That is because this is exactly the Pauli-X matrix---or $\sigma_1$--- that we know and love from quantum mechanics. In fact, all of the Pauli matrices are realized exactly as quantum gates for computation and adopt the moniker of ``Pauli gates" to suggest this.

The next set of gates that any review of quantum computing should acknowledge are the rotation gates, of which there are three:
\begin{align}
    R_x(\theta) \coloneqq
    \begin{pmatrix} 
    \cos{\theta/2} & -i \sin{\theta/2} \\
    -i \sin{\theta/2} & \cos{\theta/2} \\
    \end{pmatrix}\, ,
\end{align}
\begin{align}
    R_y(\theta) \coloneqq
    \begin{pmatrix} 
    \cos{\theta/2} & - \sin{\theta/2} \\
     \sin{\theta/2} & \cos{\theta/2} \\
    \end{pmatrix}\, ,
\end{align}
\begin{align}
    R_z(\theta) \coloneqq
    \begin{pmatrix} 
    e^{-i\theta/2} & 0 \\
     0 & e^{i\theta/2}  \\
    \end{pmatrix}\, .
\end{align}

Of course, there are many other gates of interest in quantum computing, infinitely so. However, as we have only finitely many pages to discuss such things, there are two more gates that should be defined for the uninitiated. These two gates are the CNOT and SWAP gates which differ from our previous examples in that they are two qubit gates. 

Now, we have yet to discuss what it means to have a system of more than one qubit, but the construction is simple enough. Suppose one qubit lives in the Hilbert space $\mathcal{H}_A$ and another in the Hilbert space $\mathcal{H}_B$. Then the total state of both qubits must be contained in the tensor product Hilbert space $\mathcal{H}_{AB} \coloneqq \mathcal{H}_A \otimes \mathcal{H}_B$. However, it is not necessarily the case that the total state of the system, call it $\ket{\psi}_{AB}$, can be decomposed into a tensor product of states on each individual Hilbert space in such a manner. That is, it is not necessarily true that $\ket{\psi}_{AB}= \ket{\phi}_A \otimes \ket{\sigma}_B$ for $\ket{\phi}_A,\, \ket{\sigma}_B$ pure states. Such product states are necessarily separable and thus cannot describe entangled states on the two systems. Naturally, quantum gates are simply unitaries that act on this larger system and any further addition of qubits is accounted for in much the same way.

Now we are well-equipped to discuss the CNOT and SWAP gates. The CNOT or controlled-NOT gate acts on two qubits. One qubit is the ``control", and the other is the ``target". When the state of the control qubit is $\ket{0}$, nothing happens. When the state of the control qubit is $\ket{1}$, however, a NOT gate is performed on the target qubit. When combined with superpositions of states, this gate is immensely powerful. Denoting this action as a matrix is a bit tricky, because it depends on the orientation of the qubits, but supposing the control is the first qubit and the target the second, convention gives the following matrix:
\begin{align}
    \text{CNOT} \coloneqq
    \begin{pmatrix} 
    1 & 0 & 0 & 0\\
     0 & 1 & 0 & 0  \\
     0 & 0 & 0 & 1 \\
     0 & 0 & 1 & 0 \\
    \end{pmatrix}\, .
\end{align}
Note that this is not the only controlled gate, although it is likely the most common. Any unitary gate can, in principle, have a control appended onto it.

The final gate I will be defining explicitly is the SWAP gate. This gate does exactly what it says on the label---it takes two qubits and swaps them. It is given by
\begin{align}
    \text{SWAP} \coloneqq
    \begin{pmatrix} 
    1 & 0 & 0 & 0\\
     0 & 0 & 1 & 0  \\
     0 & 1 & 0 & 0 \\
     0 & 0 & 0 & 1 \\
    \end{pmatrix}\, .
\end{align}

As mentioned before, this is far from an exhaustive list. There are infinitely many gates that could be called for in any given algorithm. Actually constructing these gates for use on real hardware is an issue often referred to as quantum compiling (see, e.g., \cite{moro2021quantum, khatri2019quantum, sharma2020noise}) which is a rich research topic all of its own. For the scope of this work, it should merely be noted that such a thing is possible.

Realizing these quantum gates on a quantum computer can be done in a number of ways, but the general approach is typically the same. Each system will have some set of ``native gates" that can be performed on the physical qubits. In order to act with some arbitrary unitary gate, it must be decomposed into these elementary gates as shown in \cite{Elementary}. There are two main takeaways pertinent to this discussion: all single unitary gates can be approximated by a minimal native gate set satisfying some conditions as per the results of the Solovay-Kitaev algorithm \cite{dawson2005solovay,bouland2021efficient} and all multi-qubit unitary gates and controlled-unitary gates can be continuously decomposed into only single qubit unitaries and CNOT gates \cite{Elementary}.

At such a point, we have reviewed nearly all primary building blocks of quantum computing save one: mixed states and density matrices. Everything stated so far about qubits has been in terms of pure states---but mixed states should not be neglected. A mixed state $\rho$ is a probabilistic mixture of pure states of the form
\begin{equation}
    \rho = \sum_i p_i \ket{\psi_i}\bra{\psi_i}\, ,
\end{equation}
where $\ket{\psi_i}\bra{\psi_i}$ is the density matrix associated the pure state $\ket{\psi_i}$. (This can be thought of as the outer product between $\ket{\psi}$ and its dual.) In order to ensure that it describes a quantum state, $\rho$ must be positive semi-definite $\rho \geq 0$ and have trace equal to one, $\tr[\rho]=1$.
Obviously, a pure state can also be denoted this way. Thus this is the more general formalism to discuss quantum states.

So the more general quantum state is a mixed state, but we have formulated everything in terms of pure states. Why? Well, there is a result in quantum information that all mixed states can be viewed as a pure state on a higher dimensional Hilbert space that has had some part of it traced out \cite[Chapter 5]{wildebook}. This is called the purification of the mixed state. The pure state $\ket{\psi_\rho}_{AB}$ is a purification of $\rho_A$ if
\begin{equation}
    \rho_A = \tr_B[\ket{\psi_\rho}\bra{\psi_\rho}_{AB}]\, .
\end{equation}

With this fact, we can confidently construct algorithms with pure states in mind and know that there will still be a way to implement them on mixed states as well.

\subsubsection{1.3.1.2. Quantum Channels}

Quantum channels are another prevalent term in quantum information. We separate this term out as, while it is certainly a prolific and important concept, it is often neglected in introductory approaches to quantum computing. Texts that focus on quantum information such as \cite{wildebook} will have no lack of channels and thus serve as a good resource for those unfamiliar with the formalism.

Quantum channels are completely positive, trace-preserving maps that take one quantum state to another. A wise man once said, ``Everything is a quantum channel." This might be overstating things a tad, but nonetheless, the insight remains somewhat true. A unitary gate is a quantum channel. The identity is a quantum channel. Teleportation is a quantum channel. The process of the environment stealing coherence from our experiments is a quantum channel. So on and so forth we continue. 

Quantum channels display their usefulness best, in this author's humble opinion, in two circumstances. The first is when describing actions on density matrices. The second is for non-unitary state evolutions. The latter may grab attention more strongly than the former but its physicality is demonstrated rather simply. Suppose Alice is communicating some quantum state of $d$ qubits to Bob when a nefarious eavesdropper intercedes. This bad actor takes all of Alice's states and replaces them with the maximally-mixed state $\pi \coloneqq \mathbb{I}/d$. This `trace-and-replace' system may not be unitary, but it is a quantum channel. A less dramatic example is the loss of information in the form of noise introduced by the environment. 

Consider for now a unitary channel $\mathcal{N}$ acting on the mixed state $\rho$ via the unitary $U$. Then we can define its action as
\begin{equation}
    \mathcal{N} = U \rho U^\dag = \sum_i p_i U \ket{\psi_i}\bra{\psi_i} U^\dag \, .
\end{equation}
This is a good picture to have in mind when wondering how unitary gates affect mixed states. In general, a quantum channel (often denoted $\mathcal{N}$) acts as a map
\begin{equation}
    \mathcal{N}_{A\to B}( \cdot ): D_{\mathcal{H}_A} \to D_{\mathcal{H}_B}
\end{equation}
where $D_{\mathcal{H}}$ is the space of density matrices on some Hilbert space.

This concludes our toolkit of the essentials. Now, we move on to visualizations.

\subsubsection{1.3.1.3. Reading a Quantum Circuit Diagram for Beginners}

Now that we have acquainted ourselves with the relevant building blocks, we may well wish to begin constructing algorithms. To facilitate this, we turn to pictorial depictions of circuits implementable on a quantum computer. Quantum circuit diagrams are an invaluable tool that allow us to visualize sometimes immensely complex mathematics. They can be understood through the lens of tensor networks---for which Biamonte and Bergholm's ``Tensor Networks in a Nutshell" \cite{biamonte2017tensor} is a fantastic roadmap. We will leave all of the intricacies of tensor networks in their wise hands and cover only the very basics of quantum circuit diagram literacy. 

First, we quickly run down the basic rules. Each circuit diagram is composed of wires or lines that are directly correlated to a specific Hilbert space, sometimes labelled but often not. Time progresses from left to right along these lines. Gates are usually indicated by labelled boxes that intersect these lines. Gates can be understood to act only on the Hilbert spaces that they intersect. Inputs to the circuit are kept to the left-most side and outputs to the right. A single-line wire is understood to be quantum in nature whilst double lines indicate classical communication. A dotted line, however, is usually meant to indicate that there is a space-like separation (or at least, some sort of temporal separation).

Out of breath yet? Don't worry, because the rest of this explanation is accompanied by pictures upon which the above rules can be practiced until understood. Figure~\ref{fig:gates} gives a list of commonly used logic gates and their relevant symbols. In the interest of space and relative brevity, we suggest referencing the list of common gates on pages xxvii and xxviii of the tenth anniversary edition of Nielsen and Chuang \cite{nielsenchuang} for a slightly more extensive list.

\begin{figure}[h!]
\begin{center}
\includegraphics[width=\columnwidth]{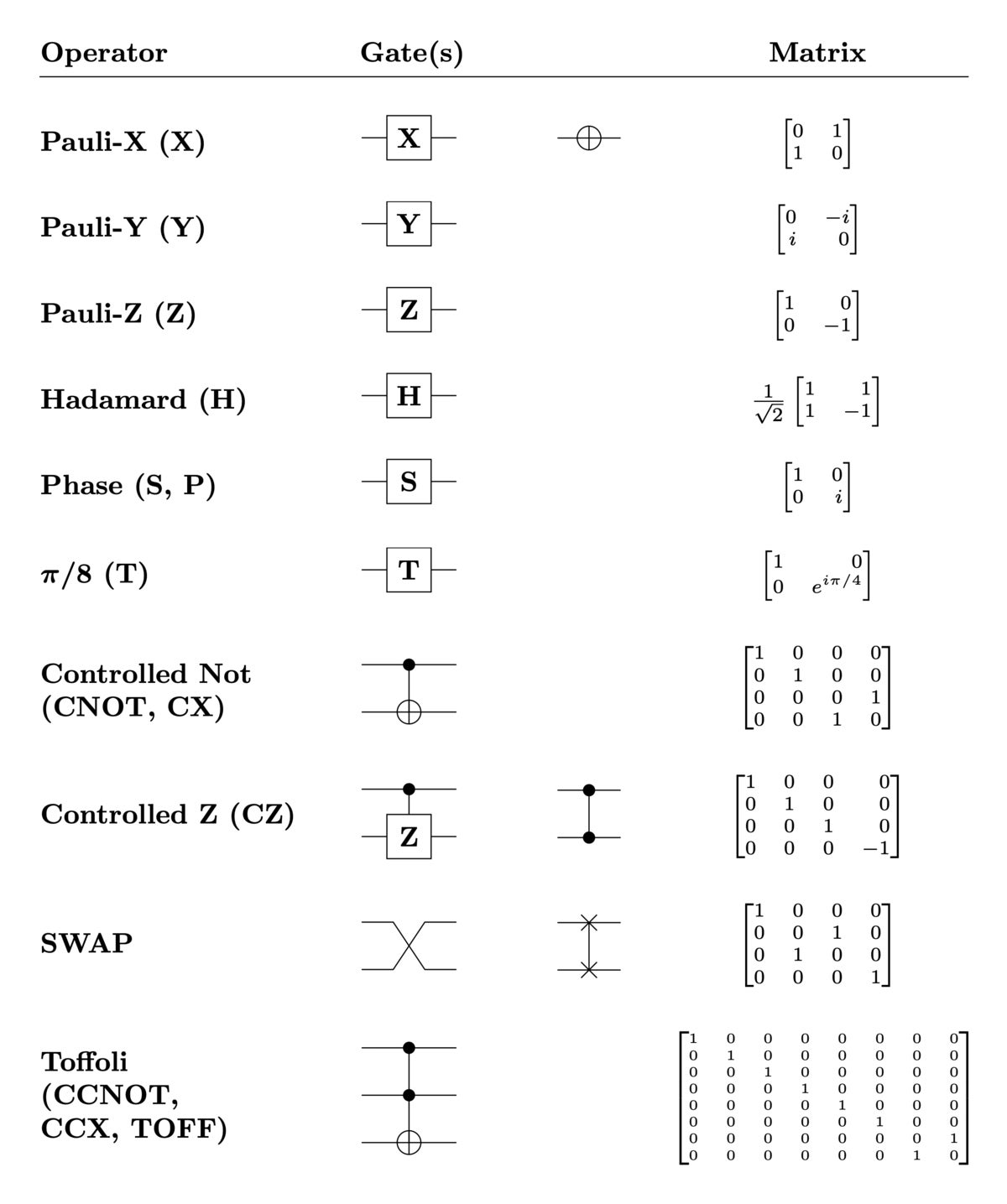} 
\caption{\label{fig:gates} Commonly used quantum logic gates, their circuit diagram representations, and their typical matrix representation.
}
\end{center}
\end{figure}

Some of the symbols in Figure~\ref{fig:gates} might be slightly confusing---namely, the controlled gates. We know that controlled gates have specified control and target qubits, and we need to be able to differentiate between them. A control is usually indicated by a closed circle with a vertical wire connecting it to the unitary on the target qubit. A gate can have more than one control. Additionally, gates can also be conditioned on the $\ket{0}$ state rather than the $\ket{1}$ state, and this is indicated by an open circle instead.

\begin{figure}[h!]
\begin{center}
\includegraphics[width=3.5 in]{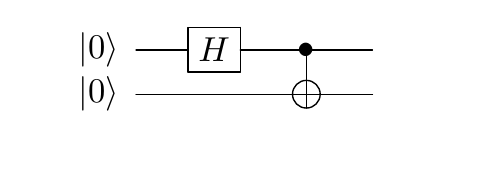} 
\caption{\label{fig:bell} The Bell state circuit. Two qubits in the state $\ket{00}$ are acted on first by a Hadamard gate and then by a CNOT gate. The output will be the maximally-entangled state on two qubits.
}
\end{center}
\end{figure}

Let's look now at a simple example circuit. Figure~\ref{fig:bell} shows the Bell state circuit, so named because depending on the input, it can generate any of the four maximally-entangled Bell states. Reading the circuit from left to right, we see the input state go from $\ket{00}\ \to H \otimes \mathbb{I} \ket{00}\ \to \text{CNOT}(H \otimes \mathbb{I} \ket{00})$. Working through this we get that the output state is given by
\begin{align}
    \text{CNOT}(H \otimes \mathbb{I} \ket{00}) &= \text{CNOT}(\frac{1}{\sqrt{2}} (\ket{00} + \ket{10})) \\
    &=\frac{1}{\sqrt{2}} (\ket{00} + \ket{11}\, .
\end{align}

This may seem too simple an example, but even more complicated circuits are read the same way. Just progress from left-to-right implementing each gate in its turn. 

\begin{figure}[h!]
\begin{center}
\includegraphics[width=3.5 in]{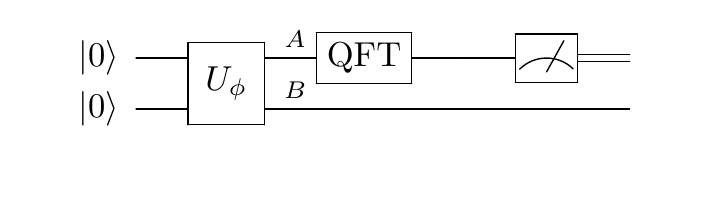} 
\caption{\label{fig:abstract} A more abstract circuit. A multiqubit gate acts on $\ket{00}$ to create the state $\ket{\phi}_{AB}$. Then a quantum Fourier transform (QFT) is performed only on the first qubit, which is then measured.  
}
\end{center}
\end{figure}

Next, let's consider a more abstract example. Figure~\ref{fig:abstract} depicts a much more abstract circuit for us to work through. First, define $U_{\phi}$ to be the multi-qubit unitary such that $U_{\phi} \ket{00} = \ket{\phi}_{AB}$ for $ \ket{\phi}_{AB}$ some pure state. Then act on only on the $A$ system with a quantum Fourier transform (QFT), and measure that system. The output of any measurement is classical, so this is denoted with doubled wires. The output of this circuit is not important, per se, but it allows us to discuss various common occurrences in circuit diagrams. For instance, the second qubit of the system is thrown out and never measured. Essentially, our measurement only considers the reduced system $\rho_A = \tr_B[(\textrm{QFT} \otimes \mathbb{I}) \ket{\phi}\bra{\phi}_{AB} ]$. This demonstrates how mixed states are often created and denoted in circuit diagrams. Additionally, we introduce the QFT as a black-box operator that, in principle, puts a state into a superposition of some new set of basis states. (For a single qubit, the QFT is actually equal to a Hadamard, but we just wanted to have it in the circuit for discussion purposes.) For a good definition and example of the use of the QFT in quantum algorithms, see Shor's algorithm \cite{shor1994algorithms} or review Nielsen and Chuang \cite{nielsenchuang}.

A note of caution: although typically each wire identifies a single qubit, these diagrams aren't always so friendly. Pay attention to context clues to determine if each wire indicates a qubit or a Hilbert space. In this work, we will identify our lines with Hilbert spaces more so than qubits, but this should not lead to much confusion for a careful reader.

\subsubsection{1.3.1.4. Advanced Terms: Norms and Important Facts}\label{sec:normsandlemmas}

This section will consist heavily of commonly used norms and some choice lemmas that will be employed in this work. As these terms are more technical, we will simply be tersely listing them off for best reference. See \cite{wildebook} as the source of many of these definitions unless otherwise specified.

First, we define some common norms. As a preliminary, we take the absolute value of an operator $M$ in $\mathcal{L}(\mathcal{H},\mathcal{H'})$ to be $|M| \coloneqq \sqrt{M^\dag M}$.

\begin{definition}\textbf{The Trace Norm}---The trace norm, or Schatten-1 norm, $\left \Vert M \right \Vert_1$ of an operator $M$ is defined to be
\begin{equation}
    \left \Vert M \right \Vert_1 \coloneqq \tr[\sqrt{M^\dag M}] = \tr[|M|]\,.
\end{equation}
\end{definition}

\begin{definition}
\textbf{The Hilbert--Schmidt Norm}---The Hilbert--Schmidt norm, or Schatten 2-norm, of an operator $M$ is defined to be 
\begin{equation}
    \left\| M \right\|_2 \coloneqq \sqrt{\operatorname{Tr}[|M|^2]}\, .
\end{equation}
\end{definition}

\begin{definition}
\textbf{The Schatten $p$-Norm}---The Schatten $p$-norm for $p \geq 1$ of an operator $M$ is defined to be 
\begin{equation}
    \left\| M \right\|_p \coloneqq \operatorname{Tr}[|M|^p]^{1/p} \, .
\end{equation}
\end{definition}

Note that the Schatten $p$-Norm encompasses the first two definitions as well. Still, both are typically identified by name in quantum computing as both are common in the field. 

It will be beneficial to now introduce some relevant mathematical tools. The following three (four) lemmas will be used to prove relevant results in later chapters. (It is pure coincidence that these statements are all lemmas; they come from independent works and do not build to any theorems here. As is often the case, lemmas simply tend to present amazing and useful facts.)

First, we will tackle Schur's Lemma. Indeed, there is some ambiguity here. Despite both often being referenced simply as ``Schur's Lemma" there is both a concept of Schur's \textit{first} lemma and Schur's \textit{second} lemma. Both will be combined here into a single definition. To add to the confusion, sometimes Schur's lemma is defined in textbooks as a theorem! Will the notation abuse never cease? Below, we give both Schur's Lemma as is used often used in quantum computing \cite{RevModPhys.79.555} as well as Schur's lemma as presented in representation theory \cite{sengupta2011representing}.

Now, we give the first and second lemmas as presented in \cite{RevModPhys.79.555}.
\begin{theorem}[Schur's Lemma] 
Schur's first and second lemmas are as follows:

\begin{enumerate}
        \item If $T(g)$ is an irreducible representation of the group $G$ on the Hilbert Space $\mathcal{H}$, then any operator $A$ satisfying $T(g) A T^\dag (g) = A$, $\forall g \in G$, is a multiple of the identity on $\mathcal{H}$.
    \item If $T_1(g)$ and $T_2(g)$ are inequivalent representations of $G$, then $T_1(g)AT^\dag_2(g) =A$, $\forall g \in G$, implies $A =0$.

\end{enumerate}
\end{theorem}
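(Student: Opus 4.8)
The plan is to prove the two statements separately, exploiting throughout the fact that we work on a Hilbert space over $\mathbb{C}$, so that each $T(g)$ is unitary and $T^\dag(g) = T(g)^{-1}$. This lets me immediately rewrite both hypotheses as algebraic identities. For the first lemma, $T(g) A T^\dag(g) = A$ is equivalent to $T(g) A = A T(g)$ for all $g$, i.e. $A$ commutes with the whole representation. For the second, $T_1(g) A T_2^\dag(g) = A$ becomes $T_1(g) A = A T_2(g)$, so that $A$ is an intertwiner between the two representations. I would also note at the outset that the second statement is to be understood with $T_1$ and $T_2$ both irreducible, matching the setting of the first.

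For the first lemma I would argue as follows. Since the representation is finite-dimensional and defined over $\mathbb{C}$, the operator $A$ has at least one eigenvalue $\lambda$, a root of its characteristic polynomial. The operator $A - \lambda \mathbb{I}$ then also commutes with every $T(g)$, and its kernel $W \coloneqq \ker(A - \lambda \mathbb{I})$ is nonzero. The crucial step is to show $W$ is a $G$-invariant subspace: if $v \in W$ then $(A - \lambda\mathbb{I}) T(g) v = T(g)(A - \lambda \mathbb{I}) v = 0$, so $T(g) v \in W$. By irreducibility the only invariant subspaces are $0$ and $\mathcal{H}$; since $W \neq 0$ we conclude $W = \mathcal{H}$, i.e. $A = \lambda \mathbb{I}$.

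For the second lemma I would examine the kernel and image of the intertwiner $A$. The relation $T_1(g) A = A T_2(g)$ shows that $\ker A \subseteq \mathcal{H}_2$ is $T_2$-invariant (if $Av = 0$ then $A T_2(g) v = T_1(g) A v = 0$) and that $\operatorname{im} A \subseteq \mathcal{H}_1$ is $T_1$-invariant (since $T_1(g) A v = A T_2(g) v$). Applying irreducibility of $T_2$ to $\ker A$ and of $T_1$ to $\operatorname{im} A$, each of these subspaces is either trivial or the whole space. If $A \neq 0$, then $\ker A = 0$ and $\operatorname{im} A = \mathcal{H}_1$, so $A$ is a bijection; but then $T_1(g) = A T_2(g) A^{-1}$ exhibits an equivalence $T_1 \cong T_2$, contradicting the assumption that the representations are inequivalent. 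Hence $A = 0$.

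The routine verifications—invariance of the subspaces and the commutation manipulations—are mechanical; the genuinely load-bearing move in each part is the passage from an algebraic identity to a statement about invariant subspaces, at which point irreducibility does all the work. The main obstacle I anticipate is one of hypotheses rather than of technique: the second lemma as stated is only correct when \emph{both} representations are irreducible, so before invoking the invariant-subspace dichotomy I would make that assumption explicit.
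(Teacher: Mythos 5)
Your proof is correct and uses essentially the same machinery as the paper: the paper proves the representation-theoretic form of Schur's lemma (kernel/image invariance to get the ``zero or isomorphism'' dichotomy, then the eigenvalue argument applied to $S - c\mathbb{I}$) and observes that the physics-form statement follows, whereas you run those same two arguments directly on the stated version after using unitarity to rewrite the hypotheses as commutation/intertwining relations. Your observation that the second part implicitly requires both representations to be irreducible is also correct and worth making explicit.
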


In \textit{Representing Finite Groups: a Semisimple Introduction}, Sengupta calls Schur's lemma ``the Incredible Hulk of representation theory"---certainly, a ringing endorsement. In honor of this, we will restate Schur's lemma and further give the proof of it as given in \cite{sengupta2011representing}. After all, Schur's lemma \cite{schur1905neue} was originally a result of representation theory itself, so proving it in this context is perhaps the best use of the tools given in Section~\ref{sec:representin}.

\begin{theorem}[Schur's Lemma (Again)]
A morphism between irreducible representations is either an isomorphism or 0. That is, if $\phi_1$ and $\phi_2$ are irreducible representations of a group $G$ on vector spaces $V_1$ and $V_2$, over an arbitrary field $F$, and is $T:V_1 \to V_2$ is a linear map where $\forall g \in G$,
\begin{equation}\label{eq:schur1}
    T \phi_1 (g) = \phi_2 (g) T \, ,
\end{equation}
then $T$ is either invertible or $0$.

If $\phi$ is an irreducible representation of a group $G$ on a finite-dimensional vector space $V$ over an algebraically closed field $F$ and $S: V \to V$ is a linear map where $\forall g \in G$,
\begin{equation}\label{eq:schur2}
    S\phi(g) = \phi(g)S \, ,
\end{equation}
then $S=c\mathbb{I}$ for some scalar $c \in F$.
\end{theorem}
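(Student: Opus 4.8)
The plan is to prove the two parts in sequence, using the first as a stepping stone to the second; the central observation throughout is that the intertwining relation \eqref{eq:schur1} forces the kernel and image of the map to be invariant subspaces, after which irreducibility does all the heavy lifting. For the first part, given $T\phi_1(g) = \phi_2(g)T$ for all $g \in G$, I would examine the kernel $\ker(T) \subseteq V_1$ and the image $\operatorname{im}(T) \subseteq V_2$. For the kernel, taking any $v \in \ker(T)$ gives $T\phi_1(g)v = \phi_2(g)Tv = 0$, so $\phi_1(g)v \in \ker(T)$ and $\ker(T)$ is invariant under $\phi_1$; since $\phi_1$ is irreducible, $\ker(T)$ is either $\{0\}$ or all of $V_1$. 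Dually, for $w = Tv \in \operatorname{im}(T)$ we have $\phi_2(g)w = \phi_2(g)Tv = T\phi_1(g)v \in \operatorname{im}(T)$, so $\operatorname{im}(T)$ is invariant under $\phi_2$ and hence equals $\{0\}$ or $V_2$. If $T \neq 0$, then $\ker(T) \neq V_1$ forces $\ker(T) = \{0\}$ (injectivity) and $\operatorname{im}(T) \neq \{0\}$ forces $\operatorname{im}(T) = V_2$ (surjectivity), so $T$ is bijective, i.e.\ invertible. Notably, this argument never uses any special property of the field $F$.

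Next I would deduce the second part, where the hypotheses sharpen to $V$ finite-dimensional and $F$ algebraically closed. Because $F$ is algebraically closed, the characteristic polynomial of $S$ splits and $S$ has an eigenvalue $c \in F$. I would then consider $S - c\mathbb{I}$ and verify that it still intertwines $\phi$ with itself: $(S - c\mathbb{I})\phi(g) = \phi(g)S - c\phi(g) = \phi(g)(S - c\mathbb{I})$, using \eqref{eq:schur2} together with the fact that $\mathbb{I}$ commutes with everything. Applying the first part with $\phi_1 = \phi_2 = \phi$ and $T = S - c\mathbb{I}$, this operator must be either invertible or zero; but $c$ is an eigenvalue, so $S - c\mathbb{I}$ has nontrivial kernel and cannot be invertible. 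Hence $S - c\mathbb{I} = 0$, that is, $S = c\mathbb{I}$.

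The main obstacle is conceptual rather than computational: recognizing that the right objects to study are the kernel and image, and that the intertwining relation is precisely what makes them invariant under the representations. Once that is in place, the first part is essentially immediate. For the second part, the single genuinely load-bearing hypothesis is algebraic closure, which guarantees the existence of an eigenvalue; without it (for instance, a planar rotation representation over $\mathbb{R}$ admits a nonscalar commuting operator) the conclusion fails, so I would take care to flag exactly where that assumption enters.
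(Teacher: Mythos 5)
Your proof is correct and follows essentially the same route as the paper's: invariance of the kernel and image under the intertwining relation plus irreducibility gives the first part, and an eigenvalue supplied by algebraic closure applied to $S-c\mathbb{I}$ gives the second. If anything, your write-up is slightly more explicit than the paper's about why $S-c\mathbb{I}$ fails to be invertible and where algebraic closure is used.
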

\begin{proof}
Suppose $\phi_1,\phi_2,$ and $T$ are as given above. From \eqref{eq:schur1}, we can determine that the kernel of $T$, $\operatorname{ker}(T)$, is invariant under the action of the group $G$. Since $\phi_1$ is irreducible, and thus the only invariant subspaces are $0$ and $V$, it follows that $\operatorname{ker}(T)$ is either $0$ or $V$ itself. Then if $T \neq 0$, it must be injective. Similarly, the image of $T$, $\operatorname{Im}(T) \subset V_2$ is also invariant under the action of the group, and thus if $T \neq 0$, it must be surjective. Therefore, either $T=0$ or $T$ is an isomorphism.

For the second part, suppose $F$ is algebraically closed. (This is the case for the complex numbers $\mathbb{C}$.) Further suppose that $V$ and $S$ are as stated, and $S$ is an intertwining operator from the irreducible representation $\phi$ on $V$ to itself. Note that $S - c\mathbb{I} \in \text{GL}_n (F)$ is not invertible, as 
\begin{equation*}
    \operatorname{det}(S-\lambda \mathbb{I} )=0
\end{equation*}
has a solution for $\lambda = c \in F$. Observe that \eqref{eq:schur2} holds even if when replacing $S$ with $S- c\mathbb{I}$. Thus, by \eqref{eq:schur1}, $S- c\mathbb{I} = 0$.
\end{proof}

By inspection, this second statement of Schur's lemma encapsulated the first. This may well be expected, as the use case in quantum computing is much more limited. Suffice to say, this lemma-come-theorem strikes heavy significance in any application where representations of groups occur.

Our heavy hitter out of the way, we give two more lemmas simply because they will be used, and it is always helpful to have such things explained in house. The gentle measurement and gentle operator lemmas found in \cite{Davies1969,itit1999winter,ON07} are recreated below.

\begin{lemma}[Gentle Measurement Lemma]\label{gentleM} Given a density operator $\rho$ and a measurement operator $\Lambda$ with $0 \leq \Lambda \leq \mathbb{I}$, suppose that $\Lambda$ has a high probability of detecting the state $\rho$ such that
\begin{equation}
    \tr[\Lambda \rho] \geq 1 - \epsilon\,,
\end{equation}
where $\epsilon \in [0,1]$. (The probability is considered high if $\epsilon$ is close to zero.) Then the post-measurement state 
\begin{equation}
    \rho ' \equiv \frac{\sqrt{\Lambda} \rho \sqrt{\Lambda}}{\tr[\Lambda \rho]} \, ,
\end{equation}
is $2\sqrt{\epsilon}$-close to the original state $\rho$ in trace distance. That is,
\begin{equation}
    \left \Vert \rho - \rho' \right \Vert_1 \leq 2\sqrt{\epsilon} \, .
\end{equation}
Thus, the measurement does not disturb the state $\rho$ much if $\epsilon$ is small.
\end{lemma}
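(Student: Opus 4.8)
The plan is to first reduce the general mixed-state statement to the case of pure states, and then to handle pure states directly by a short computation of the trace distance in terms of the overlap of the pre- and post-measurement vectors. For the reduction, I would introduce a purification $\ket{\psi}_{RA}$ of $\rho$ on a reference system $R$, so that $\rho = \tr_R[\ket{\psi}\bra{\psi}_{RA}]$, and apply the operator $\sqrt{\Lambda}$ only on the system $A$. A direct calculation shows that tracing out $R$ from the normalized post-measurement purification recovers exactly $\rho'$, since $\tr_R[(\mathbb{I}_R \otimes \sqrt{\Lambda})\ket{\psi}\bra{\psi}_{RA}(\mathbb{I}_R \otimes \sqrt{\Lambda})] = \sqrt{\Lambda}\rho\sqrt{\Lambda}$ and the normalizing factor is $\tr[\Lambda\rho]$. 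Because the trace norm is monotone (non-increasing) under the partial trace, any bound on the trace distance between the two purifications descends immediately to a bound on $\left\Vert \rho - \rho' \right\Vert_1$.

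For the pure-state case, I would invoke the standard identity that, for two unit vectors, the trace distance of their projectors equals $2\sqrt{1 - |\langle\psi|\psi'\rangle|^2}$, where $\ket{\psi'} \propto \sqrt{\Lambda}\ket{\psi}$ is the post-measurement vector. The whole problem thus collapses to lower-bounding the overlap $\langle\psi|\psi'\rangle = \langle\psi|\sqrt{\Lambda}|\psi\rangle / \sqrt{\langle\psi|\Lambda|\psi\rangle}$. Here two elementary operator facts do the work: since $0 \leq \Lambda \leq \mathbb{I}$ we have $\sqrt{\Lambda} \geq \Lambda$ as operators (because $\sqrt{x} \geq x$ on $[0,1]$), which gives $\langle\psi|\sqrt{\Lambda}|\psi\rangle \geq \langle\psi|\Lambda|\psi\rangle$; and the concavity of the square root, applied via Jensen's inequality to the spectral measure of $\Lambda$ in the state $\ket{\psi}$, yields $\langle\psi|\sqrt{\Lambda}|\psi\rangle \leq \sqrt{\langle\psi|\Lambda|\psi\rangle}$.

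Combining these, I would obtain $\langle\psi|\psi'\rangle \geq \sqrt{\langle\psi|\Lambda|\psi\rangle} \geq \sqrt{1-\epsilon}$, so that $|\langle\psi|\psi'\rangle|^2 \geq 1 - \epsilon$ and hence $2\sqrt{1 - |\langle\psi|\psi'\rangle|^2} \leq 2\sqrt{\epsilon}$. Feeding this estimate back through the partial-trace monotonicity from the first step then delivers the claimed bound $\left\Vert\rho - \rho'\right\Vert_1 \leq 2\sqrt{\epsilon}$.

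The subtlest part is not any single inequality but the bookkeeping that produces the tight constant: a naive estimate using only $\langle\psi|\psi'\rangle \geq 1 - \epsilon$ yields the weaker bound $2\sqrt{2\epsilon}$. Achieving $2\sqrt{\epsilon}$ requires the sharper estimate $\langle\psi|\psi'\rangle \geq \sqrt{\langle\psi|\Lambda|\psi\rangle}$, which in turn leans on combining the concavity bound on the numerator with the operator inequality $\sqrt{\Lambda}\geq\Lambda$. The other point demanding care is verifying that the post-measurement purification really reduces to $\rho'$, so that data processing can legitimately be invoked; once that identification is secured, the passage from pure to mixed states is automatic.
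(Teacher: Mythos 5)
Your proof is correct, and since the paper does not actually prove this lemma (it states it and defers to Chapter~1 of Wilde's book), your argument is essentially the standard one from that reference: reduce to pure states via a purification and monotonicity of the trace norm under partial trace, then lower-bound the overlap using the operator inequality $\sqrt{\Lambda}\geq\Lambda$, which holds because $0\leq\Lambda\leq\mathbb{I}$. One small correction to your closing remarks: the Jensen/concavity estimate $\langle\psi|\sqrt{\Lambda}|\psi\rangle\leq\sqrt{\langle\psi|\Lambda|\psi\rangle}$ is not needed anywhere and does not contribute to the tight constant; the chain $\langle\psi|\psi'\rangle\geq\langle\psi|\Lambda|\psi\rangle/\sqrt{\langle\psi|\Lambda|\psi\rangle}=\sqrt{\langle\psi|\Lambda|\psi\rangle}\geq\sqrt{1-\epsilon}$ follows from the operator inequality alone and already yields $2\sqrt{\epsilon}$.
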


\begin{lemma}[Gentle Operator Lemma]\label{gentleO}
Given a density operator $\rho$ and a measurement operator $\Lambda$ with $0 \leq \Lambda \leq \mathbb{I}$, suppose that $\Lambda$ has a high probability of detecting the state $\rho$ such that
\begin{equation}
    \tr[\Lambda \rho] \geq 1 - \epsilon\,,
\end{equation}
where $\epsilon \in [0,1]$. (The probability is considered high if $\epsilon$ is close to zero.) Then $\sqrt{\Lambda}\rho\sqrt{\Lambda}$ is $2\sqrt{\epsilon}$-close to the original state in trace distance. That is,
\begin{equation}
    \left \Vert \rho -  \sqrt{\Lambda}\rho\sqrt{\Lambda} \right\Vert_1 \leq 2\sqrt{\epsilon}\, .
\end{equation}
\end{lemma}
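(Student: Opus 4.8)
The plan is to reduce the statement to the case of pure states and then to a short geometric estimate. First I would invoke the purification construction reviewed in Section~\ref{sec:qcbg}: let $\ket{\psi}_{RA}$ be a purification of $\rho = \rho_A$, and extend the measurement operator trivially to $R$ as $\Lambda \otimes \mathbb{I}_R$, which still acts as $\Lambda$ on $A$. Because $\sqrt{\Lambda}$ acts only on $A$, tracing out $R$ sends $\ket{\psi}\bra{\psi}_{RA}$ to $\rho$ and $(\sqrt{\Lambda}\otimes\mathbb{I})\ket{\psi}\bra{\psi}(\sqrt{\Lambda}\otimes\mathbb{I})$ to $\sqrt{\Lambda}\rho\sqrt{\Lambda}$, while $\bra{\psi}(\Lambda\otimes\mathbb{I})\ket{\psi} = \tr[\Lambda\rho]\geq 1-\epsilon$. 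Since the partial trace is a quantum channel and the trace norm is nonincreasing under it, it then suffices to prove the bound for the single pure state $\ket{\psi}$. This is the step I would treat most carefully, since it is what lets a pure-state estimate control the general mixed-state claim; I would note explicitly that trace-norm contraction under partial trace applies even though $\sqrt{\Lambda}\rho\sqrt{\Lambda}$ is subnormalized, as the difference operator is Hermitian.

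For the pure-state case, write $\ket{\phi} \coloneqq \sqrt{\Lambda}\ket{\psi}$, so that the quantity to bound is $\left\|\ket{\psi}\bra{\psi} - \ket{\phi}\bra{\phi}\right\|_1$. The first key step is a rank-one trace-norm estimate: splitting $\ket{\psi}\bra{\psi} - \ket{\phi}\bra{\phi} = (\ket{\psi}-\ket{\phi})\bra{\psi} + \ket{\phi}(\bra{\psi}-\bra{\phi})$ and combining the triangle inequality with $\left\|\ket{x}\bra{y}\right\|_1 = \left\|\ket{x}\right\|_2\left\|\ket{y}\right\|_2$ gives
\begin{equation}
\left\|\ket{\psi}\bra{\psi} - \ket{\phi}\bra{\phi}\right\|_1 \leq \left\|\ket{\psi}-\ket{\phi}\right\|_2\left(\left\|\ket{\psi}\right\|_2 + \left\|\ket{\phi}\right\|_2\right).
\end{equation}
Since $\ket{\psi}$ is normalized and $0\leq\Lambda\leq\mathbb{I}$ forces $\left\|\ket{\phi}\right\|_2^2 = \bra{\psi}\Lambda\ket{\psi}\leq 1$, the parenthetical factor is at most $2$.

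The second key step is to control $\left\|\ket{\psi}-\ket{\phi}\right\|_2$. Expanding and using that $\bra{\psi}\sqrt{\Lambda}\ket{\psi}$ is real and nonnegative (as $\sqrt{\Lambda}\geq 0$) yields
\begin{equation}
\left\|\ket{\psi}-\ket{\phi}\right\|_2^2 = 1 - 2\bra{\psi}\sqrt{\Lambda}\ket{\psi} + \bra{\psi}\Lambda\ket{\psi}.
\end{equation}
The crucial observation is that on $[0,1]$ one has $\lambda\leq\sqrt{\lambda}$, so $\Lambda\leq\sqrt{\Lambda}$ as operators and hence $\bra{\psi}\sqrt{\Lambda}\ket{\psi}\geq\bra{\psi}\Lambda\ket{\psi}$. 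Substituting this collapses the right-hand side to $1 - \bra{\psi}\Lambda\ket{\psi} \leq \epsilon$, so $\left\|\ket{\psi}-\ket{\phi}\right\|_2\leq\sqrt{\epsilon}$. Combining with the rank-one estimate gives $\left\|\ket{\psi}\bra{\psi}-\ket{\phi}\bra{\phi}\right\|_1\leq 2\sqrt{\epsilon}$, which descends to the mixed-state statement through the monotonicity argument of the first paragraph.

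The main obstacle I anticipate is pinning the constant to exactly $2\sqrt{\epsilon}$ rather than a looser value: estimating $\bra{\psi}\sqrt{\Lambda}\ket{\psi}$ and $\bra{\psi}\Lambda\ket{\psi}$ independently produces a stray factor and a bound like $2\sqrt{2\epsilon}$. The device that removes it is precisely the operator inequality $\Lambda\leq\sqrt{\Lambda}$, which couples the cross term and the norm term so they partially cancel. As a cross-check (and an alternative route that avoids purification), one can instead derive the result from the Gentle Measurement Lemma~\ref{gentleM} by writing $\sqrt{\Lambda}\rho\sqrt{\Lambda} = \tr[\Lambda\rho]\,\rho'$ and bounding $\left\|\rho-\sqrt{\Lambda}\rho\sqrt{\Lambda}\right\|_1 \leq \left\|\rho-\rho'\right\|_1 + (1-\tr[\Lambda\rho])\left\|\rho'\right\|_1$, though this yields the slightly weaker constant $2\sqrt{\epsilon}+\epsilon$, so I would present the purification-based argument as the primary proof.
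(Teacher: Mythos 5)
Your argument is correct and complete, and it is essentially the proof the paper points to: the paper does not prove this lemma itself but defers to Chapter~1 of \cite{wildebook}, where the standard argument is exactly your route---purify, use monotonicity of the trace norm under partial trace, split the rank-one difference $\ket{\psi}\!\bra{\psi}-\ket{\phi}\!\bra{\phi}$ via the triangle inequality, and invoke the operator inequality $\Lambda\leq\sqrt{\Lambda}$ to land on the constant $2\sqrt{\epsilon}$. Nothing further is needed.
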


Proofs of both of these lemmas can be found in Chapter 1 of \cite{wildebook} alongside other related quantities. 

\subsubsection{1.3.1.5. Parting Words}

Thus we conclude our review of terms in quantum information theory, including both elementary concepts and more advanced ones. Readers fully unfamiliar with the topics introduced here are encouraged not to rely on this work alone, but instead to investigate any of the references given. With such context out of the way, the way forward holds more interesting and niche information that would be of greater interest to more advanced audiences. The next subsection, for instance, introduces quantum complexity theory, which should rouse anyone bored from the material reviewed so far.

\subsection{Complexity Theory but Only So Far as We are Concerned}\label{sec:ctbg}

The aim of this section is to acquaint any unwary reader with the smattering of complexity theory terms that will inevitably make themselves known in this work. For an enjoyable sojourn on quantum computing that relates the topic to complexity theory, we proffer Scott Aaronson's ``Quantum Computing Since Democritus" \cite{aaronson2013quantum}. Do not think, however, that upon completing this section alone a reader would be fit to debate hot topics in complexity theory with Aaronson himself. We cover only the classes of P, NP, QIP(n), and DQC1, and even then we do so in a rather cursory manner.

Generally, complexity classes concern themselves with one primary question: ``As the problem grows larger in scale, how many more resources does the solution require?" These can be either time or space resources, but generally speaking most limit themselves to time. Space rarely proves to be the limiting factor. As such, many complexity classes define themselves primarily with time taken to solve some problem. The class P is a prime example of this: it is the class of problems that are solvable in polynomial time on a classical computer.

We believe---indeed, the field in general believes---that any quantum computer worth its salt will be able to solve problems in the class P efficiently. Why? Because problems in P are already efficiently solvable on classical computers, and few people interest themselves in making a worse machine than we already have. The interesting question comes about when we discuss P versus NP.

Here is the rub with complexity classes--their distinctions are rather fuzzy. Indeed, disguising if P is equal to NP is one of the Millennium Prize Problems \cite{carlson2006millennium}, so it must be a difficult feat. For our purposes, we will assume P $\neq$ NP in the grand tradition of many complexity theorists before us. But what is NP? Again, before defining these terms, we will meander to a few preliminary concepts first.

Every computational class has a set of problems that define a class. That is, every problem in the class can be mapped to these problems in polynomial steps. These are referred to as being complete, e.g., NP-Complete. Often, decision problems hold this seat. A decision problem is an algorithm that returns `yes' if some verifiable question is true and `no' otherwise. NP is short for ``nondeterministic polynomial time", and it is the class of decision problems that a probabilistic classical computer could solve in polynomial time. As that is rather esoteric and we would rather not delve too deeply into the implications of probabilistic Turing machines, we will adopt a more colloquial definition of NP. NP is often thought of as the class of decision problems whose solutions are ``easy to check, hard to find." It is a widespread belief that NP problems are difficult for classical computers to solve efficiently.

We have introduced this concept of completeness in our definition so that we may make some rather tenuous statements about relative hardness of algorithms. (Complexity theory is always rather tenuous as it is very hard to prove that one class of problems is definitively more difficult than another.) The current attitude of the quantum computing community is that quantum computers should also be able to solve some problems in NP efficiently. NP-Complete or NP-Hard problems are not generally included in this number. As evidence for this claim, most present Shor's algorithm \cite{shor1994algorithms} which allows quantum computers to factor large numbers in polynomial time on a quantum computer. Factoring is classically an NP problem---that is, there is no currently known classical algorithm to efficiently factor large numbers---but not NP-Complete. Thus, this result indicates that there exists some sweet spot of computational problems that are only efficient on quantum computers.

Now that we have established that there exists a class of easy problems (P), a class of difficult problems (NP-Complete), and a class of interesting problems (NP), we can restrict ourselves to definitions of quantum complexity classes. There are two which we will need for further reading: DQC1 and QIP(n).

\subsubsection{1.3.2.1. DQC1}\label{sec:dqc1def}

DQC1 stands for ``deterministic quantum computation with one clean qubit", or ``one clean qubit" colloquially. The one clean qubit problem models systems with one pure state qubit and the rest of the system consisting of maximally-mixed qubits at the input. As one clean qubit is complete for DQC1, is can be considered emblematic of the class. In what follows, we review the definition of this class \cite{KL98,SJ08}. An important note on the class, however, is that its problems are deemed classically intractable, thus making it a complexity class of interest in quantum computing.

Suppose we have $n$ qubits. The basic model involves preparing one qubit in a pure state $\ket{0}\!\bra{0}$ and all other qubits in the maximally mixed state $\pi \coloneqq \mathbb{I}/d$ where $d\coloneqq 2^n$. Further suppose we have a quantum circuit generating a unitary $U$ and perform this on all of the qubits. Then the first qubit is measured in the computational basis $\{|0\rangle\!\langle 0|,|1\rangle\!\langle 1|\}$. The algorithm accepts if the outcome $|1\rangle\!\langle 1|$ occurs. The problem of estimating the acceptance probability $\operatorname{Tr}\!\left[(|1\rangle\!\langle 1|\otimes I)U(|0\rangle\!\langle 0|\otimes I/d)U^\dag\right]$ to within additive error is a DQC1-complete problem by definition.

An insight of \cite[Section~1]{SJ08} is that the problem of estimating $\operatorname{Re}[\operatorname{Tr}[U]]/d$ to within additive error, where $U$ is the unitary realized by a quantum circuit acting on $n$ qubits and consisting of polynomially many gates, is a DQC1-complete problem. This means that the problem can be solved within the computational model mentioned above, and it is also just as hard as every other problem that can be solved in the model. Thus, this problem of normalized trace estimation characterizes the class DQC1. Another key observation of \cite[Section~1]{SJ08} is that the complexity class DQC1 does not change if there are a constant or even logarithmic number of pure qubits, where here we mean logarithmic in $n$.

\subsubsection{1.3.2.2. QIP(n)}\label{qipn}

In classical computational complexity theory, there is a concept of something called a ``Merlin-Arthur Proof". This class proffers a decision problem wherein an all-powerful prover with infinite computational resources interacts with a practically-limited verifier. This prover is analogized as the wizard Merlin and the verifier as the human king Arthur, who must take the information Merlin provides him and make a final decision. 

If this all sounds like fantasy, let us recontextualize. In this formalism, the verifier must accurately decide whether or not some input satisfies a decision problem. The prover acts as an oracle who wishes to maximize the acceptance probability of the algorithm. The prover is not limited by time or space but is limited by the laws of nature. Colloquially, if a solution exists, then the prover will find it. If not, the prover will attempt to fool the verifier by choosing an input as close to the acceptance condition as possible. When all parties involved are quantum mechanical (meaning, for our purposes, that the verifier has a resource-limited quantum computer and the all-powerful wizard has an unlimited quantum computer) then this situation describes a quantum interactive proof (QIP) \cite{W09,VW15}.

\begin{figure}[h!]
\begin{center}
\includegraphics[width=3.5 in]{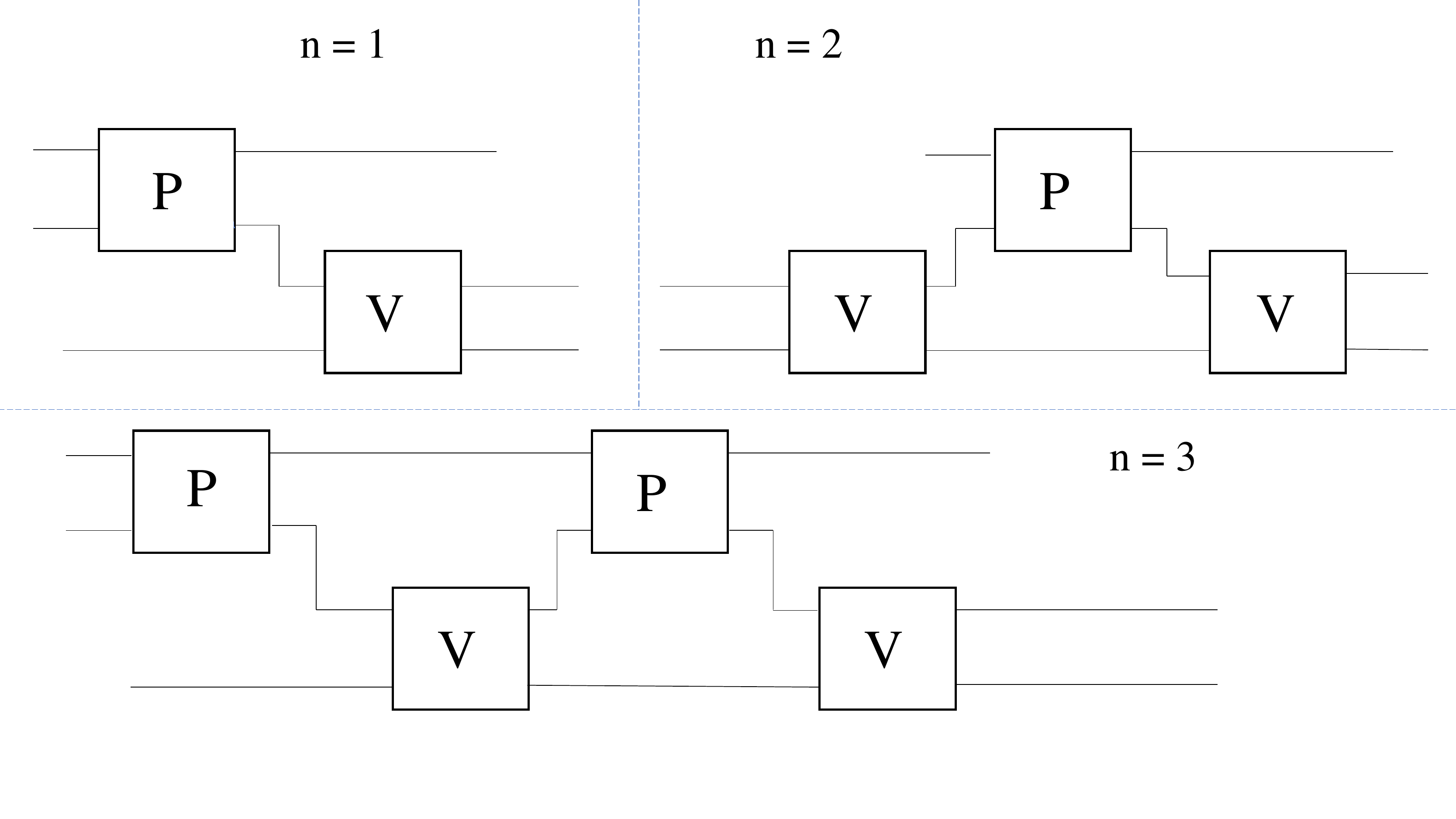} 
\caption{\label{fig:qip} A cartoon demonstrating the different numbers of messages exchanged between a prover system $P$ and a verifier system $V$. This figure shows the basic form of QIP(n) algorithms for n=1, n=2, and n=3.
}
\end{center}
\end{figure}

These quantum interactive proofs are characterized by a parameter $n$, which refers to the number of messages exchanged between the verifier and the prover, QIP(n). A system in which no messages are exchanged, QIP(0), is considered to be efficient on a quantum computer (equivalent to BQP, the so-called quantum analogy of P). If the prover acts as an oracle sending only one message, that is QIP(1), so on and so forth. A result of Kitaev and Watrous \cite{kitaev2000parallelization} showed that the upper limit is three messages; any exchange beyond that reduces to QIP(3). This exchange of messages is shown pictorially in Figure~\ref{fig:qip}.

\section{Defining Notions of Symmetry}\label{sec:symdef}

We may now finally progress from background to original contributions rather than review. The primary focus of this thesis is evaluating symmetry of various quantities of interest in quantum mechanics. However, the concept of symmetry can be ambiguous and abstract. Defining what symmetries we plan to encounter is thus a necessary preliminary, and we have amassed within this work quite a menagerie.  In Section~\ref{sec:statenotions}, we give a series of four definitions of symmetries of quantum states originally presented in \cite{laborde2021testing}, the contents of which will be utilized in Chapters~\ref{ch:symmstates} and \ref{ch:gensep}. In Section~\ref{sec:channelnotions}, we review the definitions of covariance of a quantum channel and Hamiltonian symmetries, which will be paramount in Chapters~\ref{ch:symmham} and \ref{ch:symmstates}.  While the latter section is well-studied in the literature, the former should be considered minutely as part of the research contributions collected herein.

For further reading on symmetry in quantum information, we point to various resources which have influenced our own work and education. The work of Iman Marvian, particularly his PhD thesis \cite{marvian2012symmetry}, gives a fantastic picture of the topic. Additionally, Aram Harrow's ``Church of the Symmetric Subspace" \cite{harrow2013church} provides a great starting place and overview of the topic. Both of these works we cite in later chapters where their influence has made its mark, but we provide the citations here in a continuous effort to give additional reference texts throughout the introduction.

\subsection{Various Symmetries for Quantum States}\label{sec:statenotions}
We introduce the notions of $G$-symmetric extendibility and $G$-Bose symmetric extendibility of a state, as generalizations of the notions of $G$-symmetry \cite[Section~2]{MS13} and extendibility \cite{W89a,DPS02,DPS04}. Later on in Chapter~\ref{ch:symmstates}, we devise quantum algorithms to test for these symmetries.

Let $\rho_{S}$ be a quantum state of system $S$ with corresponding Hilbert space $\mathcal{H}_{S}$. Let $G$ be a finite group, and let $U_{RS}(g)$ be a unitary representation \cite[Section~2]{MS13}\ of the group element $g\in G$, where $R$ indicates another Hilbert space, so that $U_{RS}(g)$ acts on the tensor-product Hilbert space $\mathcal{H}_{R} \otimes\mathcal{H}_{S}$. Let $\Pi_{RS}^{G}$ denote the following projection operator:
\begin{equation}\label{eq:groupprojectorog}
\Pi_{RS}^{G}\coloneqq\frac{1}{\left\vert G\right\vert }\sum_{g\in G}U_{RS}(g).
\end{equation}
Observe that
\begin{equation}
\Pi_{RS}^{G}=U_{RS}(g)\Pi_{RS}^{G}=\Pi_{RS}^{G}U_{RS}(g),
\label{eq:unitaries-and-projs}
\end{equation}
for all $g\in G$.

We now define $G$-symmetric-extendible and $G$-Bose-symmetric-extendible states.
\begin{definition}
[$G$-symmetric-extendible]\label{def:g-sym-ext}A state $\rho_{S}$ is
$G$-symmetric-extendible if there exists a state $\omega_{RS}$ such that
\begin{enumerate}
\item the state $\omega_{RS}$ is an extension of $\rho_{S}$, i.e.,
\begin{equation}
\operatorname{Tr}_{R}[\omega_{RS}]=\rho_{S}, \label{eq:G-ext-1}
\end{equation}
\item the state $\omega_{RS}$ is $G$-invariant, in the sense that
\begin{equation}
\omega_{RS}=U_{RS}(g)\omega_{RS}U_{RS}(g)^{\dag}\qquad\forall g\in G.
\label{eq:G-ext-2}
\end{equation}
\end{enumerate}
\end{definition}

\begin{definition}
[$G$-Bose-symmetric-extendible]\label{def:g-bose-sym-ext}A state $\rho_{S}$ is $G$-Bose-symmetric-extendible (G-BSE) if there exists a state $\omega_{RS}$ such that

\begin{enumerate}
\item the state $\omega_{RS}$ is an extension of $\rho_{S}$, i.e.,
\begin{equation}
\operatorname{Tr}_{R}[\omega_{RS}]=\rho_{R},
\end{equation}
\item the state $\omega_{RS}$ satisfies
\begin{equation}
\omega_{RS}=\Pi_{RS}^{G}\omega_{RS}\Pi_{RS}^{G}.
\label{eq:bose-G-sym-ext-cond}
\end{equation}
\end{enumerate}
\end{definition}

Note that the condition in \eqref{eq:bose-G-sym-ext-cond} is equivalent to $\omega_{RS}=\Pi_{RS}^{G}\omega_{RS}$ or $\omega_{RS}=U_{RS}(g)\omega_{RS}$ for all $g\in G$. Observe also that $\rho_{S}$ is $G$-symmetric-extendible if it is $G$-Bose-symmetric-extendible, but the opposite implication does not necessarily hold.

We have made no assumptions about the unitary representation used thus far. It is important to mention the case of projective unitary representations, due to their physical relevance in the case of symmetries of density operators. See, e.g., Eqs.~(1.2) and (1.3) of \cite{marvian2012symmetry} for a definition of a projective unitary representation. Restricting to projective unitary representations helps in avoiding trivial representations, and when considering symmetries of density operators, they necessarily arise. Furthermore, when considering  implementations of groups in later chapters, we limit ourselves to faithful representations of the group. In principle, neither faithfulness nor a projective representation are required unless stated otherwise. (The choice of representation will matter when considering the symmetry of a state, for instance; however, in the manner of existing literature, we describe all symmetries with respect to the group and omit the reliance on the representation in notation.)

Although the concepts of $G$-symmetric extendibility and $G$-Bose-symmetric extendibility, in Definitions~\ref{def:g-sym-ext} and~\ref{def:g-bose-sym-ext} respectively, are generally different, we can relate them by purifying a $G$-symmetric-extendible state to a larger Hilbert space. The ability to do so plays a critical role in the algorithms proposed in Chapter~\ref{ch:symmstates}.

\begin{theorem}
\label{thm:Bose-sym-purify}A state $\rho_{S}$ is $G$-symmetric-extendible if and only if there exists a purification $\psi_{RS\hat{R}\hat{S}}^{\rho}$ of $\rho_{S}$ satisfying the following:
\begin{equation}
|\psi^{\rho}\rangle_{RS\hat{R}\hat{S}}=\left(  U_{RS}(g)\otimes\overline
{U}_{\hat{R}\hat{S}}(g)\right)  |\psi^{\rho}\rangle_{RS\hat{R}\hat{S}}
\quad\forall g\in G, \label{eq:g-sym-pur-cond}
\end{equation}
where the overbar denotes the complex conjugate. The condition in \eqref{eq:g-sym-pur-cond} is equivalent to
\begin{equation}
|\psi^{\rho}\rangle_{RS\hat{R}\hat{S}}=\Pi_{RS\hat{R}\hat{S}}^{G}|\psi^{\rho
}\rangle_{RS\hat{R}\hat{S}}, \label{eq:g-sym-pur-cond-proj}
\end{equation}
where
\begin{equation}
\Pi_{RS\hat{R}\hat{S}}^{G}\coloneqq \frac{1}{\left\vert G\right\vert } \sum_{g\in G} U_{RS}(g) \otimes \overline{U}_{\hat{R}\hat{S}}(g).
\label{eq:projector-ref-unitaries}
\end{equation}
\end{theorem}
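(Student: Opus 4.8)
The plan is to prove both implications by passing between the invariant extension $\omega_{RS}$ and a canonical purification of it, the key device being the transpose (``ricochet'') identity that relates an operator acting on a system to its transpose acting on the reference of a maximally entangled vector.

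For the forward direction, I would suppose $\rho_S$ is $G$-symmetric-extendible, so there is an extension $\omega_{RS}$ with $\tr_R[\omega_{RS}]=\rho_S$ and $\omega_{RS}=U_{RS}(g)\omega_{RS}U_{RS}(g)^\dag$. First I would write the canonical purification $\ket{\psi^\rho}_{RS\hat R\hat S}=(\sqrt{\omega}_{RS}\otimes I_{\hat R\hat S})\ket{\Gamma}_{RS\hat R\hat S}$, where $\ket{\Gamma}$ is the unnormalized maximally entangled vector between $RS$ and a copy $\hat R\hat S$; this is automatically a purification of $\rho_S$, since tracing out $R\hat R\hat S$ returns $\rho_S$. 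Second, I would note that $G$-invariance of $\omega_{RS}$ under conjugation means $U_{RS}(g)$ commutes with $\omega_{RS}$, hence with $\sqrt{\omega}_{RS}$ by functional calculus, so $U_{RS}(g)\sqrt{\omega}_{RS}U_{RS}(g)^\dag=\sqrt{\omega}_{RS}$. Third, I would apply $U_{RS}(g)\otimes\overline U_{\hat R\hat S}(g)$ to $\ket{\psi^\rho}$ and use the transpose identity $(I\otimes\overline U(g))\ket{\Gamma}=(U(g)^\dag\otimes I)\ket{\Gamma}$, valid because $\overline U(g)^{\,T}=U(g)^\dag$, to move the conjugate unitary onto the $RS$ factor; combining with the second step collapses the expression to $(\sqrt{\omega}_{RS}\otimes I)\ket{\Gamma}=\ket{\psi^\rho}$, which is exactly \eqref{eq:g-sym-pur-cond}.

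For the reverse direction, I would assume a purification $\ket{\psi^\rho}_{RS\hat R\hat S}$ of $\rho_S$ satisfies \eqref{eq:g-sym-pur-cond} and define the candidate extension $\omega_{RS}\coloneqq\tr_{\hat R\hat S}[\ket{\psi^\rho}\!\bra{\psi^\rho}_{RS\hat R\hat S}]$, which obeys $\tr_R[\omega_{RS}]=\rho_S$ immediately from the purification property. To check $G$-invariance, I would conjugate $\ket{\psi^\rho}\!\bra{\psi^\rho}$ by $U_{RS}(g)\otimes\overline U_{\hat R\hat S}(g)$, which leaves it fixed by \eqref{eq:g-sym-pur-cond}, and then trace out $\hat R\hat S$; since the partial trace is invariant under a unitary acting only on the traced-out factor, the $\overline U_{\hat R\hat S}(g)$ factors drop out and leave exactly $U_{RS}(g)\omega_{RS}U_{RS}(g)^\dag=\omega_{RS}$, furnishing the required invariant extension.

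Finally, the equivalence of \eqref{eq:g-sym-pur-cond} and \eqref{eq:g-sym-pur-cond-proj} follows exactly as in the remark after \eqref{eq:unitaries-and-projs}: the map $g\mapsto U_{RS}(g)\otimes\overline U_{\hat R\hat S}(g)$ is itself a unitary representation, so $\Pi^G_{RS\hat R\hat S}$ is the corresponding group-average projector satisfying $(U_{RS}(g)\otimes\overline U_{\hat R\hat S}(g))\Pi^G_{RS\hat R\hat S}=\Pi^G_{RS\hat R\hat S}$. Averaging \eqref{eq:g-sym-pur-cond} over $g$ yields \eqref{eq:g-sym-pur-cond-proj}, while applying $U_{RS}(g)\otimes\overline U_{\hat R\hat S}(g)$ to \eqref{eq:g-sym-pur-cond-proj} recovers \eqref{eq:g-sym-pur-cond}. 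I expect the main obstacle to be bookkeeping the complex conjugation correctly in the forward direction: one must confirm that it is precisely $\overline U(g)$, and not $U(g)$ or $U(g)^\ast$ under some other convention, that appears on the reference, which is exactly what the identity $\overline U(g)^{\,T}=U(g)^\dag$ pins down once the canonical purification is fixed.
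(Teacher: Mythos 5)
Your proof is correct. The reverse direction and the equivalence of \eqref{eq:g-sym-pur-cond} with \eqref{eq:g-sym-pur-cond-proj} coincide with the paper's argument, but your forward direction takes a genuinely cleaner route. The paper works with the spectral decomposition $\omega_{RS}=\sum_k\lambda_k\Pi^k_{RS}$, shows that each eigenspace is invariant under $U_{RS}(g)$ so that the restriction is a unitary $U^k_{RS}(g)$, and then invokes the $V\otimes\overline{V}$ invariance of each maximally entangled vector $|\Gamma^k\rangle$ separately before reassembling by linearity. You instead write the canonical purification $(\sqrt{\omega}_{RS}\otimes I)\ket{\Gamma}$ in one piece, absorb the entire group action into a single commutation $U_{RS}(g)\sqrt{\omega}_{RS}U_{RS}(g)^\dag=\sqrt{\omega}_{RS}$ via functional calculus, and apply the transpose identity $(I\otimes\overline{U}(g))\ket{\Gamma}=(U(g)^\dag\otimes I)\ket{\Gamma}$ once. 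The two purifications are in fact the same vector (the paper's $\sum_k\sqrt{\lambda_k}\sum_\ell|\phi^k_\ell\rangle\otimes\overline{|\phi^k_\ell\rangle}$ is just $(\sqrt{\omega}\otimes I)\ket{\Gamma}$ expanded in the eigenbasis), and both arguments ultimately rest on the same $V\otimes\overline{V}$ invariance of $\ket{\Gamma}$; what your version buys is the elimination of all eigenspace bookkeeping, including the step where one must argue that $U_{RS}(g)|\phi^k_\ell\rangle$ stays in the $k$-th eigenspace. One small point worth making explicit, which you flag but do not fully discharge: for projective representations, $g\mapsto U_{RS}(g)\otimes\overline{U}_{\hat R\hat S}(g)$ is a genuine (non-projective) representation because the phases cancel between the factor and its conjugate; this is what licenses treating $\Pi^G_{RS\hat R\hat S}$ as a bona fide group-average projector in the final equivalence.
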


\begin{proof}
We give the proof for completeness, and we note here that it is very close to the proof of \cite[Lemma~II.5]{CKMR08} (see also \cite[Lemma~3.6]{KW20book}).

We begin with the forward implication. Suppose that $\rho_{S}$ is $G$-symmetric extendible. By definition, this means that there exists a state $\omega_{RS}$ satisfying \eqref{eq:G-ext-1} and \eqref{eq:G-ext-2}. Suppose that $\omega_{RS}$ has the following spectral decomposition:
\begin{equation}
\omega_{RS}=\sum_{k}\lambda_{k}\Pi_{RS}^{k},
\end{equation}
where $\lambda_{k}$ is an eigenvalue and $\Pi_{RS}^{k}$ is a spectral projection. We can write $\Pi_{RS}^{k}$ as
\begin{equation}
\Pi_{RS}^{k}=\sum_{\ell}|\phi_{\ell}^{k}\rangle\!\langle\phi_{\ell}^{k}|_{RS},
\end{equation}
where $\{|\phi_{\ell}^{k}\rangle_{RS}\}_{\ell}$ is an orthonormal basis. Now define
\begin{align}
|\Gamma^{k}\rangle_{RS\hat{R}\hat{S}}  &  \coloneqq \sum_{\ell}|\phi_{\ell}^{k}\rangle_{RS} \otimes \overline{|\phi_{\ell}^{k}\rangle}_{\hat{R}\hat{S}},\\
|\psi^{\rho}\rangle_{RS\hat{R}\hat{S}}  &  \coloneqq \sum_{k}\sqrt{\lambda_{k}}|\Gamma^{k}\rangle_{RS\hat{R}\hat{S}},
\end{align}
where $\overline{|\phi_{\ell}^{k}\rangle}_{\hat{R}\hat{S}}$ is the complex conjugate of $|\phi_{\ell}^{k}\rangle_{RS}$ with respect to the standard basis. Observe that $|\psi^{\rho}\rangle\!\langle\psi^{\rho}|_{RS\hat{R} \hat{S}}$ is a purification of $\omega_{RS}$. Now let us establish \eqref{eq:g-sym-pur-cond}. Given that $\omega_{RS}$ satisfies \eqref{eq:G-ext-2}, it follows that
\begin{equation}
U_{RS}(g)^{\dag}\omega_{RS}U_{RS}(g)|\phi_{\ell}^{k}\rangle_{RS}  =\omega_{RS}|\phi_{\ell}^{k}\rangle_{RS}  =\lambda_{k}|\phi_{\ell}^{k}\rangle_{RS},
\end{equation}
for all $k$, $\ell$, and $g$. Left multiplying by $U_{RS}(g)$ implies that
\begin{equation}
\omega_{RS}U_{RS}(g)|\phi_{\ell}^{k}\rangle_{RS}=\lambda_{k}U_{RS}(g)|\phi_{\ell}^{k}\rangle_{RS},
\end{equation}
so that $U_{RS}(g)|\phi_{\ell}^{k}\rangle_{RS}$ is an eigenvector of $\omega_{RS}$ with eigenvalue $\lambda_{k}$. We conclude that the $k$-th eigenspace corresponding to eigenvalue $\lambda_{k}$ is invariant under the action of $U_{RS}(g)$ because $|\phi_{\ell}^{k}\rangle_{RS}$ and $U_{RS}(g)|\phi_{\ell}^{k}\rangle_{RS}$ are eigenvectors of $\omega_{RS}$ with eigenvalue $\lambda_{k}$. This implies that the restriction of $U_{RS}(g)$ to the $k$th eigenspace is equivalent to a unitary $U_{RS}^{k}(g)$. Then it follows that
\begin{align}
&  (U_{RS}(g)\otimes\overline{U}_{\hat{R}\hat{S}}(g))|\Gamma^{k}
\rangle_{RS\hat{R}\hat{S}}\nonumber\\
&  =(U_{RS}^{k}(g)\otimes\overline{U}_{\hat{R}\hat{S}}^{k}(g))|\Gamma
^{k}\rangle_{RS\hat{R}\hat{S}}\\
&  =|\Gamma^{k}\rangle_{RS\hat{R}\hat{S}},
\end{align}
for all $g\in G$. The first equality follows from the fact stated just above. 
The second equality follows from the invariance of the maximally entangled vector $|\Gamma^{k}\rangle_{RS\hat
{R}\hat{S}}$ under unitaries of the form $V\otimes\overline{V}$. Thus, it
follows by linearity that
\begin{equation}
|\psi^{\rho}\rangle_{RS\hat{R}\hat{S}} = (U_{RS}(g)\otimes\overline{U}_{\hat{R}\hat{S}}(g))|\psi^{\rho}\rangle_{RS\hat{R}\hat{S}},
\label{eq:unitary-inv-purified}
\end{equation}
for all $g\in G$, which is the statement of \eqref{eq:g-sym-pur-cond}.

Let us now consider the opposite implication; suppose that
$\psi_{RS\hat{R}\hat{S}}^{\rho}$ is a purification of $\rho_{S}$ and
$\psi_{RS\hat{R}\hat{S}}^{\rho}$ satisfies \eqref{eq:g-sym-pur-cond}. Set
\begin{equation}
\omega_{RS}=\operatorname{Tr}_{\hat{R}\hat{S}}[\psi_{RS\hat{R}\hat{S}}^{\rho
}].
\end{equation}
Then $\omega_{RS}$ is an extension of $\rho_{S}$. Furthermore, employing the
shorthand $U_{RS}\equiv U_{RS}(g)$ and $\overline{U}_{\hat{R}\hat{S}}
\equiv\overline{U}_{\hat{R}\hat{S}}(g)$, we find that $\omega_{RS}
=U_{RS}(g)\omega_{RS}U_{RS}(g)^{\dag}$ for all $g\in G$ because
\begin{align}
 \omega_{RS}\nonumber &=\operatorname{Tr}_{\hat{R}\hat{S}}[\psi_{RS\hat{R}\hat{S}}^{\rho}]\\
&  =\operatorname{Tr}_{\hat{R}\hat{S}}[(U_{RS}\otimes\overline{U}_{\hat{R}
\hat{S}})\psi_{RS\hat{R}\hat{S}}^{\rho}(U_{RS}\otimes\overline{U}_{\hat{R}
\hat{S}})^{\dag}]\\
&  =U_{RS}(g)\operatorname{Tr}_{\hat{R}\hat{S}}[\overline{U}_{\hat{R}\hat{S}
}(g)\psi_{RS\hat{R}\hat{S}}^{\rho}\overline{U}_{\hat{R}\hat{S}}(g)^{\dag
}]U_{RS}(g)^{\dag}\\
&  =U_{RS}(g)\operatorname{Tr}_{\hat{R}\hat{S}}[\overline{U}_{\hat{R}\hat{S}
}(g)^{\dag}\overline{U}_{\hat{R}\hat{S}}(g)\psi_{RS\hat{R}\hat{S}}^{\rho
}]U_{RS}(g)^{\dag}\\
&  =U_{RS}(g)\operatorname{Tr}_{\hat{R}\hat{S}}[\psi_{RS\hat{R}\hat{S}}^{\rho
}]U_{RS}(g)^{\dag}\\
&  =U_{RS}(g)\omega_{RS}U_{RS}(g)^{\dag}.
\end{align}
Thus, it follows that $\rho_{S}$ is $G$-symmetric extendible.

We now justify the equivalence of \eqref{eq:g-sym-pur-cond} and \eqref{eq:g-sym-pur-cond-proj}. Using the result in \eqref{eq:unitary-inv-purified}, observe that
\begin{equation}
|\psi^{\rho}\rangle_{RS\hat{R}\hat{S}}= \frac{1}{|G|}\sum_{g\in G}(U_{RS}(g)\otimes\overline{U}_{\hat{R}\hat{S}}(g))|\psi^{\rho}\rangle_{RS\hat{R}\hat{S}},
\end{equation}
which simplifies to \eqref{eq:g-sym-pur-cond-proj} by substituting in \eqref{eq:projector-ref-unitaries}. Now starting with \eqref{eq:projector-ref-unitaries}, let us apply the property in \eqref{eq:unitaries-and-projs}, and we have that
\begin{equation}
    |\psi^{\rho}\rangle_{RS\hat{R}\hat{S}} = (U_{RS}(g)\otimes\overline{U}_{\hat{R}\hat{S}}(g))\Pi^G_{RS\hat{R}\hat{S}} |\psi^{\rho}\rangle_{RS\hat{R}\hat{S}},
\end{equation}
for all $g \in G$.
This reduces to \eqref{eq:g-sym-pur-cond} by applying \eqref{eq:g-sym-pur-cond-proj}.
\end{proof}

Now, both of these definitions of symmetry contain a condition of extendibility, but that criterion may not always be necessary for particular applications. In those cases, simply let the extension be trivial to regain a related, simpler notion of symmetry. We present those cases here as examples.

\begin{example}
[$G$-symmetric]\label{ex:usual-symmetry}Let $G$ be a group with projective
unitary representation $\{U_{S}(g)\}_{g\in G}$, and let $\rho_{S}$ be a
quantum state of system $S$. A state $\rho_{S}$ is symmetric with respect to
$G$ \cite{MS13,MS14}\ if
\begin{equation}
\rho_{S}=U_{S}(g)\rho_{S}U_{S}(g)^{\dag}\quad\forall g\in G.
\end{equation}
Thus, the established notion of symmetry of a state $\rho_{S}$ with respect to
a group $G$ is a special case of $G$-symmetric extendibility in which the
system $R$ is trivial.
\end{example}

\begin{example}
[$G$-Bose-symmetric]\label{ex:usual-Bose-symmetry}A state $\rho_{S}$ is
Bose-symmetric with respect to $G$ if
\begin{equation}
\rho_{S}=U_{S}(g)\rho_{S}\quad\forall g\in G,
\end{equation}
which is equivalent to the condition
\begin{equation}
\rho_{S}=\Pi_{S}^{G}\rho_{S}\Pi_{S}^{G},
\end{equation}
where the projector $\Pi_{S}^{G}$ is defined as
\begin{equation}
\label{eq:group_proj_GBS}
\Pi_{S}^{G}\coloneqq \frac{1}{\left\vert G\right\vert }\sum_{g\in G}U_{S}(g).
\end{equation}
Thus, the established notion of Bose symmetry of a state~$\rho_{S}$ with
respect to a group $G$ is a special case of $G$-Bose symmetric extendibility in
which the system $R$ is trivial.
\end{example}

Thus we have defined the notions of $G$-symmetric extendibility, $G$-Bose-symmetric extendibility, 
and the related notions of $G$-symmetry and $G$-Bose symmetry. These concepts will be revisited in Chapter~\ref{ch:symmstates} and are described in great detail in \cite{laborde2021testing}.

\subsection{Hamiltonian Symmetry and Covariance of a Quantum Channel}\label{sec:channelnotions}

Finally, we arrive at Hamiltonian symmetry and covariance of quantum channels. The former will be integral to the discussions in Chapter~\ref{ch:symmham} and the latter both in Chapters~\ref{ch:symmham} and \ref{ch:symmstates}. We will begin by reviewing very briefly the notion of Hamiltonian symmetry and then spend much more time recalling the definition of covariance symmetry.

In quantum mechanics, a Hamiltonian $H$ is symmetric with respect to an operator $O$ if it commutes with $H$:
\begin{equation*}
    [O,H] =0\,.
\end{equation*}
We say that $H$ commutes with a unitary representation of a group $G$, $\{U(g)\}_{g \in G}$, if
\begin{equation}
    [H,U(g)] = 0 \ \ \ \ \ \ \ \ \ \  \forall g \in G\, .
\end{equation}

Given this basic concept of Hamiltonian symmetry, let us move on to the covariance symmetry of quantum channels. Let $G$ be a group with projective unitary representations $\{U(g)_A\}_{g \in G}$ and $\{V(g)_B\}_{g \in G}$ on the $A$ and $B$ subsystems respectively. Then the channel $\mathcal{N}_{A\rightarrow B}$ is covariant if the following $G$-covariance symmetry condition holds
\begin{equation}
\mathcal{N}_{A\rightarrow B}\circ\mathcal{U}_{A}(g)=\mathcal{V}_{B}(g) \circ \mathcal{N}_{A\rightarrow B} \qquad \forall g\in G,
\label{eq:ch-cov-def}
\end{equation}
where the unitary channels $\mathcal{U}_{A}(g)$ and $\mathcal{V}_{B}(g)$ are
respectively defined from $U_{A}(g)$ and $V_{B}(g)$ as
\begin{align}\label{eq:uandv}
\mathcal{U}_{A}(g)(\omega_{A}) &  \coloneqq  U_{A}(g)\omega_{A}U_{A}(g)^{\dag
},\\
\mathcal{V}_{B}(g)(\tau_{B}) &  \coloneqq  V_{B}(g)\tau_{B}V_{B}(g)^{\dag}.
\end{align}

Furthermore, a channel is covariant in the sense above if and only if its Choi state is invariant in the following sense \cite[Eq.~(59)]{CDP09}:
\begin{equation} \label{eq:choicondition}
\Phi_{RB}^{\mathcal{N}}=(\overline{\mathcal{U}}_{R}(g)\otimes\mathcal{V}
_{B}(g))(\Phi_{RB}^{\mathcal{N}})\quad\forall g\in G,
\end{equation}
where
\begin{equation}
\overline{\mathcal{U}}_{R}(g)(\omega_{R})\coloneqq  \overline{U}_{R}
(g)\omega_{R}U_{R}(g)^{T}.
\end{equation}
Note that the Choi state of the channel $\mathcal{N}_{A\rightarrow B}$, denoted $\Phi_{RB}^{\mathcal{N}}$,  is defined to be 
\begin{align}\label{eq:choistate}
\Phi_{RB}^{\mathcal{N}}  & \coloneqq  \mathcal{N}_{A\rightarrow B}(\Phi_{RA}),\\
\Phi_{RA}  & \coloneqq  \frac{1}{\left\vert A\right\vert }\sum_{i,j}|i\rangle\!\langle
j|_{R}\otimes|i\rangle\!\langle j|_{A}.
\end{align}
We can note a different condition by specifying a projector over the space of states symmetric with respect to group $G$ \cite{harrow2013church}. Denote the projector of this group as
\begin{equation}\label{eq:projector_app}
    \Pi^G =\frac{1}{|G|} \sum_{g \in G} \overline{U}_R(g) \otimes U_B (g) .
\end{equation}
Then the Choi state is symmetric with respect to $G$ if
\begin{equation}\label{eq:bosechoi}
    \Phi_{RB}^{\mathcal{N}} = \Pi^G \Phi_{RB}^{\mathcal{N}}.
\end{equation}
The above definition is a stronger condition of symmetry; thus, if the Choi state obeys \eqref{eq:bosechoi} then \eqref{eq:choicondition} follows.

If a channel describing Hamiltonian dynamics exhibits $G$-covariance symmetry, then the underlying Hamiltonian is symmetric with respect to $G$. This will become paramount to the results in Chapter~\ref{ch:symmham}, but can be verified easily.

\section{Conclusion}
With this, we have defined all of the relevant background knowledge necessary to comprehend the work beyond a basic background in physics. We have equipped ourselves with some meager knowledge of group theory and representation theory, and we have a handy guide for the formalities of quantum computing. Furthermore, we have established the utmost important concept of symmetry, upon which all the work in this thesis is based. Hopefully, this chapter will serve as a lighthouse to guide anyone unfamiliar with these concepts through the results we wish to communicate in later chapters.

\pagebreak
\singlespacing
\chapter{Hamiltonian Symmetry}\label{ch:symmham}
\doublespacing
\section{\label{sec:level1}Introduction}
\let\thefootnote\relax\footnote{Sections 2.2-2.6 of this chapter were previously published in \textit{Physical Review Letters} in "Quantum Algorithms for Testing Hamiltonian Symmetry" by Margarite L. LaBorde and Mark M. Wilde, Phys. Rev. Lett. 129, 160503.}
Symmetry is a key facet of nature that plays a fundamental role in physics \cite{Gross96,FR96}, and for many physics students this is first stressed via Noether's theorem \cite{Noether1918}, which states that symmetries in Hamiltonians correspond with conserved quantities in the related physical systems. In much the same manner, we too begin with a discussion of symmetry tests for a Hamilitonian. Hamiltonian symmetries have a number of immediate and profound effects. For instance, the symmetries of a Hamiltonian indicate the presence of superselection rules \cite{PhysRev.155.1428, Wick1952}. In quantum computing and information, symmetry can indicate the presence of resources or lack thereof \cite{marvian2012symmetry}, and it can be useful for improving the performance of variational quantum algorithms \cite{SSY20,Gard2020,BGAMBE21,LXYB22}. Identification of symmetries can simplify calculations by eliminating degrees of freedom associated with conserved quantities---this is at the heart of Noether's theorem. This makes symmetries, and especially Hamiltonian symmetries, extraordinarily useful in the context of physics.

In comparison, quantum computing is a significantly younger field of study. First introduced as a quantum-mechanical model of a Turing machine \cite{benioff1980computer}, the intrigue of quantum computers lies in their potential to outperform their classical counterparts. The most obvious asset of quantum computers is the inherent physics behind the calculation, utilizing non-classical features such as superposition and entanglement. Classical simulations of quantum systems quickly become intractable as the size of the Hilbert space grows, needing exponentially many bits to explore the state space which multiple qubits naturally occupy. Intuitively, the quantum mechanical nature of these computers allows for simulations of quantum systems in a forthright way (see \cite{childs2018toward} and references therein). 

A pertinent example of this, Hamiltonian simulation \cite{lloyd1996universal}, garners high interest in the field \cite{blatt2012quantum,somma2016trotter,cubitt2018universal,clinton2021hamiltonian}. Much work has been done to understand how to simulate these dynamics on quantum hardware such that they can be efficiently realized; however, to the best of our knowledge, before our work in \cite{laborde2022quantum}, there were no algorithms that test Hamiltonian symmetries on a quantum computer, even though simulating Hamiltonians in this manner and identifying the symmetries of said Hamiltonian are both deemed to be of utmost importance.

In this chapter, we give quantum algorithms to test whether a Hamiltonian evolution is symmetric with respect to the action of a discrete, finite group. This property is often referred to as the covariance \cite{CDP09} of the evolution. If the evolution is symmetric, then the Hamiltonian itself is also symmetric, and so our algorithms thus test for Hamiltonian symmetry. Furthermore, we show that for a Hamiltonian with an efficiently realizable unitary evolution and a group with an efficiently realizable unitary representation, we can perform our first test efficiently on a quantum computer \cite{clinton2021hamiltonian}. ``Efficiently'' here means that our algorithm is in the complexity class DQC-1. We give a second quantum algorithm for testing Hamiltonian symmetry which can be implemented by means of a variational approach \cite{CABBEFMMYCC20,bharti2021noisy}. The acceptance probabilities of both algorithms can be elegantly expressed in terms of familiar expressions of Hamiltonian symmetry. We further consider physically relevant examples demonstrating the capabilities of our algorithms.

The consequences of such results extend throughout many areas of physics. Any study of a physical Hamiltonian can benefit from finding its symmetries, and our algorithms allow for an efficient check for these symmetries. With this knowledge, dynamics can be simplified by excluding symmetry-breaking transitions, calculations can be reduced into fewer dimensions, and intuition can be gained about the system of interest. Our first algorithm also scales well, meaning that systems too large and cumbersome to be studied by hand or classical computation can instead be investigated in a practical time scale. Our quantum tests offer meaningful insight into physical dynamics.

In Section~\ref{sec:hamcov}, we begin by describing covariance symmetry of a unitary quantum channel---of which Hamiltonian dynamics are a special case. This section revisits the notion of symmetry given in Section~\ref{sec:channelnotions} in a more abstract, conversational frame. This description will directly motivate the algorithms proffered. 
Next, in Section~\ref{sec:hamsim} we briefly review how Hamiltonian dynamics can be simulated on a quantum computer through the Trotter--Suzuki approximation \cite{suzuki1976generalized}. We describe the assumptions of Trotterization and the resultant evolution approximation.  
Section~\ref{sec:efficient} presents our main result of this chapter. We give a quantum algorithm to test the covariance symmetry of Hamiltonian dynamics, and we show that this algorithm is DCQ1-complete.  
Section~\ref{sec:hamaccept} gives the derivation of the acceptance probability for our algorithm. The result shows demonstrable reliance on the quantum mechanical notion of Hamiltonian symmetry. 
Following this, in Section~\ref{sec:hamvar}, we further give another, related algorithm achievable with the aid of a quantum variational approach. This related test assumes a maximization over all input states, and thus is less efficient, yet realizes an interesting bound on Hamiltonian symmetry via the commutator norm and the twirl.
Finally, in Section~\ref{sec:hamexamples} we demonstrate examples of symmetry tests on currently available quantum computers. We consider the transverse-field Ising model, the Heisenberg~XY model \cite{LSM61}, and the weakly $J$-coupled NMR Hamiltonian \cite{van1996multidimensional},  whose evolution we test for various symmetry cases. 

\section{\label{sec:hamcov}Covariance of a Quantum Channel}

Before describing the symmetries of a Hamiltonian, we first recall the notion of covariance symmetry of a quantum channel \cite{Hol02}. Quantum channels transform one quantum state to another and are described by completely positive, trace-preserving maps. They serve as a convenient mathematical description of the dynamics induced by a Hamiltonian. The symmetries of a Hamiltonian naturally correspond to a covariance symmetry in the channel given by its evolution, and we exploit this in our algorithms.

As the mathematical description has been given in greater detail previously, let us instead take a high-level, intuitive review of the matter. (We recall the established concept of covariance symmetry in more detail in Chapter~\ref{ch:intro} in Section~\ref{sec:channelnotions}.) Suppose there is a channel sending Alice's quantum system to Bob's. For simplicity, we consider their systems to have the same dimension, though this is not required in general. Further suppose that we wish to determine if this channel is symmetric with respect to some finite, discrete group $G$, which has a projective unitary representation (as usual, denoted $\{U(g)\}_{g\in G}$). Then the channel is covariant if Alice acting with her representation $U(g)$ before sending the system through the channel is completely equivalent to Bob acting on his system with his representation of $g$ after the state has been sent through the channel. In this sense, the channel commutes with the action of the group.

One method for testing this property given some channel involves using its Choi state, formally defined in \eqref{eq:choistate}. The Choi state is generated by sending one half of a maximally-entangled state through the channel, which we now assume to be unitary. Given the same group and its unitary representation, we define a projector
\begin{equation}\label{eq:projector}
    \Pi^G \coloneqq\frac{1}{|G|} \sum_{g \in G} \overline{U}_R(g) \otimes U_B (g) ,
\end{equation}
onto the space of states of a composite system $RB$ that are symmetric with respect to the group $G$, where the overline denotes complex conjugation. (Here we use $R$ to refer to a reference system and $B$ to refer to Bob's system after the channel, a notion we use throughout.) The Choi state of the channel is equal to its projection onto the symmetric space if and only if the Choi state is symmetric with respect to $G$, given unitary representations of the system. If the Choi state of a channel exhibits this symmetry, then the channel itself is covariant \cite{CDP09}, and the converse is true as well. 

This last notion of symmetry allows us to directly prescribe an algorithm to test for Hamiltonian symmetries. If we can emulate the dynamics of a Hamiltonian efficiently, we can test for the symmetry of its Choi state. The symmetry of the Choi state then directly implies symmetry of the Hamiltonian being tested.

\section{\label{sec:hamsim}Quantum Simulations of Hamiltonians}

A necessary preliminary to testing Hamiltonian dynamics in any respect is first making sure they can be imported to computational framework to begin with. Quantum simulation techniques directly answer this need by providing a method for implementing Hamiltonian dynamics on quantum computers. Usually, this approach involves approximating them as sequences of quantum logic gates \cite{lloyd1996universal,childs2018toward}. Much work has been conducted in this field, including work on implementations on near-term hardware \cite{clinton2021hamiltonian,CCHCCS20}, simulation by qubitization \cite{low2019hamiltonian}, simulation of operator spread \cite{geller2021quantum}, and more. Here, we review a rather popular example implementation, though be advised that this is not the only method available for this purpose. 

One common approach \cite{lloyd1996universal} employs the Trotter--Suzuki approximation \cite{trotter1959product,suzuki1976generalized}. This method allows for decomposition into local Hamiltonian evolutions with some specified error. In this approximation, we suppose that the Hamiltonian $H$ is of the form
$ H = \sum_{i=1}^m H_i $,
where each $H_i$ is a $k$-local Hamiltonian, which means $H_i$ affects at most $k$ systems simultaneously. Then we can describe its evolution by
\begin{equation}
\label{eq:trotter}
    e^{-i H t} = \left(\prod_{j=1}^m e^{-i H_j t/r}\right)^r + \mathcal{O}\!\left (\frac{m^2 t^2}{r^2}\right ) \, ,
\end{equation}
where the correction term is negligible for $mt/r \ll 1$ and vanishes when the terms in the decomposition commute. (Here and throughout, we take $\hbar=1$.) By other methods, the error can be reduced to higher orders in~$t$~\cite{BACS07}.

\section{\label{sec:efficient}An Efficient Quantum Algorithm to Test Hamiltonian Symmetries}

Given the notion of covariance recalled above and a way to simulate the applicable Hamiltonian, we now propose a quantum algorithm to test a Hamiltonian for covariance symmetry. We begin by supposing that we have a Hamiltonian composed of a finite sum of $k$-local Hamiltonians, as described previously, with dynamics realized by higher-order methods such that the simulation error is $\mathcal{O}(t^4)$. Then we claim a test for symmetries of this Hamiltonian with respect to a group $G$ with a projective unitary representation $\{U(g)\}_{g\in G}$ can be performed efficiently on a quantum computer.

\begin{figure}[t]
\begin{center}
\includegraphics[width=3.5in]{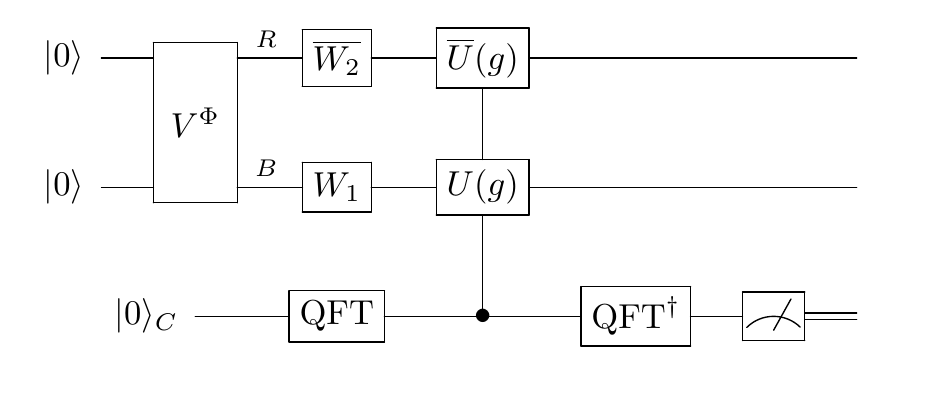}
\end{center}
\caption{Quantum circuit to test for the covariance of a unitary Hamiltonian evolution. The unitary $V^\Phi$ generates the state $\ket{\Phi}_{RA}$, the maximally-entangled state on $RA$. The evolution of the system is given by $e^{-i H t} = W_1 W_2^\dag$ and the $U(g)$ gates are controlled on a superposition over all of the elements $g \in G$, as in \eqref{eq:superpose-gr-elems}.}
\label{fig:originalcircuit}
\end{figure}

The circuit presented in Figure~\ref{fig:originalcircuit} implements just such a test, and we sketch its action here. Let the input state to the circuit be the maximally-entangled state $\Phi_{RA}$. Then act on the $A$ subsystem with the unitary Hamiltonian dynamics. As indicated in Figure~\ref{fig:originalcircuit}, the depth of the circuit to realize this algorithm can be cut in half by taking advantage of the transpose trick $(X \otimes I) \ket{\Phi} = (I\otimes X^T)\ket{\Phi}$ and the decomposition $e^{-i H t} = W_1 W_2^\dag$, which is clearly possible for Hamiltonian simulations of the form in~\eqref{eq:trotter} or from \cite{BACS07}. (The transpose trick, for those unfamiliar, can be understood best pictographically. A common shorthand for quantum circuits is to represent a maximally-entangled state as a `cap' or a Bell measurement as a `cup'. When there is a cup or cap present in the circuit diagram, a tensor can be moved around to the other leg of the cup/cap if the transpose of the tensor is used instead. See \cite{biamonte2017tensor} for further illustration.)

The state of the system after such tricks have been applied is given by
\begin{equation}
    \Phi_{RB}^{t} \coloneqq (\mathbb{I}_R \otimes e^{-i H t} ) \Phi_{RA} (\mathbb{I}_R \otimes e^{i H t}),
\end{equation}
which is exactly the Choi state of the channel generated by~$e^{-i H t}$. 
We then use the quantum Fourier transform (QFT) to generate a control register in the following superposed state:
\begin{equation}
    \ket{+}_C \coloneqq \frac{1}{\sqrt{|G|}}\sum_{g\in G}\ket{g}.
    \label{eq:superpose-gr-elems}
\end{equation}
Implementing the controlled  $\overline{U}(g)$ and $U(g)$ gates using the above control register yields the state
\begin{equation}
    \frac{1}{|G|}\sum_{g,g'\in G}(\overline{U}_R(g) \otimes U_B(g)) (\Phi_{RB}^{t}\otimes \ket{g} \! \bra{g'}_C) (\overline{U}^\dag_R(g') \otimes U^\dag_B(g')).
\end{equation}
Finally, we perform the measurement $\mathcal{M}=\{\ket{+}\!\bra{+}_C,\mathbb{I}-\ket{+}\!\bra{+}_C\}$ on the control register and accept if and only if the outcome $\ket{+}\!\bra{+}_C$ is observed. With this condition, the acceptance probability is given by
\begin{align}
     P_{\text{acc}} &= \operatorname{Tr}[\Pi^G \Phi_{RB}^{t}], \label{eq:originalacc}
\end{align}
where we have used the projector defined in \eqref{eq:projector} (see Appendix~\ref{app:accept} for a quick derivation of \eqref{eq:originalacc}, which we do not present here in an effort to preserve coherentness). 
As a limiting case of the gentle measurement lemma (see Section~\ref{sec:normsandlemmas}, and references \cite{Davies1969,itit1999winter,ON07}), we have that
\begin{equation}
\operatorname{Tr}[\Pi^{G}\Phi_{RB}^{t}]=1\quad\Leftrightarrow\quad \Phi_{RB}^{t}
=\Pi^{G}\Phi_{RB}^{t}\Pi^{G}, \label{eq:Ham-Bose-symmetric-equiv-cond}
\end{equation}
where the second statement is equivalent to the condition on the Choi state given in \eqref{eq:bosechoi} in Section~\ref{sec:channelnotions}. Therefore, by implementing this algorithm, we can determine whether a Hamiltonian exhibits a symmetry under a group $G$ with some projective unitary representation $\{U(g)\}_{g\in G}$. 
 
Now, indulge us in a minor aside to demonstrate approximate equivalence in~\eqref{eq:Ham-Bose-symmetric-equiv-cond}; specifically, to show that the acceptance probability is near to one if and only if the Choi state is approximately Bose symmetric. This will endow our test with a sense of continuity and robustness, and thus merits taking the time to ascertain.

First, consider the situation where the Choi state is approximately Bose symmetric, and set $\epsilon$ such that
\begin{equation}
    \epsilon \coloneqq \left\Vert \Phi_{RB}^{t} - \Pi^{G}\Phi_{RB}^{t}\Pi^{G} \right\Vert_1 \, .
    \label{eq:approx-sym-def}
\end{equation}
Here we employ the trace distance as a standard metric between states or subnormalized states. From this point, we use the reverse triangle inequality to conclude that
\begin{equation}
    \left\Vert \Phi_{RB}^{t} - \Pi^{G}\Phi_{RB}^{t}\Pi^{G} \right\Vert_1 \geq \left\Vert \Phi_{RB}^{t} \right\Vert_1 - \left\Vert \Pi^{G}\Phi_{RB}^{t}\Pi^{G} \right\Vert_1,
    \label{eq:rev-tri-app}
\end{equation}
where the first term on the right-hand side is equal to one for every quantum state, and the left-hand side is equal to $\epsilon$ by definition. Meanwhile, recall that our acceptance probability is given by
\begin{equation}
    P_{\textrm{acc}}=\operatorname{Tr}[ \Pi^{G}\Phi_{RB}^{t}]=\operatorname{Tr}[ \Pi^{G}\Phi_{RB}^{t}\Pi^{G}] = \left \Vert \Pi^{G}\Phi_{RB}^{t}\Pi^{G}  \right \Vert_1,
\end{equation}
where the second equality follows from cyclicity of trace and the last equality follows because $\Pi^{G}\Phi_{RB}^{t}\Pi^{G}$ is positive semi-definite.
This is the final term on the right-hand side of \eqref{eq:rev-tri-app}. Thus, by substituting in these terms and conducting some simple algebra, we conclude that
\begin{equation}
    P_{\textrm{acc}} \geq  1 - \epsilon\, .
\end{equation}

Consequently, whenever the state is approximately symmetric (in the sense that $\epsilon \approx 0$ in \eqref{eq:approx-sym-def}), the acceptance probability is near to one. Thus, our acceptance probability demonstrates a continuity property.

Next, we will show that the reverse direction is also true. This relationship can be demonstrated via the gentle operator lemma (see, again, Sec~\ref{sec:normsandlemmas} or references \cite{Davies1969,itit1999winter,ON07}). Let $\Phi$ be a density operator  and $\Lambda$ a measurement operator satisfying $0 \leq \Lambda \leq I$.  If $\operatorname{Tr}[\Lambda \Phi] \geq 1 - \epsilon$ for $\epsilon \in [0,1]$, then by Lemma~\ref{gentleO}, the following inequality holds 
\begin{equation}
    \left \Vert \Phi - \sqrt{\Lambda}\Phi \sqrt{\Lambda} \right \Vert_1 \leq 2 \sqrt{\epsilon}\,.
\end{equation}

In our case, set $\Lambda = \Pi ^G$. (Note that $\Pi^G = \sqrt{\Pi^G}$ for a projector.) Then suppose our acceptance probability is close to one, as in
\begin{equation}
    P_{\textrm{acc}}=\operatorname{Tr}[\Pi^G \Phi^t_{RB}] \geq 1 - \epsilon\, ,
\end{equation}
for some $\epsilon \in [0,1]$. Then Lemma~\ref{gentleO} implies that the state is approximately symmetric:
\begin{equation}
   \left \Vert \Phi^t_{RB} - \Pi^G \Phi^t_{RB} \Pi^G \right \Vert_1 \leq 2  \sqrt{\epsilon} \, .
\end{equation}

\begin{figure}[t!]
\begin{center}
\includegraphics[width=\linewidth]{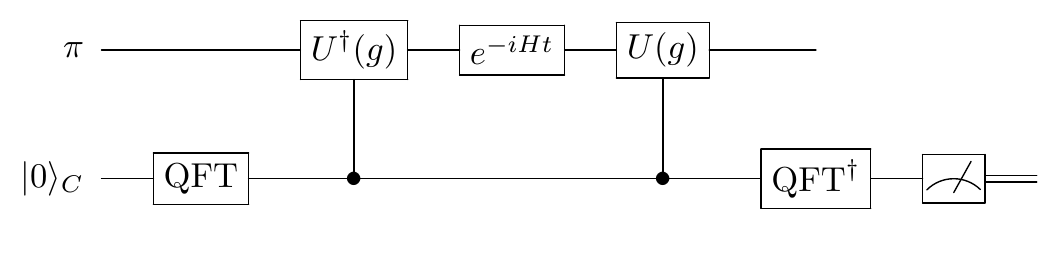}
\end{center}
\caption{Quantum circuit to test for the covariance of a unitary Hamiltonian evolution. Here, $\pi$ denotes the maximally-mixed state $\mathbb{I}/d$.
}
\label{fig:altcircuit}
\end{figure}

Returning to the algorithm with renewed confidence, upon investigation, it can be further simplified. By invoking the transpose trick (see, e.g., \cite{biamonte2017tensor}), we can identify the unitary on the reference system, $\overline{U}_R(g)$, with an equivalent action on $A$ given by $U^\dag_A(g)$. Since the action of the circuit would then take place solely on the subsystem $A$, the reference system $R$ is traced out. This is equivalent to preparing the maximally-mixed state (denoted by $\pi$) on $A$, such that this variation of our algorithm bears some resemblance to a one-clean-qubit algorithm \cite{KL98}, or a DQC1 algorithm, as discussed in Section~\ref{sec:dqc1def}. The only exception is that it requires $\log_2 |G|$ clean qubits for the control register. Figure~\ref{fig:altcircuit} shows this simplification. The acceptance probability corresponding to this described situation is 
\begin{equation}
\label{eq:modifiedacc}
    P_{\text{acc}}=\frac{1}{d |G|}\sum_{g \in G}\operatorname{Tr}[U^\dag (g) e^{i H t} U(g) e^{-i H t}],
\end{equation}
where $d$ is the dimension of the system being tested. Appendix~\ref{app:accept} gives a proof that the expression in \eqref{eq:modifiedacc} is equal to the acceptance probability of the circuit in Figure~\ref{fig:originalcircuit} using only elementary methods.

The proposed circuit in Figure~\ref{fig:altcircuit} is limited in complexity only by the implementation of the Hamiltonian and unitary representation. Thus, our first quantum algorithm is efficiently realizable. Furthermore, we have shown that entanglement resources usually necessary for characterizing the Choi operator of a quantum channel are not necessary here. We also note that the statistics accumulated for the maximally-mixed state can be equivalently found in a sampling manner using computational basis state inputs.

\subsection{DQC1-Completeness of Acceptance Probability} \label{sec:DQC1-completeness}

Having established the algorithm, it is now necessary to provide evidence that our algorithm cannot generally be simulated efficiently by classical computers---else, that would certainly be the preferred calculation method by most. For this purpose, we turn to established notions of computational complexity we so fortuitously described in Section~\ref{sec:dqc1def}. In this section, we prove that estimating the acceptance probability in \eqref{eq:modifiedacc} to within additive error is a DQC1-complete problem. This means that \eqref{eq:modifiedacc} can be estimated within a restricted model of quantum computing (via our algorithm and by an observation of \cite[Section~1]{SJ08}), and thus can be estimated efficiently on a quantum computer. Furthermore, this demonstrates that estimating~\eqref{eq:modifiedacc} is just as computationally hard as any problem in this complexity class. Strong evidence exists that classical computers cannot solve DQC1-complete problems efficiently \cite{MFF14,FKMNTT18}, thus ruling out any possibility of estimating the acceptance probability in \eqref{eq:modifiedacc} by a classical sampling approach. It is the end goal of this section, then, to convince any wary reader that this approach belongs firmly to the class of problems best handled by a quantum machine.

Before establishing our primary claim, we must first prove that estimating $\operatorname{Re}[\operatorname{Tr}[U^{2}]]/d$, where $U$ is the unitary generated by a quantum circuit, is a DQC1-complete problem. This will be a useful tool in asserting our claim. To do this, we will make use of the established fact that estimating $\operatorname{Re}[\operatorname{Tr}[U]]/d$ is a DQC1-complete problem. First, consider the usual construction with $U$ substituted by $U^{2}$. (The usual construction, as it were, is that which is succintly given in Section~\ref{sec:dqc1def}, but more thoroughly presented \cite[Section~1]{SJ08}.)  Similarly, we prepare a control qubit in the $|+\rangle$ state and all other qubits in the maximally-mixed state, act with controlled-$U^{2}$ (easily realized as two applications of controlled-$U$), and then measure in the Hadamard basis. Assigning the values $+1$ and $-1$ to the measurement outcomes, the expected value of the measurement outcomes is equal to $\operatorname{Re}[\operatorname{Tr}[U^{2}]]/d$. This implies that the problem is in DQC1. 

To show hardness, suppose that we have a way of estimating $\operatorname{Re}[\operatorname{Tr}[U^{2}]]/d$ for every $U$, where $U$ is the unitary generated by a quantum circuit. We can then show that it possible to use such an algorithm to estimate $\operatorname{Re}[\operatorname{Tr}[U]]/d$. The key idea behind the reduction is the following unitary
\begin{equation}
V\coloneqq 
\begin{bmatrix}
0 & I\\
U & 0
\end{bmatrix}
=|0\rangle\!\langle1|\otimes I+|1\rangle\!\langle0|\otimes U,
\end{equation}
which can be realized in terms of a quantum circuit as a $\sigma_{X}$ acting on a control qubit, followed by a controlled-$U$, i.e.,
\begin{equation}
\left(  |0\rangle\!\langle0|\otimes I+|1\rangle\!\langle1|\otimes U\right)
\left(  \sigma_{X}\otimes I\right)  .
\end{equation}
We then observe that
\begin{align}
\operatorname{Tr}\left[  V^{2}\right]   &  =\operatorname{Tr}\left[
\begin{bmatrix}
0 & I\\
U & 0
\end{bmatrix}
\begin{bmatrix}
0 & I\\
U & 0
\end{bmatrix}
\right]  \\
&  =\operatorname{Tr}\left[
\begin{bmatrix}
U & 0\\
0 & U
\end{bmatrix}
\right]  \\
&  =2\operatorname{Tr}[U].
\end{align}
Thus, by using the method to estimate $\operatorname{Re}[\operatorname{Tr}[V^{2}]]/d$, we estimate $\operatorname{Re}[\operatorname{Tr}[U]]/d$ up to a constant factor of $2$. This completes the proof that estimating $\operatorname{Re}[\operatorname{Tr}[U^{2}]]/d$ is a DQC1-complete problem.

Now let us turn to the main goal of this section: proving that estimating the acceptance probability in \eqref{eq:modifiedacc} to within additive error is a DQC1-complete problem. Let $H$ be a local Hamiltonian, and let $\left\{ U(g)\right\}_{g\in G}$ be a unitary representation of a group $G$, such that each $U(g)$ can be realized by a quantum circuit. Furthermore, suppose that the size $|G|$ of the group $G$ is no larger than linear in the number of qubits on which each circuit for $U(g)$ acts (so that $\log|G|$ is logarithmic in the number of qubits).  Thus, the size of the computational problem depends solely on the classical description of the Hamiltonian $H$ and the circuit descriptions of each unitary $U(g)$. Then we assert that the task of estimating the value
\begin{equation}
\frac{1}{d\left\vert G\right\vert }\sum_{g\in G}\operatorname{Tr}[U^{\dag
}(g)e^{iHt}U(g)e^{-iHt}]\label{eq:acceptance-prob-alg}
\end{equation}
is a DQC1-complete problem. To show this is in DQC1, we use the quantum
algorithm presented above and in \cite{laborde2022quantum}, as well as the observation from \cite[Section~1]{SJ08}, that the class DQC1 does not change if there are a logarithmic number of pure qubits, which is the case under our description of the problem stated above. 

To show hardness, let $U$ be a unitary realized by an arbitrary quantum circuit. We will now use the fact that estimating $\operatorname{Re}[\operatorname{Tr}[U^{2}]]/d$ is a DQC1-complete problem by showing that an algorithm for estimating the value in
\eqref{eq:acceptance-prob-alg} can estimate the value $\operatorname{Re} [\operatorname{Tr}[U^{2}]]/d$. To this end, we let the group $G$ be $\mathbb{Z}_{2}$ with representation $\left\{  I,V\right\}  $, where 
\begin{equation}
V=|0\rangle\!\langle1|\otimes U+|1\rangle\!\langle0|\otimes U^{\dag}.
\label{eq:V_def}
\end{equation}
(A quick calculation indicates that $V^2=I$, so that indeed $\{I,V\}$ is a representation of $\mathbb{Z}_2$.) A circuit for realizing $V$ can be efficiently generated from the circuit for realizing $U$. Indeed, we can construct a $0$-controlled-$U$ from~$U$:
\begin{equation}
|0\rangle\!\langle0|\otimes U+|1\rangle\!\langle1|\otimes I,
\end{equation}
and a $1$-controlled-$U^{\dag}$ from $U$, by reversing the order of the gates used to construct $U$, essentially running it backwards, with each circuit gate controlled on $1$, leading to
\begin{equation}
|0\rangle\!\langle0|\otimes I+|1\rangle\!\langle1|\otimes U^{\dag}.
\end{equation}
The overall circuit consists of $X\otimes I$, and then the above controlled
gates, so that
\begin{align}
&  \left(  |0\rangle\!\langle0|\otimes U+|1\rangle\!\langle1|\otimes I\right)
\left(  |0\rangle\!\langle0|\otimes I+|1\rangle\!\langle1|\otimes U^{\dag
}\right)  \left(  X\otimes I\right)  \nonumber\\
&  =\left(  |0\rangle\!\langle0|\otimes U+|1\rangle\!\langle1|\otimes U^{\dag
}\right)  \left(  X\otimes I\right)  \\
&  =|0\rangle\!\langle1|\otimes U+|1\rangle\!\langle0|\otimes U^{\dag}\\
&  =V.
\end{align}
We take the Hamiltonian to be one that realizes $H_{2}\otimes I$ via Hamiltonian evolution, where $H_{2}$ is a $2\times2$ Hadamard gate. Indeed, such a Hamiltonian is local and acts non-trivially on only one qubit. Then we have that $\left\vert G\right\vert =2$, its unitary representation is $\left\{  I,V\right\}  $, and $e^{-iHt}=H_{2}\otimes I$. Plugging into \eqref{eq:acceptance-prob-alg}, we find that
\begin{multline}
\frac{1}{d\left\vert G\right\vert }\sum_{g\in G}\operatorname{Tr}[U^{\dag
}(g)e^{iHt}U(g)e^{-iHt}]\label{eq:step-0}
=\frac{1}{2d}\operatorname{Tr}\left[  I\left(  H_{2}\otimes I\right)  I\left(
H_{2}\otimes I\right)  \right]  \\
+\frac{1}{2d}\operatorname{Tr}\left[  V\left(  H_{2}\otimes I\right)  V\left(
H_{2}\otimes I\right)  \right]  .
\end{multline}
Consider that
\begin{equation}
\operatorname{Tr}\left[  I\left(  H_{2}\otimes I\right)  I\left(  H_{2}\otimes
I\right)  \right]  =\operatorname{Tr}[I],\label{eq:step-1}
\end{equation}
and
\begin{align}
&  \operatorname{Tr}\left[  V\left(  H_{2}\otimes I\right)  V\left(
H_{2}\otimes I\right)  \right]  \nonumber\\
&  =\operatorname{Tr}\left[  \left(  |0\rangle\!\langle1|\otimes
U+|1\rangle\langle0|\otimes U^{\dag}\right)  \left(  H_{2}\otimes I\right)
\left(  |0\rangle\!\langle1|\otimes U+|1\rangle\!\langle0|\otimes U^{\dag
}\right)  \left(  H_{2}\otimes I\right)  \right]  \label{eq:step-2}\\
&  =\operatorname{Tr}\left[  \left(  |0\rangle\!\langle1|\otimes
U+|1\rangle\langle0|\otimes U^{\dag}\right)  \left(  |+\rangle\!\langle
-|\otimes U+|-\rangle\!\langle+|\otimes U^{\dag}\right)  \right]  \\
&  =\operatorname{Tr}\left[  |0\rangle\!\langle1|+\rangle\!\langle-|\otimes
U^{2}+|1\rangle\!\langle0|+\rangle\!\langle-|\otimes I+|0\rangle
\!\langle1|-\rangle\!\langle+|\otimes I+|1\rangle\!\langle0|-\rangle
\!\langle+|\otimes\left(  U^{\dag}\right)  ^{2}\right]  \\
&  =\langle1|+\rangle\!\langle-|0\rangle\operatorname{Tr}\left[  U^{2}\right]
+\langle0|+\rangle\!\langle-|1\rangle\operatorname{Tr}\left[  I\right]
+\langle1|-\rangle\!\langle+|0\rangle\operatorname{Tr}\left[  I\right]
+\langle0|-\rangle\!\langle+|1\rangle\operatorname{Tr}\left[  \left(  U^{\dag
}\right)  ^{2}\right]  \\
&  =\frac{1}{2}\left(  \operatorname{Tr}\left[  U^{2}\right]
-\operatorname{Tr}\left[  I\right]  -\operatorname{Tr}\left[  I\right]
+\operatorname{Tr}\left[  \left(  U^{\dag}\right)  ^{2}\right]  \right)
.\label{eq:step-2b}
\end{align}
Putting together \eqref{eq:step-0}, \eqref{eq:step-1}, and
\eqref{eq:step-2}--\eqref{eq:step-2b}, and noting the dimensions of the identities in \eqref{eq:step-1} and \eqref{eq:step-2b}, this implies for the above choices that
\begin{equation}
\frac{1}{d\left\vert G\right\vert }\sum_{g\in G}\operatorname{Tr}[U^{\dag
}(g)e^{iHt}U(g)e^{-iHt}]=\frac{1}{4} + \frac{\operatorname{Re}\left[  \operatorname{Tr}
\left[  U^{2}\right]  \right]  }{4d}
\end{equation}
Thus, the acceptance probability of this algorithm is an estimate of the desired quantity, up to a constant factor of $1/4$, establishing DQC1-hardness of our problem. Though this may seem unintuitive, since this reappropriation of the Hamiltonian symmetry testing algorithm takes only a few short steps, it suffices to verify our claims of computational complexity.

\section{A Derivation of Symmetry in the Acceptance Probability}\label{sec:hamaccept}

We have given statements about complexity and circuit construction, but the connection to symmetry may yet still seem aloof. Thus, we will continue no further without addressing the elephant in the room; from the acceptance probability given in \eqref{eq:modifiedacc}, we will derive a relationship with the familiar expression of Hamiltonian symmetry in quantum mechanics, further establishing this as an authentic test of symmetry. 

Begin by expanding the Hamiltonian evolution $e^{i H t}$, under the assumption that $\tau \coloneqq \left\Vert H \right\Vert_\infty t < 1$, where $\left\Vert X\right\Vert_\infty \coloneqq \sup_{\ket{\psi} \neq 0} \frac{\left\Vert X \ket{\psi}\right\Vert_2}{\left\Vert \ket{\psi}\right\Vert_2}$:
\begin{equation}\label{eq:hamtaylorseries}
    e^{iHt}=\mathbb{I} + iHt - \frac{H^2 t^2}{2}  - \frac{iH^3t^3}{6}+ \mathcal{O}(\tau^4).
\end{equation}
(This expansion is simply a truncated Taylor series.) Substituting this relation into the trace argument of \eqref{eq:modifiedacc}, 
we find that
\begin{multline}
    \operatorname{Tr}[U^\dag e^{iHt} U e^{-iHt}] 
     = d + t^2(\operatorname{Tr}[H U^\dag H U] - \operatorname{Tr}[H^2]) \\+ \frac{it^3}{2}(\operatorname{Tr}[U^\dag H^2 UH] - \operatorname{Tr}[U^\dag H UH^2]) + \mathcal{O}(\tau^4),
\end{multline}
where the equality is obtained using the linearity and cyclicity properties of the trace. After summing over all group elements, as in \eqref{eq:modifiedacc}, and using the group property (that $g\in G$ implies $g^{-1} \in G$), we find that $\frac{1}{|G|} \sum _{g \in G}(\operatorname{Tr}[U^\dag(g) H^2 U(g)H] - \operatorname{Tr}[U^\dag(g) H U(g)H^2]) = 0$. Thus, the third-order term of \eqref{eq:modifiedacc} vanishes. We can simplify the second-order term of \eqref{eq:modifiedacc} by using 
\begin{equation}
    \frac{1}{2}\operatorname{Tr}\!\left[|[U,H]|^2\right] =-\operatorname{Tr}[H U^\dag H U] + \operatorname{Tr}[H^2],
\end{equation}
where $|X|^2 \coloneqq X^\dag X$ implies that
\begin{equation}
    |[U,H]|^2 = H^2 - H U^\dag H U - U^\dag H U H + U^\dag H^2 U.
\end{equation}

Putting these equations together, we can  rewrite the acceptance probability of our first quantum algorithm elegantly as
\begin{equation}
    P_{\text{acc}} = 1 - \frac{t^2}{2 d |G|}\sum_{g \in G} \Big\|[U(g),H]\Big\|^2_2 + \mathcal{O}(\tau^4),
    \label{eq:accept-prob-comm}
\end{equation}
where $\left\| A \right\|_2 \coloneqq \sqrt{\operatorname{Tr}[|A|^2]}$ is the Hilbert--Schmidt norm defined in Section~\ref{sec:normsandlemmas}. Thus, to the first non-vanishing order of time $t$, the acceptance probability is equal to one if and only if
\begin{equation}
    [U(g),H] = 0, \quad \forall g \in G.
\end{equation}
This is exactly the familiar expression for symmetry in quantum mechanics. Furthermore, the expression in \eqref{eq:accept-prob-comm} clarifies that the normalized commutator norm $\frac{1}{ d |G|}\sum_{g \in G} \big\|[U(g),H]\big\|^2_2$ can be estimated efficiently by employing our algorithm. From \eqref{eq:accept-prob-comm}, we can see that the normalized commutator norm is small---equivalently, the Hamiltonian $H$ is approximately symmetric---if and only if the acceptance probability is close to one. Finally, as we show at length in Appendix~\ref{app:exact-expansion}, the acceptance probability has an exact expansion as follows, such that all odd powers in $t$ vanish and the even powers are scaled by normalized nested commutator norms, quantifying higher orders of symmetry:
\begin{equation}
    P_{\text{acc}} =\sum_{n=0}^{\infty}\frac{\left(  -1\right)
^{n}t^{2n}}{ \left(  2n!\right)  }\left(\frac{1}{d\left\vert G\right\vert}\sum_{g\in
G}\Big\Vert \left[  \left(  H\right)  ^{n},U(g)\right]  \Big\Vert _{2}^{2}\right)
\label{eq:acc_prob_beauty}
\end{equation}
where the nested commutator is defined as
\begin{equation}
[(X)^n,Y]  \coloneqq \underbrace{[X,\dotsb[X,[X}_{n \text { times }}, Y]] \dotsb],\quad
\quad [(X)^0,Y]  \coloneqq Y.
\label{eq:def-nested-comm}
\end{equation}
Note that the expansion in \eqref{eq:acc_prob_beauty} is valid for all $t\in\mathbb{R}$.
We also provide an alternative formula for $P_{\text{acc}}$ in Appendix~\ref{app:exact-expansion} using the group twirl. However, that formula lacks the elegant simplicity shown here, so we defer to \eqref{eq:acc_prob_beauty} in an act of blatant favoritism.

\section{Variational Quantum Algorithm for Symmetry Testing}\label{sec:hamvar}

Rather than feeding in the maximally-mixed state to the input of the circuit in Figure~\ref{fig:altcircuit}, we can instead feed in an arbitrary input state $\ket{\psi}$. The acceptance probability when doing so is equal to
\begin{equation}
    \left\Vert \mathcal{T}_G(e^{-iHt})
|\psi\rangle\right\Vert _{2}^{2} =
1-t^{2}\left\langle \mathcal{T}_{G}(H^{2})-\left(  \mathcal{T}
_{G}(H)\right)  ^{2}\right\rangle _{\psi}+O(\tau^{3}),
\label{eq:accept-prob-fixed-state}
\end{equation}
where $
\mathcal{T}_{G}(X)\coloneqq \frac{1}{|G|}\sum_{g\in G}U(g)XU^{\dag}(g)$ is the group twirl.
(Appendix~\ref{app:variational-sym-test} gives the full derivation for this expression.) Note that the bracketed term is non-negative as a consequence of the Kadison--Schwarz inequality \cite[Theorem~2.3.2]{bhatia07positivedefinitematrices}.
If we had the ability to prepare arbitrary quantum states as modeled in \cite{VW15}, we could optimize this acceptance probability over all states, resulting in the following value:
\begin{align}
    \left\Vert \mathcal{T}_G(e^{-iHt})\right\Vert _{\infty}^{2}   & \geq
1-\frac{2}{\left\vert G\right\vert }\sum_{g\in G}\left\Vert \left[
U(g),e^{-iHt}\right]  \right\Vert _{\infty} 
\label{eq:accept-prob-optimized} \\
& \geq 
1-\frac{2t}{\left\vert G\right\vert }\sum_{g\in G}\left\Vert \left[
U(g),H\right]  \right\Vert _{\infty}-4\tau^{2}
\label{eq:lower-bnd-est-small-t}.
\end{align}
These inequalities are proven in Appendix~\ref{app:variational-sym-test}, 
and the second holds under the assumption that $ \tau < 1$. This
demonstrates that the acceptance probability $\left\Vert \mathcal{T}_G(e^{-iHt})\right\Vert _{\infty}^{2}$ can be bounded from below in terms of a familiar expression of Hamiltonian symmetry. Thus, if the commutator norm $\frac{1}{\left\vert G\right\vert }\sum_{g\in G}\left\Vert \left[
U(g),H\right]  \right\Vert _{\infty}$ is small, as is the case when the Hamiltonian is approximately symmetric, then the acceptance probability of this algorithm is close to one. In Appendix~\ref{app:variational-sym-test}, 
we also prove that the acceptance probability satisfies
\begin{equation}
\left\Vert \mathcal{T}_G(e^{-iHt})\right\Vert _{\infty}^{2} \geq
    \left(  1-\sum_{n=1}^{\infty}\frac{t^{n}}{n!}\frac{1}{\left\vert
G\right\vert }\sum_{g\in G}\Big\Vert \left[  \left(  H\right)  ^{n}
,U(g)\right]  \Big\Vert _{\infty}\right)  ^{2}.
\label{eq:qma-nested-comm-bnd}
\end{equation}
(Safe to say that Appendix~\ref{app:variational-sym-test} contains all of the proofs for this section, as they are rather drawn-out mathematically but not particularly enlightening beyond their conclusions.)

Unfortunately, it is physically impossible to optimize over all input states. Instead, we can employ a variational ansatz to do so, in order to arrive at a lower-bound estimate of the acceptance probability on the left-hand side of \eqref{eq:accept-prob-optimized}. These methods have been vigorously pursued in recent years in the quantum computing literature \cite{CABBEFMMYCC20,bharti2021noisy}, and they can be combined with our approach here. In short, the acceptance probability in \eqref{eq:accept-prob-fixed-state} is a reward function that can be estimated by means of the circuit in Figure~\ref{fig:altcircuit} and a parameterized circuit that prepares the state $\ket{\psi}$. Then one can employ gradient ascent on a classical computer to modify the parameters used to prepare the state $\ket{\psi}$. After many iterations, these algorithms typically converge to a value, which in our case provides a lower bound estimate of the acceptance probability on the left-hand side  of~\eqref{eq:accept-prob-optimized}. In practice, it might be difficult in experiments to optimize over all pure states, and one could instead consider a variational product state ansatz, as in \cite{GLXXGLSPXZWZF21}. 

\section{\label{sec:hamexamples}Examples}

To exhibit our first algorithm, we consider three different examples. Namely, we consider the transverse Ising model with a cyclic boundary condition, the weakly $J$-coupled NMR Hamiltonian, and the Heisenberg XY model. In each case, we have employed IBM Quantum's noisy simulator to demonstrate the behavior of a quantum computer. In the near future, we suspect all of these algorithms will be testable on physical systems available to the public.

First, we consider the dynamics given by the transverse Ising model with a cyclic boundary condition. This Hamiltonian is given as $
    H_{\text{TIM}} \coloneqq \sigma^Z_N \otimes \sigma^Z_1 + \sum_{i=1}^{N-1} \sigma^Z_i \otimes \sigma^Z_{i+1} +  \sum_{i=1}^N \sigma^X_i$.
It is permutationally invariant, so that $[H_{\text{TIM}},W^{\pi}] = 0$ for all $\pi \in S_N$, where $W^{\pi}$ is a unitary representation of the permutation $\pi \in S_N$, with $S_N$ denoting the symmetric group on $N$ letters. It also obeys the symmetry $[H_{\text{TIM}},\sigma^X_1 \otimes \cdots \otimes \sigma^X_N] = 0$. Thus, we can use our algorithm to test these symmetries, and we do so in Figure~\ref{fig:ising} for $N=3$ and $N=4$. (Rather than test all permutations, we indicate here that we test for invariance under a cyclic shift, a subgroup of $S_N$.) We find that each respective symmetry test passes with reasonable probability, with deviation from one due to noise added to the simulation. 

\begin{figure}[t!]
\begin{center}
\includegraphics[width=3.4in]{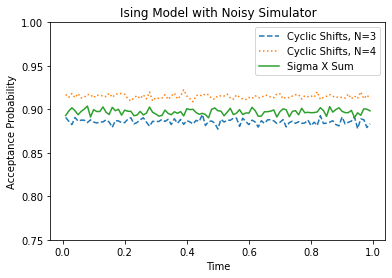}
\end{center}
\caption{Results of symmetry tests for the transverse Ising model for $N=3$ and $N=4$, using IBM Quantum's noisy simulator. The symmetries in question are given by acting simultaneously on all systems by either the cyclic group of order $N$ or a conjugation by~$(\sigma^X)^{\otimes N}$.}
\label{fig:ising}
\end{figure}

\begin{figure}[t!]
\begin{center}
\includegraphics[
width=2.7in
]{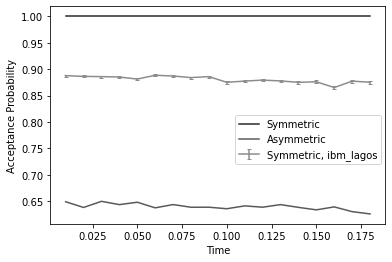}
\end{center}
\caption{Acceptance probability of our Hamiltonian symmetry testing algorithm over time, for the NMR Hamiltonian example and a group for which the Hamiltonian is either symmetric or asymmetric. The acceptance probability decays as time gets larger for the asymmetric case, even ideally. Example calculations using \textit{ibm\_lagos} show a large degree of initial symmetry before noise begins to dominate.}
\label{fig:combined}
\end{figure}

Next, we consider the example dynamics given by a weakly $J$-coupled NMR Hamiltonian \cite{van1996multidimensional}. This Hamiltonian can be expressed as 
\begin{equation}
    H_{\textrm{NMR}} \coloneqq \frac{\omega_1}{2} \sigma^Z_{1} + \frac{\omega_2}{2} \sigma^Z_{2} + \frac{\pi J}{2} \sigma^Z_{1} \otimes \sigma_{2}^Z,
\end{equation}
in units of $\hbar = 1$, where $\omega_i$ is the frequency associated to  spin $i\in\{1,2\}$ and $J$ is the coupling constant. This can be written as a diagonal matrix in the eigenbasis basis of the Pauli-$Z$ matrix; therefore, the time-evolution dynamics are also given by a diagonal matrix, as shown below. Due to this simplicity, this example can be easily simulated on noisy quantum computers for appropriately short times using a two-qubit system. 

It is clear that $H_\textrm{NMR}$ is symmetric with respect to the group generated by taking the Pauli-$Z$ gates on either qubit---this corresponds to a representation of the group $\mathbb{Z}_2 \times \mathbb{Z}_2$. It is not, however, symmetric under the group generated by the CNOT and SWAP gates acting on the two-qubit system---corresponding to $D_3$, the triangular dihedral group. Thus, using these two groups as described in our algorithms, we can visualize examples of both symmetry and asymmetry, as shown in Figure \ref{fig:combined}. To generate this Hamiltonian, we define the terms $\omega_{\textrm{AVG}}=\frac{1}{2}(\omega_1 + \omega_2)$ and $\Delta\omega = \omega_2 - \omega_1$ as is common. Then the Hamiltonian can be written as:
\begin{equation}
    H_{\text{NMR}} =
    \begin{pmatrix}
-\omega_{\text{AVG}} + \frac{\pi J}{2} & 0 & 0 & 0 \\
0 & \frac{\Delta\omega-\pi J}{2} & 0 & 0\\
0 & 0 & -\frac{\Delta\omega+\pi J}{2} & 0\\
0 & 0 & 0 & \omega_{\text{AVG}} + \frac{\pi J}{2}\\
\end{pmatrix}.
\end{equation}

\begin{figure}
\begin{center}
\includegraphics[width=3.4in]{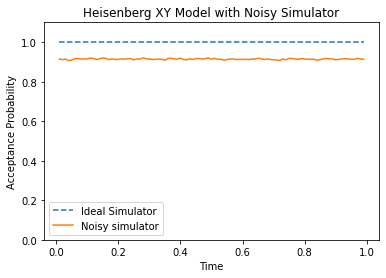}
\end{center}
\caption{Results of testing the Heisenberg XY model on four qubits using IBM Quantum's noisy simulator. The symmetries in question are given by acting simultaneously on all systems by the stated $\sigma_i$ matrix.}
\label{fig:HXY}
\end{figure}

As a final example, we consider the Heisenberg~XY model  \cite{LSM61}. The Hamiltonian under consideration is given by
\begin{equation}
    H_{\text{XY}} \coloneqq J \left( \sum_{i=1}^{N-1} \sigma^X_i \otimes \sigma^X_{i+1} + \sigma^Y_i \otimes \sigma^Y_{i+1}\right) \, ,
\end{equation}
where  $J > 0$ is the antiferromagnetic exchange interaction between spins.
This Hamiltonian showcases symmetry with respect to conjugation by each Pauli matrix acting on all qubits simultaneously. That is,
\begin{align}
    [H_{\text{XY}},\sigma^X_1 \otimes \cdots \otimes \sigma^X_N] = 0 , \\
    [H_{\text{XY}},\sigma^Y_1 \otimes \cdots \otimes \sigma^Y_N] = 0 ,\\
    [H_{\text{XY}},\sigma^Z_1 \otimes \cdots \otimes \sigma^Z_N] = 0 .
\end{align}
The symmetry group to consider in this case is thus $\{\pm 1, \pm i\} \times \{I^{\otimes N}, (\sigma^X)^{\otimes N},(\sigma^Y)^{\otimes N},(\sigma^Z)^{\otimes N}\} $. As the global phase factors are irrelevant in this case, we can test this symmetry by having two control qubits prepared in a uniform superposition, one of which activates $(\sigma^X)^{\otimes N}$ and the other activating $(\sigma^Z)^{\otimes N}$. We tested this symmetry by implementing our algorithm on the IBM Quantum noisy simulator, with $N=4$, and find that the symmetry test passes with reasonable probability, as indicated in Figure~\ref{fig:HXY}. The fact that the acceptance probability is not exactly equal to one has to do with the noise involved in the simulation.

We note here that all computer codes used to generate the examples in the main text and the supplementary material are available online.\footnote{\url{https://github.com/mlabo15/Hamiltonian-Symmetry}}

\section{Conclusion}

In this work, we have specified algorithms to test a Hamiltonian for symmetry with respect to a group. We have demonstrated this construction from a ground-up perspective beginning with channel covariance and ending with an efficient algorithm that relies solely on a maximally-mixed state as input. We show that this test is DQC1-complete and therefore considered classically hard and is furthermore efficiently realizable on quantum computers given similarly efficient Hamiltonian simulations. Even better, the acceptance probability directly relies upon the familiar expression of symmetry from quantum mechanics. We then give a second test employing a variational approach to estimate the commutator norm between the Hamiltonian evolution and group action. We have been successful in showing that these algorithms are useful tools that should be of interest throughout many realms of physics.

\pagebreak
\singlespacing
\chapter{Symmetry Testing of Quantum States}\label{ch:symmstates}
\doublespacing
\section{Introduction}

In many applications, the symmetries of states themselves are the most interesting investigation. For instance, permutation symmetry in the extension of a bipartite quantum state indicates a lack of entanglement in that state \cite{W89a,DPS02,DPS04}. This permutation symmetry limits entanglement, which relates to fundamental principles of quantum information such as the no-cloning theorem \cite{Park1970,D82,nat1982}\ and monogamy \cite{T04}. Additionally, consider a system of two parties when there is a lack of a shared reference frame between them. This implies that a quantum state prepared relative to another party's reference frame respects a certain symmetry and is less useful than one that breaks the same symmetry \cite{RevModPhys.79.555}. In all of these cases, a state respecting a symmetry is less resourceful than one that breaks it. 

Resource theory in quantum information is its own beast to consider. In recent years, quantum resource theories have been proposed for each of the above scenarios. Asymmetry in general has been established via resources theories \cite{MS13,MS14}, and the concept of unextendibility---a quantum state that disobeys permutation symmetry when artificially extended to larger Hilbert spaces---has been proposed as a potential resource as well \cite{KDWW19}. The above example of shared reference frames, or frameness, was discussed in \cite{GS08}. All of these aim to quantify the resourcefulness of quantum states. (For a more thorough review, see \cite{CG18}.) As such, it is useful to be able to test whether a quantum state possesses symmetry and quantify how much symmetry it possesses. 

In this chapter, we show how a quantum computer can test for symmetries of quantum states and channels generated by quantum circuits. In fact, our quantum-computational tests actually quantify how symmetric a state or channel is. Given that asymmetry is a useful resource in a wide variety of contexts, as alluded to above, while being potentially difficult for a classical computer to verify, our tests are helpful in determining how useful a state will be for certain quantum information processing tasks. These tests are in the spirit of the larger research program of using quantum computers to understand fundamental quantum-mechanical properties of high-dimensional quantum states, such as symmetry and entanglement, that are out of reach for classical computers.

We will make use of a general form of symmetry of quantum states, originally introduced in \cite{laborde2021testing}, that captures both the extendibility of bipartite states \cite{W89a,DPS02,DPS04}, as well as symmetries of a single quantum system with respect to a group of unitary transformations \cite{MS13,MS14}. Luckily, we have already introduced such notions of symmetry in Chapter~\ref{ch:intro}! We will draw heavily on the notions of $G$-symmetric extendibility and related definitions given in Section~\ref{sec:statenotions}, which were originally published contiguous with the results of this chapter---however, those notions will serve us well not only here but also in the next chapter (and to some degree, the previous chapter as well) so we have sneakily chosen to introduce them right from the beginning. 

In Section~\ref{sec:tests-o-sym}, we discuss the most important contribution of this chapter---namely, how a quantum computer can test for and estimate quantifiers of $G$-symmetric extendibility. These quantifiers are collectively called \textit{maximum symmetric fidelities}, with more particular names given as appropriate. We prove that our quantum computational tests of symmetry have acceptance probabilities precisely equal to these fidelities, thus endowing these resource-theoretic measures with operational meanings and allowing us to estimate them to arbitrary precision. Using complexity-theoretic language, we demonstrate that these quantum-computational tests of symmetry can be conducted in the form of a QIP(2) system \cite{W09,VW15}. (See Chapter~\ref{ch:intro} for a review of relevant quantum complexity theory, and Section~\ref{qipn} for QIP in particular.) Our results thus generalize previous results in the context of unextendibility and entanglement of bipartite quantum states \cite{HMW13,Hayden:2014:TQI}; additionally, we go on in the next section to clarify the relation between our results and previous ones in the literature. Simpler forms of the test can be conducted without any prover are thus efficiently computable on a quantum computer.

In Section~\ref{sec:specialized-tests}, we show how the various symmetry tests specialize for testing the $k$-extendibility or $k$-Bose extendibility \cite{W89a,DPS02,DPS04} of both bipartite and multipartite states. These serve as tests of separability, and we will be expanding upon them in Chapter~\ref{ch:gensep}. We also show how to test for the covariance symmetry of a quantum channel, which includes testing the symmetries of Hamiltonian evolution as a special case, generalizing the case discussed in the previous chapter.

Finally, in Section~\ref{sec:res-theories}, we review the resource theory of asymmetry \cite{MS13,MS14}. After doing so, we define several generalized resource theories of asymmetry, including both the resource theory of asymmetry and the resource theory of $k$-unextendibility \cite{KDWW19}\ as special cases. We also define resource theories of Bose asymmetry, which is an original contribution of our work in \cite{laborde2021testing}. This development shows that the acceptance probabilities of the aforementioned algorithms, i.e., maximum symmetric fidelities, are resource monotones and thus well-motivated from the resource-theoretic perspective.

In what follows, we proceed in the aforementioned order, and we finally reflect on these concepts in Section~\ref{sec:sym-conclusion}.

\section{Tests of Symmetry \& Extendibility}\label{sec:tests-o-sym}

To begin, recall the definitions given in Section~\ref{sec:statenotions} for $G$-Bose symmetric extendibility and $G$-symmetric extendibility.

\begin{definition}
[$G$-symmetric-extendible]A state $\rho_{S}$ is
$G$-symmetric-extendible if there exists a state $\omega_{RS}$ such that
\begin{enumerate}
\item the state $\omega_{RS}$ is an extension of $\rho_{S}$, i.e.,
\begin{equation}
\operatorname{Tr}_{R}[\omega_{RS}]=\rho_{S}, \label{eq:G-ext-1}
\end{equation}
\item the state $\omega_{RS}$ is $G$-invariant, in the sense that
\begin{equation}
\omega_{RS}=U_{RS}(g)\omega_{RS}U_{RS}(g)^{\dag}\qquad\forall g\in G.
\label{eq:G-ext-2}
\end{equation}
\end{enumerate}
\end{definition}

\begin{definition}
[$G$-Bose-symmetric-extendible]A state $\rho_{S}$ is $G$-Bose-symmetric-extendible (G-BSE) if there exists a state $\omega_{RS}$ such that
\begin{enumerate}
\item the state $\omega_{RS}$ is an extension of $\rho_{S}$, i.e.,
\begin{equation}
\operatorname{Tr}_{R}[\omega_{RS}]=\rho_{R},
\end{equation}
\item the state $\omega_{RS}$ satisfies
\begin{equation}
\omega_{RS}=\Pi_{RS}^{G}\omega_{RS}\Pi_{RS}^{G}.
\label{eq:bose-G-sym-ext-cond}
\end{equation}
\end{enumerate}
where $\Pi^G_{RS}$ is defined as in \eqref{eq:groupprojectorog}.
\end{definition}

Given these definitions, we will progress throughout this section by delineating algorithms to test for these abstract symmetries. We will then progressively narrow our view, moving from abstract tests of any symmetry, to separability tests, to covariance of quantum channels. We thus begin by defining tests of these above definitions, and, building upon them, we will create a managerie of algorithms forthwith. 

We assert that we can use a quantum computer to test for $G$-symmetric extendibility of a quantum state, as well as for other previously discussed forms of symmetry given in Section~\ref{sec:channelnotions}. We assume the following in doing so:
\begin{enumerate}
\item there is a quantum circuit available that prepares a purification $\psi_{S^{\prime}S}^{\rho}$ of the state $\rho_{S}$,
\item there is an efficient implementation of each of the unitary operators in the set $\{U_{RS}(g)\}_{g\in G}$,
\item and there is an efficient implementation of each of the unitary operators in the set $\{\overline{U}_{RS}(g)\}_{g\in G}$.
\end{enumerate}

The first assumption can be made less restrictive by employing the variational, purification-learning procedure from \cite{CSZW20}. That is, given a circuit that prepares the state $\rho_{S}$, the variational algorithm from \cite{CSZW20} outputs a circuit that approximately prepares a purification of $\rho_{S}$. We should note that the convergence of the algorithm from \cite{CSZW20} has not been established, and so the first assumption might be necessary for some applications.

The last assumption can be relaxed by the following reasoning: a standard gate set for approximating arbitrary unitaries in quantum computing consists of the controlled-NOT gate, the Hadamard gate, and the $T$ gate \cite{nielsenchuang}. The first two gates have only real entries while the $T$ gate is a diagonal $2\times2$ unitary gate with the entries $1$ and $e^{i\pi/4}$. Thus, the complex conjugate of this gate is equal to $T^{\dag}$. Thus, if a circuit for $U_{RS}(g)$ is constructed from this standard gate set, then we can generate a circuit for $\overline{U}_{RS}(g)$ by replacing every $T$ gate in the original circuit with $T^{\dag}$.

We now consider various quantum computational tests of symmetry that have increasing complexity. To give insight along the way, we provide an example along with the tests below. For this purpose, we employ the triangular dihedral group $D_3$.  This particular group was previously defined in Section~\ref{sec:grouptheory}, and for $D_3$, it has the lovely property of being isomorphic to the symmetric group on three elements---one of the smallest non-Abelian groups.

Of course, to actually implement this example, we will require a unitary representation that allows the group to be implemented on a quantum computer. To suit this purpose, we employ the two-qubit representation generated by setting the rotation element $r \to CNOT$ and the flip $f \to \text{CNOT} \circ \text{SWAP}$. A quick check will ensure that these generators obey the commutation rules of the group and will generate the full group defined in  Section~\ref{sec:grouptheory}. Throughout the next four sections, this group will be substituted into the presented algorithms to demonstrate their construction.

\subsection{Testing $G$-Bose Symmetry} \label{sec:simple-algorithm}

Let us begin by discussing the simplest version of the problem. Suppose that the state $\rho_{S}$ is pure, so that we can write it as $\rho_{S}=\psi_{S} \equiv \ket{\psi}\bra{\psi}_S$, and that the $R$ system is trivial. We recover the traditional case of $G$-Bose symmetry mentioned in Example~\ref{ex:usual-Bose-symmetry}. Thus, our goal is to decide if
\begin{equation}
\ket{\psi}_{S}=U_{S}(g)\ket{\psi}, \quad\forall g\in G.
\end{equation}
This condition is equivalent to
\begin{equation}
\ket{\psi}_{S}=\Pi_{S}^{G}\ket{\psi}_{S},
\end{equation}
where $\Pi_S^G$ is as given in \eqref{eq:unitaries-and-projs}. This last condition is, in turn, equivalent to the statement
\begin{equation}
\left\Vert \Pi_{S}^{G}\ket{\psi}_{S}\right\Vert_{2}=1.
\label{eq:proj-pure-state}
\end{equation}
The equivalence
\begin{equation}
\ket{\psi}_{S}=\Pi_{S}^{G}\ket{\psi}_{S}\quad\Leftrightarrow
\quad\left\Vert \Pi_{S}^{G}\ket{\psi}_{S}\right\Vert _{2}=1
\end{equation}
holds by expanding the norm in terms of the inner product and using the adjoint property of the projector or, alternatively, by the Pythagorean theorem. Thus, if we have a method to perform the projection onto $\Pi_{S}^{G}$, then we can decide whether \eqref{eq:proj-pure-state} holds.

There is a simple quantum algorithm to do so. This algorithm was originally proposed in \cite[Chapter~8]{harrow2005applications} under the name of ``generalized phase estimation.'' It proceeds as follows and can be summarized as \textquotedblleft performing the quantum phase estimation algorithm with respect to the unitary representation $\{U_{S}(g)\}_{g\in G}$\textquotedblright:
\newline
\begin{Algorithm}
[$G$-Bose symmetry test]\label{alg:simple}The algorithm consists of the following steps:
\begin{enumerate}
\item Prepare an ancillary register $C$ in the state $\ket{0}_{C}$.

\item Act on register $C$ with a quantum Fourier transform or other sequence of gates capable of creating a equal superposition over group elements.

\item Append the state $\ket{\psi}_{S}$ and perform the following controlled unitary:
\begin{equation}
\sum_{g\in G}\ket{g}\bra{g}_{C}\otimes U_{S}(g),
\end{equation}

\item Perform an inverse quantum Fourier transform on register~$C$, measure in the basis $\{\ket{g}\bra{g}_{C}\}_{g\in G}$, and accept if the zero outcome $\ket{0}\bra{0}_{C}$ occurs.
\end{enumerate}
\end{Algorithm}

\begin{figure}[ptb]
\begin{center}
\includegraphics[
width=3.5in
]{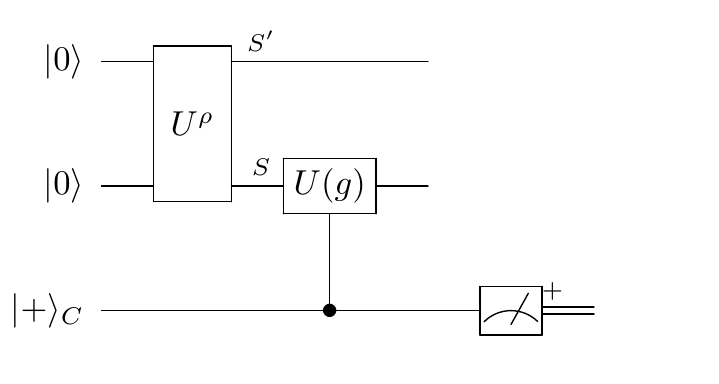}
\end{center}
\caption{Quantum circuit to implement Algorithm~\ref{alg:simple}. The unitary $U^{\rho}$ prepares a purification $\psi_{S^{\prime}S}$ of the state $\rho_{S}$. Algorithm~\ref{alg:simple} tests whether the state $\rho_{S}$ is $G$-Bose symmetric, as defined in Example~\ref{ex:usual-Bose-symmetry}. Its acceptance probability is equal to $\operatorname{Tr}[\Pi_{S}^{G}\rho_{S}]$, where $\Pi_{S}^{G}$ is defined in \eqref{eq:unitaries-and-projs}.}
\label{fig:simple-case}
\end{figure}

Note that the register $C$ has dimension $\left\vert G\right\vert $. Also, we can write the state $\ket{0}_{C}$ as $|e\rangle_{C}$, where $e$ is the identity element of the group. The result of Step~2 of Algorithm~\ref{alg:simple}\ is to prepare the following uniform superposition state:
\begin{equation}
\ket{+}_{C}\coloneqq \frac{1}{\sqrt{\left\vert G\right\vert }}\sum_{g \in G}\ket{g}_{C}. \label{eq:plus-over-group}
\end{equation}
We pause to note that although the quantum Fourier transform is specified in the above algorithm, in fact, any operation which creates the desired superposition state is equally acceptable.

Moving on, the overall state after Step~3 is as follows:
\begin{equation}
\frac{1}{\sqrt{\left\vert G\right\vert }}\sum_{g\in G}\ket{g}_{C} U_{S}(g)\ket{\psi}_{S}.
\end{equation}
The final step of Algorithm~\ref{alg:simple}\ projects the register $C$ onto the state $\ket{+}_{C}$. According to the aforementioned convention, Algorithm~\ref{alg:simple}\ accepts if the identity element outcome $\ket{+}\!\bra{+}$ occurs. The probability that Algorithm~\ref{alg:simple} accepts is equal to
\begin{align}
&  \left\Vert \left(  \langle+|_{C}\otimes I_{S}\right)  \left(  \frac{1}{\sqrt{\left\vert G\right\vert}}\sum_{g\in G}\ket{g}_{C}U_{S} (g)\ket{\psi}_{S}\right) \right\Vert _{2}^{2}\nonumber\\
&  =\left\Vert \frac{1}{\left\vert G\right\vert} \sum_{g\in G} U_{S}(g)\ket{\psi}_{S}\right\Vert_{2}^{2}\label{eq:simple-acc-prob-1}\\
&  =\left\Vert \Pi_{S}^{G}\ket{\psi}_{S}\right\Vert _{2}^{2}.
\label{eq:simple-acc-prob-2}
\end{align}

Figure~\ref{fig:simple-case} depicts this quantum algorithm. Not only does it decide whether the state $\ket{\psi}_{S}$ is symmetric, but it also quantifies how symmetric the state is. Since the acceptance probability is equal to $\left\Vert \Pi_{S}^{G}\ket{\psi}_{S}\right\Vert _{2}^{2}$, and this quantity is a measure of symmetry (see Theorem~\ref{thm:res-mono-G-Bose-sym}), we can repeat the algorithm a large number of times to estimate the acceptance probability to arbitrary precision.

The same quantum algorithm can decide whether a given mixed state $\rho_{S}$ is $G$-Bose symmetric (see Example~\ref{ex:usual-Bose-symmetry}). Similar to the above, it also can estimate how $G$-Bose symmetric the state $\rho_{S}$ is. To see this, consider that the acceptance probability for a pure state can be rewritten as follows:
\begin{equation}
\left\Vert \Pi_{S}^{G}\ket{\psi}_{S}\right\Vert_{2}^{2} = \operatorname{Tr}[\Pi_{S}^{G}\ket{\psi}\bra{\psi}_{S}].
\end{equation}
Then since every mixed state can be written as a probabilistic mixture of pure states, it follows that the acceptance probability of Algorithm~\ref{alg:simple}, when acting on the mixed state $\rho_{S}$, is equal to
\begin{equation}
\operatorname{Tr}[\Pi_{S}^{G}\rho_{S}]. \label{eq:acc-prob-bose-test}
\end{equation}

This acceptance probability is equal to one if and only if $\rho_{S}=\Pi_{S}^{G}\rho_{S}\Pi_{S}^{G}$, and so this test is a faithful test of $G$-Bose symmetry. The equivalence
\begin{equation}
\operatorname{Tr}[\Pi_{S}^{G}\rho_{S}]=1\quad\Leftrightarrow\quad\rho_{S} =\Pi_{S}^{G}\rho_{S}\Pi_{S}^{G} \label{eq:Bose-symmetric-equiv-cond}
\end{equation}
follows as a limiting case of Lemma~\ref{gentleM}, the gentle measurement lemma \cite{itit1999winter,ON07}, and also the positive definiteness of the trace norm. Again, through repetition, we can estimate the acceptance probability $\operatorname{Tr}[\Pi_{S}^{G}\rho_{S}]$ and then employ it as a measure of $G$-Bose symmetry (as we will show in a later section via Theorem~\ref{thm:res-mono-G-Bose-sym}, when it becomes time to discuss such things).

Interestingly, the acceptance probability of Algorithm~\ref{alg:simple} can be expressed as the \textit{maximum $G$-Bose-symmetric fidelity}, defined for a state $\rho_S$ as
\begin{equation}
\max_{\sigma_{S}\in \operatorname{B-Sym}_{G}}F(\rho_{S},\sigma_{S}),
\end{equation}
where
\begin{equation}
\operatorname{B-Sym}_{G}\coloneqq \left\{  \sigma_{S}\in\mathcal{D}(\mathcal{H}_{S}): \sigma_{S}= \Pi_{S}^{G}\sigma_{S}\Pi_{S}^{G}\right\}  ,
\end{equation}
and the fidelity of quantum states $\omega$ and $\tau$  is defined as \cite{U76}
\begin{equation}
F(\omega,\tau)\coloneqq \left\Vert \sqrt{\omega}\sqrt{\tau}\right\Vert_{1}^{2}.
\end{equation}
Thus, we arrive at the following theorem, Theorem~\ref{thm:acc-prob-g-Bose-sym}, and provide a proof in Appendix~\ref{app:acc-prob-g-Bose-sym}. As such, Algorithm~\ref{alg:simple} gives an operational meaning to the maximum $G$-Bose-symmetric fidelity in terms of its acceptance probability, and it can be used to estimate this fundamental measure of symmetry.

\begin{theorem}\label{thm:acc-prob-g-Bose-sym}
For a state $\rho_{S}$, the acceptance probability of Algorithm~\ref{alg:simple} is equal to the maximum $G$-Bose
symmetric fidelity. That is,
\begin{equation}
\operatorname{Tr}[\Pi_{S}^{G}\rho_{S}]=\max_{\sigma_{S}\in
\operatorname{B-Sym}_{G}}F(\rho_{S},\sigma_{S}).
\end{equation}
\end{theorem}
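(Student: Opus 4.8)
The plan is to strip the statement down to a purely geometric fact about the density operator $\rho_S$ and the orthogonal projection $\Pi_S^G$, and then to evaluate the maximized fidelity by Uhlmann's theorem. First I would record that $\Pi_S^G$ is genuinely an orthogonal projector. Hermiticity holds because $U_S(g)^\dagger = U_S(g^{-1})$ and $g \mapsto g^{-1}$ merely permutes $G$, so $\sum_g U_S(g)$ is self-adjoint; idempotency is immediate from \eqref{eq:unitaries-and-projs}, since $U_S(g)\Pi_S^G = \Pi_S^G$ gives $(\Pi_S^G)^2 = \frac{1}{|G|}\sum_{g\in G} U_S(g)\Pi_S^G = \Pi_S^G$. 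Consequently the defining condition $\sigma_S = \Pi_S^G \sigma_S \Pi_S^G$ of $\operatorname{B-Sym}_G$ is equivalent to $\operatorname{supp}(\sigma_S) \subseteq \operatorname{range}(\Pi_S^G)$, so that $\operatorname{B-Sym}_G$ is precisely the set of states supported on the symmetric subspace $\operatorname{range}(\Pi_S^G)$.

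Next I would fix a purification $|\psi_\rho\rangle_{S'S}$ of $\rho_S$ and apply Uhlmann's theorem \cite{U76}, which expresses $F(\rho_S,\sigma_S)$ as the maximum of $|\langle\psi_\rho|\phi\rangle_{S'S}|^2$ over all purifications $|\phi\rangle_{S'S}$ of $\sigma_S$, where $S'$ is taken large enough to purify every state in $\operatorname{B-Sym}_G$. The key move is to merge the two maximizations — over $\sigma_S \in \operatorname{B-Sym}_G$ and over purifications of each such $\sigma_S$ — into a single maximization over unit vectors. A purification $|\phi\rangle_{S'S}$ has its reduced state on $S$ supported in $\operatorname{range}(\Pi_S^G)$ if and only if $(I_{S'}\otimes\Pi_S^G)|\phi\rangle = |\phi\rangle$; hence the combined domain is exactly the set of unit vectors of the subspace $\mathcal{H}_{S'}\otimes\operatorname{range}(\Pi_S^G)$, whose projector is $I_{S'}\otimes\Pi_S^G$.

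The maximal overlap of the fixed unit vector $|\psi_\rho\rangle$ with unit vectors of that subspace equals the squared length of its projection, namely $\langle\psi_\rho|(I_{S'}\otimes\Pi_S^G)|\psi_\rho\rangle$, which by definition of a purification is $\operatorname{Tr}[\Pi_S^G\rho_S]$. This delivers the claimed equality in a single stroke, and the optimizer is realized concretely by $\sigma_S^\star = \Pi_S^G\rho_S\Pi_S^G/\operatorname{Tr}[\Pi_S^G\rho_S]$ (valid whenever $\operatorname{Tr}[\Pi_S^G\rho_S] > 0$, the vanishing case being trivial), since this is exactly the reduced state on $S$ of the optimal purification $(I_{S'}\otimes\Pi_S^G)|\psi_\rho\rangle$ after normalization.

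I expect the main obstacle to be the bookkeeping in the merged-maximization step: justifying that the purifying system $S'$ may be fixed uniformly so that every $\sigma_S \in \operatorname{B-Sym}_G$ admits a purification there, that the two suprema genuinely collapse to a maximization over unit vectors of $\mathcal{H}_{S'}\otimes\operatorname{range}(\Pi_S^G)$, and that both are attained. As a fallback requiring no dimensional care, I would keep an elementary two-sided argument: writing $\sqrt{F(\rho_S,\sigma_S)} = \max_U |\operatorname{Tr}[U\sqrt{\rho_S}\sqrt{\sigma_S}]|$ and using $\sqrt{\sigma_S} = \Pi_S^G\sqrt{\sigma_S}$ for $\sigma_S \in \operatorname{B-Sym}_G$, a single Cauchy--Schwarz estimate in the Hilbert--Schmidt inner product yields $F(\rho_S,\sigma_S) \le \operatorname{Tr}[\Pi_S^G\rho_S]$, with $\|I_{S'}\otimes\Pi_S^G\sqrt{\rho_S}U^\dagger\|_2^2 = \operatorname{Tr}[\Pi_S^G\rho_S]$ and $\|\sqrt{\sigma_S}\|_2^2 = 1$ supplying the two factors; the explicit $\sigma_S^\star$ above then saturates the bound, completing the proof by direct verification.
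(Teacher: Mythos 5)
Your proposal is correct, and the main route is essentially the paper's: both arguments hinge on Uhlmann's theorem together with the observation that the relevant overlap reduces to the squared norm of the projected purification, $\left\Vert (I\otimes\Pi_{S}^{G})|\psi^{\rho}\rangle\right\Vert_{2}^{2}=\operatorname{Tr}[\Pi_{S}^{G}\rho_{S}]$. The organizational difference is that the paper travels from the acceptance probability to $\max_{\sigma_{S}\in\mathcal{D}(\mathcal{H}_{S})}F(\rho_{S},\Pi_{S}^{G}\sigma_{S}\Pi_{S}^{G})$ over \emph{all} states and then needs a separate two-sided normalization argument ($\sigma'=\Pi^{G}\sigma\Pi^{G}/p$, $p\leq 1$) to restrict to $\operatorname{B-Sym}_{G}$; you instead characterize $\operatorname{B-Sym}_{G}$ as the states supported on $\operatorname{range}(\Pi_{S}^{G})$ and merge the two maximizations into a single maximization over the unit sphere of $\mathcal{H}_{S'}\otimes\operatorname{range}(\Pi_{S}^{G})$, which makes that normalization step unnecessary and exhibits the optimizer $\Pi_{S}^{G}\rho_{S}\Pi_{S}^{G}/\operatorname{Tr}[\Pi_{S}^{G}\rho_{S}]$ explicitly. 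The bookkeeping you flag (fixing $S'$ large enough, and the equivalence between ``reduced state supported on $\operatorname{range}(\Pi_{S}^{G})$'' and ``$(I\otimes\Pi_{S}^{G})|\phi\rangle=|\phi\rangle$'') is genuine but routine. Your fallback via $\sqrt{F(\rho_{S},\sigma_{S})}=\max_{U}|\operatorname{Tr}[U\sqrt{\rho_{S}}\sqrt{\sigma_{S}}]|$ and a single Hilbert--Schmidt Cauchy--Schwarz estimate is a genuinely different and more elementary argument than the paper's, avoiding purifications and Uhlmann entirely; its only cost is the short direct verification that $F(\rho_{S},\sigma_{S}^{\star})=\operatorname{Tr}[\Pi_{S}^{G}\rho_{S}]$ (which follows from $\sqrt{\Pi^{G}\rho\Pi^{G}}\,\rho\,\sqrt{\Pi^{G}\rho\Pi^{G}}=(\Pi^{G}\rho\Pi^{G})^{2}$), and the stray $I_{S'}\otimes$ in your norm computation there should be deleted since that estimate lives entirely on system $S$.
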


Let us construct this algorithm explicitly for the example of the dihedral group $D_3$. The $\ket{+}_C$ state is a uniform superposition of six elements, and we can achieve this using three qubits and unitary $U_d$ shown in  Figure~\ref{fig:Dihedral_Superposition}:
\begin{equation}\label{eq:Dihedral_Superposition}
    U_d\ket{000} = \frac{1}{\sqrt{6}} (\ket{000} + \ket{001} + \ket{010} + \ket{011} + \ket{100} + \ket{101}).
\end{equation}

\begin{figure}[h]
\begin{center}
\includegraphics[width=0.25\textwidth]{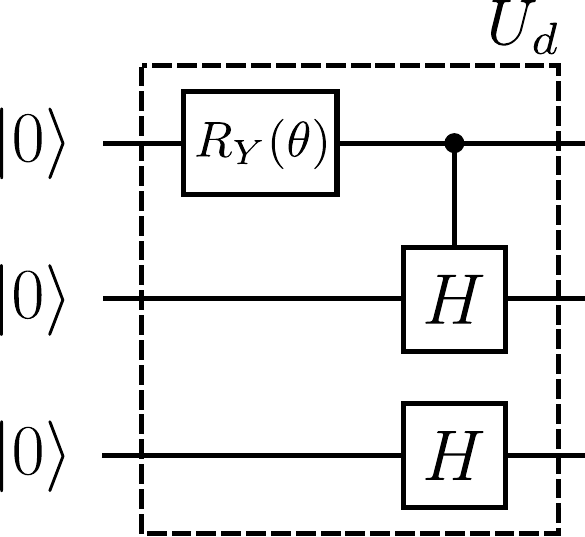}
\end{center}
\caption{Unitary $U_d$, with $\theta = 2 \arctan \left(\frac{1}{\sqrt{2}}\right)$, creates the equal superposition of six elements from \eqref{eq:Dihedral_Superposition}.}.
\label{fig:Dihedral_Superposition}
\end{figure}

These control register states need to be mapped to group elements to be meaningful; thus, we employ the mapping $\{\ket{000} \rightarrow e, \ket{001} \rightarrow fr^2, \ket{010} \rightarrow fr, \ket{011} \rightarrow r, \ket{100} \rightarrow f, \ket{101} \rightarrow r^2\}$. The circuit to test for $D_3$-symmetry is shown in Figure~\ref{fig:Dihedral_GBS_Circuit}.

\begin{figure}[h!]
\begin{center}
\includegraphics[width=0.45\textwidth]{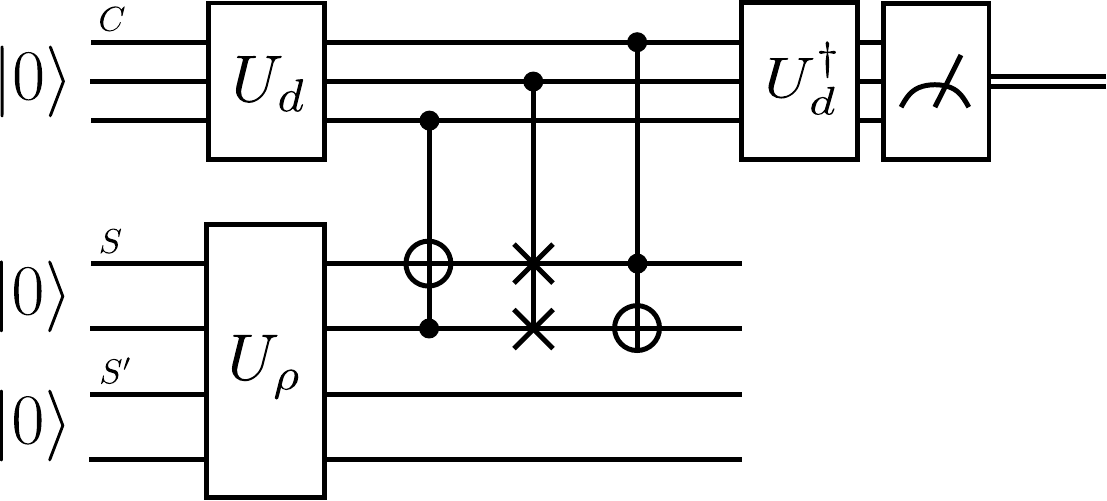}
\end{center}
\caption{ Quantum circuit implementing Algorithm~\ref{alg:simple} to test $G$-Bose symmetry for $D_3$. Compared to Figure~\ref{fig:simple-case}, the systems $S$ and $S'$ are two qubits each, $C$ consists of three qubits, and $\ket{+}_C$ is defined as $U_d\ket{000}$.}
\label{fig:Dihedral_GBS_Circuit}
\end{figure}

\subsection{Testing $G$-Symmetry}
\label{sec:G-sym-single-S-triv-R}

We now discuss how to modify Algorithm~\ref{alg:simple} to one that decides whether a state $\rho_{S}$ is $G$-symmetric (see Example~\ref{ex:usual-symmetry}), i.e., if
\begin{equation}
\rho_{S}=U_{S}(g)\rho_{S}U_{S}(g)^{\dag}\quad\forall g\in G.
\label{eq:rho-symmetric-single-sys}
\end{equation}
There is a subtlety here in the shift from pure states to density matrices---a wider class of quantum states---but also moving from the projector condition to a conjugation condition. We will also prove that the acceptance probability of the modified algorithm is equal to the \textit{maximum }$G$\textit{-symmetric fidelity}, defined as
\begin{equation}
\max_{\sigma\in\text{Sym}_{G}}F(\rho_{S},\sigma_{S}), \label{eq:fid-of-asym}
\end{equation}
where
\begin{equation}
\operatorname{Sym}_{G}\coloneqq \left\{ \sigma_{S}\in \mathcal{D} (\mathcal{H}_{S}):\sigma_{S} = U_{S}(g)\sigma_{S}U_{S}(g)^{\dag}\ \forall g\in G\right\}  ,
\end{equation}
$\mathcal{D}(\mathcal{H}_{S})$ denotes the set of density operators acting on the Hilbert space $\mathcal{H}_{S}$, and the fidelity of quantum states $\omega$ and $\tau$ is defined as \cite{U76}
\begin{equation}
F(\omega,\tau)\coloneqq \left\Vert \sqrt{\omega}\sqrt{\tau}\right\Vert_{1}^{2}.
\end{equation}
Thus, this quantum algorithm gives an operational meaning to the maximum $G$-symmetric fidelity in terms of its acceptance probability, and it can be used to estimate this fundamental measure of symmetry.

In the modified approach, we suppose that the quantum computer (now called the verifier) is equipped with access to a ``quantum prover''---an agent who can perform arbitrarily powerful quantum computations \cite{W09,VW15}. We discussed this situation exactly in Section~\ref{qipn} when we defined the complexity class QIP(n). We suppose that the quantum computer is allowed to exchange two quantum messages with the prover, thus placing our algorithm in QIP(2). We note here that computational problems related to entanglement of bipartite states \cite{HMW13,Hayden:2014:TQI}\ and recoverability of tripartite states \cite{PhysRevA.94.022310}\ were previously shown to be decidable in QIP(2). 

For this next algorithm, let $\ket{\psi}_{S^{\prime}S}$ be a purification of the state $\rho_{S}$, and suppose that the verifier has access to a circuit $U^{\rho}$\ that prepares this purification of $\rho_{S}$. Then proceed as follows:

\begin{Algorithm}
[$G$-symmetry test]\label{alg:g-sym-test} The algorithm consists of the following steps:

\begin{enumerate}
\item The verifier uses the circuit $U^{\rho}$ to prepare the state $\ket{\psi}_{S^{\prime}S}$.

\item The verifier transmits the purifying system $S^{\prime}$ to the prover.

\item The prover appends an ancillary register $E$ in the state $\ket{0}_{E}$ and performs a unitary $V_{S^{\prime}E\rightarrow\hat{S}E^{\prime}}$.

\item The prover sends the system $\hat{S}$ back to the verifier.

\item The verifier prepares a register $C$ in the state $\ket{0}_{C}$.

\item The verifier acts on register $C$ with a quantum Fourier transform or equivalent circuit.

\item The verifier performs the following controlled unitary:
\begin{equation}
\sum_{g\in G}\ket{g}\bra{g}_{C}\otimes U_{S}(g)\otimes\overline{U}_{\hat{S}}(g).
\end{equation}

\item The verifier performs an inverse quantum Fourier transform on register~$C$, measures in the basis $\{\ket{g}\bra{g}_{C}\}_{g\in G}$, and accepts if and only if the zero outcome $\ket{0}\bra{0}_{C}$ occurs.
\end{enumerate}
\end{Algorithm}

Figure~\ref{fig:case-2} depicts this quantum algorithm. The overall state after Step~3 of Algorithm~\ref{alg:g-sym-test}\ is
\begin{equation}
V_{S^{\prime}E\rightarrow\hat{S}E^{\prime}}\ket{\psi}_{S^{\prime}S} \ket{0}_{E}.
\end{equation}
The result of Step~6 is to prepare the uniform superposition state $\ket{+}_{C}$, which is defined in \eqref{eq:plus-over-group}. After Step~7, the overall state is
\begin{equation}
\frac{1}{\sqrt{\left\vert G\right\vert }}\sum_{g\in G}\ket{g}_{C}\left(U_{S}(g) \otimes \overline{U}_{\hat{S}}(g)\right)  V_{S^{\prime}E \rightarrow \hat{S}E^{\prime}}\ket{\psi}_{S^{\prime}S}\ket{0}_{E}.
\end{equation}

For a fixed unitary $V_{S^{\prime}E\rightarrow\hat{S}E^{\prime}}$, the probability of accepting, by following the same reasoning in \eqref{eq:simple-acc-prob-1}--\eqref{eq:simple-acc-prob-2}, is equal to
\begin{equation}
\left\Vert \Pi_{S\hat{S}}^{G}V_{S^{\prime}E\rightarrow\hat{S}E^{\prime}}\ket{\psi}_{S^{\prime}S}\ket{0}_{E}\right\Vert_{2}^{2},
\end{equation}
where
\begin{equation}
\Pi_{S\hat{S}}^{G}\coloneqq \frac{1}{\left\vert G\right\vert }\sum_{g\in G} U_{S}(g)\otimes\overline{U}_{\hat{S}}(g).
\end{equation}
Since the goal of the prover in a quantum interactive proof is to convince the verifier to accept \cite{W09,VW15}, the prover optimizes over every unitary $V_{S^{\prime}E \rightarrow \hat{S}E^{\prime}}$ and the acceptance probability of Algorithm~\ref{alg:g-sym-test}\ is given by
\begin{equation}
\max_{V_{S^{\prime}E\rightarrow\hat{S}E^{\prime}}}\left\Vert \Pi_{S\hat{S}}^{G} V_{S^{\prime}E \rightarrow \hat{S}E^{\prime}}\ket{\psi}_{S^{\prime}S}\ket{0}_{E}\right\Vert _{2}^{2}.
\end{equation}

\begin{figure}[ptb]
\begin{center}
\includegraphics[width=3.4in]{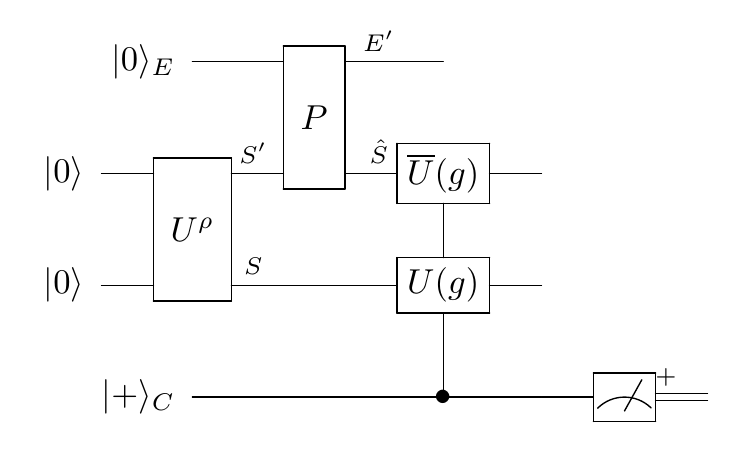}
\end{center}
\caption{Quantum circuit to implement Algorithm~\ref{alg:g-sym-test}. The unitary $U^{\rho}$ prepares a purification $\psi_{S^{\prime}S}$ of the state $\rho_{S}$. Algorithm~\ref{alg:g-sym-test} tests whether the state $\rho_{S}$ is $G$-symmetric, as defined in Example~\ref{ex:usual-symmetry}. Its acceptance probability is equal to the maximum $G$-symmetric fidelity, as defined in \eqref{eq:fid-of-asym}.}
\label{fig:case-2}
\end{figure}

The main idea behind Algorithm~\ref{alg:g-sym-test} is that if the state $\rho_{S}$ possesses the symmetry in \eqref{eq:rho-symmetric-single-sys}, then Theorem~\ref{thm:Bose-sym-purify} (with trivial reference system $R$) guarantees the existence of a purification $\phi_{S\hat{S}}$ of $\rho_{S}$ such that
\begin{equation}
\ket{\phi}_{S\hat{S}}=\Pi_{S\hat{S}}^{G}\ket{\phi}_{S\hat{S}}.
\label{eq:proj-cond-purification}
\end{equation}
Since all purifications of a quantum state are related by a unitary acting on the purifying system (see, e.g., \cite{wildebook}), the prover should be able to apply a unitary taking the purification $\ket{\psi}_{S^{\prime}S}$ to the purification $\ket{\psi}_{S\hat{S}}$. After the prover sends back the system $\hat{S}$, the verifier then performs a quantum-computational test to determine if the condition in \eqref{eq:proj-cond-purification} holds.

\begin{theorem}\label{thm:max-acc-prob-g-sym}
The acceptance probability of Algorithm~\ref{alg:g-sym-test}\ is equal to the maximum $G$-symmetric fidelity in \eqref{eq:fid-of-asym}, i.e.,
\begin{equation}
   \max_{V_{S^{\prime}E\rightarrow\hat{S}E^{\prime}}}\left\Vert \Pi_{S\hat{S}
}^{G}V_{S^{\prime}E\rightarrow\hat{S}E^{\prime}}\ket{\psi}_{S^{\prime}
S}\ket{0}_{E}\right\Vert _{2}^{2}
=\max_{\sigma_{S}\in\operatorname{Sym}_{G}}F(\rho_{S},\sigma_{S}). 
\end{equation}
\end{theorem}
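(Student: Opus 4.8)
The plan is to reduce the optimization over prover unitaries to an optimization over purifications, and then to match both sides against each other using Uhlmann's theorem together with Theorem~\ref{thm:Bose-sym-purify}. First I would observe that for a fixed $V_{S'E\rightarrow\hat{S}E'}$ the vector $\ket{\phi}_{S\hat{S}E'}\coloneqq V_{S'E\rightarrow\hat{S}E'}\ket{\psi}_{S'S}\ket{0}_E$ is a purification of $\rho_S$, since $V$ acts trivially on $S$ and unitaries preserve the marginal on $S$. Because $g\mapsto U_S(g)\otimes\overline{U}_{\hat{S}}(g)$ is a unitary representation, the operator $\Pi_{S\hat{S}}^{G}$ is an orthogonal projector, so the acceptance probability for fixed $V$ equals $\langle\phi|\Pi_{S\hat{S}}^{G}|\phi\rangle=\Vert \Pi_{S\hat{S}}^{G}\ket{\phi}\Vert_2^2$. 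By the unitary freedom of purifications, maximizing over all $V_{S'E\rightarrow\hat{S}E'}$ (with purifying register taken large enough) ranges over all purifications $\ket{\phi}_{S\hat{S}E'}$ of $\rho_S$; padding any purification on $S\hat{S}$ with a fixed ancilla $\ket{0}_{E'}$ leaves $\Vert\Pi_{S\hat{S}}^{G}\ket{\phi}\Vert_2^2$ unchanged. Thus it suffices to prove $\max_{\ket{\phi}}\langle\phi|\Pi_{S\hat{S}}^{G}|\phi\rangle=\max_{\sigma_S\in\operatorname{Sym}_{G}}F(\rho_S,\sigma_S)$, where the left maximum is over purifications of $\rho_S$.

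For the inequality $\leq$, I would fix a purification $\ket{\phi}$, write $p\coloneqq\langle\phi|\Pi_{S\hat{S}}^{G}|\phi\rangle$, and set $\ket{\theta}\coloneqq\Pi_{S\hat{S}}^{G}\ket{\phi}$. Using the property in \eqref{eq:unitaries-and-projs}, the vector $\ket{\theta}$ is invariant under $U_S(g)\otimes\overline{U}_{\hat{S}}(g)$, and a short computation (moving $\overline{U}_{\hat{S}}(g)$ inside the partial trace over $\hat{S}E'$, exactly as in the reverse implication of Theorem~\ref{thm:Bose-sym-purify}) shows that $\sigma_S\coloneqq\operatorname{Tr}_{\hat{S}E'}[\ket{\theta}\!\bra{\theta}]/p$ lies in $\operatorname{Sym}_{G}$. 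Since $\ket{\theta}/\sqrt{p}$ is a purification of $\sigma_S$ and $\langle\phi|(\ket{\theta}/\sqrt{p})\rangle=p/\sqrt{p}=\sqrt{p}$, Uhlmann's theorem gives $F(\rho_S,\sigma_S)\geq p$, hence $p\leq\max_{\sigma_S\in\operatorname{Sym}_{G}}F(\rho_S,\sigma_S)$.

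For the inequality $\geq$, I would take an arbitrary $\sigma_S\in\operatorname{Sym}_{G}$ and invoke Theorem~\ref{thm:Bose-sym-purify} with trivial reference system $R$ to obtain a purification $\ket{\chi}_{S\hat{S}}$ of $\sigma_S$ satisfying $\Pi_{S\hat{S}}^{G}\ket{\chi}=\ket{\chi}$. By Uhlmann's theorem there is a purification $\ket{\phi^{*}}_{S\hat{S}}$ of $\rho_S$ with $|\langle\chi|\phi^{*}\rangle|^{2}=F(\rho_S,\sigma_S)$, and since $\ket{\chi}$ is a unit vector in the range of the projector $\Pi_{S\hat{S}}^{G}$, Cauchy--Schwarz gives $\Vert\Pi_{S\hat{S}}^{G}\ket{\phi^{*}}\Vert_2^2\geq|\langle\chi|\Pi_{S\hat{S}}^{G}|\phi^{*}\rangle|^{2}=|\langle\chi|\phi^{*}\rangle|^{2}=F(\rho_S,\sigma_S)$. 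Padding $\ket{\phi^{*}}$ with $\ket{0}_{E'}$ realizes it via some $V_{S'E\rightarrow\hat{S}E'}$, so the acceptance probability is at least $F(\rho_S,\sigma_S)$; taking the supremum over $\sigma_S$ finishes the argument, and combining the two directions yields the claimed equality.

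The hard part will not be any single estimate but the careful bookkeeping that makes the prover's optimization over $V_{S'E\rightarrow\hat{S}E'}$ genuinely coincide with optimization over \emph{all} purifications of $\rho_S$: I must ensure the purifying register $\hat{S}E'$ is large enough to host every relevant purification, justify the isometry-to-unitary padding, and confirm that the two applications of Uhlmann's theorem use purifications on a common system. The representation-theoretic facts that $\Pi_{S\hat{S}}^{G}$ is idempotent and absorbs group elements are what make both Uhlmann arguments line up cleanly, so I would state those explicitly before the main estimates.
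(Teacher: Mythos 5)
Your proposal is correct and follows essentially the same route as the paper's proof: both rest on Uhlmann's theorem, the symmetric purification supplied by Theorem~\ref{thm:Bose-sym-purify}, and the observation that normalizing $\Pi_{S\hat{S}}^{G}\ket{\phi}$ yields a state whose reduction lies in $\operatorname{Sym}_{G}$ because $\Pi_{S\hat{S}}^{G}$ absorbs $U_{S}(g)\otimes\overline{U}_{\hat{S}}(g)$. The only difference is organizational—you pass immediately to an optimization over purifications of $\rho_S$ via unitary freedom, while the paper carries the double maximization over $V$ and the test vector through Uhlmann's theorem first—but the two arguments are substantively identical.
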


\begin{proof}
Recall the following property of the norm of an arbitrary vector
$\ket{\varphi}$:
\begin{equation}
\left\Vert \ket{\varphi}\right\Vert _{2}^{2} = \max_{\ket{\phi}: \left\Vert \ket{\phi}\right\Vert _{2}=1}\left\vert \langle\phi\ket{\varphi} \right\vert ^{2}. \label{eq:euclidean-norm-opt}
\end{equation}
This follows from the Cauchy--Schwarz inequality and the conditions for saturating it. The formula in \eqref{eq:euclidean-norm-opt} implies that
\begin{multline}
\max_{V_{S^{\prime}E\rightarrow\hat{S}E^{\prime}}} \left\Vert \Pi_{S\hat{S}}^{G} V_{S^{\prime} E \rightarrow \hat{S} E^{\prime}} \ket{\psi}_{S^{\prime}S}\ket{0}_{E} \right\Vert_{2}^{2}\label{eq:max-acc-prob-proof-step-1}\\
=\max_{V_{S^{\prime}E\rightarrow\hat{S}E^{\prime}},\ket{\phi}_{S\hat{S} E^{\prime}}} \left\vert \bra{\phi}_{S\hat{S}E^{\prime}}\Pi_{S\hat{S}} ^{G}V_{S^{\prime}E\rightarrow\hat{S}E^{\prime}}\ket{\psi}_{S^{\prime}S}\ket{0}_{E}\right\vert ^{2}.
\end{multline}
For positive semi-definite operators $\omega_{A}$ and $\tau_{A}$ and corresponding rank-one operators $\psi_{RA}^{\omega}$ and $\psi_{RA}^{\tau}$ satisfying
\begin{align}
\operatorname{Tr}_{R}[\psi_{RA}^{\omega}]  &  =\omega_{A},\\
\operatorname{Tr}_{R}[\psi_{RA}^{\tau}]  &  =\tau_{A},
\end{align}
Uhlmann's theorem \cite{U76} states  that
\begin{equation}
\left\Vert \sqrt{\omega_{A}}\sqrt{\tau_{A}}\right\Vert _{1}^{2}=\max_{V_{R}}\left\vert \bra{\psi^{\omega}}_{RA}\left(  V_{R}\otimes I_{A}\right)
|\psi^{\tau}\rangle_{RA}\right\vert ^{2},
\end{equation}
where the optimization is over every unitary $V_{R}$ acting on the reference
system $R$. Applying this theorem to \eqref{eq:max-acc-prob-proof-step-1} with
the identifications $R\leftrightarrow\hat{S}E^{\prime}\simeq S^{\prime}E$ and
$S\leftrightarrow A$ and noting that
\begin{align}
\operatorname{Tr}_{S^{\prime}E}[\ket{\psi}\!\bra{\psi}_{S^{\prime}S}\otimes\ket{0}\!\bra{0}_{E}]  &  =\rho_{S},\\
\operatorname{Tr}_{\hat{S}E^{\prime}}[\Pi_{S\hat{S}}^{G}\ket{\phi}\!\bra{\phi}_{S\hat{S} E^{\prime}}\Pi_{S\hat{S}}^{G}]  &  =\operatorname{Tr}_{\hat{S}}[\Pi_{S\hat{S}}^{G}\sigma_{S\hat{S}^{\prime}}\Pi_{S\hat{S}}^{G}],
\end{align}
where $\sigma_{S\hat{S}^{\prime}}$ is a quantum state satisfying $\sigma_{S\hat{S}^{\prime}}=\operatorname{Tr}_{E^{\prime}}[\ket{\phi}\!\bra{\phi}_{S\hat{S}E^{\prime}}]$, we conclude that
\begin{equation}\label{eq:1st-fid-formula-pf}
\max_{V_{S^{\prime}E\rightarrow \hat{S}E^{\prime}}, \ket{\phi}_{S\hat{S}E^{\prime}}}\left\vert \bra{\phi}_{S\hat{S}E^{\prime}} \Pi_{S\hat{S}}^{G}V_{S^{\prime}E \rightarrow \hat{S}E^{\prime}} \ket{\psi}_{S^{\prime}S}\ket{0}_{E}\right\vert ^{2} = \max_{\sigma_{S\hat{S}^{\prime}}} F(\rho_{S},\operatorname{Tr}_{\hat{S}}[\Pi_{S\hat{S}}^{G}\sigma_{S\hat{S}^{\prime}}\Pi_{S\hat{S}}^{G}]),
\end{equation}
with the optimization in the last line over every quantum state $\sigma_{S\hat{S}^{\prime}}$.

We finally prove that
\begin{equation}
\max_{\sigma_{S\hat{S}^{\prime}}} F(\rho_{S}, \operatorname{Tr}_{\hat{S}} [\Pi_{S\hat{S}}^{G} \sigma_{S\hat{S}^{\prime}}\Pi_{S\hat{S}}^{G}]) = \max_{\sigma_{S}\in\operatorname{Sym}_{G}} F(\rho_{S},\sigma_{S}).
\label{eq:final-eq-steps-1}
\end{equation}
To prove this equality, we will first show that the left-hand side of \eqref{eq:final-eq-steps-1} is greater than or equal to the right-hand side, and then we shall show it must also be less than or equal to the right-hand side. To justify the forward direction of \eqref{eq:final-eq-steps-1}, let $\sigma_{S}\in\operatorname{Sym}_{G}$, and pick $\sigma_{S\hat{S}}$ to be the purification $\varphi_{S\hat{S}}$ of $\rho_{S}$ from Theorem~\ref{thm:Bose-sym-purify} (with systems $R\hat{R}$ trivial) that satisfies
\begin{equation}
\Pi_{S\hat{S}}^{G}\varphi_{S\hat{S}}\Pi_{S\hat{S}}^{G}=\varphi_{S\hat{S}}.
\end{equation}
Observe that
\begin{equation}
\operatorname{Tr}_{\hat{S}}[\Pi_{S\hat{S}}^{G}\varphi_{S\hat{S}}\Pi_{S\hat{S}}^{G}]=\operatorname{Tr}_{\hat{S}}[\varphi_{S\hat{S}}]=\sigma_{S},
\end{equation}
and so, given that $\sigma_{S}\in\operatorname{Sym}_{G}$ is arbitrary, it follows that
\begin{equation}
\max_{\sigma_{S\hat{S}^{\prime}}}F(\rho_{S},\operatorname{Tr}_{\hat{S}} [\Pi_{S\hat{S}}^{G}\sigma_{S\hat{S}^{\prime}}\Pi_{S\hat{S}}^{G}] ) \geq \max_{\sigma_{S}\in\operatorname{Sym}_{G}}F(\rho_{S},\sigma_{S})\, .
\end{equation}

To justify the reverse direction in \eqref{eq:final-eq-steps-1}, let $\sigma_{S\hat{S}}$ be an arbitrary state. If $\sigma_{S\hat{S}^{\prime}}$ is outside of the subspace onto which $\Pi_{S\hat{S}}^{G}$ projects, then $\Pi_{S\hat{S}}^{G} \sigma_{S\hat{S}^{\prime}} \Pi_{S\hat{S}}^{G}=0$ and the fidelity in \eqref{eq:1st-fid-formula-pf} is equal to zero. Let us suppose that this is not the case, and let us define
\begin{align}
\sigma_{S\hat{S}}^{\prime}  &  \coloneqq \frac{1}{p}\Pi_{S\hat{S}}^{G} \sigma_{S\hat{S}^{\prime}}\Pi_{S\hat{S}}^{G},\\
p  &  \coloneqq \operatorname{Tr}[\Pi_{S\hat{S}}^{G}\sigma_{S\hat{S}^{\prime}}].
\end{align}
Then consider the following fidelity,
\begin{align}
F(\rho_{S},\operatorname{Tr}_{\hat{S}}[\Pi_{S\hat{S}}^{G}\sigma_{S\hat{S}^{\prime}}\Pi_{S\hat{S}}^{G}])  &  =p F(\rho_{S},\tau_{S})\\
&  \leq F(\rho_{S},\tau_{S}),
\end{align}
where
\begin{equation}
\tau_{S}\coloneqq \operatorname{Tr}_{\hat{S}}[\sigma_{S\hat{S}}^{\prime}],
\end{equation}
and we used the fact that $p\leq1$. If $\tau_{S} \in \operatorname{Sym}_{G}$, we will have completed our argument. To see that this is true, we can perform a series of manipulations using the definition of $\tau_{S}$ as follows
\begin{align}
\tau_{S}  &  =\operatorname{Tr}_{\hat{S}}[\sigma_{S\hat{S}}^{\prime}]\\
&  =\operatorname{Tr}_{\hat{S}} [\Pi_{S\hat{S}}^{G}\sigma_{S\hat{S}}^{\prime} \Pi_{S\hat{S}}^{G}] \\
&  =\operatorname{Tr}_{\hat{S}}[\left(  U_{S}\otimes\overline{U}_{\hat{S}}\right)  \Pi_{S\hat{S}}^{G}\sigma_{S\hat{S}}^{\prime}\Pi_{S\hat{S}}^{G}\left(  U_{S}\otimes\overline{U}_{\hat{S}}\right)  ^{\dag}]\\
&  =U_{S}\operatorname{Tr}_{\hat{S}}[\overline{U}_{\hat{S}}\Pi_{S\hat{S}}^{G}\sigma_{S\hat{S}}^{\prime}\Pi_{S\hat{S}}^{G}\overline{U}_{\hat{S}}^{\dag}]U_{S}^{\dag}\\
&  =U_{S}\operatorname{Tr}_{\hat{S}}[\overline{U}_{\hat{S}}^{\dag}\overline{U}_{\hat{S}}\Pi_{S\hat{S}}^{G}\sigma_{S\hat{S}}^{\prime}\Pi_{S\hat{S}}^{G}]U_{S}^{\dag}\\
&  =U_{S}\operatorname{Tr}_{\hat{S}}[\Pi_{S\hat{S}}^{G}\sigma_{S\hat{S}}^{\prime}\Pi_{S\hat{S}}^{G}]U_{S}^{\dag}\\
&  =U_{S}(g)\operatorname{Tr}_{\hat{S}}[\sigma_{S\hat{S}}^{\prime}]U_{S}^{\dag}(g)\\
&  =U_{S}(g)\tau_{S}U_{S}^{\dag}(g).
\end{align}
where we have used the shorthand $U_{S}\equiv U_{S}(g)$ and $\overline {U}_{\hat{S}} \equiv \overline{U}_{\hat{S}}(g)$. Since the equality $\tau_{S}=U_{S}(g)\tau_{S}U_{S}^{\dag}(g)$ holds for all $g\in G$, it follows that
\begin{equation}
\max_{\sigma_{S\hat{S}^{\prime}}}F(\rho_{S},\operatorname{Tr}_{\hat{S}} [\Pi_{S\hat{S}}^{G}\sigma_{S\hat{S}^{\prime}}\Pi_{S\hat{S}}^{G}])\leq \max_{\tau_{S}\in\operatorname{Sym}_{G}}F(\rho_{S},\sigma_{S}).
\label{eq:final-eq-steps-last}
\end{equation}
\end{proof}

Now let us return to our example of $D_3$, and use this construction to decompress from proofs for a bit. For this circuit, we use the same unitary $U_d$ to prepare the superposition $\ket{+}_C$, and the same mapping of control states to group elements. Then the circuit to test for $G$-symmetry (or $D_3$-symmetry, as it were) is shown in Figure~\ref{fig:Dihedral_GS_Circuit}.

\begin{figure}[h]
\begin{center}
\includegraphics[width=0.5\textwidth]{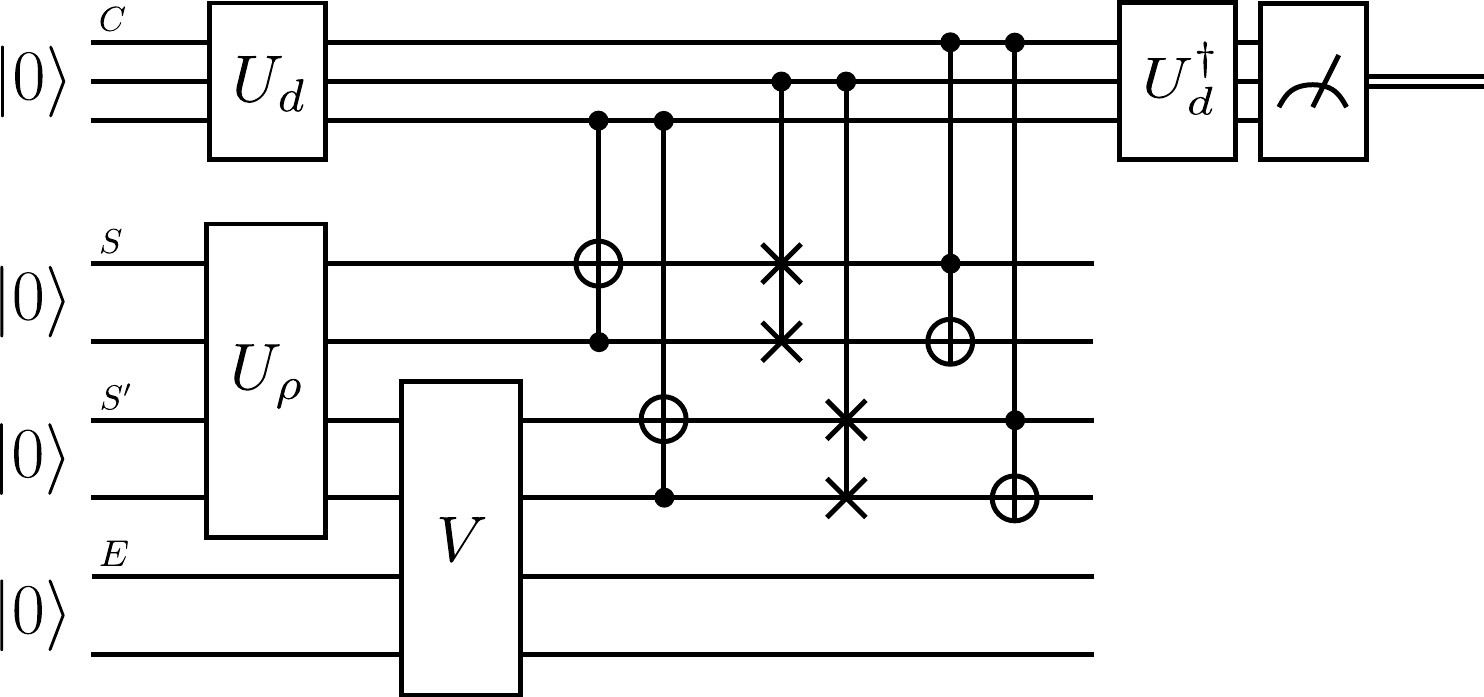}
\end{center}
\caption{Quantum circuit implementing Algorithm~\ref{alg:g-sym-test} to test $G$-symmetry in the case that the group $G$ is the triangular dihedral group. Compared to Figure~\ref{fig:case-2}, the systems $S$ and $S'$ are two qubits each, $C$ consists of three qubits, and $\ket{+}_C$ is defined as $U_d\ket{000}$. Both the SWAP and CNOT gates have no imaginary entries, and thus they are equal to their own complex conjugates.}
\label{fig:Dihedral_GS_Circuit}
\end{figure}

Now, in Figure~\ref{fig:Dihedral_GS_Circuit}, note that we have introduced a prover in the form of a unitary $V$. This prover has access both to the purifying system of our state as well as the environment. As discussed before, provers in QIP(2) have unbounded computational power, and so it may feel like we have reached a wall impeding our progress. Clearly, there is no way incorporate any physical implementation of $V$ as required by complexity theory---so instead, we must bend the rules around this definition. For practical implementations of such algorithms, a variational quantum algorithm (VQA) can stand in as a lower-bound version of the prover. In \cite{laborde2021testing}, the behaviors of these algorithms are shown explicitly. (These VQA results are surpressed from this text but are nonetheless impressive in demonstrating the power of machine learning to approximate all-powerful wizards as the need may be.)

\begin{remark}[Testing incoherence]
\label{rem:incoherence}
We would now like to discuss a particular application of our algorithm: testing for incoherence. Testing the incoherence of a quantum state, in the sense of \cite{BCP14,SAP17}, is a special case of testing $G$-symmetry. To see this, pick $G$ to be the cyclic group over $d$ elements ($C_d$) with unitary representation $\{Z(z)\}_z$, where $Z(z)$ is the generalized Pauli phase-shift unitary, defined as
\begin{equation}
    Z(z) \coloneqq \sum_{j=0}^{d-1} e^{2 \pi i j z /d} \ket{j}\bra{j}.
\end{equation}
A state is symmetric with respect to this group if the condition in \eqref{eq:rho-symmetric-single-sys} holds. This condition is equivalent to 
\begin{equation}
    \rho_S = \frac{1}{|G|} \sum_{g \in G} U_S(g) \rho_S U_S(g)^\dag.
\end{equation}
For the choice mentioned above, this condition holds if and only if the state $\rho_S$ is diagonal in the incoherent basis, i.e., if it can be written as $\rho_S = \sum_j p(j)  \ket{j}\bra{j}$, where $p(j)$ is a probability distribution. Thus, Algorithm~\ref{alg:g-sym-test} can be used to test the incoherence of quantum states.
\end{remark}

\subsection{Testing $G$-Bose Symmetric Extendibility}
\label{sec:G-Bose-sym-ext-test}

We now describe an algorithm for testing $G$-Bose symmetric extendibility of a quantum state $\rho_{S}$, as defined in Definition~\ref{def:g-bose-sym-ext}. The algorithm bears some similarities with Algorithms~\ref{alg:simple} and ~\ref{alg:g-sym-test}. Like Algorithm~\ref{alg:g-sym-test}, it involves an interaction between a verifier and a prover. We prove that its acceptance probability is equal to the maximum $G$-BSE fidelity:
\begin{equation}
\max_{\sigma_{S}\in\operatorname*{BSE}_{G}}F(\rho_{S},\sigma_{S}),
\label{eq:fid-bose-asym-3}
\end{equation}
where BSE$_{G}$ is the set of $G$-Bose symmetric extendible states:
\begin{equation}
\text{BSE}_{G}\coloneqq 
\left\{\begin{array}[c]{c}
\sigma_{S}:\exists\omega_{RS}\in\mathcal{D}(\mathcal{H}_{RS}),\ \operatorname{Tr}_R [\omega_{RS}] =\sigma_S, \\ 
\omega_{RS}=U_{RS}(g)\omega_{RS}\text{, }\forall g\in G
\end{array}
\right\}  .
\label{eq:G-BSE-states-set}
\end{equation}

\noindent Thus, the algorithm endows the maximum $G$-BSE fidelity with an operational meaning. Note that the condition $\omega_{RS}=U_{RS}(g)\omega_{RS}$ $\forall g\in G$ is equivalent to
\begin{equation}
\omega_{RS}=\Pi_{RS}^{G}\omega_{RS}\Pi_{RS}^{G},
\end{equation}
where
\begin{equation}
\Pi_{RS}^{G}\coloneqq\frac{1}{\left\vert G\right\vert }\sum_{g\in G}U_{RS}(g)\, ,
\label{eq:Pi_RS-proj-again}
\end{equation}
as before.

The algorithm is highly similar to Algorithm~\ref{alg:g-sym-test}, but we list it here for completeness. Let $\ket{\psi}_{S^{\prime}S}$ be a purification of the state $\rho_{S}$, and suppose that the circuit $U^{\rho}$ prepares this purification of $\rho_{S}$.

\begin{Algorithm}
[$G$-BSE test]\label{alg:G-BSE-test} The algorithm proceeds as follows:

\begin{enumerate}
\item The verifier uses the circuit provided to prepare the state $\ket{\psi}_{S^{\prime}S}$.

\item The verifier transmits the purifying system $S^{\prime}$ to the prover.

\item The prover appends an ancillary register $E$ in the state $\ket{0}_{E}$ and performs a unitary $V_{S^{\prime}E\rightarrow RE^{\prime}}$.

\item The prover sends the system $R$ back to the verifier.

\item The verifier prepares a register $C$ in the state $\ket{0}_{C}$.

\item The verifier acts on register $C$ with a quantum Fourier transform or equivalent sequence of gates.

\item The verifier performs the following controlled unitary:
\begin{equation}
\sum_{g\in G}\ket{g}\bra{g}_{C}\otimes U_{RS}(g),
\end{equation}

\item The verifier performs an inverse quantum Fourier transform on register~$C$, measures in the basis $\{\ket{g}\!\bra{g}_{C}\}_{g\in G}$, and accepts if and only if the zero outcome $\ket{0}\!\bra{0}_{C}$ occurs.
\end{enumerate}
\end{Algorithm}

\begin{figure}[ptb]
\begin{center}
\includegraphics[width=3.5in]{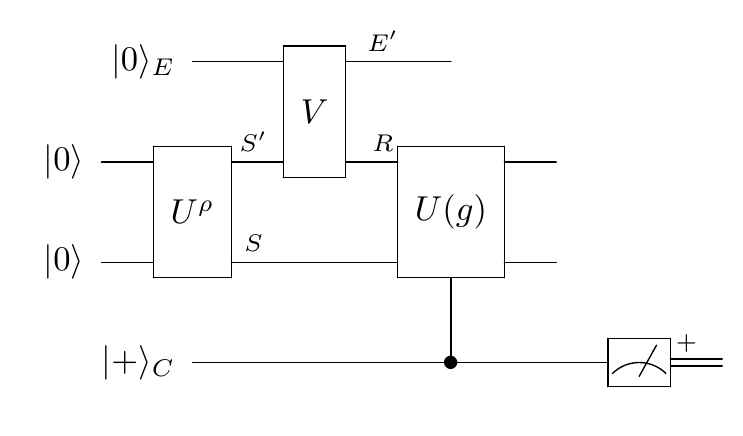}
\end{center}
\caption{Quantum circuit to implement Algorithm~\ref{alg:G-BSE-test}. The unitary $U^{\rho}$ prepares a purification $\psi_{S^{\prime}S}$ of the state $\rho_{S}$. Algorithm~\ref{alg:G-BSE-test} tests whether the state $\rho_{S}$ is $G$-Bose symmetric extendible, as defined in Definition~\ref{def:g-bose-sym-ext}. Its acceptance probability is equal to the maximum $G$-BSE fidelity, as defined in~\eqref{eq:fid-bose-asym-3}.}
\label{fig:case-3}
\end{figure}

Figure~\ref{fig:case-3} depicts this quantum algorithm. The overall state after Step~3 is
\begin{equation}
V_{S^{\prime}E\rightarrow RE^{\prime}}\ket{\psi}_{S^{\prime}S}\ket{0}_{E}.
\end{equation}
Step~6 prepares the uniform superposition state $\ket{+}_{C}$, which is defined in \eqref{eq:plus-over-group}. After Step~7, the overall state is
\begin{equation}
\frac{1}{\sqrt{\left\vert G\right\vert }} \sum_{g\in G}\ket{g}_{C}\left(U_{RS}(g)\right)  V_{S^{\prime}E\rightarrow RE^{\prime}}\ket{\psi}_{S^{\prime}S}\ket{0}_{E}.
\end{equation}
The last step can be understood as the verifier projecting the register $C$
onto the state $\ket{+}_{C}$.

The probability of accepting, following the same reasoning as before, is equal to
\begin{equation}
\left\Vert \Pi_{RS}^{G}V_{S^{\prime}E\rightarrow RE^{\prime}}\ket{\psi}_{S^{\prime}S} \ket{0}_{E}\right\Vert _{2}^{2},
\end{equation}
where $\Pi_{RS}^{G}$ is defined in \eqref{eq:Pi_RS-proj-again}. As before, the goal of the prover in a quantum interactive proof is to convince the verifier to accept \cite{W09,VW15}, and so the prover optimizes over every unitary $V_{S^{\prime}E\rightarrow\hat{S}E^{\prime}}$. The acceptance probability of Algorithm~\ref{alg:G-BSE-test} is then given by
\begin{equation}
\max_{V_{S^{\prime}E\rightarrow RE^{\prime}}}\left\Vert \Pi_{RS}^{G}V_{S^{\prime}E\rightarrow RE^{\prime}}\ket{\psi}_{S^{\prime}S}\ket{0}_{E}\right\Vert _{2}^{2}.
\end{equation}

For completeness, we provide our proof in Appendix~\ref{app:proof-thm-g-bse}, but do not include it here as it is highly similar to the proof given for Theorem~\ref{thm:max-acc-prob-g-sym}; 

\begin{theorem}
\label{thm:G-BSE-acc-prob}
The maximum acceptance probability of Algorithm~\ref{alg:G-BSE-test} is equal to the maximum $G$-BSE\ fidelity in \eqref{eq:fid-bose-asym-3}, i.e.,
\begin{equation}
\max_{V_{S^{\prime}E\rightarrow RE^{\prime}}}\left\Vert \Pi_{RS} ^{G}V_{S^{\prime}E\rightarrow RE^{\prime}}\ket{\psi}_{S^{\prime}S} \ket{0}_{E}\right\Vert _{2}^{2} = \max_{\sigma_{S}\in\operatorname*{BSE}_{G} }F(\rho_{S},\sigma_{S}),
\end{equation}
where the set $\operatorname*{BSE}_{G}$ is defined in \eqref{eq:G-BSE-states-set}.
\end{theorem}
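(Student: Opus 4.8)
The plan is to follow the same three-part template used for Theorem~\ref{thm:max-acc-prob-g-sym}, since the $G$-BSE test differs only in that the prover returns the system $R$ (rather than $\hat{S}$) and the relevant projector is $\Pi_{RS}^{G}$ acting on $RS$. First I would invoke the norm-optimization identity \eqref{eq:euclidean-norm-opt} to turn the acceptance probability into a double maximization,
\[
\max_{V_{S'E\rightarrow RE'},\,\ket{\phi}_{RSE'}}\left| \bra{\phi}_{RSE'}\Pi_{RS}^{G}V_{S'E\rightarrow RE'}\ket{\psi}_{S'S}\ket{0}_{E}\right|^{2},
\]
and then apply Uhlmann's theorem with the identifications $R_{\mathrm{U}}\leftrightarrow RE'\simeq S'E$ and $A_{\mathrm{U}}\leftrightarrow S$. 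The reduced state of $\ket{\psi}_{S'S}\ket{0}_{E}$ on $S$ is $\rho_{S}$, and since $\Pi_{RS}^{G}$ acts trivially on $E'$, the reduced state of $\Pi_{RS}^{G}\ket{\phi}\!\bra{\phi}_{RSE'}\Pi_{RS}^{G}$ on $S$ is $\operatorname{Tr}_{R}[\Pi_{RS}^{G}\sigma_{RS}\Pi_{RS}^{G}]$, where $\sigma_{RS}\coloneqq \operatorname{Tr}_{E'}[\ket{\phi}\!\bra{\phi}_{RSE'}]$. Uhlmann's theorem thus yields
\[
\max_{V_{S'E\rightarrow RE'}}\left\Vert \Pi_{RS}^{G}V_{S'E\rightarrow RE'}\ket{\psi}_{S'S}\ket{0}_{E}\right\Vert_{2}^{2}=\max_{\sigma_{RS}}F(\rho_{S},\operatorname{Tr}_{R}[\Pi_{RS}^{G}\sigma_{RS}\Pi_{RS}^{G}]).
\]

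It then remains to establish the two-sided equality $\max_{\sigma_{RS}}F(\rho_{S},\operatorname{Tr}_{R}[\Pi_{RS}^{G}\sigma_{RS}\Pi_{RS}^{G}])=\max_{\sigma_{S}\in\operatorname*{BSE}_{G}}F(\rho_{S},\sigma_{S})$. For the $\geq$ direction I would take any $\sigma_{S}\in\operatorname*{BSE}_{G}$, use its witnessing extension $\omega_{RS}$ satisfying $\omega_{RS}=\Pi_{RS}^{G}\omega_{RS}\Pi_{RS}^{G}$ and $\operatorname{Tr}_{R}[\omega_{RS}]=\sigma_{S}$, and plug $\sigma_{RS}=\omega_{RS}$ into the left-hand maximization; since $\operatorname{Tr}_{R}[\Pi_{RS}^{G}\omega_{RS}\Pi_{RS}^{G}]=\operatorname{Tr}_{R}[\omega_{RS}]=\sigma_{S}$, every $G$-BSE fidelity is attained on the left. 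For the $\leq$ direction I would take an arbitrary $\sigma_{RS}$; if it lies outside the range of $\Pi_{RS}^{G}$ the fidelity vanishes, and otherwise I would set $\sigma_{RS}'\coloneqq \frac{1}{p}\Pi_{RS}^{G}\sigma_{RS}\Pi_{RS}^{G}$ with $p\coloneqq \operatorname{Tr}[\Pi_{RS}^{G}\sigma_{RS}]$, so that $F(\rho_{S},\operatorname{Tr}_{R}[\Pi_{RS}^{G}\sigma_{RS}\Pi_{RS}^{G}])=p\,F(\rho_{S},\tau_{S})\leq F(\rho_{S},\tau_{S})$ with $\tau_{S}\coloneqq \operatorname{Tr}_{R}[\sigma_{RS}']$, using $p\leq 1$.

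The crux---and the step I would treat most carefully---is verifying that $\tau_{S}\in\operatorname*{BSE}_{G}$, so that the chain of inequalities actually lands inside the target set. Here the $G$-BSE case is in fact cleaner than the $G$-symmetry case: idempotency of the projector gives $\Pi_{RS}^{G}\sigma_{RS}'\Pi_{RS}^{G}=\sigma_{RS}'$ immediately, so $\sigma_{RS}'$ is itself a valid Bose-symmetric extension of $\tau_{S}$, witnessing $\tau_{S}\in\operatorname*{BSE}_{G}$ without the conjugation-and-partial-trace computation required in Theorem~\ref{thm:max-acc-prob-g-sym}. The genuine subtleties to watch are the correct system identification in Uhlmann's theorem (ensuring $RE'\simeq S'E$, so the prover's unitary freedom exactly matches the reference-system optimization) and the bookkeeping of the subnormalization factor $p$; once these are handled, combining the two directions closes the equality and completes the proof.
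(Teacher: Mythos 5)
Your proposal is correct and follows essentially the same route as the paper's proof in Appendix~\ref{app:proof-thm-g-bse}: Uhlmann's theorem reduces the prover optimization to $\max_{\sigma_{RS}}F(\rho_{S},\operatorname{Tr}_{R}[\Pi_{RS}^{G}\sigma_{RS}\Pi_{RS}^{G}])$, and the two-sided inequality is closed exactly as you describe, with the $\leq$ direction using the normalized projected state as its own witnessing extension. Your observation that idempotency of $\Pi_{RS}^{G}$ makes this case cleaner than the $G$-symmetry case is also consistent with how the paper handles it.
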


\subsection{Testing $G$-Symmetric Extendibility}

\label{sec:test-g-sym-ext}The final algorithm that we introduce tests whether a state $\rho_{S}$ is $G$-symmetric extendible (recall Definition~\ref{def:g-sym-ext}). Similar to the algorithms in the previous sections, not only does it decide whether $\rho_{S}$ is $G$-symmetric extendible, but it also quantifies how similar it is to a state in the set of $G$-symmetric extendible states. The acceptance probability is equal to the \textit{maximum }$G$\textit{-symmetric extendible fidelity}:
\begin{equation}
\max_{\sigma_{S}\in\operatorname{SymExt}_{G}}F(\rho_{S},\sigma_{S}
),\label{eq:max-g-sym-ext-fid}
\end{equation}
where
\begin{equation}
\operatorname{SymExt}_{G}\coloneqq\left\{
\begin{array}
[c]{c}
\sigma_{S}:\exists\omega_{RS}\in\mathcal{D}(\mathcal{H}_{RS}),\operatorname{Tr}_R[\omega_{RS}]=\sigma_S\\
\omega_{RS}=U_{RS}(g)\omega_{RS}U_{RS}(g)^{\dag}\ \forall g\in G
\end{array}
\right\}  .
\label{eq:g-sym-ext-set}
\end{equation}
We again operate in the model of QIP(2), in which a verifier interacts with a prover via two messages.

We again list the algorithm for completeness, noting its similarity to the previous algorithms. Let $\ket{\psi}_{S^{\prime}S}$ be a purification of the state $\rho_{S}$, and suppose that the circuit $U^{\rho}$ prepares this purification of $\rho_{S}$.

\begin{Algorithm}
\label{alg:sym-ext}The algorithm proceeds as follows:

\begin{enumerate}
\item The verifier uses the circuit $U^{\rho}$ to prepare the state $\ket{\psi}_{S^{\prime}S}$, which is a purification of the state $\rho_{S}$.

\item The verifier transmits the purifying system $S^{\prime}$ to the prover.

\item The prover appends an ancillary register $E$ in the state $\ket{0}_{E}$ and performs a unitary $V_{S^{\prime}E\rightarrow R\hat{R}\hat{S}E^{\prime}}$.

\item The prover sends the systems $R\hat{R}\hat{S}$ back to the verifier.

\item The verifier prepares a register $C$ in the state $\ket{0}_{C}$.

\item The verifier acts on register $C$ with a quantum Fourier transform.

\item The verifier performs the following controlled unitary:
\begin{equation}
\sum_{g\in G}\ket{g}\bra{g}_{C}\otimes U_{RS}(g)\otimes\overline {U}_{\hat{R}\hat{S}}(g),
\end{equation}

\item The verifier performs an inverse quantum Fourier transform on register~$C$, measures in the basis $\{\ket{g}\bra{g}_{C}\}_{g\in G}$, and accepts if and only if the zero outcome $\ket{0}\!\bra{0}_{C}$ occurs.
\end{enumerate}
\end{Algorithm}

Figure~\ref{fig:case-4} depicts this quantum algorithm. After Step~3, the
overall state is
\begin{equation}
V_{S^{\prime}E\rightarrow R\hat{R}\hat{S}E^{\prime}}\ket{\psi}_{S^{\prime} S}\ket{0}_{E}.
\end{equation}
Step~5 prepares the uniform superposition state $\ket{+}_{C}$, which is defined in \eqref{eq:plus-over-group}. After Step~7, the overall state is
\begin{equation}
\frac{1}{\sqrt{\left\vert G\right\vert }}\sum_{g\in G}\ket{g}_{C}\left(U_{RS}(g)\otimes\overline{U}_{\hat{R}\hat{S}}(g)\right)  V\ket{\psi}_{S^{\prime}S}\ket{0}_{E},
\end{equation}
where $V\equiv V_{S^{\prime}E\rightarrow R\hat{R}\hat{S}E^{\prime}}$. The last step can be understood as the verifier projecting the register $C$ onto the state $\ket{+}_{C}$.

The probability of accepting is equal to
\begin{equation}
\left\Vert \Pi_{RS\hat{R}\hat{S}}^{G}V_{S^{\prime}E \rightarrow R\hat{R} \hat{S}E^{\prime}}\ket{\psi}_{S^{\prime}S} \ket{0}_{E}\right\Vert _{2}^{2},
\end{equation}
where $\Pi_{RS\hat{R}\hat{S}}^{G}$ is defined in \eqref{eq:projector-ref-unitaries}. As before, the prover optimizes over every unitary $V_{S^{\prime}E\rightarrow R\hat{R} \hat{S}E^{\prime}}$. The acceptance probability of Algorithm~\ref{alg:sym-ext} is then given by
\begin{equation}
\left\Vert \Pi_{RS\hat{R}\hat{S}}^{G}V_{S^{\prime}E\rightarrow R\hat{R}\hat {S}E^{\prime}}\ket{\psi}_{S^{\prime}S}\ket{0}_{E}\right\Vert _{2}^{2}.
\end{equation}

\begin{figure}[ptb]
\begin{center}
\includegraphics[width=3.5in]{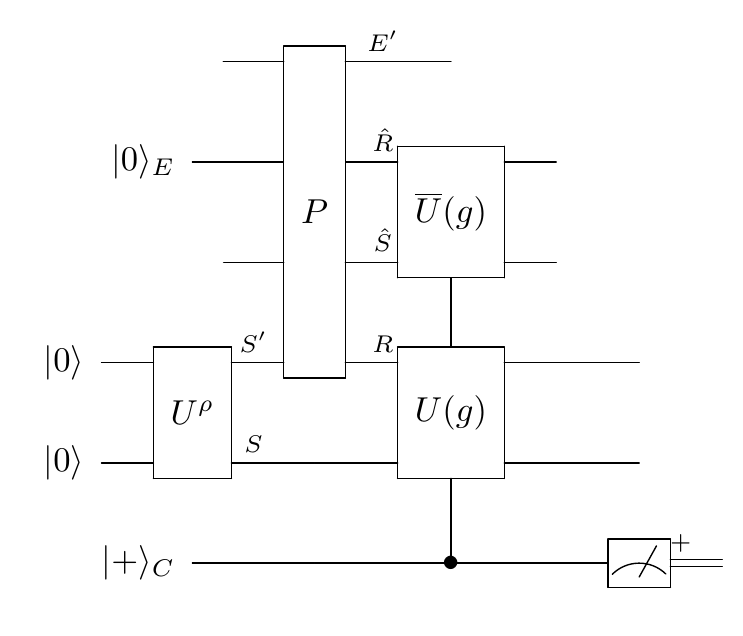}
\end{center}
\caption{Quantum circuit to implement Algorithm~\ref{alg:sym-ext}. The unitary $U^{\rho}$ prepares a purification $\psi_{S^{\prime}S}$ of the state $\rho_{S}$. Algorithm~\ref{alg:sym-ext} tests whether the state $\rho_{S}$ is $G$-symmetric extendible, as defined in Definition~\ref{def:g-sym-ext}. Its acceptance probability is equal to the maximum $G$-symmetric extendible fidelity, as defined in~\eqref{eq:max-g-sym-ext-fid}.}
\label{fig:case-4}
\end{figure}

Our proof of the following theorem is similar to the proof given for Theorem~\ref{thm:max-acc-prob-g-sym}. For completeness, we provide our proof in Appendix~\ref{app:proof-thm-g-se}.

\begin{theorem}
\label{thm:G-SE-acc-prob}The maximum acceptance probability of Algorithm~\ref{alg:sym-ext} is equal to the maximum $G$-symmetric extendible fidelity in \eqref{eq:max-g-sym-ext-fid}, i.e.,
\begin{equation}
\max_{V_{S^{\prime}E\rightarrow R\hat{R}\hat{S}E^{\prime}}}\left\Vert \Pi_{RS\hat{R}\hat{S}}^{G}V_{S^{\prime}E\rightarrow R\hat{R}\hat{S}E^{\prime}}\ket{\psi}_{S^{\prime}S}\ket{0}_{E}\right\Vert _{2}^{2} =\max_{\sigma_{S}\in\operatorname*{SymExt}_{G}}F(\rho_{S},\sigma_{S}),
\end{equation}
where the set $\operatorname*{SymExt}_{G}$ is defined in \eqref{eq:g-sym-ext-set}.
\end{theorem}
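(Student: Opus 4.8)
The plan is to follow the template established in the proof of Theorem~\ref{thm:max-acc-prob-g-sym}, adapting it to the larger register structure $RS\hat{R}\hat{S}$ and to the conjugate-representation projector $\Pi^G_{RS\hat{R}\hat{S}}$ of \eqref{eq:projector-ref-unitaries}. First I would rewrite the acceptance probability by applying the variational characterization of the Euclidean norm in \eqref{eq:euclidean-norm-opt}, turning the squared norm $\left\Vert \Pi^G_{RS\hat{R}\hat{S}} V \ket{\psi}_{S'S}\ket{0}_E\right\Vert_2^2$ into a joint maximization over the prover's unitary $V_{S'E\to R\hat{R}\hat{S}E'}$ and a normalized test vector $\ket{\phi}_{RS\hat{R}\hat{S}E'}$ of the overlap $\left\vert \bra{\phi}_{RS\hat{R}\hat{S}E'} \Pi^G_{RS\hat{R}\hat{S}} V \ket{\psi}_{S'S}\ket{0}_E\right\vert^2$.

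Next I would invoke Uhlmann's theorem \cite{U76} with the system identification $S \leftrightarrow A$ and reference $R\hat{R}\hat{S}E' \simeq S'E$, using that $\operatorname{Tr}_{S'E}[\ket{\psi}\bra{\psi}_{S'S}\otimes\ket{0}\bra{0}_E] = \rho_S$ and that $\operatorname{Tr}_{R\hat{R}\hat{S}E'}[\Pi^G_{RS\hat{R}\hat{S}}\ket{\phi}\bra{\phi}\Pi^G_{RS\hat{R}\hat{S}}] = \operatorname{Tr}_{R\hat{R}\hat{S}}[\Pi^G_{RS\hat{R}\hat{S}}\sigma_{RS\hat{R}\hat{S}}\Pi^G_{RS\hat{R}\hat{S}}]$, where $\sigma_{RS\hat{R}\hat{S}} \coloneqq \operatorname{Tr}_{E'}[\ket{\phi}\bra{\phi}]$ ranges over all states as $\ket{\phi}$ does. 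This collapses the double optimization to
\begin{equation}
\max_{\sigma_{RS\hat{R}\hat{S}}} F\!\left(\rho_S, \operatorname{Tr}_{R\hat{R}\hat{S}}[\Pi^G_{RS\hat{R}\hat{S}}\sigma_{RS\hat{R}\hat{S}}\Pi^G_{RS\hat{R}\hat{S}}]\right),
\end{equation}
so the whole theorem reduces to proving this equals $\max_{\sigma_S \in \operatorname{SymExt}_G} F(\rho_S, \sigma_S)$.

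For the forward ($\geq$) inequality I would take any $\sigma_S \in \operatorname{SymExt}_G$ and apply Theorem~\ref{thm:Bose-sym-purify} to obtain a purification $\varphi_{RS\hat{R}\hat{S}}$ of $\sigma_S$ satisfying $\Pi^G_{RS\hat{R}\hat{S}}\varphi_{RS\hat{R}\hat{S}}\Pi^G_{RS\hat{R}\hat{S}} = \varphi_{RS\hat{R}\hat{S}}$; choosing $\sigma_{RS\hat{R}\hat{S}} = \varphi_{RS\hat{R}\hat{S}}$ makes the projected reduced state equal to $\sigma_S$ exactly, so the term $F(\rho_S,\sigma_S)$ appears in the maximization. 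For the reverse ($\leq$) inequality I would discard the case where $\sigma_{RS\hat{R}\hat{S}}$ has no support on the range of $\Pi^G_{RS\hat{R}\hat{S}}$, since then the fidelity vanishes; otherwise I would set $\sigma'_{RS\hat{R}\hat{S}} \coloneqq \frac{1}{p}\Pi^G_{RS\hat{R}\hat{S}}\, \sigma_{RS\hat{R}\hat{S}}\, \Pi^G_{RS\hat{R}\hat{S}}$ with $p \coloneqq \operatorname{Tr}[\Pi^G_{RS\hat{R}\hat{S}} \sigma_{RS\hat{R}\hat{S}}] \leq 1$, so that $F(\rho_S, \operatorname{Tr}_{R\hat{R}\hat{S}}[\Pi^G\sigma\Pi^G]) = p\,F(\rho_S,\tau_S) \leq F(\rho_S,\tau_S)$ for $\tau_S \coloneqq \operatorname{Tr}_{R\hat{R}\hat{S}}[\sigma'_{RS\hat{R}\hat{S}}]$.

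The crux, and the step I expect to be the main obstacle, is verifying that this $\tau_S$ actually lies in $\operatorname{SymExt}_G$, i.e., that it admits a $G$-invariant extension in the sense of \eqref{eq:G-ext-2}. I would take $\omega_{RS} \coloneqq \operatorname{Tr}_{\hat{R}\hat{S}}[\sigma'_{RS\hat{R}\hat{S}}]$, which is manifestly an extension of $\tau_S$ since $\operatorname{Tr}_R[\omega_{RS}] = \operatorname{Tr}_{R\hat{R}\hat{S}}[\sigma'_{RS\hat{R}\hat{S}}] = \tau_S$, and then use that $\sigma'_{RS\hat{R}\hat{S}}$ lies in the range of $\Pi^G_{RS\hat{R}\hat{S}}$ together with the invariance $(U_{RS}(g)\otimes\overline{U}_{\hat{R}\hat{S}}(g))\Pi^G_{RS\hat{R}\hat{S}} = \Pi^G_{RS\hat{R}\hat{S}}$ from \eqref{eq:unitaries-and-projs} to conclude that $(U_{RS}(g)\otimes\overline{U}_{\hat{R}\hat{S}}(g))\sigma'_{RS\hat{R}\hat{S}}(U_{RS}(g)\otimes\overline{U}_{\hat{R}\hat{S}}(g))^\dag = \sigma'_{RS\hat{R}\hat{S}}$ for all $g$. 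Tracing out $\hat{R}\hat{S}$, pulling $U_{RS}(g)$ outside the partial trace, and cancelling the $\overline{U}_{\hat{R}\hat{S}}(g)$ factors under cyclicity of the partial trace over the conjugate register then yields $\omega_{RS} = U_{RS}(g)\omega_{RS}U_{RS}(g)^\dag$ for all $g\in G$. This is precisely the chain of manipulations at the end of the proof of Theorem~\ref{thm:max-acc-prob-g-sym}; the only genuinely delicate part is the bookkeeping of which registers carry the representation $U$ versus its conjugate $\overline{U}$, and ensuring the partial trace is taken over exactly the conjugate-register factors $\hat{R}\hat{S}$ so that they cancel cleanly.
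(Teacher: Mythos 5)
Your proposal matches the paper's proof in Appendix~\ref{app:proof-thm-g-se} essentially step for step: the same reduction via \eqref{eq:euclidean-norm-opt} and Uhlmann's theorem to $\max_{\sigma_{R\hat{R}S\hat{S}}}F(\rho_{S},\operatorname{Tr}_{R\hat{R}\hat{S}}[\Pi_{RS\hat{R}\hat{S}}^{G}\sigma\Pi_{RS\hat{R}\hat{S}}^{G}])$, the same use of Theorem~\ref{thm:Bose-sym-purify} for the forward inequality, and the same renormalization-plus-conjugate-cancellation argument showing $\operatorname{Tr}_{\hat{R}\hat{S}}[\sigma']$ is a $G$-invariant extension for the reverse inequality. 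The proof is correct and takes the same route.
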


Once again, we return to our familiar friend, the triangular dihedral group, to demonstrate the construction of this algorithm. As before, we use the same unitary $U_d$ to prepare the superposition $\ket{+}_C$ and the same mapping of control states to group elements. Then the circuit to test for $G$-Bose symmetric extendibility is shown in Figure~\ref{fig:Dihedral_GBSE_Circuit}. Like it's predecessor, it employs a unitary $V$ to indicate the presense of a prover (or VQA, as the case may be.)

\begin{figure}[h]
\begin{center}
\includegraphics[width=0.5\textwidth]{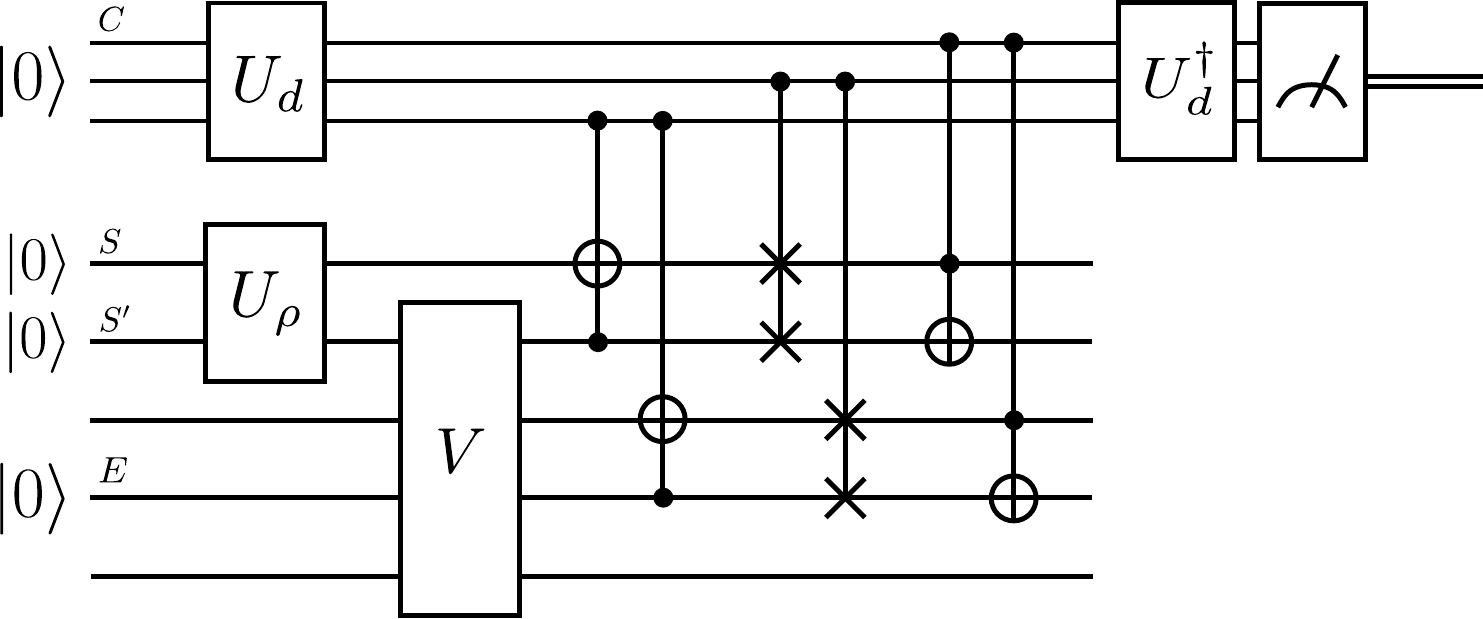}
\end{center}
\caption{Quantum circuit implementing Algorithm~\ref{alg:sym-ext} to test $G$-symmetric extendibility in the case that  the group $G$ is the triangular dihedral group. Compared to Figure~\ref{fig:case-4}, the systems $S$ and $S'$ are one qubit each, $C$ consists of three qubits, and $\ket{+}_C$ is defined as $U_d\ket{000}$. Both the SWAP and CNOT gates have no imaginary entries and thus are equal to their own complex conjugates.}
\label{fig:Dihedral_GBSE_Circuit}
\end{figure}

\begin{table*}[h]
\caption{Summary of the various symmetry tests proposed in Section~\ref{sec:tests-o-sym} and their acceptance probabilities. For more details, see Theorems~\ref{thm:acc-prob-g-Bose-sym}, \ref{thm:max-acc-prob-g-sym}, \ref{thm:G-BSE-acc-prob}, and \ref{thm:G-SE-acc-prob}.}
\begin{center}
\begin{tabular}
[c]{c|c|c}\hline\hline
\multicolumn{1}{c}{Test} & \multicolumn{1}{|c|}{Algorithm} &
\multicolumn{1}{c}{Acceptance Probability}\\\hline\hline
$G$-Bose symmetry & \ref{alg:simple} & $\max_{\sigma\in\text{B-Sym}_{G}}F(\rho,\sigma
)$\\\hline
$G$-symmetry & \ref{alg:g-sym-test} & $\max_{\sigma\in\text{Sym}_{G}}F(\rho,\sigma)$\\\hline
$G$-Bose symmetric extendibility & \ref{alg:G-BSE-test} & $\max_{\sigma\in\text{BSE}_{G}}
F(\rho,\sigma)$\\\hline
$G$-symmetric extendibility & \ref{alg:sym-ext} & $\max_{\sigma\in\text{SymExt}_{G}}
F(\rho,\sigma)$\\\hline\hline
\end{tabular}
\end{center}
\label{tbl:theory-summary}
\end{table*}

\section{Tests of $k$-Extendibility of States and Covariance Symmetry of Channels} 
\label{sec:specialized-tests}

The theory developed in Section~\ref{sec:tests-o-sym} is rather abstract, despite the peppering of $D_3$ as a demonstrable. It is natural to wonder whether these algorithms may find practical use in physics. In the forthcoming subsections, we address this concern by applying our algorithm to test for extendibility of bipartite and multipartite quantum states and to test for covariance symmetry of a quantum channel. The former two cases are used in tests of separability, and the latter we have already employed in Chapter~\ref{ch:symmham}.

\subsection{Separability Test for Pure Bipartite States}\label{sec:separability1}

We illustrate the $G$-Bose symmetry test from Section~\ref{sec:simple-algorithm} on a case of interest: deciding whether a pure bipartite state is entangled. This problem is known to be BQP-complete \cite{GHMW15}, and one can decide it by means of the SWAP test as considered in \cite{HM10}. The SWAP test as a quantum computational method of quantifying entanglement has been further studied in recent work \cite{FKS21,BGCC21}.

First, let us establish that the above notions of symmetry generalize both $k$-extendibility of
bipartite states and $G$-symmetry of unipartite states by introducing both as examples of our algorithm.

\begin{example}
[$k$-extendible]\label{ex:k-ext} Recall that a bipartite state $\rho_{AB}$ is $k$-extendible \cite{W89a,DPS02,DPS04} if there exists an extension state $\omega_{AB_{1}\cdots B_{k}}$ such that
\begin{equation}
\operatorname{Tr}_{B_{2}\cdots B_{k}}[\omega_{AB_{1}\cdots B_{k}}]=\rho_{AB}
\end{equation}
and
\begin{equation}
\omega_{AB_{1}\cdots B_{k}}=W_{B_{1}\cdots B_{k}}(\pi)\omega_{AB_{1}\cdots B_{k}}W_{B_{1}\cdots B_{k}}(\pi)^{\dag},
\end{equation}
for all $\pi\in S_{k}$, where each system $B_{1}$, \ldots, $B_{k}$ is isomorphic to the system $B$ and $W_{B_{1}\cdots B_{k}}(\pi)$ is a unitary representation of the permutation $\pi\in S_{k}$, with $S_{k}$ the symmetric group. Then the established notion of $k$-extendibility is a special case of $G$-symmetric extendibility, in which we set
\begin{align}
S  &  =AB_{1},\label{eq:ident-k-to-g-1}\\
R  &  =B_{2}\cdots B_{k},\\
G  &  =S_{k},\\
U_{RS}(g)  &  =I_{A}\otimes W_{B_{1}\cdots B_{k}}(\pi).
\label{eq:ident-k-to-g-4}
\end{align}
\end{example}

\begin{example}
[$k$-Bose-extendible]\label{ex:k-bose-ext}
A bipartite state $\rho_{AB}$ is $k$-Bose-extendible if there exists an extension state $\omega_{AB_{1}\cdots B_{k}}$ such that 
\begin{equation}
\operatorname{Tr}_{B_{2}\cdots B_{k}}[\omega_{AB_{1}\cdots B_{k}}]=\rho_{AB} \, ,
\end{equation}
and
\begin{equation}
\omega_{AB_{1}\cdots B_{k}}=\Pi_{B_{1}\cdots B_{k}}^{\operatorname{Sym}} \omega_{AB_{1}\cdots B_{k}}\Pi_{B_{1}\cdots B_{k}}^{\operatorname{Sym}},
\end{equation}
where
\begin{equation}
\Pi_{B_{1}\cdots B_{k}}^{\operatorname{Sym}}\coloneqq\frac{1}{k!}\sum_{\pi\in S_{k}}W_{B_{1}\cdots B_{k}}(\pi) 
\label{eq:sym-subspace-proj}
\end{equation}
is the projection onto the symmetric subspace. Thus, $k$-Bose-extendibility is a special case of $G$-Bose-symmetric extendibility under the identifications in \eqref{eq:ident-k-to-g-1}--\eqref{eq:ident-k-to-g-4}.
\end{example}

Let $\psi_{AB}$ be a pure bipartite state, and let $\psi_{AB}^{\otimes k}$ denote $k$ copies of it. Then we can consider the permutation unitaries $W_{B_{1}\cdots B_{k}}(\pi)$ from Example~\ref{ex:k-ext}. This example is a special case of Bose $G$-symmetry with the identifications
\begin{align}
S  &  \leftrightarrow A_{1}B_{1}\cdots A_{k}B_{k},\\
U_{S}(g)  &  \leftrightarrow I_{A_{1}\cdots A_{k}}\otimes W_{B_{1}\cdots B_{k}}(\pi).
\end{align}
The acceptance probability of Algorithm~\ref{alg:simple}\ is equal to
\begin{equation}
\operatorname{Tr}[\Pi_{B_{1}\cdots B_{k}}^{\text{Sym}}\rho_{B}^{\otimes k}],
\end{equation}
where the projection $\Pi_{B_{1}\cdots B_{k}}^{\text{Sym}}$ onto the symmetric subspace is defined in \eqref{eq:sym-subspace-proj} and $\rho_{B} \coloneqq \operatorname{Tr}_{A}[\psi_{AB}]$.\ For $k=2$, this reduces to the well known SWAP\ test with acceptance probability
\begin{equation}
p_{\text{acc}}^{(2)}\coloneqq \frac{1}{2}\left(  1+\operatorname{Tr}[\rho_{B}^{2}]\right)  .
\end{equation}
For $k=3$, the acceptance probability is
\begin{equation}
p_{\text{acc}}^{(3)}\coloneqq \frac{1}{6}\left(  1 + 3 \operatorname{Tr}[\rho_{B}^{2}] + 2 \operatorname{Tr}[\rho_{B}^{3}]\right)  .
\end{equation}
For $k=4$, the acceptance probability is
\begin{equation}
p_{\text{acc}}^{(4)}\coloneqq \frac{1}{24}\left(  1+6\operatorname{Tr}[\rho_{B}^{2}]+3\left(\operatorname{Tr}[\rho_{B}^{2}]\right)^{2}+8\operatorname{Tr}[\rho_{B}^{3}]+6\operatorname{Tr}[\rho_{B}^{4}]\right)  .
\end{equation}
We conclude that
\begin{equation}
p_{\text{acc}}^{(2)}\geq p_{\text{acc}}^{(3)}\geq p_{\text{acc}}^{(4)},
\label{eq:acc-prob-decreasing}
\end{equation}
because $\operatorname{Tr}[\rho^{k}]=\sum_{j}\lambda_{j}^{k}$, where the eigenvalues of $\rho$ are $\{\lambda_{j}\}_{j}$, and for all $x,y\in\left[0,1\right]  $,
\begin{align}
\frac{1}{2}\left(  x+x^{2}\right)   &  \geq\frac{1}{6}\left(  x+3x^{2} + 2x^{3}\right) \\
&  \geq\frac{1}{24}\left(  x+6x^{2}+3x^{2}y+8x^{3}+6x^{4}\right)  .
\end{align}
The inequalities in \eqref{eq:acc-prob-decreasing} imply that the tests become more difficult to pass, or stringent, as $k$ increases. It may be expected that this trend of decreasing acceptance probability continues as $k$ increases. Indeed, in \cite{bradshaw2022cycle}, we showed that this was the case; however, the proof requires additional machinery not yet developed, and so we will readdress this concern in Chapter~\ref{ch:gensep}.

We can interpret these findings in two different ways. For each $k$, the rejection probability $1-p_{\text{acc}}^{(k)}$ can be understood as an entanglement measure for pure bipartite states, similar to how the linear entropy $1-\operatorname{Tr}[\rho_{B}^{2}]$ is interpreted as an entanglement measure of the reduced state $\rho_B$. Indeed, these quantities are non-increasing under local operations and classical communication that take pure states to pure states, as every R\'{e}nyi entropy of the reduced state is an entanglement measure for pure bipartite states \cite{HHHH09}. Another interpretation is that, if using these tests to decide if a given pure state is product or entangled, a decision can be determined with fewer repetitions of the basic test by using tests with higher values of $k$.

\subsection{Separability test for Pure Multipartite States}

We can generalize the test from the previous section to one for pure multipartite entanglement. Let $\psi_{A_{1}\cdots A_{m}}$ be a multipartite pure state, and let $\psi_{A_{1}\cdots A_{m}}^{\otimes k}$ denote $k$ copies of it. For $i\in\left\{  1,\ldots,m\right\}  $ and $\pi_{i}\in S_{k}$, let $W_{A_{i,1}\cdots A_{i,k}}(\pi_{i})$ denote a permutation unitary, where $i$ is an index for the $i$-th party, and the notation $A_{i,j}$ for $j\in\left\{1,\ldots,k\right\}  $ indicates the $j$th system of the $i$th party. This
example is a special case of $G$-Bose symmetry with the identifications:
\begin{align}
S &  \leftrightarrow A_{1,1}\cdots A_{1,k}\cdots A_{m,1}\cdots A_{m,k}, \label{eq:pure-multipartite-idents-1}\\
U_{S}(g) &  \leftrightarrow\bigotimes\limits_{i=1}^{m}W_{A_{i,1}\cdots A_{i,k}}(\pi_{i}),\\
G &  \leftrightarrow\overset{m\text{ times}}{\overbrace{S_{k}\times \cdots\times S_{k}}},\\
g &  \leftrightarrow(\pi_{1},\ldots,\pi_{m}), \label{eq:pure-multipartite-idents-4}
\end{align}
where $\times$ denotes the direct product of groups. The $G$-Bose symmetry test from Section~\ref{sec:simple-algorithm} has the following acceptance probability in this case:
\begin{equation}
\operatorname{Tr}\!\left[  \bigotimes\limits_{i=1}^{m}\Pi_{A_{i,1}\cdots A_{i,k}}^{\operatorname{Sym}}\psi_{A_{1}\cdots A_{m}}^{\otimes k}\right]\,  .
\end{equation}
For $k=2$, this test is known to be a test of multipartite pure-state entanglement \cite{HM10}, which has been considered in more recent works \cite{FKS21,BGCC21}. As far as we aware, the test proposed above, for larger values of $k$, has not been considered previous to our work in \cite{laborde2021testing}. Presumably, as was the case for the bipartite entanglement test mentioned above, the multipartite test is such that it becomes easier to detect an entangled state as $k$ increases. 

\subsection{$k$-Bose Extendibility Test for Bipartite States}
\label{sec:k-bose-ext-test}
We now demonstrate how the test for $G$-Bose symmetric extendibility from Section~\ref{sec:G-Bose-sym-ext-test} can realize a test for $k$-Bose extendibility of a bipartite state. Since every separable state is $k$-Bose extendible, this test is then indirectly a test for separability. To see this in detail, recall that a bipartite state $\sigma_{AB}$ is separable if it can be written as a convex combination of pure product states \cite{HHHH09,KW20book}:
\begin{equation}
\sigma_{AB}=\sum_{x}p_{X}(x)\psi_{A}^{x}\otimes\phi_{B}^{x},
\end{equation}
where $p_{X}$ is a probability distribution and $\{\psi_{A}^{x}\}_{x}$ and $\{\phi_{B}^{x}\}_{x}$ are sets of pure states. A $k$-Bose extension for this state is as follows:
\begin{equation}
\omega_{AB_{1}\cdots B_{k}}=\sum_{x}p_{X}(x)\psi_{A}^{x}\otimes\phi_{B_{1}}^{x}\otimes\cdots \otimes\phi_{B_{k}}^{x}.
\end{equation}
By making the identifications discussed in Example~\ref{ex:k-bose-ext}, it follows from Theorem~\ref{thm:G-BSE-acc-prob}\ that the test from Section~\ref{sec:G-Bose-sym-ext-test} is a test for $k$-Bose extendibility. For an input state $\rho_{AB}$, the acceptance probability of Algorithm~\ref{alg:G-BSE-test} is equal to the maximum $k$-Bose extendible fidelity
\begin{equation}
\max_{\omega_{AB}\in k\text{-BE}}F(\rho_{AB},\omega_{AB}),
\end{equation}
where $k$-BE denotes the set of $k$-Bose extendible states, as defined in Example~\ref{ex:k-bose-ext}.

\subsection{$k$-Extendibility Test for Bipartite States}

In this section, we discuss how the test for $G$-symmetric extendibility from Section~\ref{sec:test-g-sym-ext}\ can realize a test for $k$-extendibility of a bipartite state. Due to the known connections between $k$-extendibility and separability \cite{CKMR08,BCY11,BCY11a,BH12}, this test is an indirect test for separability of a bipartite state, a case we introduce here and expand upon in Chapter~\ref{ch:gensep}. Since every separable state is $k$-Bose extendible, as discussed in Section~\ref{sec:k-bose-ext-test}, and every $k$-Bose extendible state is $k$-extendible, it follows that every separable state is $k$-extendible.

By making the identifications discussed in Example~\ref{ex:k-ext}, it follows from Theorem~\ref{thm:G-SE-acc-prob}\ that the test from Section~\ref{sec:test-g-sym-ext} is a test for $k$-extendibility. For an input state $\rho_{AB}$, the acceptance probability of Algorithm~\ref{alg:sym-ext}\ is equal to the maximum $k$-extendible fidelity
\begin{equation}
\max_{\omega_{AB}\in k\text{-E}}F(\rho_{AB},\omega_{AB}),
\end{equation}
where $k$-E denotes the set of $k$-extendible states, as defined in Example~\ref{ex:k-ext}.

As far as we are aware, this quantum computational test for $k$-extendibility is original to our work in \cite{laborde2021testing}; however, inspired by the approach from \cite{HMW13,Hayden:2014:TQI}. It was argued in \cite{HMW13,Hayden:2014:TQI} that the acceptance probability of the test there is bounded from above by the maximum $k$-extendible fidelity, which is consistent with the fact that the set of $k$-Bose extendible states is contained in the set of $k$-extendible states and our observation here that the acceptance probability of the test in \cite{HMW13,Hayden:2014:TQI} is equal to the maximum $k$-Bose extendible fidelity.

\subsection{Extendibility Tests for Multipartite States}

We now would like to discuss briefly how the tests from Sections~\ref{sec:G-Bose-sym-ext-test} and \ref{sec:test-g-sym-ext} apply to the multipartite case, using identifications similar to those in \eqref{eq:pure-multipartite-idents-1}--\eqref{eq:pure-multipartite-idents-4}.

First, let us recall the definition of multipartite extendibility \cite{DPS05}. Let $\sigma_{A_{1}\cdots A_{m}}$ be a multipartite state. Such a state is $(k_{1},\ldots,k_{m})$-extendible if there exists a state $\omega_{A_{1,1}\cdots A_{1,k_{1}}\cdots A_{m,1}\cdots A_{m,k_{m}}}$ such that
\begin{equation}\label{eq:multipartite-ext-cond}
\sigma_{A_{1}\cdots A_{m}}= \operatorname{Tr}_{A_{1,2} \cdots A_{1,k_{1}}\cdots A_{m,2}\cdots A_{m,k_{m}}}[\omega_{A_{1,1}\cdots A_{1,k_{1}}\cdots A_{m,1}\cdots A_{m,k_{m}}}]
\end{equation}
and
\begin{multline}
\omega_{A_{1,1}\cdots A_{1,k_{1}}\cdots A_{m,1}\cdots A_{m,k_{m}}}\\
=W_{_{A_{1,1}\cdots A_{1,k_{1}}\cdots A_{m,1}\cdots A_{m,k_{m}}}}^{\mathbf{\pi}}\omega_{A_{1,1} \cdots A_{1,k_{1}}\cdots A_{m,1}\cdots A_{m,k_{m}}} (W_{_{A_{1,1}\cdots A_{1,k_{1}}\cdots A_{m,1}\cdots A_{m,k_{m}}}}^{\mathbf{\pi}})^{\dag},
\end{multline}
for all $\mathbf{\pi}$, where $\mathbf{\pi}=(\pi_{1},\ldots,\pi_{m})\in S_{k_{1}}\times\cdots\times S_{k_{m}}$ and
\begin{equation}
W_{_{A_{1,1}\cdots A_{1,k_{1}}\cdots A_{m,1}\cdots A_{m,k_{m}}}}^{\mathbf{\pi }}\coloneqq  \bigotimes\limits_{i=1}^{m}W_{A_{i,1}\cdots A_{i,k_{i}}}^{\pi_{i}}.
\end{equation}
A multipartite state is $(k_{1},\ldots,k_{m})$-Bose extendible if there exists a state $\omega_{A_{1,1}\cdots A_{1,k_{1}}\cdots A_{m,1}\cdots A_{m,k_{m}}}$ such that \eqref{eq:multipartite-ext-cond} holds and
\begin{multline}
\omega_{A_{1,1}\cdots A_{1,k_{1}}\cdots A_{m,1}\cdots A_{m,k_{m}}}=\\
\Pi_{A_{1,1}\cdots A_{1,k_{1}}\cdots A_{m,1}\cdots A_{m,k_{m}}}\omega_{A_{1,1}\cdots A_{1,k_{1}}\cdots A_{m,1}\cdots A_{m,k_{m}}} \Pi_{A_{1,1}\cdots A_{1,k_{1}}\cdots A_{m,1}\cdots A_{m,k_{m}}},
\end{multline}
where
\begin{align}
\Pi_{A_{1,1}\cdots A_{1,k_{1}}\cdots A_{m,1}\cdots A_{m,k_{m}}}  & \coloneqq \bigotimes\limits_{i=1}^{m}\Pi_{A_{i,1}\cdots A_{i,k_{i}}}^{\operatorname{Sym}},\\
\Pi_{A_{i,1}\cdots A_{i,k_{i}}}^{\operatorname{Sym}}  & \coloneqq \frac{1}{k_{i}!} \sum_{\pi_{i}\in S_{k_{i}}}W_{A_{i,1}\cdots A_{i,k_{i}}}^{\pi_{i}}.
\end{align}

By making the identifications
\begin{align}
S &  \leftrightarrow A_{1,1}\cdots A_{m,1},\\
R &  \leftrightarrow A_{1,2}\cdots A_{1,k_{1}}\cdots A_{m,2}\cdots A_{m,k_{m}}\\
U_{S}(g) &  \leftrightarrow\bigotimes\limits_{i=1}^{m}W_{A_{i,1}\cdots A_{i,k_{i}}}(\pi_{i})\\
G &  \leftrightarrow S_{k_{1}}\times\cdots\times S_{k_{m}},\\
g &  \leftrightarrow(\pi_{1},\ldots,\pi_{m}),
\end{align}
it follows that Algorithm~\ref{alg:G-BSE-test} is a test for multipartite $(k_{1},\ldots,k_{m})$-Bose extendibility of a state $\rho_{A_{1}\cdots A_{m}}$, with acceptance probability equal to
\begin{equation}
\max_{\omega_{A_{1}\cdots A_{m}}\in(k_{1},\ldots,k_{m})\text{-BE}} F(\rho_{A_{1}\cdots A_{m}},\omega_{A_{1}\cdots A_{m}}),
\end{equation}
and Algorithm~\ref{alg:sym-ext} is a test for multipartite $(k_{1},\ldots,k_{m})$-extendibility of a state $\rho_{A_{1}\cdots A_{m}}$, with acceptance probability equal to
\begin{equation}
\max_{\omega_{A_{1}\cdots A_{m}}\in(k_{1},\ldots,k_{m})\text{-E}}F(\rho_{A_{1}\cdots A_{m}},\omega_{A_{1}\cdots A_{m}}),
\end{equation}
where $(k_{1},\ldots,k_{m})$-BE and $(k_{1},\ldots,k_{m})$-E denote the sets of $(k_{1},\ldots,k_{m})$-Bose extendible and $(k_{1},\ldots,k_{m})$-extendible states, respectively.

\subsection{Testing Covariance Symmetry of a Quantum Channel}

\label{sec:cov-sym-ch-test}

We can also use the test from Algorithm~\ref{alg:g-sym-test} to test for
covariance symmetry of a quantum channel. We have already given a definition for covariance symmetry in Chapter~\ref{ch:intro}, Section~\ref{sec:channelnotions}, and we have further already demonstrated the usefulness of channel symmetry in Chapter~\ref{ch:symmham} in Section~\ref{sec:hamcov}. Therefore, we will skip its redefinition here and instead focus on generalizing the approach given in Chapter~\ref{ch:symmham}. 

We know from previous sections that  a channel is covariant in the sense above if and only if
its Choi state is invariant in the following sense \cite[Eq.~(59)]{CDP09}:
\begin{equation} 
\Phi_{RB}^{\mathcal{N}}=(\overline{\mathcal{U}}_{R}(g)\otimes\mathcal{V}_{B}(g))(\Phi_{RB}^{\mathcal{N}})\quad\forall g\in G,
\end{equation}
as in \eqref{eq:choicondition}. (All relevant terms in this definition are given therein.) We will once again employ this definition to give a test for the covariance symmetry of a quantum channel.

Suppose now that a circuit is available that generates the channel $\mathcal{N}_{A\rightarrow B}$. Similar to the first assumption in Section~\ref{sec:tests-o-sym}, we suppose that the circuit realizes a unitary channel $\mathcal{W}_{AE^{\prime}\rightarrow BE}$\ that extends the original channel, in the sense that
\begin{equation}
\mathcal{N}_{A\rightarrow B}(\omega_{A})=(\operatorname{Tr}_{E}\circ
\mathcal{W}_{AE^{\prime}\rightarrow BE})(\omega_{A}\otimes\ket{0}\!\bra{0}_{E^{\prime}}).
\end{equation}
Then to decide whether the channel is covariant, we send in one share of a maximally-entangled state to the unitary extension channel, such that the overall state is
\begin{equation}
\mathcal{W}_{AE^{\prime}\rightarrow BE}(\Phi_{RA}\otimes\ket{0}\!\bra{0}_{E^{\prime}}).
\end{equation}
Now making the identifications
\begin{align}
E  & \leftrightarrow S^{\prime},\\
RB  & \leftrightarrow S,\\
\overline{U}_{R}(g)\otimes V_{B}(g)  & \leftrightarrow U_{S}(g),
\end{align}
we apply Algorithm~\ref{alg:g-sym-test}, and as a consequence of Theorem~\ref{thm:max-acc-prob-g-sym}, the acceptance probability is equal to
\begin{equation}
\max_{\sigma_{RB}\in\operatorname{Sym}_G}F(\Phi_{RB}^{\mathcal{N}},\sigma_{RB}),
\end{equation}
where
\begin{equation}
\operatorname{Sym}_G\coloneqq \left\{
\begin{array}
[c]{c}
\sigma_{RB}\in\mathcal{D}(\mathcal{H}_{RB}):\\
\sigma_{RB}=(\overline{\mathcal{U}}_{R}(g)\otimes\mathcal{V}_{B}(g))(\sigma_{RB})\ \forall g\in G
\end{array}
\right\}  .
\end{equation}
Thus, the test accepts with probability equal to one if and only if the channel is covariant in the sense of \eqref{eq:ch-cov-def}. Indeed, this is very similar to the construction used in Chapter~\ref{ch:symmham}, but builds on the structure of symmetry we have been working with throughout this chapter.

\section{Resource Theories}
\label{sec:res-theories}

In this section, we prove that the various maximum symmetric fidelities proposed in Section~\ref{sec:tests-o-sym} are proper resource-theoretic monotones, in the sense reviewed in~\cite{CG18}. By fulfilling this requirement, we can assert that these fidelities are indeed measures of symmetry as claimed.

To begin with, let us recall the basics of a resource theory (see \cite[Definition~1]{CG18}). Let $\mathcal{F}$ be a mapping that assigns a unique set of quantum channels to any arbitrary input and output systems $A$ and $B$, respectively. We require that $\mathcal{F}$ include the identity channel ($\mathcal{F}(A\rightarrow A) = \mathbb{I}_A$) and  that, for any three physical systems $A$, $B$, and $C$, any two maps $\mathcal{N}_{A\rightarrow B} \in \mathcal{F}(A \rightarrow B)$ and $\mathcal{M}_{B\rightarrow C}\in\mathcal{F}(B \rightarrow C)$ have the transitive property
\begin{equation}\label{eq:trans-prop}
 \mathcal{M}_{B\rightarrow C}\circ\mathcal{N}_{A\rightarrow B} \in\mathcal{F}(A\rightarrow C).
\end{equation}
If $\mathcal{F}$ obeys above criteria, then the mapping $\mathcal{F}$ defines the resource theory. Each instance of $\mathcal{F}$ defines a map between two mathematical spaces, and the spaces themselves define the characteristics of the resulting map. The set $\mathcal{F}(\mathbb{C}\rightarrow A)$ defines the set of free states---that is, channels from the trivial space ($\mathbb{C}$) to system $A$ should be quantum states. The set $\mathcal{F}(A \rightarrow B)$ defines the set of free channels from system $A$ to system $B$. 

How can this be thought of in a more intuitive way? A resource theory defines a set of channels and rules by which they behave. Free states are, in some sense, the ``allowed" states in a resource theory; they are the states that can be generated by the channels in the set $\mathcal{F}$. A state that cannot be generated this way is considered a ``resource". Conversely, free channels are channels that cannot generate resources that were not already present. In this sense, the cost of creating one of the allowed states is ``free", hence the name.

An intuitive picture, which may be helpful to keep in mind, is to think of the resource section of a library. Here, the valuable resource is the knowledge contained within the books. An example of a free channel would be to permute the order of books on the shelf---say we change from the Dewey decimal system to an alphabetic system. We have not gained any resources by this operation, and so it is free. In fact, we could even pile all of the texts on the floor, and, while certainly inconvenient, the knowledge we have access to will not have changed. However, if instead we were to utilize the inter-library loan system to gain new books, clearly this is not a free operation as the amount of resources have increased. In the following section, we will discuss how asymmetry ties in with this idea of quantifying resources.

\subsection{Resource Theory of Asymmetry}
\label{sec:rt-asym}
The resource theory of asymmetry is well established by now \cite{MS13}, but to the best of our knowledge, the resource theory of Bose asymmetry had not been defined prior to \cite{laborde2021testing}. We will begin by recalling the resource theory of asymmetry. Afterwards, we establish the resource theory of Bose asymmetry as well as two other generalizations involving unextendibility, which are in turn generalizations of the resource theory of unextendibility proposed in \cite{KDWW19}.

Let $G$ be a group, and let $\{U_{A}(g)\}_{g\in G}$ and $\{V_{B}(g)\}_{g\in G}$ denote projective unitary representations of $G$. A channel $\mathcal{N}_{A\rightarrow B}$ is a free channel in the resource theory of asymmetry if the following $G$-covariance symmetry condition holds
\begin{equation}
\mathcal{N}_{A\rightarrow B} \circ \mathcal{U}_{A}(g)= \mathcal{V}_{B}(g) \circ \mathcal{N}_{A\rightarrow B} \qquad\forall g\in G,
\end{equation}
where the unitary channels $\mathcal{U}_{A}(g)$ and $\mathcal{V}_{B}(g)$ are respectively defined from $U_{A}(g)$ and $V_{B}(g)$ as in \eqref{eq:uandv}. It then follows that a state $\sigma_{A}$ is free in this resource theory if it is $G$-symmetric such that
\begin{equation}
\sigma_{A}=\mathcal{U}_{A}(g)(\sigma_{A})\qquad\forall g\in G,
\end{equation}
with a similar definition for the $B$ system; furthermore, the free channels take free states to free states \cite{MS13}, in the sense that $\mathcal{N}_{A\rightarrow B}(\sigma_{A})$ is a free state if $\mathcal{N}_{A\rightarrow B}$ is a free channel and $\sigma_{A}$ is a free state.

The maximum $G$-symmetric fidelity is a resource monotone in the following sense:
\begin{equation}
\max_{\sigma_{A}\in\operatorname*{Sym}_{G}}F(\rho_{A},\sigma_{A}) \leq \max_{\sigma_{B}\in\operatorname*{Sym}_{G}}F(\mathcal{N}_{A\rightarrow B} (\rho_{A}),\sigma_{B}).
\end{equation}
This follows from the facts that the fidelity does not decrease under the action of a quantum channel and that free channels take free states to free states.

\subsection{Resource Theory of Bose Asymmetry}

Now we define the resource theory of Bose asymmetry and prove that the acceptance probability $\operatorname{Tr}[\Pi_{A}^{G}\rho_{A}]$ of Algorithm~\ref{alg:simple} is a resource monotone in this resource theory. This demonstrates that $\operatorname{Tr}[\Pi_{A}^{G}\rho_{A}]$ is a legitimate quantifier of Bose symmetry of a state.

Following the same notation as in Section~\ref{sec:rt-asym}, recall that a state $\sigma_{A}$ is Bose symmetric if the following condition holds
\begin{equation}
\sigma_{A}=\Pi_{A}^{G}\sigma_{A}\Pi_{A}^{G},
\end{equation}
where $\Pi_{A}^{G}$ is given by \eqref{eq:unitaries-and-projs}. Similarly, a state $\tau_{B}$ is Bose symmetric if it obeys the same conditions but for the projector $\Pi_{B}^{G}$ specified by $\{V_B(g)\}_{g\in G}$. These are the free states in the resource theory of Bose asymmetry.

To define the resource theory, we need to specify the free channels.
\begin{definition}
[Bose symmetric channel]We define a channel $\mathcal{N}_{A\rightarrow B}$ to
be a Bose symmetric channel (i.e., free channel) if the following condition
holds
\begin{equation}
\left(  \mathcal{N}_{A\rightarrow B}\right)  ^{\dag}(\Pi_{B}^{G}) \geq \Pi_{A}^{G}, 
\label{eq:Bose-sym-channel}
\end{equation}
where $\left( \mathcal{N}_{A\rightarrow B}\right)  ^{\dag}$ is the Hilbert--Schmidt adjoint of $\mathcal{N}_{A\rightarrow B}$
\cite{wildebook,KW20book}.
\end{definition}

\begin{proposition}
Bose symmetric channels include the identity channel and they obey the transitive property in \eqref{eq:trans-prop}. Additionally, Bose symmetric states are a special case of Bose symmetric channels when the input space is trivial.
\end{proposition}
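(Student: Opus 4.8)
The plan is to dispatch the three assertions separately, each reducing to the defining inequality \eqref{eq:Bose-sym-channel} together with two elementary properties of the Hilbert--Schmidt adjoint that I would record at the outset: that adjoints reverse the order of composition, $(\mathcal{M}_{B\rightarrow C}\circ\mathcal{N}_{A\rightarrow B})^{\dag}=\mathcal{N}_{A\rightarrow B}^{\dag}\circ\mathcal{M}_{B\rightarrow C}^{\dag}$, and that the adjoint of a completely positive map is again positive, hence monotone with respect to the operator ordering (so that $X\geq Y$ implies $\mathcal{N}^{\dag}(X)\geq\mathcal{N}^{\dag}(Y)$). The identity channel is then immediate: since $\mathcal{I}_{A\rightarrow A}$ is self-adjoint, \eqref{eq:Bose-sym-channel} reads $\Pi_A^G\geq\Pi_A^G$, which holds with equality, so the identity is free.

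For transitivity, I would assume $\mathcal{N}_{A\rightarrow B}$ and $\mathcal{M}_{B\rightarrow C}$ are both free, i.e.\ $\mathcal{N}^{\dag}(\Pi_B^G)\geq\Pi_A^G$ and $\mathcal{M}^{\dag}(\Pi_C^G)\geq\Pi_B^G$, and chain these through the composite adjoint:
\begin{equation}
(\mathcal{M}\circ\mathcal{N})^{\dag}(\Pi_C^G)=\mathcal{N}^{\dag}\!\left(\mathcal{M}^{\dag}(\Pi_C^G)\right)\geq\mathcal{N}^{\dag}(\Pi_B^G)\geq\Pi_A^G,
\end{equation}
where the first inequality applies the positive (hence monotone) map $\mathcal{N}^{\dag}$ to $\mathcal{M}^{\dag}(\Pi_C^G)\geq\Pi_B^G$ and the second is the freeness of $\mathcal{N}$. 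This is exactly the condition \eqref{eq:Bose-sym-channel} for $\mathcal{M}\circ\mathcal{N}$, establishing \eqref{eq:trans-prop}. I expect this to be the conceptual crux: the whole argument hinges on $\mathcal{N}^{\dag}$ being positive and therefore preserving operator inequalities, and I would justify this monotonicity cleanly (as the adjoint of a CPTP map is completely positive) rather than treat it as self-evident.

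For the final claim, I would specialize to trivial input $\mathcal{H}_A=\mathbb{C}$, so that $\mathcal{N}_{\mathbb{C}\rightarrow B}$ is a state preparation $\mathcal{N}(c)=c\,\sigma_B$ and the representation on $A$ is trivial, giving $\Pi_A^G=1$. A short adjoint computation gives $\mathcal{N}^{\dag}(Y)=\operatorname{Tr}[\sigma_B Y]$, so \eqref{eq:Bose-sym-channel} becomes $\operatorname{Tr}[\Pi_B^G\sigma_B]\geq 1$; since $0\leq\Pi_B^G\leq I$ forces $\operatorname{Tr}[\Pi_B^G\sigma_B]\leq 1$ for any state, equality must hold, and by the equivalence \eqref{eq:Bose-symmetric-equiv-cond} this is precisely the free-state condition $\sigma_B=\Pi_B^G\sigma_B\Pi_B^G$. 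Thus Bose symmetric states are recovered as Bose symmetric channels with trivial input space, as claimed.
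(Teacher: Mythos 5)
Your proposal is correct and follows essentially the same route as the paper's proof: the identity case by self-adjointness, transitivity by chaining the defining inequality through the composite adjoint using positivity of $\mathcal{N}^{\dag}$, and the trivial-input case by reducing \eqref{eq:Bose-sym-channel} to $\operatorname{Tr}[\Pi_B^G\sigma_B]\geq 1$ and invoking \eqref{eq:Bose-symmetric-equiv-cond}. The only cosmetic difference is that you compute the adjoint of the state-preparation channel directly rather than via its Kraus operators, which is a harmless simplification.
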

\begin{proof}
When the input and output systems are the same, as well as the unitary representations, it follows that $\Pi_{B}^{G}=\Pi_{A}^{G}$. Since the identity channel is its own adjoint, we then conclude that \eqref{eq:Bose-sym-channel} holds for the identity channel.

Suppose that $\mathcal{N}_{A\rightarrow B}$ is a quantum channel that obeys the condition in
\eqref{eq:Bose-sym-channel}. Let $\{W_{C}(g)\}_{g\in G}$ be a projective unitary representation of $G$, and suppose that $\mathcal{M}_{B\rightarrow C}$ is a Bose symmetric channel satisfying
\begin{equation}
\left(  \mathcal{M}_{B\rightarrow C}\right)  ^{\dag}(\Pi_{C}^{G})\geq\Pi_{B}^{G},
\end{equation}
where $\Pi_{C}^{G}\coloneqq \frac{1}{\left\vert G\right\vert }\sum_{g\in G}W_{C}(g)$. Consider that
\begin{align}
\left(  \mathcal{M}_{B\rightarrow C}\circ\mathcal{N}_{A\rightarrow B}\right)  ^{\dag}(\Pi_{C}^{G})\nonumber &  =(\mathcal{N}_{A\rightarrow B})^{\dag}\left[  \left(  \mathcal{M}_{B\rightarrow C}\right)  ^{\dag}(\Pi_{C}^{G})\right] \\
&  \geq(\mathcal{N}_{A\rightarrow B})^{\dag}\left[  \Pi_{B}^{G}\right] \\
&  \geq\Pi_{A}^{G}.
\end{align}
The first equality follows by exploiting the identity $\left(  \mathcal{M}_{B\rightarrow C} \circ \mathcal{N}_{A\rightarrow B}\right)  ^{\dag}=(\mathcal{N}_{A\rightarrow B})^{\dag} \circ \left(  \mathcal{M}_{B\rightarrow C}\right)  ^{\dag}$ for adjoints. The first inequality follows from the assumption that $\mathcal{M}_{B\rightarrow C}$ is a Bose symmetric channel and from the fact that $\mathcal{N}_{A\rightarrow B}$ is completely positive, so that $(\mathcal{N}_{A\rightarrow B})^{\dag}$ is also. We thus conclude that $\mathcal{M}_{B\rightarrow C} \circ \mathcal{N}_{A\rightarrow B}$ is a Bose symmetric channel, so that the transitive property in \eqref{eq:trans-prop} holds.

Finally, suppose that the input system $A$ of a Bose symmetric channel $\mathcal{N}_{A\rightarrow B}$ is trivial. Then each group element $g$ is trivially represented by the number one. It follows that $\Pi_{A}^{G}=1$. Then the channel $\mathcal{N}_{A\rightarrow B}$ is really just a state $\omega_{B}$ \cite{wildebook} with some spectral decomposition $\omega_{B}=\sum_{x}p(x)|x\rangle\!\langle x|_{B}$; furthermore, the associated Kraus operators are given by $\{\sqrt{p(x)}|x\rangle_{B}\}_{x}$. Then the condition
\begin{equation}
\left(  \mathcal{N}_{A\rightarrow B}\right)  ^{\dag}(\Pi_{B}^{G})\geq\Pi_{A}^{G}
\end{equation}
reduces to
\begin{equation}
\sum_{x}p(x)\langle x|_{B}\Pi_{B}^{G}|x\rangle_{B}\geq 1,
\end{equation}
which is the same as
\begin{equation}
\operatorname{Tr}[\Pi_{B}^{G}\omega_{B}]\geq1.
\end{equation}
Since $\omega_{B}$ is a state and $\Pi_{B}^{G}$ is a projection, it follows that $\operatorname{Tr}[\Pi_{B}^{G}\omega_{B}]\leq1$. Combining these inequalities, we conclude that $\operatorname{Tr}[\Pi_{B}^{G}\omega_{B}]=1$. Finally, we apply \eqref{eq:Bose-symmetric-equiv-cond} to conclude that $\omega_{B}$ is a Bose symmetric state.
\end{proof}

\begin{theorem}
Suppose that a quantum channel $\mathcal{N}_{A\rightarrow B}$ obeys the condition in \eqref{eq:Bose-sym-channel}. Let $\sigma_{A}$ be a Bose symmetric state. Then $\mathcal{N}_{A\rightarrow B}(\sigma_{A})$ is a Bose symmetric state.
\end{theorem}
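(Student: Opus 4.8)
The plan is to reduce Bose symmetry of the output state to a single scalar trace identity and then exploit the defining inequality of a Bose symmetric channel. First I would recall the equivalence established in \eqref{eq:Bose-symmetric-equiv-cond}: a state $\omega$ is Bose symmetric with respect to a projector $\Pi^{G}$ if and only if $\tr[\Pi^{G}\omega]=1$. This characterization is the crux of the argument, since it replaces the operator equation $\omega=\Pi^{G}\omega\Pi^{G}$ with a scalar condition that interacts cleanly with a channel and its adjoint. Thus it suffices to prove that $\tr[\Pi_{B}^{G}\,\mathcal{N}_{A\rightarrow B}(\sigma_{A})]=1$.

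Next I would transfer the projector across the channel using the Hilbert--Schmidt adjoint, writing $\tr[\Pi_{B}^{G}\,\mathcal{N}_{A\rightarrow B}(\sigma_{A})]=\tr[(\mathcal{N}_{A\rightarrow B})^{\dag}(\Pi_{B}^{G})\,\sigma_{A}]$. The free-channel condition \eqref{eq:Bose-sym-channel} gives $(\mathcal{N}_{A\rightarrow B})^{\dag}(\Pi_{B}^{G})\geq\Pi_{A}^{G}$, and since $\sigma_{A}\geq 0$, taking the trace against $\sigma_{A}$ preserves this operator inequality, yielding $\tr[(\mathcal{N}_{A\rightarrow B})^{\dag}(\Pi_{B}^{G})\,\sigma_{A}]\geq\tr[\Pi_{A}^{G}\sigma_{A}]$. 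Because $\sigma_{A}$ is assumed Bose symmetric, the right-hand side equals $1$ by the forward direction of \eqref{eq:Bose-symmetric-equiv-cond}, so the acceptance quantity is at least one.

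For the reverse bound I would use that $\mathcal{N}_{A\rightarrow B}(\sigma_{A})$ is a genuine quantum state and $\Pi_{B}^{G}$ is a projection, so $\tr[\Pi_{B}^{G}\,\mathcal{N}_{A\rightarrow B}(\sigma_{A})]\leq 1$. Sandwiching the two bounds forces equality, $\tr[\Pi_{B}^{G}\,\mathcal{N}_{A\rightarrow B}(\sigma_{A})]=1$, and applying the reverse direction of \eqref{eq:Bose-symmetric-equiv-cond} then shows $\mathcal{N}_{A\rightarrow B}(\sigma_{A})=\Pi_{B}^{G}\,\mathcal{N}_{A\rightarrow B}(\sigma_{A})\,\Pi_{B}^{G}$, i.e., the output is Bose symmetric. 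I expect no serious obstacle here, as this argument mirrors almost exactly the trivial-input-system step of the preceding proposition; the only points requiring care are ensuring the operator inequality survives being traced against $\sigma_{A}$ (which it does precisely because $\sigma_{A}$ is positive semi-definite) and correctly invoking both directions of the equivalence \eqref{eq:Bose-symmetric-equiv-cond}.
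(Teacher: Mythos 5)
Your proposal is correct and follows essentially the same route as the paper's proof: both reduce Bose symmetry to the scalar condition $\operatorname{Tr}[\Pi^{G}\omega]=1$ via \eqref{eq:Bose-symmetric-equiv-cond}, transfer $\Pi_{B}^{G}$ through the Hilbert--Schmidt adjoint, apply the operator inequality \eqref{eq:Bose-sym-channel} against the positive semi-definite $\sigma_{A}$, and sandwich with the trivial upper bound of one. No gaps.
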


\begin{proof}
Recall from \eqref{eq:Bose-symmetric-equiv-cond}\ that a state $\sigma_{A}$ is Bose symmetric if and only if $\operatorname{Tr}[\Pi_{A}^{G}\sigma_{A}]=1$. Then consider that
\begin{align}
1  &  \geq\operatorname{Tr}[\Pi_{B}^{G}\mathcal{N}_{A\rightarrow B}(\sigma_{A})]\\
&  =\operatorname{Tr}[\left(  \mathcal{N}_{A\rightarrow B}\right)  ^{\dag} (\Pi_{B}^{G})\sigma_{A}]\\
&  \geq\operatorname{Tr}[\Pi_{A}^{G}\sigma_{A}]\\
&  =1.
\end{align}
It follows that $\operatorname{Tr}[\Pi_{B}^{G}\mathcal{N}_{A\rightarrow B}(\sigma_{A})]=1$, and, by applying \eqref{eq:Bose-symmetric-equiv-cond} again, that $\mathcal{N}_{A\rightarrow B}(\sigma_{A})$ is Bose symmetric.
\end{proof}

By essentially the same proof, it follows that the measure $\operatorname{Tr}[\Pi_{A}^{G}\rho_{A}]$ from \eqref{eq:acc-prob-bose-test}\ is non-decreasing under the action of a Bose symmetric channel $\mathcal{N}_{A\rightarrow B}$. Thus, the acceptance probability $\operatorname{Tr}[\Pi_{A}^{G}\rho_{A}]$ of a Bose symmetry test is a resource monotone in the resource theory of Bose asymmetry.

\begin{theorem}
\label{thm:res-mono-G-Bose-sym}Let $\rho_{A}$ be a state, and let $\mathcal{N}_{A\rightarrow B}$ be a Bose symmetric channel. Then $\operatorname{Tr}[\Pi_{A}^{G}\rho_{A}]$ is a resource monotone in the following sense:
\begin{equation}
\operatorname{Tr}[\Pi_{B}^{G}\mathcal{N}_{A\rightarrow B} (\rho_{A})] \geq \operatorname{Tr}[\Pi_{A}^{G} \rho_{A}].
\end{equation}

\end{theorem}

\begin{proof}
Consider that
\begin{align}
\operatorname{Tr}[\Pi_{B}^{G}\mathcal{N}_{A\rightarrow B}(\rho_{A})]  &
=\operatorname{Tr}[\left(  \mathcal{N}_{A\rightarrow B}\right)  ^{\dag} (\Pi_{B}^{G})\rho_{A}]\\
&  \geq\operatorname{Tr}[\Pi_{A}^{G}\rho_{A}],
\end{align}
which follows from \eqref{eq:Bose-sym-channel}.
\end{proof}

Throughout this section, we have adopted the perspective that Bose symmetric channels are defined by the condition in \eqref{eq:Bose-sym-channel}. It then follows as a consequence that $\operatorname{Tr}[\Pi_{A}^{G}\rho_{A}]$ is a resource monotone. We can adopt a different perspective and conclude consistency between them. Let us instead suppose that $\operatorname{Tr}[\Pi_{A}^{G}\rho_{A}]$ is non-decreasing under the action of a free channel $\mathcal{N}_{A\rightarrow B}$. That is, suppose that the following inequality holds for every state~$\rho_{A}$:
\begin{equation}
\operatorname{Tr}[\Pi_{B}^{G}\mathcal{N}_{A \rightarrow B}(\rho_{A})] \geq \operatorname{Tr}[\Pi_{A}^{G}\rho_{A}].
\end{equation}
Then by rewriting this inequality as
\begin{equation}
\operatorname{Tr}[(\left(  \mathcal{N}_{A\rightarrow B}\right)  ^{\dag} (\Pi_{B}^{G})-\Pi_{A}^{G})\rho_{A}]\geq0\quad\forall\rho_{A} \in\mathcal{D} (\mathcal{H}_{A}),
\end{equation}
we conclude that $\left(  \mathcal{N}_{A\rightarrow B}\right)^{\dag}(\Pi_{B}^{G})-\Pi_{A}^{G}$ is a positive semi-definite operator, which is equivalent to the condition in \eqref{eq:Bose-sym-channel}. Thus, $\mathcal{N}_{A\rightarrow B}$ is a Bose symmetric channel if and only if $\operatorname{Tr}[\Pi_{A}^{G}\rho_{A}]$ is a resource monotone.

\subsection{Resource Theory of Asymmetric Unextendibility}
\label{sec:rt-asym-unext}

We now give a resource theory that generalizes that proposed in \cite{KDWW19}, just as the set of $G$-symmetric extendible states generalizes the set of $k$-extendible states (recall Example~\ref{ex:k-ext}). One of the main ideas is to use the notion of channel extension introduced in \cite{KDWW19}; additionally, this resource theory allows us to conclude that the acceptance probability of Algorithm~\ref{alg:sym-ext} (i.e., the maximum $G$-symmetric extendible fidelity) is a resource monotone and thus well motivated in this sense.

For the following definitions, let $G$ be a group, and let $\{U_{RS}(g)\}_{g\in G}$ and $\{V_{R^{\prime}S^{\prime}}(g)\}_{g\in G}$ be projective unitary representations of $G$ acting on $\mathcal{H}_{R}\otimes\mathcal{H}_{S}$ and $\mathcal{H}_{R^{\prime}}\otimes \mathcal{H}_{S^{\prime}}$ respectively.

\begin{definition}[$G$-symmetric extendible channel]\label{def:GSE-channels}
A channel $\mathcal{N}_{S\rightarrow S^{\prime}}$ is $G$-symmetric extendible if there exists a bipartite channel $\mathcal{M}_{RS\rightarrow R^{\prime}S^{\prime}}$ such that

\begin{enumerate}
\item $\mathcal{M}_{RS\rightarrow R^{\prime}S^{\prime}}$ is a channel extension of $\mathcal{N}_{S\rightarrow S^{\prime}}$:
\begin{equation}
\operatorname{Tr}_{R^{\prime}}\circ\mathcal{M}_{RS \rightarrow R^{\prime}S^{\prime}}=\mathcal{N}_{S\rightarrow S^{\prime}}\circ\operatorname{Tr}_{R},
\label{eq:channel-ext}
\end{equation}
\item $\mathcal{M}_{RS\rightarrow R^{\prime}S^{\prime}}$ is covariant with respect to $\{U_{RS}(g)\}_{g\in G}$ and $\{V_{R^{\prime}S^{\prime}}(g)\}_{g\in G}$:
\begin{equation}
\mathcal{M}_{RS\rightarrow R^{\prime}S^{\prime}}\circ\mathcal{U}_{RS}(g) = \mathcal{V}_{R^{\prime}S^{\prime}}(g) \circ \mathcal{M}_{RS\rightarrow R^{\prime}S^{\prime}}\quad\forall g\in G, 
\label{eq:covariance-G-sym-ext}
\end{equation}
where $\mathcal{U}_{RS}(g)(\cdot)$ and $\mathcal{V}_{R^{\prime}S^{\prime}}(g)(\cdot)$ are defined similarly to \eqref{eq:uandv}.
\end{enumerate}
\end{definition}

The condition in \eqref{eq:channel-ext} implies that the extension channel $\mathcal{M}_{RS\rightarrow R^{\prime}S^{\prime}}$ is non-signaling from $R$ to $S^{\prime}$ \cite{BGNP01,ESW02,PHHH06}, in the sense that
\begin{equation}
\operatorname{Tr}_{R^{\prime}} \circ \mathcal{M}_{RS\rightarrow R^{\prime} S^{\prime}} = \operatorname{Tr}_{R^{\prime}} \circ \mathcal{M}_{RS\rightarrow R^{\prime}S^{\prime}} \circ \mathcal{R}_{R}^{\pi}, 
\label{eq:non-sig-1}
\end{equation}
where $\mathcal{R}_{R}^{\pi}(\cdot)\coloneqq \operatorname{Tr}[\cdot]\pi_{R}$ is a replacer channel that traces out its input and replaces with the maximally-mixed state $\pi_{R}$. This follows because
\begin{align}
\operatorname{Tr}_{R^{\prime}}\circ\mathcal{M}_{RS\rightarrow R^{\prime} S^{\prime}} \circ \mathcal{R}_{R}^{\pi}  &  = \mathcal{N}_{S\rightarrow S^{\prime}}\circ\operatorname{Tr}_{R}\circ\mathcal{R}_{R}^{\pi}\\
&  =\mathcal{N}_{S\rightarrow S^{\prime}}\circ\operatorname{Tr}_{R}\\
&  =\operatorname{Tr}_{R^{\prime}}\circ\mathcal{M}_{RS\rightarrow R^{\prime}S^{\prime}}, 
\label{eq:non-sig-4}
\end{align}
where we have exploited the identity in \eqref{eq:channel-ext} in the first and last lines, and in the second line used the fact that $\operatorname{Tr}_{R}\circ\mathcal{R}_{R}^{\pi}=\operatorname{Tr}_{R}$.

Definition~\ref{def:GSE-channels}\ leads to a consistent resource theory of $G$-asymmetric unextendibility, in the sense that the free states are $G$-symmetric extendible states and the output of a $G$-symmetric extendible channel acting on a $G$-symmetric extendible state is a $G$-symmetric extendible state.

\begin{proposition}
\label{prop:g-sym-ext-ch-triv-input-state}A $G$-symmetric extendible channel $\mathcal{N}_{S\rightarrow S^{\prime}}$\ with trivial input system is a $G$-symmetric extendible state.
\end{proposition}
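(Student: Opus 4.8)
The plan is to first unpack what ``trivial input system'' means: a channel $\mathcal{N}_{S\to S'}$ whose input space is trivial ($\mathcal{H}_S = \mathbb{C}$) is nothing but a state-preparation channel, and is identified with the fixed output state $\rho_{S'}\coloneqq \mathcal{N}_{S\to S'}(1)$, where $1$ is the unit scalar on the trivial input space. Correspondingly, when $S$ is trivial the representation $\{U_{RS}(g)\}_{g\in G}$ collapses to a representation $\{U_R(g)\}_{g\in G}$ acting on $\mathcal{H}_R$ alone, and the extension channel $\mathcal{M}_{RS\to R'S'}$ guaranteed by Definition~\ref{def:GSE-channels} becomes a channel $\mathcal{M}_{R\to R'S'}$. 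The goal is then to manufacture, out of $\mathcal{M}$, a bipartite state $\omega_{R'S'}$ witnessing that $\rho_{S'}$ is $G$-symmetric extendible in the sense of Definition~\ref{def:g-sym-ext}, under the identifications $S\leftrightarrow S'$, $R\leftrightarrow R'$, and $U_{RS}(g)\leftrightarrow V_{R'S'}(g)$.

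First I would define the candidate extension by feeding the maximally-mixed state into the extension channel, setting $\omega_{R'S'}\coloneqq \mathcal{M}_{R\to R'S'}(\pi_R)$ with $\pi_R \coloneqq \mathbb{I}_R/\dim(\mathcal{H}_R)$; this is manifestly a valid state, since $\mathcal{M}$ is a channel and $\pi_R$ is a state. Next I would verify the extension condition \eqref{eq:G-ext-1} using the channel-extension requirement \eqref{eq:channel-ext}: with $S$ trivial, $\operatorname{Tr}_{R'}[\omega_{R'S'}] = (\operatorname{Tr}_{R'}\circ\mathcal{M}_{R\to R'S'})(\pi_R) = (\mathcal{N}_{S\to S'}\circ\operatorname{Tr}_R)(\pi_R) = \mathcal{N}_{S\to S'}(1) = \rho_{S'}$, where I used $\operatorname{Tr}_R[\pi_R]=1$. (Equivalently, one may invoke the non-signaling property \eqref{eq:non-sig-1}, which says exactly that this marginal is independent of the input on $R$.)

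Then I would verify the invariance condition \eqref{eq:G-ext-2}. The key observation is that $\pi_R$ is invariant under the unitary channel $\mathcal{U}_R(g)$ for every $g\in G$, because conjugating $\mathbb{I}_R$ by any (possibly projective) unitary leaves it unchanged. Combining this with the covariance relation \eqref{eq:covariance-G-sym-ext} specialized to trivial $S$ yields, for every $g\in G$, $\mathcal{V}_{R'S'}(g)(\omega_{R'S'}) = \mathcal{V}_{R'S'}(g)(\mathcal{M}_{R\to R'S'}(\pi_R)) = \mathcal{M}_{R\to R'S'}(\mathcal{U}_R(g)(\pi_R)) = \mathcal{M}_{R\to R'S'}(\pi_R) = \omega_{R'S'}$, which is precisely \eqref{eq:G-ext-2}. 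Hence $\omega_{R'S'}$ is a $G$-invariant extension of $\rho_{S'}$, so $\rho_{S'}$ is $G$-symmetric extendible.

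The individual steps are short, so the main obstacle is conceptual bookkeeping rather than computation: one must correctly interpret the trivial-input channel as a state and track how each system and representation in Definition~\ref{def:GSE-channels} degenerates when $\mathcal{H}_S = \mathbb{C}$, and then recognize that feeding the $G$-invariant maximally-mixed state $\pi_R$ into the extension channel is exactly the device that simultaneously secures the correct $S'$-marginal (via \eqref{eq:channel-ext}) and the $V_{R'S'}(g)$-invariance (via \eqref{eq:covariance-G-sym-ext}). I expect no genuine difficulty beyond making these identifications precise.
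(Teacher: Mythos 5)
Your proof is correct, but it takes a different route from the paper's. The paper's proof simply takes the input system $R$ of the extension channel to be trivial as well, so that $\mathcal{M}_{RS\rightarrow R^{\prime}S^{\prime}}$ is itself a state $\omega_{R^{\prime}S^{\prime}}$, and then observes that the covariance condition \eqref{eq:covariance-G-sym-ext} collapses directly to the invariance condition $\omega_{R^{\prime}S^{\prime}}=\mathcal{V}_{R^{\prime}S^{\prime}}(g)(\omega_{R^{\prime}S^{\prime}})$. You instead keep $R$ arbitrary and manufacture the extension by evaluating the extension channel on the maximally mixed state, $\omega_{R^{\prime}S^{\prime}}\coloneqq\mathcal{M}_{R\rightarrow R^{\prime}S^{\prime}}(\pi_{R})$, then use \eqref{eq:channel-ext} to pin down the $S^{\prime}$-marginal and the unitary invariance of $\pi_{R}$ together with \eqref{eq:covariance-G-sym-ext} to get $G$-invariance; every step checks out. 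What your version buys is generality: it works for \emph{any} witness $\mathcal{M}$ with an arbitrary (fixed) reference input $R$ and its given representation $\{U_{R}(g)\}_{g}$, whereas the paper's argument relies on being free to choose $R$ trivial, which is a special case of the existential quantification in Definition~\ref{def:GSE-channels}. The cost is a small amount of extra bookkeeping (choosing $\pi_{R}$ as the input and verifying its invariance), which the paper's shortcut avoids entirely.
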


\begin{proof}
If the input system $S$ of $\mathcal{N}_{S\rightarrow S^{\prime}}$ is trivial, then it follows that $\mathcal{N}_{S\rightarrow S^{\prime}}$ is a state (call it $\rho_{S^{\prime}}$); furthermore, we can choose the input system $R$ of the extension channel $\mathcal{M}_{RS\rightarrow R^{\prime}S^{\prime}}$ to be trivial, in which case $\mathcal{M}_{RS\rightarrow R^{\prime}S^{\prime}}$ is a state (call it $\omega_{R^{\prime}S^{\prime}}$) that extends $\rho_{S^{\prime}}$. The condition in \eqref{eq:covariance-G-sym-ext}\ then collapses to $\omega_{R^{\prime}S^{\prime}}=\mathcal{V}_{R^{\prime}S^{\prime}} (g)(\omega_{R^{\prime}S^{\prime}})$ for all $g\in G$. It follows by Definition~\ref{def:g-bose-sym-ext} that $\rho_{S^{\prime}}$ is a $G$-symmetric extendible state.
\end{proof}

\begin{proposition}
\label{prop:golden-rule-G-SE}Let $\mathcal{N}_{S\rightarrow S^{\prime}}$ be a $G$-symmetric extendible channel, and let $\rho_{S}$ be a $G$-symmetric extendible state. Then $\mathcal{N}_{S\rightarrow S^{\prime}}(\rho_{S})$ is a $G$-symmetric extendible state.
\end{proposition}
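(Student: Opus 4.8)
The plan is to construct an explicit extension of the output state $\mathcal{N}_{S\rightarrow S^{\prime}}(\rho_{S})$ directly from the two objects the hypotheses provide. Since $\rho_S$ is $G$-symmetric extendible, Definition~\ref{def:g-sym-ext} furnishes a state $\omega_{RS}$ with $\operatorname{Tr}_{R}[\omega_{RS}] = \rho_S$ and $\omega_{RS} = U_{RS}(g)\omega_{RS}U_{RS}(g)^{\dag}$ for all $g \in G$. Since $\mathcal{N}_{S\rightarrow S^{\prime}}$ is a $G$-symmetric extendible channel, Definition~\ref{def:GSE-channels} furnishes a bipartite channel $\mathcal{M}_{RS\rightarrow R^{\prime}S^{\prime}}$ satisfying the channel-extension identity \eqref{eq:channel-ext} and the covariance condition \eqref{eq:covariance-G-sym-ext}. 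My candidate extension is then $\theta_{R^{\prime}S^{\prime}} \coloneqq \mathcal{M}_{RS\rightarrow R^{\prime}S^{\prime}}(\omega_{RS})$, and the whole proof reduces to checking that $\theta_{R^{\prime}S^{\prime}}$ satisfies the two defining conditions of a $G$-symmetric extendible state with respect to the representation $\{V_{R^{\prime}S^{\prime}}(g)\}_{g\in G}$.

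First I would verify the marginal (extension) condition. Tracing out $R^{\prime}$ and invoking \eqref{eq:channel-ext} gives $\operatorname{Tr}_{R^{\prime}}[\theta_{R^{\prime}S^{\prime}}] = (\operatorname{Tr}_{R^{\prime}}\circ\mathcal{M}_{RS\rightarrow R^{\prime}S^{\prime}})(\omega_{RS}) = (\mathcal{N}_{S\rightarrow S^{\prime}}\circ\operatorname{Tr}_{R})(\omega_{RS})$, and the marginal property $\operatorname{Tr}_{R}[\omega_{RS}] = \rho_S$ then collapses this to $\mathcal{N}_{S\rightarrow S^{\prime}}(\rho_{S})$, exactly as required.

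Next I would verify the $G$-invariance of $\theta_{R^{\prime}S^{\prime}}$, which is the only step with any content. The idea is to chain the covariance of $\mathcal{M}$ with the invariance of $\omega$. Applying $\mathcal{V}_{R^{\prime}S^{\prime}}(g)$ to $\theta_{R^{\prime}S^{\prime}}$ and using \eqref{eq:covariance-G-sym-ext} pushes the group action through the channel from the output side to the input side, yielding $\mathcal{V}_{R^{\prime}S^{\prime}}(g)(\mathcal{M}_{RS\rightarrow R^{\prime}S^{\prime}}(\omega_{RS})) = \mathcal{M}_{RS\rightarrow R^{\prime}S^{\prime}}(\mathcal{U}_{RS}(g)(\omega_{RS}))$. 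Because $\omega_{RS}$ is $G$-invariant, $\mathcal{U}_{RS}(g)(\omega_{RS}) = U_{RS}(g)\omega_{RS}U_{RS}(g)^{\dag} = \omega_{RS}$, so the right-hand side is simply $\mathcal{M}_{RS\rightarrow R^{\prime}S^{\prime}}(\omega_{RS}) = \theta_{R^{\prime}S^{\prime}}$. Hence $\theta_{R^{\prime}S^{\prime}} = \mathcal{V}_{R^{\prime}S^{\prime}}(g)(\theta_{R^{\prime}S^{\prime}})$ for every $g$, which is precisely condition \eqref{eq:G-ext-2} stated in channel form. Combined with the marginal condition, this exhibits $\theta_{R^{\prime}S^{\prime}}$ as a valid $G$-invariant extension of $\mathcal{N}_{S\rightarrow S^{\prime}}(\rho_{S})$, establishing that the output state is $G$-symmetric extendible.

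I do not expect a genuine obstacle here; the proposition is essentially a compatibility ``golden rule,'' and the only care needed is bookkeeping—applying the covariance identity in the correct direction (transporting $\mathcal{V}$ on the output back to $\mathcal{U}$ on the input) and writing the invariance of $\omega_{RS}$ in the channel form $\mathcal{U}_{RS}(g)(\omega_{RS}) = \omega_{RS}$ so it composes cleanly with $\mathcal{M}$. If anything deserves a remark, it is that this argument is the exact analogue, one level up, of Proposition~\ref{prop:g-sym-ext-ch-triv-input-state}: there a trivial input system degrades a covariant extension channel into a $G$-invariant extension state, whereas here a covariant extension channel transports a $G$-invariant extension state to another one.
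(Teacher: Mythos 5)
Your proposal is correct and follows exactly the paper's own argument: you take the extension state $\omega_{RS}$ and the extension channel $\mathcal{M}_{RS\rightarrow R^{\prime}S^{\prime}}$ guaranteed by the two definitions, form $\mathcal{M}_{RS\rightarrow R^{\prime}S^{\prime}}(\omega_{RS})$, and verify the marginal condition via \eqref{eq:channel-ext} and the invariance condition by chaining \eqref{eq:covariance-G-sym-ext} with \eqref{eq:G-ext-2}. No gaps; this is the same proof.
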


\begin{proof}
Since $\rho_{S}$ is a $G$-symmetric extendible state, by Definition~\ref{def:g-sym-ext}, there exists an extension state $\omega_{RS}$ satisfying the conditions stated there. Since $\mathcal{N}_{S\rightarrow S^{\prime}}$ is a $G$-symmetric extendible channel, by Definition~\ref{def:GSE-channels}, there exists an extension channel  $\mathcal{M}_{RS\rightarrow R^{\prime}S^{\prime}}$ satisfying the conditions stated there. It follows that $\mathcal{M}_{RS\rightarrow R^{\prime}S^{\prime}}(\omega_{RS})$ is an extension of $\mathcal{N}_{S\rightarrow S^{\prime}}(\rho_{S})$ as
\begin{align}
\operatorname{Tr}_{R^{\prime}}[\mathcal{M}_{RS\rightarrow R^{\prime}S^{\prime}} (\omega_{RS})]  &  =\mathcal{N}_{S\rightarrow S^{\prime}}(\operatorname{Tr}_{R}[\omega_{RS}])\\
&  =\mathcal{N}_{S\rightarrow S^{\prime}}(\rho_{S}),
\end{align}
where the first equality follows from \eqref{eq:channel-ext}. Also, consider that the following holds for all $g\in G$:
\begin{align}
(\mathcal{V}_{R^{\prime}S^{\prime}}(g)\circ\mathcal{M}_{RS\rightarrow R^{\prime}S^{\prime}})(\omega_{RS})\nonumber &  =(\mathcal{M}_{RS\rightarrow R^{\prime}S^{\prime}}\circ\mathcal{U}_{RS}(g))(\omega_{RS})\\
&  =\mathcal{M}_{RS\rightarrow R^{\prime}S^{\prime}}(\omega_{RS}),
\end{align}
where the first equality follows from \eqref{eq:covariance-G-sym-ext} and the second from \eqref{eq:G-ext-2}.
\end{proof}

As a consequence of Proposition~\ref{prop:golden-rule-G-SE} and the data-processing inequality for fidelity, the maximum $G$-symmetric extendible fidelity is a resource monotone.

\begin{corollary}
Let $\rho_{S}$ be a state, and let $\mathcal{N}_{S\rightarrow S^{\prime}}$ be a $G$-symmetric extendible channel. Then the maximum $G$-symmetric extendible fidelity is a resource monotone,
\begin{equation}
\max_{\sigma_{S}\in\operatorname*{SymExt}_{G}}F(\rho_{S},\sigma_{S}) \leq \max_{\sigma_{S^{\prime}} \in \operatorname*{SymExt}_{G}}F(\mathcal{N}_{S\rightarrow S^{\prime}}(\rho_{S}),\sigma_{S^{\prime}}).
\end{equation}
\end{corollary}

\begin{example}[$k$-unextendibility] The resource theory of $k$-unextendibility, proposed in \cite{KDWW19}, is a special case of the resource theory of $G$-asymmetric unextendibility. To see this, recall that a bipartite channel $\mathcal{N}_{AB\rightarrow A^{\prime}B^{\prime}}$ is $k$-extendible if there exists an extension channel $\mathcal{M}_{AB_{1}\cdots B_{k}\rightarrow A^{\prime} B_{1}^{\prime} \cdots B_{k}^{\prime}}$ satisfying
\begin{equation}
\operatorname{Tr}_{B_{2}^{\prime}\cdots B_{k}^{\prime}}\circ\mathcal{M}_{AB_{1}\cdots B_{k}\rightarrow A^{\prime}B_{1}^{\prime}\cdots B_{k}^{\prime}} =\mathcal{N}_{AB\rightarrow A^{\prime}B^{\prime}}\circ\operatorname{Tr}_{B_{2}\cdots B_{k}}
\end{equation}
and
\begin{equation}
\mathcal{W}_{B_{1}^{\prime}\cdots B_{k}^{\prime}}^{\pi}\circ\mathcal{M}_{AB_{1}\cdots B_{k}\rightarrow A^{\prime}B_{1}^{\prime}\cdots B_{k}^{\prime}} =\mathcal{M}_{AB_{1}\cdots B_{k}\rightarrow A^{\prime}B_{1}^{\prime}\cdots B_{k}^{\prime}}\circ\mathcal{W}_{B_{1}\cdots B_{k}}^{\pi},
\end{equation}
for all $\pi \in S_k$, where $\mathcal{W}_{B_{1}\cdots B_{k}}^{\pi}$ and $\mathcal{W}_{B_{1}^{\prime}\cdots B_{k}^{\prime}}^{\pi}$ are unitary permutation channels. Thus, by setting
\begin{align}
S  &  =AB,\\
R  &  =B_{2}\cdots B_{k},\\
S^{\prime}  &  =A^{\prime}B^{\prime},\\
R^{\prime}  &  =B_{2}^{\prime}\cdots B_{k}^{\prime},\\
U_{RS}(g)  &  =I_{A}\otimes W_{B_{1}\cdots B_{k}}(\pi),\\
V_{R^{\prime}S^{\prime}}(g)  &  =I_{A^{\prime}}\otimes W_{B_{1}^{\prime}\cdots
B_{k}^{\prime}}(\pi),
\end{align}
we see that a $k$-extendible channel is a special case of a $G$-symmetric
extendible channel.
\end{example}

\subsection{Resource Theory of Bose Asymmetric Unextendibility}

We finally consider the resource theory of Bose asymmetric unextendibility, with the goal being similar to that of the previous sections; we want to justify the acceptance probability of Algorithm~\ref{alg:G-BSE-test} (i.e., the maximum $G$-BSE fidelity) as a resource monotone. At the same time, we establish a novel resource theory that could have further applications in quantum information.

Once again, let $G$, $\{U_{RS}(g)\}_{g\in G}$, and $\{V_{R^{\prime}S^{\prime}}(g)\}_{g\in G}$ be defined the same way as in Section~\ref{sec:rt-asym-unext}.

\begin{definition}[$G$-BSE channel]\label{def:G-BSE-channels}
A channel $\mathcal{N}_{S\rightarrow S^{\prime}}$ is $G$-Bose symmetric extendible ($G$-BSE) if there exists a bipartite channel $\mathcal{M}_{RS\rightarrow R^{\prime}S^{\prime}}$ such that
\begin{enumerate}
\item $\mathcal{M}_{RS\rightarrow R^{\prime}S^{\prime}}$ is a channel extension of $\mathcal{N}_{S\rightarrow S^{\prime}}$:
\begin{equation}
\operatorname{Tr}_{R^{\prime}}\circ\mathcal{M}_{RS\rightarrow R^{\prime}S^{\prime}}=\mathcal{N}_{S\rightarrow S^{\prime}}\circ\operatorname{Tr}_{R},
\label{eq:channel-ext-bose}
\end{equation}

\item $\mathcal{M}_{RS\rightarrow R^{\prime}S^{\prime}}$ is Bose symmetric:
\begin{equation}
(\mathcal{M}_{RS\rightarrow R^{\prime}S^{\prime}})^{\dag}(\Pi_{R^{\prime}S^{\prime}}^{G})\geq\Pi_{RS}^{G}, \label{eq:BSE-bose-condition}
\end{equation}
where $\Pi_{RS}^{G}$ and $\Pi_{R^{\prime}S^{\prime}}^{G}$ are defined as in \eqref{eq:Pi_RS-proj-again} as sums over $U_{RS}(g)$ and $V_{R^{\prime}S^{\prime}}(g)$ respectively.
\end{enumerate}
\end{definition}

As discussed in \eqref{eq:non-sig-1}--\eqref{eq:non-sig-4}, the condition in \eqref{eq:channel-ext-bose} can be understood as imposing a no-signaling constraint, from $R$ to $S^{\prime}$. Now, with the same line of reasoning given in the proof of Proposition~\ref{prop:g-sym-ext-ch-triv-input-state}, we conclude the following:
\begin{proposition}
A $G$-BSE channel $\mathcal{N}_{S\rightarrow S^{\prime}}$\ with trivial input system is a $G$-BSE state.
\end{proposition}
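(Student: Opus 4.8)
The plan is to mirror the proof of Proposition~\ref{prop:g-sym-ext-ch-triv-input-state}, replacing the covariance condition \eqref{eq:covariance-G-sym-ext} with the Bose-symmetry condition \eqref{eq:BSE-bose-condition} and then invoking the same collapse-to-a-state argument already used for Bose symmetric channels. First I would observe that if the input system $S$ of $\mathcal{N}_{S\rightarrow S^{\prime}}$ is trivial, then $\mathcal{N}_{S\rightarrow S^{\prime}}$ is simply a state $\rho_{S^{\prime}}$, and by Definition~\ref{def:G-BSE-channels} there exists an extension channel $\mathcal{M}_{RS\rightarrow R^{\prime}S^{\prime}}$ satisfying \eqref{eq:channel-ext-bose} and \eqref{eq:BSE-bose-condition}. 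As in the earlier proof, I would choose the input system $R$ of $\mathcal{M}_{RS\rightarrow R^{\prime}S^{\prime}}$ to be trivial, so that $\mathcal{M}_{RS\rightarrow R^{\prime}S^{\prime}}$ reduces to a state $\omega_{R^{\prime}S^{\prime}}$ which extends $\rho_{S^{\prime}}$ by \eqref{eq:channel-ext-bose}.

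With both $R$ and $S$ trivial, the group is represented by the number one on $\mathcal{H}_{RS}$, so that $\Pi_{RS}^{G}=1$. The Hilbert--Schmidt adjoint of this state-preparation channel, evaluated on $\Pi_{R^{\prime}S^{\prime}}^{G}$, is the scalar $\operatorname{Tr}[\Pi_{R^{\prime}S^{\prime}}^{G}\omega_{R^{\prime}S^{\prime}}]$, and hence the Bose-symmetry condition \eqref{eq:BSE-bose-condition} collapses to $\operatorname{Tr}[\Pi_{R^{\prime}S^{\prime}}^{G}\omega_{R^{\prime}S^{\prime}}]\geq 1$. Since $\Pi_{R^{\prime}S^{\prime}}^{G}$ is a projection and $\omega_{R^{\prime}S^{\prime}}$ a state, the reverse inequality $\operatorname{Tr}[\Pi_{R^{\prime}S^{\prime}}^{G}\omega_{R^{\prime}S^{\prime}}]\leq 1$ holds automatically, forcing equality. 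Applying the equivalence \eqref{eq:Bose-symmetric-equiv-cond} then yields $\omega_{R^{\prime}S^{\prime}}=\Pi_{R^{\prime}S^{\prime}}^{G}\omega_{R^{\prime}S^{\prime}}\Pi_{R^{\prime}S^{\prime}}^{G}$, which is precisely the condition \eqref{eq:bose-G-sym-ext-cond} required of the extension $\omega_{R^{\prime}S^{\prime}}$ of $\rho_{S^{\prime}}$. By Definition~\ref{def:g-bose-sym-ext}, this establishes that $\rho_{S^{\prime}}$ is a $G$-BSE state.

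The one subtle point—and the step I would take the most care with—is justifying that the adjoint of the state-preparation channel $\mathcal{M}_{RS\rightarrow R^{\prime}S^{\prime}}$ (now a map out of the trivial space $\mathbb{C}$) acts on $\Pi_{R^{\prime}S^{\prime}}^{G}$ as the scalar $\operatorname{Tr}[\Pi_{R^{\prime}S^{\prime}}^{G}\omega_{R^{\prime}S^{\prime}}]$, so that the operator inequality in \eqref{eq:BSE-bose-condition} becomes a scalar inequality against $\Pi_{RS}^{G}=1$. This is the identical manipulation already carried out for the trivial-input case of a Bose symmetric channel, namely the passage from $\left(\mathcal{N}_{A\rightarrow B}\right)^{\dag}(\Pi_{B}^{G})\geq\Pi_{A}^{G}$ to $\operatorname{Tr}[\Pi_{B}^{G}\omega_{B}]\geq 1$, so no genuinely new machinery is needed; the remaining work is simply matching the notation between the two settings and confirming that the no-signaling/extension condition \eqref{eq:channel-ext-bose} indeed yields $\operatorname{Tr}_{R^{\prime}}[\omega_{R^{\prime}S^{\prime}}]=\rho_{S^{\prime}}$.
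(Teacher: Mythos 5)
Your proposal is correct and follows exactly the route the paper intends: the paper itself gives no explicit proof here, stating only that the claim follows ``with the same line of reasoning'' as Proposition~\ref{prop:g-sym-ext-ch-triv-input-state}, and your argument is precisely that reasoning (collapse $\mathcal{N}$ and its extension channel to states $\rho_{S'}$ and $\omega_{R'S'}$) spliced together with the trivial-input adjoint computation already carried out for Bose symmetric channels, i.e., reducing \eqref{eq:BSE-bose-condition} to $\operatorname{Tr}[\Pi_{R'S'}^{G}\omega_{R'S'}]\geq 1$, forcing equality, and invoking \eqref{eq:Bose-symmetric-equiv-cond}. The ``subtle point'' you flag is handled the same way in the paper's own Bose-symmetric-channel proposition, so nothing new is needed.
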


The following proposition demonstrates that the resource theory delineated by Definition~\ref{def:G-BSE-channels} is indeed a consistent resource theory.
\begin{proposition}
\label{prop:golden-rule-G-BSE}Let $\mathcal{N}_{S\rightarrow S^{\prime}}$ be a $G$-BSE channel, and let $\rho_{S}$ be a $G$-BSE state. Then $\mathcal{N}_{S\rightarrow S^{\prime}}(\rho_{S})$ is a $G$-BSE state.
\end{proposition}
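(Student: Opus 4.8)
The plan is to mirror the proof of Proposition~\ref{prop:golden-rule-G-SE}, constructing the required Bose symmetric extension explicitly and then verifying its symmetry through the scalar characterization \eqref{eq:Bose-symmetric-equiv-cond} rather than manipulating the projector-conjugation condition directly. First I would invoke Definition~\ref{def:g-bose-sym-ext} to obtain an extension state $\omega_{RS}$ of $\rho_{S}$ satisfying $\operatorname{Tr}_{R}[\omega_{RS}]=\rho_{S}$ and $\omega_{RS}=\Pi_{RS}^{G}\omega_{RS}\Pi_{RS}^{G}$, and invoke Definition~\ref{def:G-BSE-channels} to obtain an extension channel $\mathcal{M}_{RS\rightarrow R^{\prime}S^{\prime}}$ of $\mathcal{N}_{S\rightarrow S^{\prime}}$ satisfying \eqref{eq:channel-ext-bose} and \eqref{eq:BSE-bose-condition}. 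The natural candidate for the extension of $\mathcal{N}_{S\rightarrow S^{\prime}}(\rho_{S})$ is $\omega_{R^{\prime}S^{\prime}}^{\prime}\coloneqq\mathcal{M}_{RS\rightarrow R^{\prime}S^{\prime}}(\omega_{RS})$. Checking the extension property is immediate from \eqref{eq:channel-ext-bose}, since
\begin{equation}
\operatorname{Tr}_{R^{\prime}}[\mathcal{M}_{RS\rightarrow R^{\prime}S^{\prime}}(\omega_{RS})]=\mathcal{N}_{S\rightarrow S^{\prime}}(\operatorname{Tr}_{R}[\omega_{RS}])=\mathcal{N}_{S\rightarrow S^{\prime}}(\rho_{S}).
\end{equation}

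The crux is establishing that $\omega_{R^{\prime}S^{\prime}}^{\prime}$ is itself Bose symmetric. Rather than work with $\omega_{R^{\prime}S^{\prime}}^{\prime}=\Pi_{R^{\prime}S^{\prime}}^{G}\omega_{R^{\prime}S^{\prime}}^{\prime}\Pi_{R^{\prime}S^{\prime}}^{G}$ directly, I would reduce Bose symmetry to the single scalar condition $\operatorname{Tr}[\Pi^{G}(\cdot)]=1$ via \eqref{eq:Bose-symmetric-equiv-cond}---precisely the maneuver used in the unlabeled theorem showing that Bose symmetric channels preserve Bose symmetric states. The argument then runs
\begin{align}
1 &\geq \operatorname{Tr}[\Pi_{R^{\prime}S^{\prime}}^{G}\mathcal{M}_{RS\rightarrow R^{\prime}S^{\prime}}(\omega_{RS})]\\
&= \operatorname{Tr}[(\mathcal{M}_{RS\rightarrow R^{\prime}S^{\prime}})^{\dag}(\Pi_{R^{\prime}S^{\prime}}^{G})\omega_{RS}]\\
&\geq \operatorname{Tr}[\Pi_{RS}^{G}\omega_{RS}]=1,
\end{align}
where the first inequality holds because $\Pi_{R^{\prime}S^{\prime}}^{G}$ is a projection and $\mathcal{M}_{RS\rightarrow R^{\prime}S^{\prime}}(\omega_{RS})$ is a state, the equality passes to the Hilbert--Schmidt adjoint, the second inequality combines \eqref{eq:BSE-bose-condition} with $\omega_{RS}\geq 0$, and the final equality applies \eqref{eq:Bose-symmetric-equiv-cond} to the Bose symmetric extension $\omega_{RS}$. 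Since the two endpoints coincide, every inequality is saturated, forcing $\operatorname{Tr}[\Pi_{R^{\prime}S^{\prime}}^{G}\omega_{R^{\prime}S^{\prime}}^{\prime}]=1$, and hence $\omega_{R^{\prime}S^{\prime}}^{\prime}$ is Bose symmetric by the reverse direction of \eqref{eq:Bose-symmetric-equiv-cond}.

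With both conditions of Definition~\ref{def:g-bose-sym-ext} verified for $\omega_{R^{\prime}S^{\prime}}^{\prime}$, I would conclude that $\mathcal{N}_{S\rightarrow S^{\prime}}(\rho_{S})$ is $G$-BSE. I do not expect a serious obstacle here, as the result is a routine adaptation of Proposition~\ref{prop:golden-rule-G-SE}; the only genuinely load-bearing insight is the strategic one, namely choosing to transport the \emph{operator} inequality \eqref{eq:BSE-bose-condition} through a trace against the positive operator $\omega_{RS}$ and invoking the scalar characterization \eqref{eq:Bose-symmetric-equiv-cond} at both the input (for $\omega_{RS}$) and the output (for $\omega_{R^{\prime}S^{\prime}}^{\prime}$). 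Attempting instead to propagate the conjugation condition $\Pi^{G}(\cdot)\Pi^{G}$ through the channel would be considerably more awkward, so the scalar reformulation is what makes the proof short.
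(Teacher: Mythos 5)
Your proposal is correct and follows essentially the same route as the paper's proof in Appendix~\ref{app:proof-prop-G-BSE}: construct the candidate extension $\mathcal{M}_{RS\rightarrow R^{\prime}S^{\prime}}(\omega_{RS})$, verify the partial-trace condition via \eqref{eq:channel-ext-bose}, and then sandwich $\operatorname{Tr}[\Pi_{R^{\prime}S^{\prime}}^{G}\mathcal{M}_{RS\rightarrow R^{\prime}S^{\prime}}(\omega_{RS})]$ between $1$ and $1$ using the adjoint and \eqref{eq:BSE-bose-condition}, concluding via \eqref{eq:Bose-symmetric-equiv-cond}. No discrepancies to report.
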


As this proof is similar to that of Proposition~\ref{prop:golden-rule-G-SE}, we include it in Appendix~\ref{app:proof-prop-G-BSE}. As a consequence of Proposition~\ref{prop:golden-rule-G-BSE}\ and the data-processing inequality for fidelity, it follows that the maximum $G$-BSE fidelity is a resource monotone.

\begin{corollary}
Let $\rho_{S}$ be a state, and let $\mathcal{N}_{S\rightarrow S^{\prime}}$ be a $G$-BSE channel. Then the maximum $G$-BSE\ fidelity is a resource monotone in the following sense:
\begin{equation}
\max_{\sigma_{S}\in\operatorname*{BSE}_{G}}F(\rho_{S},\sigma_{S}) \leq \max_{\sigma_{S^{\prime}}\in\operatorname*{BSE}_{G}}F(\mathcal{N}_{S\rightarrow S^{\prime}}(\rho_{S}),\sigma_{S^{\prime}}).
\end{equation}

\end{corollary}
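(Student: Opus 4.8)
The plan is to obtain the Corollary as an immediate consequence of Proposition~\ref{prop:golden-rule-G-BSE} (the ``golden rule'' that $G$-BSE channels map $G$-BSE states to $G$-BSE states) together with the data-processing inequality for fidelity. The template is the standard resource-theoretic argument: a quantity defined as the maximal fidelity to the free set is automatically monotone under free channels, provided free channels preserve freeness. So almost all of the content has already been deposited in Proposition~\ref{prop:golden-rule-G-BSE}, and the remaining task is purely to assemble the pieces in the right order.

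First I would fix a state $\sigma_{S}^{\star}\in\operatorname*{BSE}_{G}$ attaining the maximum on the left-hand side, so that $\max_{\sigma_{S}\in\operatorname*{BSE}_{G}}F(\rho_{S},\sigma_{S})=F(\rho_{S},\sigma_{S}^{\star})$; the maximum is attained because $\operatorname*{BSE}_{G}$ is a compact set and the fidelity is continuous. Next I would apply the data-processing inequality for fidelity, $F(\omega,\tau)\leq F(\Lambda(\omega),\Lambda(\tau))$ for any channel $\Lambda$, with $\Lambda=\mathcal{N}_{S\rightarrow S^{\prime}}$, yielding $F(\rho_{S},\sigma_{S}^{\star})\leq F(\mathcal{N}_{S\rightarrow S^{\prime}}(\rho_{S}),\mathcal{N}_{S\rightarrow S^{\prime}}(\sigma_{S}^{\star}))$.

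The key step is then to observe that the output $\mathcal{N}_{S\rightarrow S^{\prime}}(\sigma_{S}^{\star})$ is itself $G$-BSE: since $\sigma_{S}^{\star}$ is a $G$-BSE state and $\mathcal{N}_{S\rightarrow S^{\prime}}$ is a $G$-BSE channel, Proposition~\ref{prop:golden-rule-G-BSE} guarantees $\mathcal{N}_{S\rightarrow S^{\prime}}(\sigma_{S}^{\star})\in\operatorname*{BSE}_{G}$. Hence $\mathcal{N}_{S\rightarrow S^{\prime}}(\sigma_{S}^{\star})$ is a feasible candidate in the maximization on the right-hand side, giving $F(\mathcal{N}_{S\rightarrow S^{\prime}}(\rho_{S}),\mathcal{N}_{S\rightarrow S^{\prime}}(\sigma_{S}^{\star}))\leq\max_{\sigma_{S^{\prime}}\in\operatorname*{BSE}_{G}}F(\mathcal{N}_{S\rightarrow S^{\prime}}(\rho_{S}),\sigma_{S^{\prime}})$. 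Chaining these three inequalities produces the claimed monotonicity.

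There is no serious obstacle for the Corollary itself; the only subtleties worth flagging are applying the fidelity data-processing inequality in the correct (non-decreasing) direction and noting existence of the optimizer, both routine. If instead one had to establish Proposition~\ref{prop:golden-rule-G-BSE} from scratch, the real work would be there, and it would parallel the proof of Proposition~\ref{prop:golden-rule-G-SE}: one would take the Bose-symmetric extension $\omega_{RS}$ of $\rho_{S}$ and the extension channel $\mathcal{M}_{RS\rightarrow R^{\prime}S^{\prime}}$, then verify that $\mathcal{M}_{RS\rightarrow R^{\prime}S^{\prime}}(\omega_{RS})$ is a valid Bose-symmetric extension of $\mathcal{N}_{S\rightarrow S^{\prime}}(\rho_{S})$ by checking the extension condition via \eqref{eq:channel-ext-bose} and the Bose-symmetry condition via \eqref{eq:BSE-bose-condition}. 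For the statement as given, however, invoking the Proposition and data processing suffices.
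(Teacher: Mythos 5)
Your argument is exactly the one the paper intends: the Corollary is stated there as an immediate consequence of Proposition~\ref{prop:golden-rule-G-BSE} together with the data-processing inequality for fidelity, and your three-step chain (optimizer on the left, data processing under $\mathcal{N}_{S\rightarrow S^{\prime}}$, feasibility of $\mathcal{N}_{S\rightarrow S^{\prime}}(\sigma_{S}^{\star})$ on the right) is the standard instantiation of that claim. The proof is correct and takes essentially the same approach as the paper.
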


To define the resource theory of $k$-Bose unextendibility, we establish the notion of a free channel (i.e., a $k$-Bose extendible bipartite channel) and discuss it in the following example. 
\begin{example}[$k$-Bose unextendibility]
We say that a bipartite channel $\mathcal{N}_{AB\rightarrow A^{\prime}B^{\prime}}$ is $k$-Bose-extendible if there exists an extension channel $\mathcal{M}_{AB_{1}\cdots B_{k}\rightarrow A^{\prime}B_{1}^{\prime} \cdots B_{k}^{\prime}}$ satisfying
\begin{equation}
\operatorname{Tr}_{B_{2}^{\prime}\cdots B_{k}^{\prime}}\circ\mathcal{M}_{AB_{1} \cdots B_{k}\rightarrow A^{\prime}B_{1}^{\prime}\cdots B_{k}^{\prime}}=\mathcal{N}_{AB\rightarrow A^{\prime}B^{\prime}}\circ\operatorname{Tr}_{B_{2}\cdots B_{k}}
\end{equation}
and
\begin{equation}
(\mathcal{M}_{AB_{1}\cdots B_{k}\rightarrow A^{\prime}B_{1}^{\prime}\cdots B_{k}^{\prime}})^{\dag}(\Pi_{B_{1}^{\prime}\cdots B_{k}^{\prime}} ^{\operatorname{Sym}})\geq\Pi_{B_{1}\cdots B_{k}}^{\operatorname{Sym}},
\end{equation}
where $\Pi_{B_{1}^{\prime}\cdots B_{k}^{\prime}}^{\operatorname{Sym}}$ and $\Pi_{B_{1}\cdots B_{k}}^{\operatorname{Sym}}$ are projections onto symmetric subspaces,
\begin{align}
\Pi_{B_{1}\cdots B_{k}}^{\operatorname{Sym}}  &  \coloneqq \frac{1}{k!}\sum_{\pi\in S_{k}}W_{B_{1}\cdots B_{k}}^{\pi},\\
\Pi_{B_{1}^{\prime}\cdots B_{k}^{\prime}}^{\operatorname{Sym}}  &  \coloneqq \frac{1}{k!}\sum_{\pi\in S_{k}}W_{B_{1}^{\prime}\cdots B_{k}^{\prime}}^{\pi},
\end{align}
and $W_{B_{1}\cdots B_{k}}^{\pi}$ and $W_{B_{1}^{\prime}\cdots B_{k}^{\prime}}^{\pi}$ are unitary representations of the permutation $\pi\in S_{k}$. Thus, by setting
\begin{align}
S  &  =AB,\\
R  &  =B_{2}\cdots B_{k},\\
S^{\prime}  &  =A^{\prime}B^{\prime},\\
R^{\prime}  &  =B_{2}^{\prime}\cdots B_{k}^{\prime},\\
U_{RS}(g)  &  =I_{A}\otimes W_{B_{1}\cdots B_{k}}(\pi),\\
V_{R^{\prime}S^{\prime}}(g)  &  =I_{A^{\prime}}\otimes W_{B_{1}^{\prime}\cdots
B_{k}^{\prime}}(\pi),
\end{align}
we see that a $k$-Bose-extendible channel is a special case of a $G$-Bose symmetric extendible channel.
\end{example}

\section{Conclusion}

\label{sec:sym-conclusion}

In summary, we have proposed various quantum computational tests of symmetry, as well as various notions of symmetry like $G$-symmetric extendibility and $G$-Bose symmetric extendibility, which include previous notions of symmetry from \cite{MS13,MS14,W89a,DPS02,DPS04} as a special case. These tests have acceptance probabilities equal to various maximum symmetric fidelities, thus endowing these measures with operational meaning. We have also established resource theories of asymmetry beyond that proposed in \cite{MS13}, which put the maximum symmetric fidelities on firm ground in a resource-theoretic sense. 

Going forward from here, we will now discuss an expansion of a particular part of this chapter. Namely, we will address unanswered questions in the derivation of acceptance probabilities of bipartite separability tests. Futhermore, we will go on to show that an entire class of separability tests can be generated and compared beyond the full symmetric test prior literature has prepared for us.

\pagebreak
\singlespacing
\chapter{Generalized Separability Tests for Bipartite Pure States}\label{ch:gensep}
\doublespacing
\section{Introduction}\label{introduction}

In the previous chapter, we described a specific kind of symmetry test that acts as a separability test for pure bipartite states. Separability and entanglement of quantum states are a topic of high interest throughout quantum information, as might well be expected. (See, for example, quantifiers and measures in \cite{Per96,HHH96,ZHSL98,EP99,Vidal2002,W89a,DPS02,KDWW19,WWW19,KDWW21}.) In this chapter, we will give a recipe for developing separability tests in the vein of $k$-extendibility \cite{W89a,DPS02}.

The most common quantum computational test of separability of pure states is the swap test, introduced in \cite{barenco1997stabilization} and used in quantum fingerprinting \cite{buhrman2001quantum}.
To understand it, first recall that a pure bipartite state $\ket{\psi}_{AB}$ is separable if it can be written as a tensor product of two states, as
\begin{align}
    \ket{\psi}_{AB} =\ket{\phi}_A\otimes\ket{\varphi}_B.
    \label{eq:sep-pure-state}
\end{align}
Now, if we take two copies of this separable state, it has the following form:
\begin{align}
    \ket{\psi}_{A_1 B_1} \otimes \ket{\psi}_{A_2 B_2} =\ket{\phi}_{A_1} \otimes \ket{\varphi}_{B_1} \otimes \ket{\phi}_{A_2} \otimes \ket{\varphi}_{B_2}.
\end{align}
This state is invariant under a swap of systems $A_1$ and $A_2$, as well as a swap of systems $B_1$ and $B_2$. Thus, the swap test accepts with certainty in this case; however, if a pure bipartite state is not separable, the two-copy state does not possess the above swap invariance, and the swap test can detect this lack of invariance by means of the well-known phase kickback trick

In Chapter~\ref{ch:symmstates} and \cite{laborde2021testing}, we proposed a generalization of the swap test as a method for detecting entanglement, based on the observation that multiple copies of the separable state in \eqref{eq:sep-pure-state} are invariant under arbitrary permutations of both the $A$ systems and $B$ systems. Indeed, by writing such a  state down explicitly  as
\begin{equation}
    \bigotimes_{i=1}^{k} \ket{\psi}_{A_i B_i} = \bigotimes_{i=1}^{k} \ket{\phi}_{A_i} \otimes \ket{\varphi}_{B_i},
\end{equation}
it is clear that such a state is invariant as mentioned above; however, if the state $\ket{\psi}_{A B}$ is not separable, then checking for various kinds of permutation invariance of the state $\bigotimes_{i=1}^{k} \ket{\psi}_{A_i B_i}$ leads to more fine-grained tests of entanglement with alternative mathematical expressions for the acceptance probability of the test. This chapter endeavors to understand these expressions in more detail.

For our approach here, we will make liberal use of the framework of $G$-Bose symmetry tests, as discussed in Section~\ref{sec:simple-algorithm}. This method facilitates our discussions of the separability of a pure bipartite state. Specifically, in the framework we consider, we take a pure state $\ket{\psi}_{AB}$ and conduct an $S_k$-Bose symmetry test on the tensor-power state $\ket{\psi}_{AB}^{\otimes k}$, where $S_k$ denotes the symmetric group on $k$ letters. This tensor-power state realizes the case where we have access to $k$ copies of our state under test. The swap test is recovered as a special case in which $k=2$.

The $G$-Bose symmetry tests allow for a generalization of the swap test to more copies of a state of interest and higher-order groups. These algorithms exchange simplicity for certainty, analogous to how fingerprinting is both more accurate and complicated when greater numbers of prints are taken. In choosing to investigate group symmetries, rather than merely the swap test, the separability of a state can be determined more quickly and accurately.

The natural question is, when do these more complex tests merit performing? In \cite{bradshaw2022cycle}, the paper this chapter is based upon, we derive the acceptance probability of a generalized separability test, and the end result is included here. In doing so, we present an inherent reliance on the cycle index polynomial, a particularly important polynomial in Pólya theory \cite{polya,combinatorics} that encodes the structure of a permutation group by storing the number of elements of a given cycle type as its coefficients. This allows us to compare separability tests generated from various groups, as well as investigate the mathematical relationships inherently present in these tests. We directly show that an arbitrary finite group generates a separability test with its acceptance probability given by the cycle index polynomial of that group. We supplement this by then giving explicit quantum circuit descriptions for groups of interest and counting the number of gates needed to realize each test. Combining our acceptance probability results with resource counting gives us a metric to compare when the relative strictness of the test is outweighed by the benefit of fewer gate resources, and we discuss this factor in more detail in Section~\ref{sec:comparison}.

In Section~\ref{formula}, we revisit the algorithm for the  bipartite pure-state separability test in Section~\ref{sec:separability1}. We show that the acceptance probability of this algorithm is given by the cycle index polynomial \cite{polya,combinatorics} of the symmetric group $S_k$, which is itself related to the complete Bell polynomials \cite{Roman}. 
In Section~\ref{sec:conjecture}, we prove our conjecture from Section~\ref{sec:separability1}, that the acceptance probability of such algorithms does not increase as $k\to\infty$. In fact, we show that it strictly decreases and converges to zero whenever $\rho_B \coloneqq \tr_A[\ket{\psi}\bra{\psi}_{AB}]$ is not a pure state.

In Section~\ref{generalization}, we generalize the  bipartite pure-state separability test to an algorithm involving any group $G$, in which a $G$-Bose symmetry test is performed on the tensor-power state $\ket{\psi}_{AB}^{\otimes k}$. By identifying $G$ with a subgroup of $S_k$, which is guaranteed to exist by Cayley's theorem, we show, by the same reasoning as in Section~\ref{formula}, that the acceptance probability of the algorithm is given by the cycle index polynomial of the group $G$. We discuss how these generalized tests are in fact separability tests for pure, bipartite states, and they have an interesting connection to combinatorics via the cycle index polynomial. 

In Section~\ref{sec:comparison}, we analyze the resources needed to implement these tests on quantum computers; in doing so, we show that simpler groups can give comparable performance for fewer resources. We also give constructions for tests with respect to the cyclic group $C_k$ and the more typical $S_k$ group, and we compare the resouce costs of these test with respect to their rejection probability.

Finally, we conclude in Section~\ref{sec:conclusion} with a summary.

\section{Bipartite Pure-State Separability Test}\label{formula}

Let us begin by reviewing the construction of the bipartite pure-state separability test in Chapter~\ref{ch:symmstates}, which can be viewed as a $G$-Bose symmetry test. For convenience, we now recall the definition of a $G$-Bose symmetric state.
\begin{definition}
Let $G$ be a group with a unitary representation $U_S:G\to U(\mathcal{H})$, where $U(\mathcal{H})$ denotes the set of all unitaries that act on a Hilbert space $\mathcal{H}$. Then a state $\rho_S$ is called $G$-Bose symmetric if
\begin{align}
    \label{PiG}
    \Pi^G_S\rho_S\Pi^G_S=\rho_S,
\end{align}
where $\Pi^G_S\coloneqq\frac{1}{|G|}\sum_{g\in G}U_S(g)$ is the group representation projection.
\end{definition}

We discussed in Section~\ref{sec:separability1} how this framework can be used to test for the separability of a bipartite pure state $\psi_{AB}$. To do so, we suppose that $k$ copies of the state $\psi_{AB}$ are available, which we write as $\psi_{AB}^{\otimes k}$. We also identify the $A$ systems by $A_1 \cdots A_k$ and the $B$ systems by $B_1 \cdots B_k$. We then perform an $S_k$-Bose symmetry test on the state $\psi_{AB}^{\otimes k}$ by identifying $S$ with $A_1B_1\cdots A_kB_k$ and $U_S(\pi)$ with $I_{A_1\cdots A_k}\otimes W_{B_1\cdots B_k}(\pi)$, where $\pi\in S_k$ and $W_{B_1\cdots B_k}:S_k\to U(\mathcal{H}_{B_1\cdots B_k})$ is the standard unitary representation of $S_k$ that acts on $\mathcal{H}_{B_1\cdots B_k} \equiv \mathcal{H}_{B_1} \otimes \cdots \otimes \mathcal{H}_{B_k}$ by permuting the Hilbert spaces according to the corresponding permutation. Define $\rho_B\coloneqq \tr_A[\psi_{AB}]$. By applying \eqref{PiG}, the acceptance probability for the bipartite pure-state separability algorithm is given by
\begin{align}
p^{(k)} \coloneqq \tr[\Pi_{B_1\cdots B_k}\rho_B^{\otimes k}] \, ,
\label{eq:def-acc-prob}
\end{align}
where
\begin{align}
\Pi_{B_1\cdots B_k}\coloneqq \frac{1}{k!}\sum_{\pi\in S_k}W_{B_1\cdots B_k}(\pi)\, .
\end{align}

\begin{figure}[ptb]
\begin{center}
\includegraphics[width=4in]{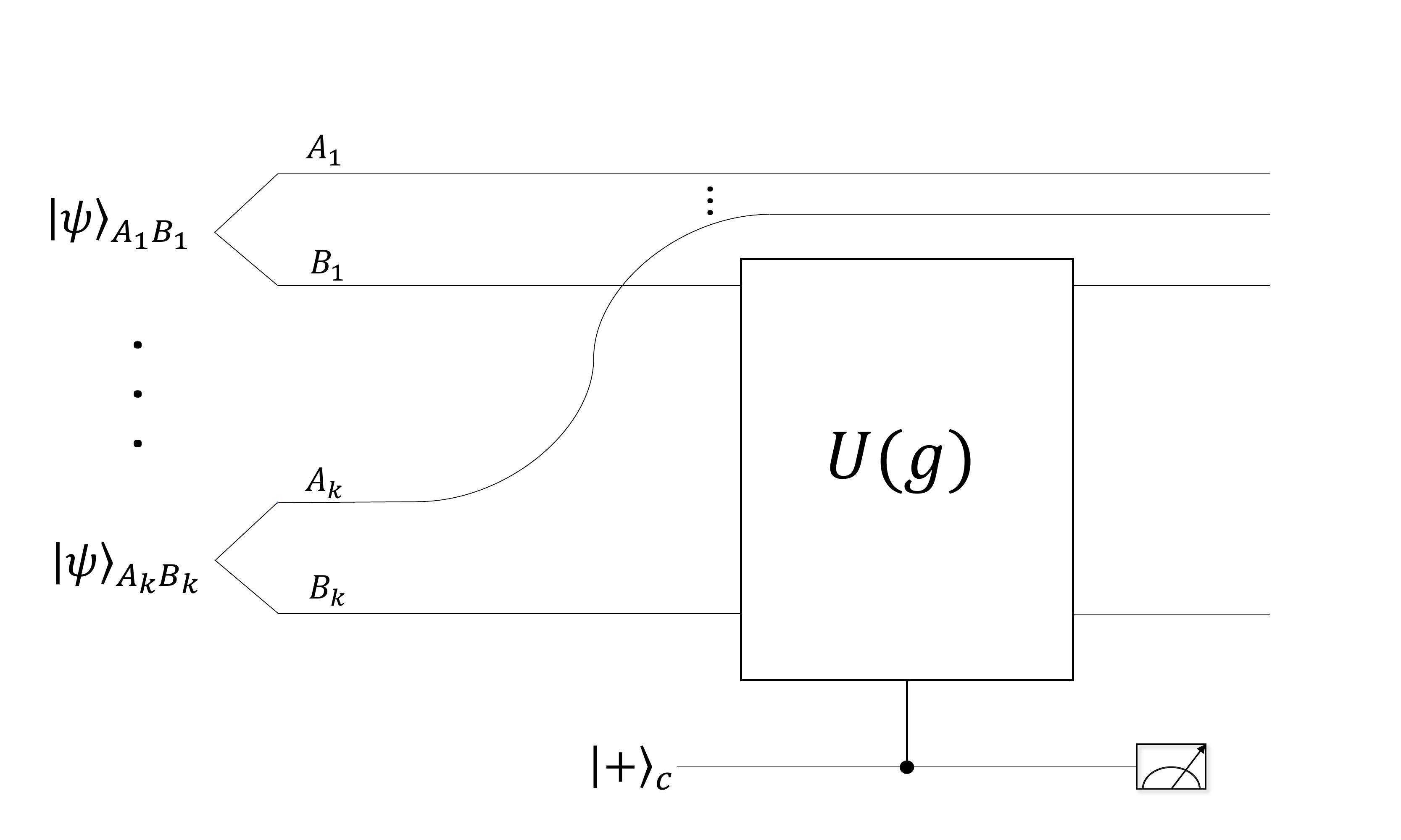}
\end{center}
\caption{Quantum circuit to implement a $G$-Bose symmetry test. We take $k$ copies of an initial bipartite state $\ket{\psi}_{AB}$ and consider the reduced state $\rho_B = \operatorname{Tr}_A [\ket{\psi}\!\bra{\psi}_{AB}]$. The collection of these reduced states are then subjected to the separability test determined by the group, where $\ket{+}_C $ is defined in \eqref{eq:plus-over-group}, and $U(g)$ is an element of the group representation.}
\label{fig:circuit}
\end{figure}

Figure~\ref{fig:circuit} reviews the $G$-Bose symmetry test. The circuit begins with $k$ copies of an initial bipartite state $\ket{\psi}_{AB}$. The $B_i$ subsystems are collected and subject to a controlled unitary gate whose mathematical description involves each unitary $U(g)$. The control register is initialized to the state $\ket{+}_C$, as defined in \eqref{eq:plus-over-group}. Under the separability test circuit, $G = S_k$, and the unitary representation is a permutation of the $B_i$ subsystems.

Our first main result is a formula for the acceptance probability $p^{(k)}$ in \eqref{eq:def-acc-prob} as a sum over the partitions of $k$ of a product of traces of $\rho_B$ and its powers and certain scaling factors. 

\begin{theorem}
\label{formulathm}
Let $\psi_{AB}$ denote a pure bipartite state and define $\rho_B\coloneqq \tr_A[\psi_{AB}]$. Then the acceptance probability $p^{(k)}$ for the bipartite pure-state separability test is given by
\begin{align}
p^{(k)}=\sum_{a_1+2a_2+\cdots +ka_k=k}\prod_{j=1}^k\frac{(\tr[\rho_B^j])^{a_j}}{j^{a_j}a_j!}\, ,
\label{eq:main-thm-cycle-index-woohoo}
\end{align}
where the sum is taken over the partitions of $k$.
\end{theorem}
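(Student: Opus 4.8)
The plan is to reduce the computation of $p^{(k)}$ to an average over the symmetric group and then organize that average by conjugacy class. Using the definition of $\Pi_{B_1\cdots B_k}$ together with linearity of the trace, I would first write
\begin{align}
p^{(k)}=\tr[\Pi_{B_1\cdots B_k}\rho_B^{\otimes k}]=\frac{1}{k!}\sum_{\pi\in S_k}\tr[W_{B_1\cdots B_k}(\pi)\rho_B^{\otimes k}]\, ,
\end{align}
so that the entire problem reduces to evaluating the single-permutation traces $\tr[W_{B_1\cdots B_k}(\pi)\rho_B^{\otimes k}]$ and then summing them up with the correct multiplicities.

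The crux is the following cycle-type factorization lemma: for any $\pi\in S_k$ with cycle type $(a_1,\dots,a_k)$, meaning $a_j$ cycles of length $j$ so that $\sum_{j}ja_j=k$, one has
\begin{align}
\tr[W_{B_1\cdots B_k}(\pi)\rho_B^{\otimes k}]=\prod_{j=1}^k\left(\tr[\rho_B^j]\right)^{a_j}\, .
\end{align}
I would establish this in two moves. First, this trace is a class function of $\pi$: since $\rho_B^{\otimes k}$ is invariant under conjugation by any permutation unitary (being a tensor power of identical factors) and $W$ is a representation, conjugating $\pi$ leaves the trace unchanged, so it depends only on the cycle type. Second, putting $\pi$ into canonical cycle form, the operator $W_{B_1\cdots B_k}(\pi)$ becomes a tensor product of cyclic-shift operators acting on disjoint blocks of tensor factors, one block per cycle; the trace then factorizes over these blocks, and the standard single-cycle identity contributes $\tr[\rho_B^\ell]$ for each cycle of length $\ell$ (the generalized swap trick, recovering the familiar $\tr[\mathrm{SWAP}\,(\rho_B\otimes\rho_B)]=\tr[\rho_B^2]$ when $\ell=2$). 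Multiplying the per-cycle contributions yields the displayed product. This factorization is the main obstacle, in the sense that it is where the representation-theoretic content lives; the length-$\ell$ computation itself is a short index manipulation.

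With the lemma in hand, the remaining work is purely combinatorial. I would group the sum over $S_k$ according to cycle type, using the standard fact that the number of permutations in $S_k$ with cycle type $(a_1,\dots,a_k)$ equals $k!/\prod_{j=1}^k(j^{a_j}a_j!)$. Substituting both this count and the lemma into the group average gives
\begin{align}
p^{(k)}=\frac{1}{k!}\sum_{a_1+2a_2+\cdots+ka_k=k}\frac{k!}{\prod_{j=1}^k j^{a_j}a_j!}\prod_{j=1}^k\left(\tr[\rho_B^j]\right)^{a_j}\, ,
\end{align}
whereupon the factor of $k!$ cancels, leaving exactly the sum over partitions of $k$ claimed in the theorem. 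I would close by remarking that this expression is precisely the cycle index polynomial of $S_k$ evaluated at $x_j=\tr[\rho_B^j]$, which is the identification motivating the P\'olya-theoretic viewpoint used later and setting up the generalization to an arbitrary finite group $G$ in Section~\ref{generalization}.
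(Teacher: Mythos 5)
Your proposal is correct and follows essentially the same route as the paper's proof: establish that each $\ell$-cycle contributes a factor of $\tr[\rho_B^\ell]$, use disjointness of cycles to factorize the trace, and then count permutations by cycle type with the standard formula $k!/\prod_j j^{a_j}a_j!$. The only cosmetic difference is your explicit class-function observation, which the paper leaves implicit.
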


\begin{proof}
Let $\pi\coloneqq (1\ 2\ \cdots\ k)$ and consider the representation $W_{B_1\cdots B_k}(\pi)$. It was shown in \cite{Ekert} that $\tr[W_{B_1\cdots B_k}(\pi)\rho_B^{\otimes k}]=\tr[\rho_B^k]$, but we include a proof here for completeness. Expanding $\rho$ in the standard basis as $\rho = \sum_{i,j} p_{i,j} |i\rangle\!\langle j|$, we have
\begin{align}
& \tr[W_{B_1\cdots B_k}(\pi)\rho_B^{\otimes k}] \notag \\
&= \tr\bigg[W_{B_1\cdots B_k}(\pi) \sum_{\substack{i_1,\ldots,i_k \\ j_1,\ldots,j_k}}p_{i_1j_1}\cdots p_{i_kj_k}|i_1\rangle\!\langle j_1|\otimes|i_2\rangle\!\langle j_2|\otimes\cdots\otimes|i_{k}\rangle\!\langle j_k|\bigg]\\
&= \tr\bigg[\sum_{\substack{i_1,\ldots,i_k \\ j_1,\ldots,j_k}}p_{i_1j_1}\cdots p_{i_kj_k}|i_k\rangle\!\langle j_1|\otimes|i_1\rangle\!\langle j_2|\otimes\cdots\otimes|i_{k-1}\rangle\!\langle j_k|\bigg]\\
&=\sum_{\substack{i_1,\ldots,i_k \\ j_1,\ldots,j_k \\ t_1,\ldots,t_k}}p_{i_1j_1}\cdots p_{i_kj_k}\delta_{t_1i_k}\delta_{j_1t_1}\cdots\delta_{t_ki_{k-1}}\delta_{j_kt_k}\\
&=\sum_{t_1,\ldots,t_k}p_{t_2t_1}p_{t_3t_2}\cdots p_{t_kt_{k-1}}p_{t_1t_k}.
\end{align}
Meanwhile, 
\begin{align} \tr[\rho^k]&=\tr\bigg[\sum_{i_1,\ldots,i_k,j_1,\ldots,j_k}p_{i_1j_1}\cdots p_{i_kj_k}|i_1\rangle\!\langle j_1|i_2\rangle\!\langle j_2|\cdots|i_k\rangle\!\langle j_k|\bigg]\\
&=\sum_{i_1,i_2,\ldots,i_k}p_{i_1i_2}p_{i_2i_3}\cdots p_{i_{k-1}i_k}p_{i_ki_1}.
\end{align}
Thus, by relabeling the indices, we see that
\begin{align} \tr[W_{B_1\cdots B_k}(\pi)\rho_B^{\otimes k}]=\tr[\rho^k].
\end{align}
Similarly, we can show for every $m$-cycle $\pi_m\in S_k$ that
\begin{align}
\tr[W_{B_1\cdots B_k}(\pi_m)\rho_B^{\otimes k}]=\tr[\rho^m].
\end{align}
Now suppose $\pi_m$ and $\pi_n$ are disjoint $m$- and $n$-cycles, respectively. Then they act on different Hilbert spaces and so the trace of the product of their representations acting on $\rho_B^{\otimes k}$ splits into the product of traces. That is,
\begin{align} \tr[W_{B_1\cdots B_k}(\pi_m)W_{B_1\cdots B_k}(\pi_n)\rho^{\otimes k}]=\tr[\rho^m]\tr[\rho^n].
\end{align}
Now, since every $m$-cycle yields a factor of $\tr[\rho^m]$ and products of disjoint cycles split the trace, we have
\begin{align} p^{(k)}&=\tr[\Pi_{B_1\cdots B_k}\rho_B^{\otimes k}]\\
&=\tr\bigg[\frac{1}{k!}\sum_{\pi\in S_k} W_{B_1\cdots B_k}(\pi)\rho_B^{\otimes k}\bigg]\\
&=\frac{1}{k!}\sum_{\pi\in S_k}\tr[W_{B_1\cdots B_k}(\pi)\rho_B^{\otimes k}]\\
&=\frac{1}{k!}\sum_{a_1+2a_2+\cdots+ka_k=k}c(a_1,\ldots,a_k)\prod_{j=1}^k\tr[\rho_B^j]^{a_j}
\end{align}
where $c(a_1,\ldots,a_k)$ is the number of cycles in $S_k$ with cycle type $(a_1,\ldots,a_k)$, which is known to be $\frac{k!}{\prod_{j=1}^kj^{a_j}a_j!}$ (see \cite[Eq.~(13.3)]{van2001}). Thus, the equality in \eqref{eq:main-thm-cycle-index-woohoo} follows.
\end{proof}

We assert now that the formula is identical to that of the cycle index polynomial of the symmetric group $S_k$, with each variable $x_j$ taking the value $\tr[\rho^j]$. The cycle index polynomial of a permutation group $G$ is defined by
\begin{align} \label{cycleindex}
    Z(G)(x_1,\ldots,x_n)\coloneqq\frac{1}{|G|}\sum_{g\in G}x_1^{c_1(g)}\cdots x_n^{c_n(g)}\, ,
\end{align}
where $c_j(g)$ denotes the number of cycles of length $j$ in the disjoint cycle decomposition of $g$. Setting $x_j=\tr[\rho_B^j]$, we see that the acceptance probability of the separability test is given by the cycle index polynomial of the symmetric group $S_k$ (see \cite[Chapter 37, pg.~526]{van2001}), so that it satisfies the recurrence relation
\begin{align} \label{recurrence}
    p^{(k)}&=\frac{1}{k}\sum_{j=1}^k\tr[\rho^j]p^{(k-j)}\, .
\end{align} Furthermore, the cycle index polynomial of the symmetric group $S_k$ is equivalent to
\begin{align}\label{bellpolyAP} 
\frac{1}{k!}B_k(x_1,x_2,2!\, x_3,3!\, x_4,\ldots,(k-1)!\, x_k)\, ,
\end{align}
where $B_k(x_1,\ldots,x_k)$ is the complete Bell polynomial \cite{comtet1974}. From this perspective, the acceptance probability can be interpreted as the $k^{\text{th}}$ raw moment of a probability distribution with the first $k$ cumulants given by $1$, $\tr[\rho^2]$, \ldots, $\tr[\rho^k]$. See \cite[Chapter~1, Section~2]{macdonald1995} for more information on \eqref{recurrence}.

Thus we have addressed one of the two open questions purposed in Section~\ref{sec:separability1}. Now we move on to the second.

\subsection{Strictly Decreasing Acceptance Probability}\label{sec:conjecture}

In Chapter~\ref{ch:symmstates}, we made reference to a conjecture from \cite{laborde2021testing} that the acceptance probability of the bipartite pure-state separability test is monotone non-increasing in $k$. We answer this conjecture in the affirmative as a corollary of the following lemma about complete Bell polynomials. In fact, this inequality is strict, and the acceptance probability approaches zero in the limit $k\to\infty$ whenever $\rho_B$ is not a pure state.

The full proof of this is given in \cite{bradshaw2022cycle} and is beyond the scope of this thesis, but we list the results here. By first proving a relevant relationship between Bell polynomials, we are able to show the following theorem.

\begin{theorem}\label{th:decreasing} 
The acceptance probability $p^{(k)}$ is strictly decreasing and  $\lim_{k\to\infty}p^{(k)}=0$ when $\rho_B$ is not a pure state.
\end{theorem}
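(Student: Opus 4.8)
The plan is to translate the entire problem into the language of symmetric functions of the eigenvalues of $\rho_B$, which makes both assertions transparent. Writing the spectral decomposition $\rho_B = \sum_{i=1}^d \lambda_i \ket{i}\!\bra{i}$ with $\lambda_i \geq 0$ and $\sum_i \lambda_i = 1$, I would first establish that the acceptance probability is the complete homogeneous symmetric polynomial of degree $k$ in the eigenvalues,
\begin{equation}
p^{(k)} = \tr[\Pi_{B_1\cdots B_k}\rho_B^{\otimes k}] = h_k(\lambda_1,\ldots,\lambda_d),
\end{equation}
where $h_k \coloneqq \sum_{|\beta|=k}\lambda^\beta$, the sum running over multi-indices $\beta=(\beta_1,\ldots,\beta_d)$ of nonnegative integers with $\sum_i \beta_i = k$ and $\lambda^\beta \coloneqq \prod_i \lambda_i^{\beta_i}$. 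This can be seen either by evaluating the symmetric-subspace diagonal entries $\bra{\vec\imath}\Pi_{B_1\cdots B_k}\ket{\vec\imath}$ via the orbit--stabilizer theorem (each such entry equals the reciprocal of the size of the permutation orbit of the tuple $\vec\imath$), or more quickly by matching the recurrence \eqref{recurrence}, which is precisely Newton's identity $k\,h_k = \sum_{j=1}^k p_j\, h_{k-j}$ relating $h_k$ to the power sums $p_j = \tr[\rho_B^j]$.

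With this identification, the central combinatorial step is to multiply $h_k$ by $p_1 = \sum_i \lambda_i = 1$ and recount. Expanding $h_k = h_k \cdot p_1 = \sum_{|\beta|=k}\sum_i \lambda^{\beta+e_i}$ (where $e_i$ is the $i$-th unit multi-index), each degree-$(k+1)$ multi-index $\gamma$ is produced once for every index $i$ with $\gamma_i \geq 1$, hence with multiplicity $s(\gamma) \coloneqq |\{i : \gamma_i \geq 1\}|$, the number of nonzero parts. Since $h_{k+1} = \sum_{|\gamma|=k+1}\lambda^\gamma$, this yields the exact representation
\begin{equation}
p^{(k)} - p^{(k+1)} = h_k - h_{k+1} = \sum_{|\gamma|=k+1}\bigl(s(\gamma)-1\bigr)\lambda^\gamma .
\end{equation}
Every $\gamma$ with $|\gamma|=k+1\geq 1$ satisfies $s(\gamma)\geq 1$, so each summand is nonnegative, giving monotonicity. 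For strictness, when $\rho_B$ is not pure there are at least two nonzero eigenvalues, say $\lambda_1,\lambda_2>0$; taking $\gamma = (k,1,0,\ldots,0)$ gives $s(\gamma)=2$ and $\lambda^\gamma = \lambda_1^{k}\lambda_2 > 0$, so the sum is strictly positive for every $k\geq 1$.

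For the limit I would use the crude bound that each monomial obeys $\lambda^\beta = \prod_i \lambda_i^{\beta_i} \leq \lambda_{\max}^{k}$, with $\lambda_{\max}=\max_i\lambda_i$, since $\lambda_i \leq \lambda_{\max}$ and $\sum_i \beta_i = k$. As the number of degree-$k$ multi-indices on $d$ parts is $\binom{k+d-1}{d-1}$, a polynomial in $k$, we obtain $p^{(k)} = h_k \leq \binom{k+d-1}{d-1}\lambda_{\max}^{k}$. When $\rho_B$ is not pure we have $\lambda_{\max}<1$, so the exponential factor dominates the polynomial prefactor and $p^{(k)}\to 0$. The main obstacle is genuinely confined to the first paragraph, namely pinning down $p^{(k)}=h_k$ rigorously; once that representation is in hand, both the strict decrease and the vanishing limit fall out of elementary counting. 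This route is arguably more transparent than the complete Bell polynomial manipulations of \cite{bradshaw2022cycle}, to which it is equivalent through \eqref{bellpolyAP}.
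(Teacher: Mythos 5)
Your proposal is correct, and it takes a genuinely different route from the one the thesis relies on. The thesis does not prove Theorem~\ref{th:decreasing} in-text: it defers to \cite{bradshaw2022cycle}, where the argument goes through "a relevant relationship between Bell polynomials," i.e., inequalities for the complete Bell polynomials appearing in \eqref{bellpolyAP}. You instead identify $p^{(k)}$ with the complete homogeneous symmetric polynomial $h_k$ of the eigenvalues of $\rho_B$ --- a legitimate identification, since the recurrence \eqref{recurrence} is exactly Newton's identity $k\,h_k=\sum_{j=1}^k p_j\,h_{k-j}$ with matching initial condition $p^{(0)}=h_0=1$ (equivalently, the cycle index of $S_k$ evaluated at power sums is $h_k$; the orbit--stabilizer computation of $\bra{\vec\imath}\Pi\ket{\vec\imath}$ gives the same thing since $\rho_B^{\otimes k}$ is diagonal in the eigenbasis). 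The recount of $h_k\cdot p_1$ using $p_1=\tr[\rho_B]=1$ is correct: each $\gamma$ with $|\gamma|=k+1$ is hit with multiplicity $s(\gamma)$, so $p^{(k)}-p^{(k+1)}=\sum_{|\gamma|=k+1}(s(\gamma)-1)\lambda^\gamma\geq 0$, with strict positivity witnessed by $\gamma=(k,1,0,\ldots,0)$ whenever two eigenvalues are nonzero; and the bound $h_k\leq\binom{k+d-1}{d-1}\lambda_{\max}^k$ with $\lambda_{\max}<1$ for a non-pure state settles the limit. What your approach buys is an \emph{exact} nonnegative expression for the successive differences, which makes both the strictness and the exponential decay rate (governed by $\lambda_{\max}$) completely transparent, and it is self-contained at the level of elementary counting. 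What the Bell-polynomial route buys is compatibility with the moment/cumulant interpretation the thesis highlights after \eqref{bellpolyAP}, and a statement at the level of general Bell-polynomial inequalities rather than one tied to the spectral decomposition. The two are equivalent through \eqref{bellpolyAP}, as you note, but your argument is not a restatement of the cited one.
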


These results indicate that as $k$ goes to infinity, fewer repetitions of the test are needed to determine whether a given pure state is entangled. There is a trade-off, however, between increasing $k$ and the computational resources needed to conduct a single test. As $k$ increases, one might suspect that the resources needed will increase in such a way that a large enough $k$ is not feasible. Indeed, as one of our results, we discuss the scaling in this claim in Section~\ref{sec:comparison}.

\section{Generalization of the Algorithm}
\label{generalization}

The previous sections address results and conjectures of the archetypal $k$-Bose extendibility tests, but we now present a generalization of the bipartite pure-state separability algorithm to groups other than the symmetric group $S_k$. Furthermore, these algorithms are also separability tests. As always, let $G$ be a finite group, and let $\psi_{AB}$ be a pure state. Recall that Cayley's theorem \cite{Dummit_Foote} guarantees that every finite group is isomorphic to a subgroup of a permutation group. Thus, there exists a representation of $G$ such that every $g\in G$ is mapped to an element $\pi \in S_k$ for some $k \in \mathbb{N}$. In turn, there exists a map from $\pi$ to the operator that permutes the Hilbert spaces in the composite Hilbert space $\mathcal{H}^{\otimes k}$. Then a generalization of the bipartite pure-state separability algorithm is given by performing a $G$-Bose symmetry test on the state $\psi_{AB}^{\otimes k}$.

By the argument in the proof of Theorem~\ref{formulathm}, we see that one simply has to count the number of cycles of any given cycle type in the permutation subgroup isomorphic to $G$ to obtain a formula for the acceptance probability of the algorithm. That is, the argument in Theorem~\ref{formulathm}, combined with Cayley's theorem, proves the following theorem:

\begin{theorem}
\label{generalformula}
Let $p_G$ denote the acceptance probability with respect to the group $G$ for the generalization of the bipartite pure-state separability algorithm. Then
\begin{align} \label{pG} 
p_G=Z(G)(1,\ldots,\tr[\rho^k])\, .
\end{align}
\sloppy That is, the acceptance probability $p_G$ is given by the cycle index polynomial \eqref{cycleindex} of~$G$ evaluated at $x_j=\tr[\rho^j]$ for $j \in \{1,\ldots,k\}$.
\end{theorem}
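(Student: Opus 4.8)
The plan is to reduce Theorem~\ref{generalformula} to the machinery already established in the proof of Theorem~\ref{formulathm}, using Cayley's theorem to transport the group $G$ into a permutation setting where the cycle-counting argument applies verbatim. First I would invoke Cayley's theorem to fix an injective homomorphism $\iota : G \to S_k$ for some $k$ (concretely, $k = |G|$ via the regular representation, though any faithful permutation representation works), so that $G$ is identified with a subgroup $\iota(G) \leq S_k$. Composing $\iota$ with the standard unitary representation $W_{B_1\cdots B_k}$ of $S_k$ yields a unitary representation of $G$ on $\mathcal{H}_{B_1\cdots B_k}^{\otimes k}$, and the $G$-Bose symmetry test on $\psi_{AB}^{\otimes k}$ is then performed exactly as in Figure~\ref{fig:circuit}, with projector $\Pi^G_{B_1\cdots B_k} \coloneqq \frac{1}{|G|}\sum_{g\in G} W_{B_1\cdots B_k}(\iota(g))$.

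The core observation I would then reuse is the single computation from the proof of Theorem~\ref{formulathm}: for any permutation whose disjoint cycle decomposition has $c_j$ cycles of length $j$, the trace $\tr[W_{B_1\cdots B_k}(\pi)\rho_B^{\otimes k}]$ factors over the disjoint cycles and each $j$-cycle contributes a factor $\tr[\rho_B^j]$, giving $\prod_{j=1}^{k}\tr[\rho_B^j]^{c_j(\pi)}$. This fact is established earlier in the excerpt and does not depend on $\pi$ ranging over all of $S_k$, so it holds for each $\pi = \iota(g)$ with $g \in G$. The acceptance probability is therefore
\begin{align}
p_G &= \tr\!\left[\Pi^G_{B_1\cdots B_k}\rho_B^{\otimes k}\right]\\
&= \frac{1}{|G|}\sum_{g\in G}\tr\!\left[W_{B_1\cdots B_k}(\iota(g))\,\rho_B^{\otimes k}\right]\\
&= \frac{1}{|G|}\sum_{g\in G}\prod_{j=1}^{k}\tr[\rho_B^j]^{\,c_j(g)},
\end{align}
where $c_j(g)$ denotes the number of length-$j$ cycles of $\iota(g)$. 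Comparing this last line directly with the definition of the cycle index polynomial in \eqref{cycleindex} under the substitution $x_j = \tr[\rho_B^j]$ gives precisely $p_G = Z(G)(1,\ldots,\tr[\rho_B^k])$, completing the argument.

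I would also note the one point requiring a word of care, which is the main conceptual obstacle: the cycle type $c_j(g)$ is a property of the chosen permutation representation $\iota$, not an intrinsic invariant of the abstract group element $g$, so a priori the right-hand side could depend on which faithful representation is used. The clean way to dispatch this is to emphasize that both the circuit and the resulting acceptance probability are \emph{defined relative to} a fixed choice of permutation representation of $G$ (equivalently, a fixed embedding into some $S_k$), so that $Z(G)$ is understood as the cycle index of $G$ acting as a permutation group, exactly as in \eqref{cycleindex}. With the representation fixed, each $\iota(g)$ has a well-defined cycle type and the substitution $x_j \mapsto \tr[\rho_B^j]$ is unambiguous. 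A brief remark that different faithful representations yield genuinely different (but equally valid) tests — and indeed this freedom is precisely what is exploited in Section~\ref{sec:comparison} when comparing $C_k$ and $S_k$ — would close the loop and preempt the reader's natural worry.
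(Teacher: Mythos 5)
Your proposal is correct and follows essentially the same route as the paper: Cayley's theorem embeds $G$ into a permutation group, the trace factorization over disjoint cycles from the proof of Theorem~\ref{formulathm} applies to each $W_{B_1\cdots B_k}(\iota(g))$, and averaging over $G$ yields the cycle index polynomial evaluated at $x_j=\tr[\rho_B^j]$. Your added remark that the cycle type, and hence the resulting test, depends on the chosen faithful permutation representation is a worthwhile clarification that the paper leaves implicit.
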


As an aside, we note that \eqref{pG} has an interesting combinatorial meaning. Let $\{\lambda_i\}_{i=1}^r$ denote the eigenvalues of $\rho$. By Pólya's enumeration theorem \cite{brualdi2010,tucker1995}, we can interpret \eqref{pG} as a generating function for the number of nonequivalent colorings of a set $S$ with the $r$ colors $\{\lambda_i\}_{i=1}^r$. The role of $G$ here is to define the equivalence between colorings through its action on $S$.

We would now like to give two relevant examples to demonstrate the above result. Specifically, we show the already discussed case of $S_k$ but also the cyclic group $C_k$, as these two examples will be of interest to us in later sections.

\begin{example}
We consider the example already discussed in Theorem~\ref{formulathm}. Let $G=S_k$ be the symmetric group, which is already a permutation group. Then the acceptance probability is given by the cycle index polynomial of $S_k$. That is,
\begin{align}
p_{\text{sym}}^{(k)}=\sum_{a_1+2a_2+\cdots +ka_k=k} \prod_{j=1}^k \frac{(\tr[\rho_B^j])^{a_j}}{j^{a_j}a_j!}.
\end{align}
\qed
\end{example}

\begin{example} 
In this example, we generalize the cyclic test to products of cyclic groups. Let $G=\mathbb{Z}_m^k$ be the product of $k$ copies of the group $\mathbb{Z}_m$. We represent $G$ as a permutation subgroup by labeling its elements and letting them act on the group to construct a permutation. For example, if $k=1$, then $G=\{0,1,\ldots,m-1\}$. Since 0 has no effect on any element of the group, we map it to the identity element $e$. Meanwhile, 1 acts on each element of the group by sending 0 to 1, 1 to 2, and so on. So we identify 1 with the cycle $(1\ \cdots\ m)$. The remaining permutations are defined similarly. 

Returning to the more general setting, we see that the elements of each order $n$ correspond to products of $n$-cycles. Now, for an element of $G$ to have order $n$, each component must contain an element of an order that divides $n$, with at least one component filled by an element of order $n$. So the number of elements of order $n$ is given by
\begin{align} 
\sum_{i=1}^k\binom{k}{i}(\phi(n))^i \bigg(\sum_{\substack{l|n\\ l<n}}\phi(l)\bigg)^{k-i} & = \bigg(\phi(n) + \sum_{\substack{l|n\\ l<n}}\phi(l)\bigg)^k-\bigg(\sum_{\substack{l|n\\ l<n}}\phi(l)\bigg)^k\\
& =n^k-(n-\phi(n))^k,
\end{align} 
where $\phi$ denotes the Euler $\phi$-function. The acceptance probability given by the cycle index polynomial of $\mathbb{Z}_m^k$ is then
\begin{align} 
p_{\mathbb{Z}_m^k}^{(k)}=\frac{1}{m^k}\sum_{n|m}(n^k-(n-\phi(n))^k)(\tr[\rho_B^n])^{\frac{m^k}{n}}.
\end{align}
\qed
\end{example}

Finally, we claim that the above nontrivial examples, as well as any other example involving a nontrivial finite group, are tests for separability of a pure bipartite state. Thus, we have produced an entire class of separability tests.

\begin{theorem}
Let $\psi_{AB}$ denote a pure bipartite state. Then the generalized bipartite pure-state separability algorithm is, in fact, a faithful test for separability of $\psi_{AB}$ for any nontrivial finite group $G$, meaning that the acceptance probability is equal to one if and only if the pure state is a separable state.
\end{theorem}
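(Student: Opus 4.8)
The plan is to reduce the statement to a purity criterion on the reduced state $\rho_B \coloneqq \tr_A[\psi_{AB}]$, exploiting the closed form $p_G = Z(G)(1,\tr[\rho_B^2],\ldots,\tr[\rho_B^k])$ established in Theorem~\ref{generalformula}. The crucial preliminary observation is that, for a pure bipartite state, separability is equivalent to purity of $\rho_B$: writing a Schmidt decomposition of $\ket{\psi}_{AB}$, the state factorizes as a tensor product if and only if there is a single nonzero Schmidt coefficient, i.e., if and only if $\rho_B$ has rank one. Thus it suffices to show that $p_G = 1$ if and only if $\rho_B$ is pure.

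Next I would record the elementary bounds on the quantities $x_j \coloneqq \tr[\rho_B^j]$. If $\{\lambda_i\}$ are the eigenvalues of $\rho_B$, then $x_j = \sum_i \lambda_i^j$ with each $\lambda_i \in [0,1]$ and $\sum_i \lambda_i = 1$; hence $x_1 = 1$ and $0 \le x_j \le 1$ for every $j \ge 1$, since $\lambda_i^j \le \lambda_i$. Moreover, for any fixed $m \ge 2$, equality $x_m = 1$ forces $\lambda_i^m = \lambda_i$ for all $i$, so each $\lambda_i \in \{0,1\}$, and together with $\sum_i \lambda_i = 1$ this means exactly one eigenvalue equals one; that is, $x_m = 1$ for a single $m \ge 2$ already implies that $\rho_B$ is pure.

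The forward implication is then immediate: if $\psi_{AB}$ is separable, $\rho_B$ is pure, every $x_j = 1$, and $Z(G)(1,\ldots,1) = \frac{1}{|G|}\sum_{g\in G} 1 = 1$ by \eqref{cycleindex}, so $p_G = 1$. For the converse, I would use that $p_G$ is, by \eqref{cycleindex}, an equal-weight average (with weights $1/|G|$) of the monomials $\prod_j x_j^{c_j(g)}$, each of which lies in $[0,1]$ because every $x_j$ does. An equal-weight average of numbers in $[0,1]$ equals $1$ only if every summand equals $1$; hence $p_G = 1$ forces $\prod_j x_j^{c_j(g)} = 1$ for every $g \in G$. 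A product of factors in $[0,1]$ equals one only if each factor does, so $x_j = 1$ whenever $c_j(g) > 0$.

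The main obstacle --- and the only place the nontriviality of $G$ enters --- is to produce a cycle length $m \ge 2$ that is forced to satisfy $x_m = 1$. Here I would invoke the Cayley embedding used to define the generalized test: since the regular representation is faithful, any non-identity $g$ (which exists because $G$ is nontrivial) maps to a non-identity permutation, and a non-identity permutation necessarily contains a cycle of length $m \ge 2$, whence $c_m(g) \ge 1$. The previous step then yields $x_m = 1$, and the purity criterion of the second paragraph gives that $\rho_B$ is pure, so $\psi_{AB}$ is separable. I would close by flagging that faithfulness of the chosen permutation representation is precisely what excludes the trivial group, whose sole element fixes everything and gives $p_G = x_1^{k} = 1$ identically regardless of entanglement, thereby explaining the hypothesis.
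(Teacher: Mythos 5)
Your proof is correct and follows essentially the same route as the paper's: both reduce separability to purity of $\rho_B$, apply the cycle index formula of Theorem~\ref{generalformula}, and use nontriviality of $G$ (via the faithful Cayley embedding) to guarantee a cycle of length at least two. The only cosmetic difference is that you argue the converse directly ($p_G=1$ forces every monomial, hence some $\tr[\rho_B^m]$ with $m\ge 2$, to equal one), whereas the paper argues the contrapositive ($\rho_B$ mixed implies a strict inequality in the averaged sum); your version is, if anything, slightly more careful in pinpointing that the relevant nonzero exponent must be $c_m(g)$ with $m\ge 2$.
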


Once again, the proof of this theorem is given in \cite{bradshaw2022cycle}, but we include it here for completeness.

\begin{proof}
Suppose $\psi_{AB}$ is separable. That is, $\ket{\psi}_{AB} = \ket{\phi}_A \otimes \ket{\varphi}_B$ for some states $\ket{\phi}_A \in \mathcal{H}_A$ and $\ket{\varphi}_B\in\mathcal{H}_B$. Then
\begin{align} 
\rho_B &\coloneqq \tr_A[\psi_{AB}]\\
&=\tr_A[\ket{\phi}\!\!\bra{\phi}_A\otimes\ket{\varphi}\!\!\bra{\varphi}_B]\\
&=\ket{\varphi}\!\!\bra{\varphi}_B.
\end{align}
That is, $\rho_B$ is a pure state. From Theorem~\ref{generalformula}, the acceptance probability of the algorithm is given by the cycle index polynomial evaluated at the traces of increasing powers of $\rho_B$. But since $\rho_B$ is pure, $\tr[\rho_B^j]=1$ for all $j \in \{1,\ldots,n\}$. Then the acceptance probability is equal to the cycle index polynomial at $x_j=1$ for all $j\in\{1,\ldots,n\}$. That is,
\begin{align}
p_G &=Z(G)(1,\ldots,\tr[\rho^n])\\
&=Z(G)(1,\ldots,1)\\
&=\frac{1}{|G|}\sum_{g\in G} 1^{c_1(g)}\cdots1^{c_n(g)}\\
&=\frac{1}{|G|}\sum_{g\in G} 1\\
&=1
\end{align}
where $c_i(g)$ denotes the number of cycles of length $i$ in the disjoint cycle decomposition of $g$. Thus, $\psi_{AB}$ separable implies that the acceptance probability is identically one.

Now suppose $\rho_B$ is a mixed state. Then $\tr[\rho_B^j]<1$ for all $j>1$ and we have
\begin{align} 
p_G &=Z(G)(1,\tr[\rho_B^2],\ldots,\tr[\rho_B^n])\\
&=\frac{1}{|G|}\sum_{g\in G} 1^{c_1(g)}(\tr[\rho_B^2])^{c_2(g)}\cdots(\tr[\rho_B^n])^{c_n(g)}\\
&<\frac{1}{|G|}\sum_{g\in G} 1^{c_1(g)}\cdots1^{c_n(g)}\\
&=\frac{1}{|G|}\sum_{g\in G} 1\\
&=1,
\end{align}
where we have used the assumption that $G$ is nontrivial to guarantee that at least one of the $c_j(g)$ is nonzero so that the inequality holds. Thus, the test is faithful.
\end{proof}

\section{Resource Comparison of Symmetry Tests}

\label{sec:comparison}

Given the generalization in Section~\ref{generalization}, we can now compare the performance of these separability tests. There are two practical concerns to consider when implementing such a test: the rate at which the acceptance probability decays and the resources required to construct it. The cycle index polynomial results described above allow for direct analysis of the former topic, but the latter requires additional consideration before it can be adequately addressed. First, we will specify how resources are counted for each algorithm. Then we compare the resource cost for each algorithm given this framework. We accompany this with a discussion of the acceptance probability of the compared methods.

We now clarify what is meant by resources in this context. For the $G$-Bose symmetry test described in Chapter~\ref{ch:symmstates} and tests of that nature, the two primary resources are the number of gates used to construct the test and how many qubits are needed in the control register. An alternate consideration is the depth of the circuit, which we will mention where appropriate. We begin with a discussion of gate counting.

\subsection{Resource Counting of Quantum Gates}\label{sec:resourcecounting}

The unitary representation in this context is always formed from a collection of SWAP gates used to permute the subsystems. SWAP gates can be realized by a sequence of three CNOT gates in alternating direction. Often, the literature commonly counts the number of CNOT gates used as a resource (see, e.g., \cite{grassl2000cyclic}); however, particular architectures may have more efficient realizations of the SWAP gate. Furthermore, this algorithm actually calls for controlled-SWAP gates, which may have vastly different compilations between architectures. For the purposes of this discussion, we will be counting the necessary number of controlled-SWAPs alone. Additionally, we do not restrict to swapping between consecutive Hilbert spaces although in principle this could be a limitation of particular systems.

Here, we give an explicit construction for two example groups. The first is the cyclic group test, which is a simple Abelian subgroup of the symmetric group and therefore of interest as a point of comparison. Although constructions of cyclic shifts exist in the literature, our construction follows binary encoding procedures \cite{babbush2018encoding,low2019hamiltonian} and uses fewer gates than a naïve implementation and thus warrants discussion. The second construction given describes a recursive implementation of the full permutation test.  Similarly, although the quantum Schur transform \cite{BCH06,BCH07,krovi2019efficient} gives a recipe for implementing the symmetric group in principle, the gate construction is abstract and thus difficult to use for accurate gate counts compared to other approaches. As such, we utilize the construction given in \cite{barenco1997stabilization}. In the following two subsections, we show that a cyclic group test can be implemented with $\mathcal{O}(k \log (k))$ controlled-SWAP gates and a full symmetric group test (also known as a permutation test) with $\mathcal{O}(k^2)$ controlled-SWAP gates.

\subsubsection{Cyclic Group}

Analysis of the cyclic group benefits from established literature. Any cyclic permutation can be achieved in constant depth with $k-1$ gates, where $k$ is the order of the cycle \cite{grassl2000cyclic}. We will now show that any cyclic group test can be generated by implementing solely the elements in that cycle that are powers of two. This means that the resource cost of implementing the cyclic test of order $k$ is $(k-1)\log_2 (k)$, and the constant depth condition above from \cite{grassl2000cyclic} gives a corresponding depth of $\mathcal{O}(\log_2(k))$ in the separability test.

To see this, first recall that the $k$-order cyclic group is isomorphic to the set $\mathbb{Z}_k$ of integers modulo $k$ under addition. This will allow us to symbolically represent each element by a single number, understood in this context to be modulo $k$. 

Since the case of $k=1$ is trivial, let us first consider the base case of $k=2$. This example illustrates the general construction of cyclic tests and recreates the well-established swap test \cite{barenco1997stabilization,buhrman2001quantum}. The controlled-SWAP element corresponds to the element $1 = 2^0$, and is the sole gate needed, and the identity element is naturally $0$. (Note that $1$ is the sole power of two in $\mathbb{Z}_2 = \{0,1\}$.) The control state for this test is given by a single qubit state of
\begin{equation}
    \ket{+}_{C(2)} =\frac{1}{\sqrt{2}}(\ket{0} + \ket{1})\, ,
\end{equation}
where we employ the computational basis. It is clear that each element in the ancillary basis will give rise to its corresponding group element with this test.

How does this construction generalize? For each given $k$, we follow a similar recipe as above. As $C_k$ is isomorphic to $\mathbb{Z}_k$, start by identifying each cycle in $C_k$ with a number in $\mathbb{Z}_k$. If we always map the first $k$-cycle to one, then this map follows simply by mapping cycle composition to integer addition by one. Consider, for instance, the case of $C_5$. Then the first cycle is $(1\ 2\ 3\ 4\ 5)$. Map this to $1$. Then the next element, $(1\ 3\ 5\ 2\ 4)=(1\ 2\ 3\ 4\ 5)(1\ 2\ 3\ 4\ 5)$ maps to $1+1=2$. After we have identified each element of $C_k$ with an element of $\mathbb{Z}_k$, we can always rewrite these numbers in binary. The beauty of binary construction, as is well appreciated in computer science, is that only elements corresponding to powers of two need to be individually defined, and every other number can be generated from combinations of them. Thus, after this second rewrite, we have a mapping between every cycle in $C_k$ and a binary number. Now to construct the circuit, we only need to implement controlled gates that correspond to cycles that have mapped to a power of two. For $C_5$, this would be gates that have mapped to $001$, $010$, and $100$ (in decimal: 1, 2, and 4, respectively). 

To show how this construction grows, it is most convenient to denote the gates by which power of two they implement.  In Figure~\ref{fig:cyclic}, we label gates as $2^j$ where $j$ ranges from 0 to $\lfloor \log_2(k-1)\rfloor$. To see why $\lfloor \log_2(k-1)\rfloor$ is the final gate, recall the convention that $\mathbb{Z}_k$ always contains $0$ instead of $k$. Then the bound falls out from inspection. Revisiting our above example of $k=5$, the gates we identified as necessary can be equivalently represented as $001=2^0$, $010=2^1$, and $100=2^2$.

This construction can also be achieved by considering the labeling of the control state. If the computational basis is read as a number in binary, we can clearly define the relationship between the computational basis and the group element construction as $\ket{g}=\ket{g_{\textrm{binary}}}=\ket{g_{\textrm{decimal}}}$, where the abstract construction is equivalent to a computational basis in binary, which equivalently realizes the familiar group element in decimal. For example, following the above convention, the basis state for $k=5$ given by $\ket{(1\ 3\ 5\ 2\ 4)}=\ket{10}=\ket{2}$ indicates that the element $(1\ 3\ 5\ 2\ 4)$ can be labeled as the $2$ element of the group. As $2$ is obviously a power of $2$, this group element must be encoded in the circuit. This construction is shown generally in Figure~\ref{fig:cyclic} and for our specific example of $k=5$ in Figure~\ref{fig:cyclicexample}. Note that all elements of $C_k$ will take at most $k-1$ SWAP gates to implement. 

Furthermore, note that cyclic permutations can be implemented in a constant depth of two \cite{grassl2000cyclic}. To maintain this depth even for the controlled gates, a GHZ state, $\frac{1}{\sqrt{2}}(\ket{0}^{\otimes m} +\ket{1}^{\otimes m})$, where $m=\lfloor k/2\rfloor$, can be used instead of a single plus state, similar to the approach employed in \cite{yihui}. Then the controls can act on different qubits of the state, and the final measurement is taken by projecting back to the GHZ state. This state preparation and the corresponding measurement may add complexity to the ancilla register; however, since the circuit to prepare a $k$-qubit GHZ state has depth $\mathcal{O}(\log_2(k))$ (with the circuit to project onto it being its inverse), this gives the cyclic group test a depth that grows as $\mathcal{O}(\log_2 (k))$.

To see that this circuit is capable of generating every element of the cyclic group, we again refer to the isomorphism between $C_k$ and $\mathbb{Z}_k$. Writing every element of $\mathbb{Z}_k$ in binary, it becomes obvious that every element can be written as an addition of powers of 2 that form the basis of binary numbers. As such, only elements corresponding to new ``digits" need to be considered.

\begin{figure}[ptb]
\begin{center}
\includegraphics[
width=4.5 in
]{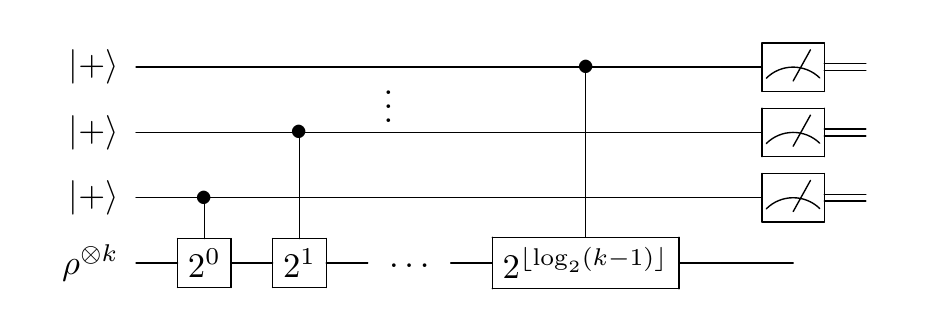}
\end{center}
\caption{Figure demonstrating how to systematically generate a test for the cyclic group of order $k$. The notation $(2^j)$ indicates the unitary representation of the element in $C_k$ labeled by the $j$-th power of two. Alternatively, this element is obtained by the full $k$-cycle $(1,2,\ldots,k)$ acting on itself $2^j$ times. Note that the final power is always given by $\lfloor \log_2(k-1)\rfloor.$ Also, $|+\rangle = (\ket{0}+\ket{1})/\sqrt{2}$ in the circuit diagram above and the final measurements are performed in the Hadamard basis, accepting if all $+1$ outcomes occur.}
\label{fig:cyclic}
\end{figure}

\begin{figure}[ptb]
\begin{center}
\includegraphics[
width=5 in
]{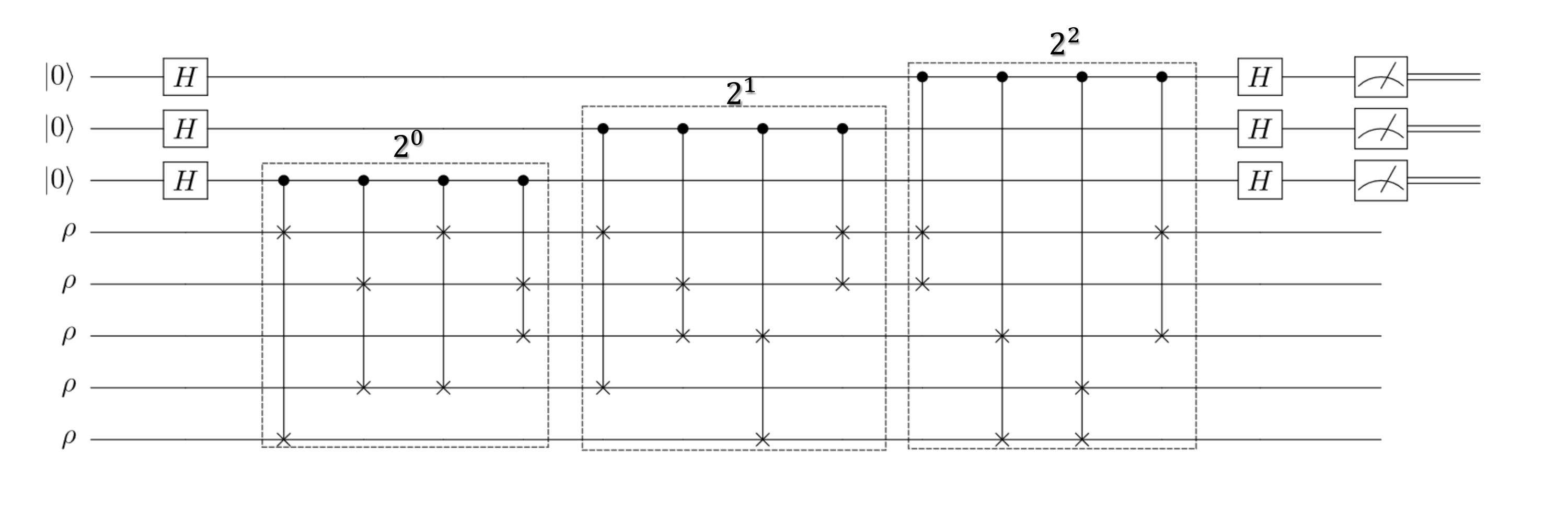}
\end{center}
\caption{An example of the cyclic group test for $k=5$. The notation $(2^j)$ indicates the unitary representation of the element in $C_k$ labeled by the $j$-th power of two. For this case, only the elements corresponding to $2^0$, $2^1$, and $2^2$ contribute. Notice that, if the gates are not controlled on the same qubit, each individual cycle collapses to a depth of two with $k-1$ gates.}
\label{fig:cyclicexample}
\end{figure}

\subsubsection{Symmetric Group}

We now review a recursive algorithm for the construction of the symmetric group.  Necessary to this construction is the proof that the entire group $S_k$ can be generated in a convenient way, using solely transpositions. This construction is equivalent to that given in \cite{barenco1997stabilization}, but we explain it here for convenience.

Observe that $S_2$ can be generated by the element $(1\ 2)$. To generate $S_3$, we need only act on this element from the left by $(1\ 3)$ and $(2\ 3)$. Indeed, the remaining elements of $S_3$ are given by $(1\ 2\ 3)=(1\ 3)(1\ 2)$ and $(1\ 3\ 2)=(2\ 3)(1\ 2)$. This serves as our base case, and we now proceed by induction. Suppose we can generate every element of $S_{k-1}$ in this way. We must show that the remaining elements of $S_k$ are given by acting on $S_{k-1}$ from the left by the transpositions of the form $(i\ k)$ for $i\in \{1,2,\ldots,k-1\}$. To see this, let $(i_1\ i_2\ \cdots\ i_m)$ be an arbitrary $m$-cycle in $S_{k-1}$. Then acting from the left by $(i_j\ k)$ for some $j\in \{1,\ldots,m\}$ yields $(i_j\ k)(i_1\ i_2\ \cdots\ i_m)=(i_1\ i_2\ \cdots\ i_{j-1}\ k\ i_j\ \cdots\ i_m)$. In this way, we can generate every cycle in $S_k$. Since every element of $S_k$ can be decomposed into a product of disjoint cycles, we can now generate every element of $S_k$ recursively by appending only transpositions of the form $(i\ k)$. We can visualize this construction by the circuit given in Figure~\ref{fig:permutation4} for an example when $k =4$. 

Given a way to generate $S_k$, we now need an appropriate control state to implement these elements. By supposition, the identity can always be implemented via the state $\ket{0}$ tensored with itself to some power. What then for the remaining states? Consider only one 'layer' of the recursive construction of $S_k$. It suffices to only ever use one transposition at a time. Thus the control state for every $i$-th layer of transpositions should take the form
\begin{equation}
    \ket{+}_{S_i}=\frac{1}{\sqrt{i+1}}(\ket{0}^{\otimes i} + \ket{10\cdots 0} + \ket{01\cdots 0}+\cdots +\ket{00\cdots 1})\, ,
\end{equation}
as given in \cite{barenco1997stabilization}. These individual control states should be concatenated together to form the control register for the entire algorithm. For a quick sanity check, when considering the tensor product of such states as $i$ ranges from 1 to $k$, the normalization constant out in front becomes $\sqrt{k!}=\sqrt{|S_k|}$.

However, a question remains; can the control register for such a circuit also be generated recursively? Observe, in Figure~\ref{fig:permutation4}, that we denote a series of gates $A_j$ that act on the control register to create superpositions. Furthermore, notice that we have arranged the transpositions in a consistent manner such that each gate is appended in ascending order of transposition. Then we define the gate $A_j$ to act as such:
\begin{equation}
    A_j \ket{0}^{\otimes j-1} =\frac{1}{\sqrt{j}} (\ket{0}^{\otimes j-1} + \ket{W_{j-1}}) \, ,
\end{equation}
where $\ket{W_{j-1}}=\frac{1}{\sqrt{j-1}}\sum_{i=1}^{j-1}\ket{2^i}$ is the $W$-state on $j-1$ qubits. Here $\ket{2^i}$ is the state with a one in the $i$-th component and a zero elsewhere. (This is equivalent to the one-hot encoding commonly used in literature.) We can observe by inspection that this action, when taken recursively from $j=2$ to $j=k$, will generate a superposition over $k!$ basis elements. An example of this construction can be seen in Figure~\ref{fig:permutation4} for $k=4$. 

There are several choices available to construct these $A_j$ gates. We review two here. One recursive approach is to begin by designing the circuit for $A_i$; then the next gate $A_{i+1}$ is given by adding $i+1$ control qubits, initializing the first qubit to a superposition of $(\frac{1}{\sqrt{i}} \ket{0}+\frac{\sqrt{i-1}}{\sqrt{i}}\ket{1})$, then controlling off of this state, implement $A_i$ on the remaining new qubits. However, this na\"ive approach will use numerous gates and quickly grow in size. In \cite{barenco1997stabilization}, they assume the first $i$ qubits are initialized, then add $i+1$ qubits for the recursive step. The $(i+1)$-th qubit can be acted on by a one-qubit gate $U_i$ given by
\begin{align}
    U_i\coloneqq\frac{1}{\sqrt{i+1}}
    \begin{pmatrix} 
    1 & -\sqrt{i}\\
    \sqrt{i} & 1
    \end{pmatrix}\, .
\end{align}
Following this, act simultaneously on the $i+1$ qubit and the remaining qubits with a series of two-qubit gates given by 
\begin{align}
    T_{j,j+1}\coloneqq\frac{1}{\sqrt{i-j+1}}
    \begin{pmatrix} 
    \sqrt{i-j+1} & 0 & 0 & 0\\
    0 & 1 & \sqrt{i-j} & 0\\
    0 & -\sqrt{i-j} & 1 & 0\\
    0 & 0 & 0 & \sqrt{i-j+1}
    \end{pmatrix}\, ,
\end{align}
where $j$ ranges from 1 to $i-1$. This will give the desired control state. In all likelihood, there are even more ways to generate the desired control register. Whichever approach is chosen, the control state should remain the same. Note that the ancilla cost of the control state should be at least $\mathcal{O}(k \log_2 k)$ ancilla qubits regardless simply from the magnitude of the symmetric group, $|S_k|=k!$. 

Given this construction, it is easy to see the number of controlled-SWAP gates needed to perform the symmetric group test. Indeed, from Figure~\ref{fig:permutation4}, we see that the number of controlled-SWAPs needed when $k=4$ is $1+2+3=6$, where the 1 corresponds to the permutation $(1\ 2)$ needed to generate $S_2$, the 2 corresponds to the permutations $(2\ 3)$ and $(1\ 3)$ needed to generate $S_3$ from $S_2$, and the 3 corresponds to the permutations $(3\ 4)$, $(2\ 4)$, and $(1\ 4)$ needed to generate $S_4$ from $S_3$. By induction, the number of controlled-SWAP's needed to perform the $k$-th symmetric group test is the sum of the first $k$ integers, or $k(k-1)/2$, thus leading to the claimed $\mathcal{O}(k^2)$ gate complexity.

\begin{figure}[ptb]
\begin{center}
\includegraphics[width=4.0in]{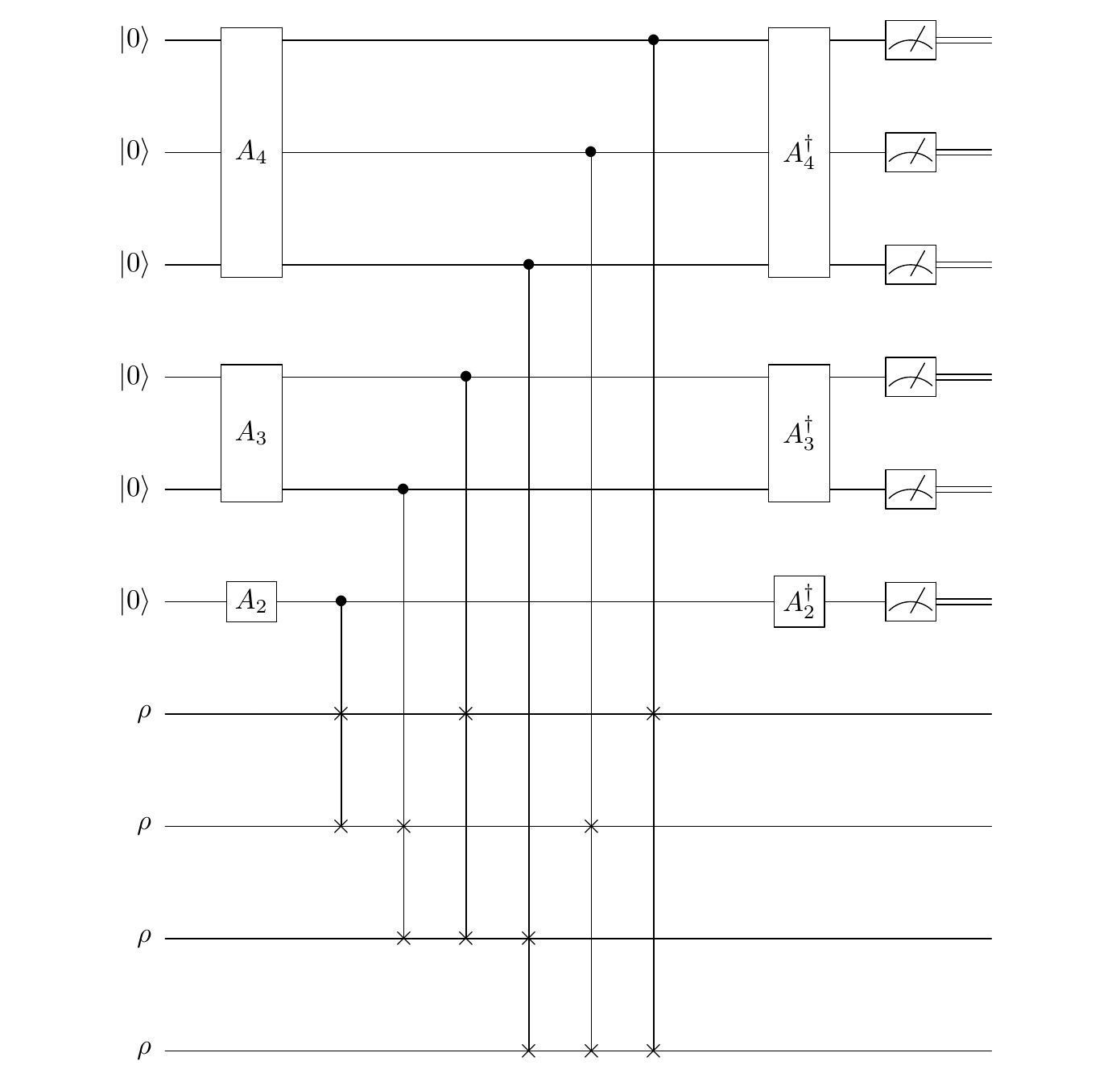}
\end{center}
\caption{Figure demonstrating how to systematically generate a test for the symmetric group of order four. }
\label{fig:permutation4}
\end{figure}

\subsubsection{Dihedral Group}
The dihedral group, $D_k$ is isomorphic to the semi-direct product of $\mathbb{Z}_k$ with $\mathbb{Z}_2$, with $\mathbb{Z}_2$ acting on $\mathbb{Z}_k$ by inversion. As such, it can be formed in a faithful way using a cyclic group generator and a non-commuting action that squares to identity. Using just the generators of the group, it is clear that the unitary flip action adds a factor of two to the number of cyclic gates needed, plus the additional instance of the flip element acting alone. In this manner, the full dihedral group requires at most $2k \log_2(k)$ gates to implement.

\subsection{Comparison between Subgroups of the Symmetric Group}

Now that we have given a method to count the number of quantum gates necessary for these separability tests, we consider if there is any advantage to using a simpler group as $k$ increases. Essentially, when is the trade-off between additional gates and acceptance probability favorable towards the various tests? 

The inherent motivation behind increasing $k$ is to obtain a smaller acceptance probability, prompting the need for Theorem~\ref{th:decreasing}. Clearly, the symmetric test provides the most stringent bound (see Figure~\ref{fig:acceptance}), yet it grows quickly in terms of gate resources needed (see Figure~\ref{fig:resource}). The cyclic group, however, benefits from the simplest construction but does not decay as quickly as the full symmetric group. 

\begin{figure}[ptb]
\begin{center}
\includegraphics[
width=4.0in
]{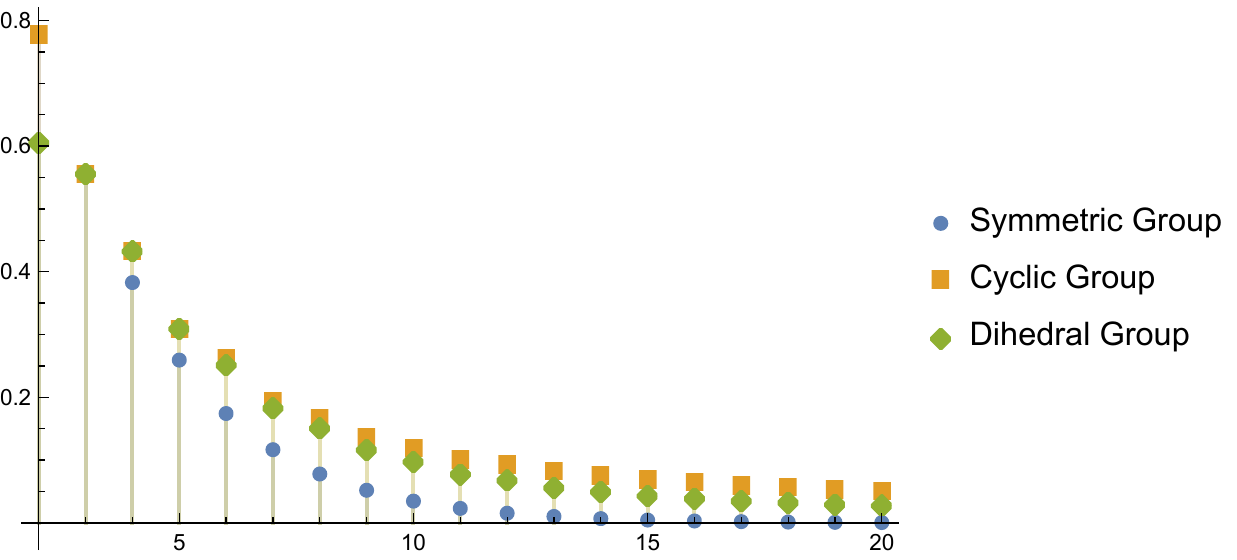}
\end{center}
\caption{Plot of the acceptance probabilities of each separability test as $k$ increases, for the symmetric group $S_k$, the cyclic group $C_k$, and the dihedral group $D_k$. For this example, we use a reduced $W$-state as an example to illustrate the algorithmic scaling for an unextendible state. For a separable state, all acceptance probabilities are equal to one.}
\label{fig:acceptance}
\end{figure}

\begin{figure}[ptb]
\begin{center}
\includegraphics[
width=4.0in
]{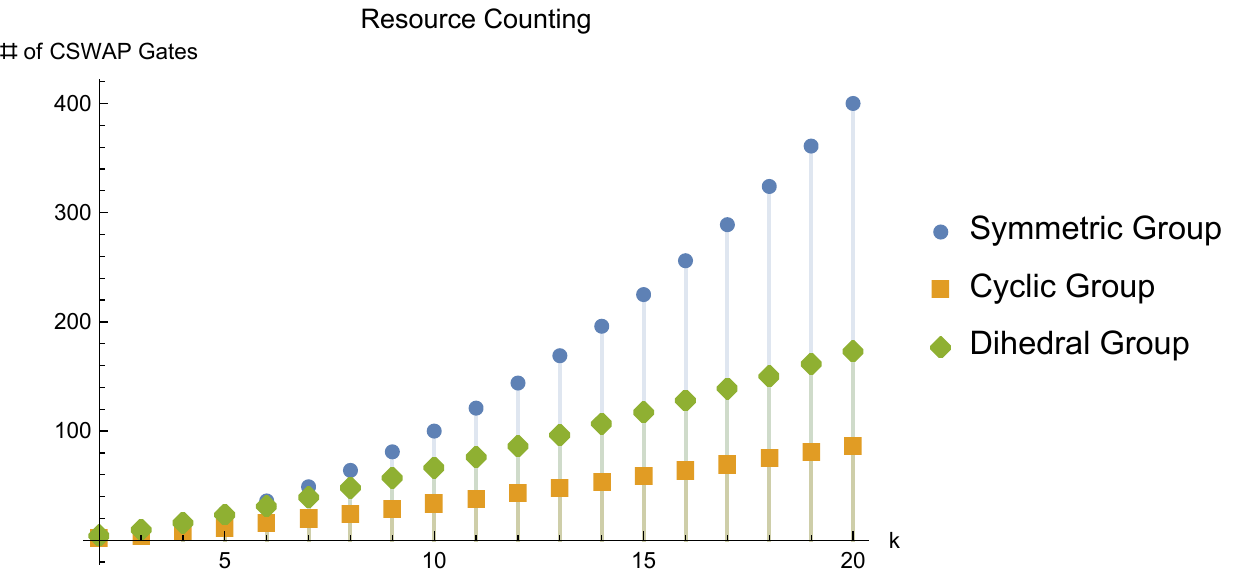}
\end{center}
\caption{Plot of the resource scaling in terms of the number of controlled-SWAP gates used for each group test as $k$ increases. We consider the symmetric group $S_k$, the cyclic group $C_k$, and the dihedral group $D_k$, and we use the gate counting methods described in the text.}
\label{fig:resource}
\end{figure}

To visualize this trade-off, we consider the quantity $ \frac{R_{\textrm{test}}}{1-P_{\textrm{acc}}}$, where $R_{\textrm{test}}$ is the number of resources needed to perform the test via the counting methods described above and $P_{\textrm{acc}}$ is the acceptance probability of the test. We employ the quantity $1-P_{\textrm{acc}}$ in the denominator, as we would like the test to have a lower acceptance probability for non-separable states, and thus the denominator will converge to one for better algorithms. This quantifier is very closely aligned with the expected runtime of the algorithm until getting a failure (it would be exactly equal to the expected runtime if we instead used circuit depth over $1-P_{\textrm{acc}}$ as the figure of merit). Thus, in comparing this quantity for the various tests, smaller values correspond to more ideal behavior from the algorithms. In Figure~\ref{fig:ratio}, we show this quantity for algorithms generated by the cyclic group and the symmetric group.

\begin{figure}[ptb]
\begin{center}
\includegraphics[
width=4.0in
]{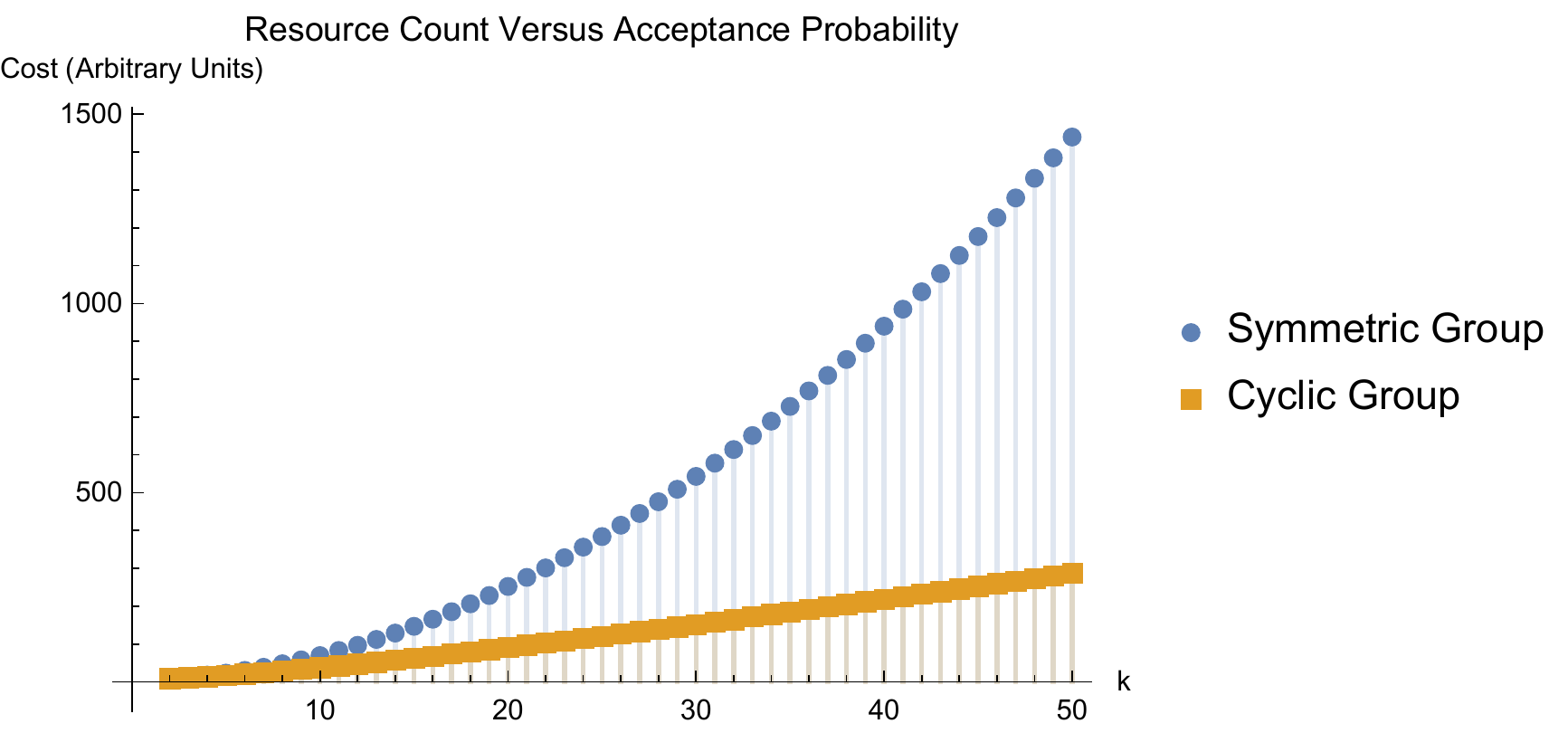}
\end{center}
\caption{We show the ratio of the resources required to rejection probability as $k$ increases. We consider here the cyclic group of $k$ elements as an example of a simple Abelian group and show it gives an advantage in terms of the resources-to-rejection metric over the test generated by the full symmetric group.}
\label{fig:ratio}
\end{figure}

Examining Figure~\ref{fig:ratio}, we see a clear difference in the performance between the tests generated by the cyclic group and the symmetric group. The plotted ratio can be thought of as resources-to-rejection, in the sense that $1-P_{\textrm{acc}}$ is the probability that a non-separable state is correctly identified---or rather, the failure rate of the algorithm for such a state. Although Figure~\ref{fig:acceptance} makes it appear that the standard test generated from $S_k$ would always be preferable, we determine from this comparison that the $C_k$ algorithm gives more benefit per gate resource. 

From this analysis, we can assert that the simpler test for separability is more cost-efficient than the full permutation test. We show in Figure~\ref{fig:acceptance} that both tests show a decrease in acceptance probability as $k$ increases, a desirable trait. However, Figure~\ref{fig:resource} shows how quickly circuit sizes grow as $k$ increases, particularly for $S_k$, which can be considered the standard test. Figure~\ref{fig:ratio} bridges these notions to show that the comparative growth in gate resources of $S_k$ outweighs the relative decrease in acceptance probability given over the $C_k$ test. We thus determine that the cyclic group $C_k$ suffices as a separability test of this nature.

\section{Conclusion}

\label{sec:conclusion}

In this chapter, we have presented several separability tests for bipartite pure states, and we have established analytical expressions for their acceptance probabilities. These expressions invariably rely on the cycle index polynomial of the group. Indeed, from a mathematical point of view, this relationship seems natural, due to the inherent combinatorics present in the algorithms. Nonetheless, these expressions give us direct insight into the performance of any separability tests generated from a finite group---which we have shown can be feasibly constructed. Using this perspective, we demonstrate that when utilizing more copies of the state under test, these tests become more stringent. Additionally, we observe that the full symmetric test using a representation of the symmetric group gives a quickly decreasing acceptance probability for an entangled state; however, for the given implementations of these algorithms, other tests can use fewer resources and still show great efficiency. 

Here, we have limited ourselves to pure bipartite states; however, we believe multipartite tests may yield results in a similar vein. For instance, a trivial implementation would be to separate all parties into individual tests and then multiply the results. There is a question, however, if more elegant algorithms exist for multipartite cases, and if interesting mathematics arise in the study of such systems. 

\subsection*{Data Availability Statement}
The datasets generated during and/or analysed during the current study are available in the GitHub repository, \sloppy \url{https://github.com/mlabo15/GeneralizedSeparability}.

\pagebreak
\singlespacing
\chapter{Lagniappe}
\doublespacing
\section{Introduction}
In Louisiana French, ``lagniappe" (pronounced ``lan-yap") means some extra inclusion or gift beyond what is strictly necessary. It is often translated as ``a little something extra", and this chapter aims to be exactly that. In what follows, we represent three results not yet published or presented in another medium, for whatever reason. The first investigates the use of density matrix exponentiation to obtain a normalized commutator. Second, we consider the impact of using all measurement outcomes in the Hamiltonian symmetry test when the group is Abelian. This is then supplemented with an alternate construction of the Hamiltonian symmetry test from Chapter~\ref{ch:symmham} using block encoding. These findings, all resulting from joint work with Dr. Mark Wilde, are included below as lagniappe.

\section{Density Matrix Exponentiation and Symmetry Testing}
We begin with an approach combining two concepts previously presented in this work---namely, testing the symmetries of quantum states from Chapter~\ref{ch:symmstates} and the nested commutator result procured in Chapter~\ref{ch:symmham}. We do so by employing density exponentiation \cite{lloyd2014quantum} as an alternative to the algorithm presented in Section~\ref{sec:efficient} previously used to test Hamiltonian symmetry.

Here, we do not delve into the details of density matrix exponentiation but will instead view it as a `black box' that can be called upon. Following Theorem~1 of \cite{kimmel2017hamiltonian}, the process of density matrix exponentiation can be summarized thusly: by using $\mathcal{O}(\frac{t^2}{\delta})$ copies of a state $\rho$, one can simulate the unitary channel $(\cdot) \to e^{-i \rho t}(\cdot)e^{i \rho t}$ up to $\delta$-error in diamond distance. That is, there exists a quantum algorithm described by the channel $\mathcal{A}_{SB_{1}\cdots B_{n}\rightarrow S}$ such that
\begin{equation}\label{eq:diamond-dist-bnd}
\sup_{\sigma_{RS}}\frac{1}{2}\left\Vert \mathcal{A}_{SB_{1}\cdots B_{n}\rightarrow S}(\sigma_{RS}\otimes\rho_{B_{1}}\otimes\cdots\otimes \rho_{B_{n}})-\left(  I_{R}\otimes e^{-i\rho t}\right)  \sigma_{RS}\left(I_{R}\otimes e^{-i\rho t}\right)  ^{\dag}\right\Vert _{1}\leq\delta,
\end{equation}
where the optimization is over every state $\sigma_{RS}$, with reference system $R$ arbitrarily large. Observe that this is a method for Hamiltonian simulation where the Hamiltonian in this case is the quantum state $\rho$. (Indeed, every state is a legitimate Hamiltonian as every state is guaranteed to be Hermitian.)

Given this computational tool, we can now discuss how density matrix exponentiation can be utilized to test a state $\rho$ for symmetry with respect to a group $G$ with projective unitary representation $\{U(g)\}_{g\in G}$. Recall the algorithm from Chapter~\ref{ch:symmham}, Section~\ref{sec:efficient}. This algorithm had an acceptance probability, via \eqref{eq:modifiedacc}, of
\begin{equation}
\operatorname{Tr}[\Pi^{G}\left(  I_{R}\otimes e^{iHt}\right)  \Phi_{RS}\left( I_{R}\otimes e^{-iHt}\right)  ^{\dag}]=\frac{1}{d\left\vert G\right\vert }\sum_{g\in G}\operatorname{Tr}[U^{\dag}(g)e^{iHt}U(g)e^{-iHt}],
\end{equation}
where $\tau=\left\Vert H\right\Vert _{\infty}t$. Our goal now is to replace the operator $e^{i Ht}$ with the density matrix exponentiation black box previously described, as shown in Figure~\ref{fig:hamcircuitagain}.

\begin{figure}[t]
\begin{center}
\includegraphics[width=\columnwidth]{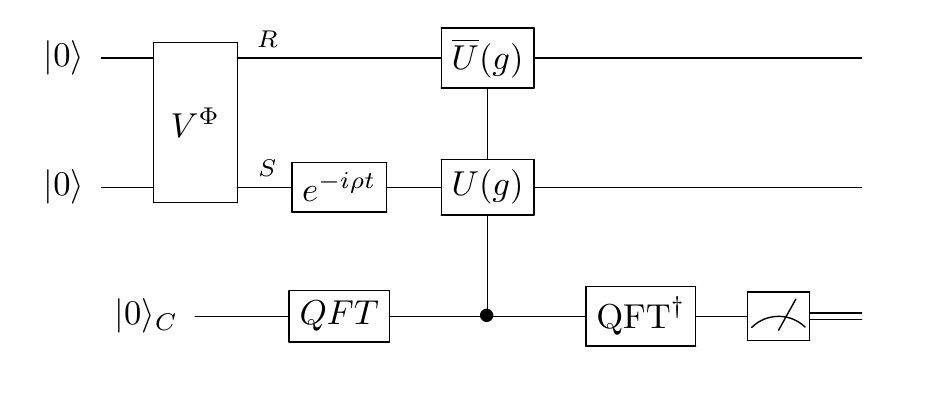}
\end{center}
\caption{Quantum circuit to test for the symmetry of a state via density matrix exponentiation channel. The unitary $V^\Phi$ generates the state $\ket{\Phi}_{RS}$, the maximally-entangled state on $RA$. The exponentiation is achieved via by $e^{-i \rho t} $, and the $U(g)$ gates are controlled on a superposition over all of the elements $g \in G$, as in \eqref{eq:superpose-gr-elems}.}
\label{fig:hamcircuitagain}
\end{figure}

 First, assume that the density matrix exponentiation is ideal. Then the acceptance probability is found by simply plugging $e^{i \rho t}$ into the above equation, giving
\begin{equation}
    \operatorname{Tr}[\Pi^{G}\left(  I_{R}\otimes e^{i\rho t}\right)  \Phi_{RS}\left(I_{R}\otimes e^{-i\rho t}\right)  ^{\dag}] \, .
\end{equation}
Invoking \eqref{eq:diamond-dist-bnd}, we find that
\begin{equation}
\left\vert \operatorname{Tr}[\Pi^{G}\left(  I_{R}\otimes e^{i\rho t}\right) \Phi_{RS}\left(  I_{R}\otimes e^{-i\rho t}\right)  ^{\dag}]-\widetilde {P}_{\text{acc}}\right\vert \leq\delta \, ,
\end{equation}
where
\begin{equation}
\widetilde{P}_{\text{acc}}=\operatorname{Tr}[\Pi^{G}\mathcal{A}_{SB_{1}\cdots B_{n}\rightarrow S}(\Phi_{RS}\otimes\rho_{B_{1}}\otimes\cdots\otimes\rho_{B_{n}})] \, .
\end{equation}
Then, by plugging in the above, employing the expansion in \eqref{eq:hamtaylorseries}, and choosing $\delta=t^{4}$ such that the error from density matrix exponentiation is of fourth order, we conclude that
\begin{equation}\label{eq:DME-expansion}
\widetilde{P}_{\text{acc}}=1-\frac{t^{2}}{2d\left\vert G\right\vert } \sum_{g\in G}\left\Vert \left[  U(g),\rho\right]  \right\Vert _{2}^{2}+O(t^{4}),
\end{equation}
given that $\tau=\left\Vert H\right\Vert _{\infty}t\leq t$ and $H=\rho$ in this case, so that $\left\Vert H\right\Vert _{\infty}\leq1$. 

Thus, the acceptance probability, in this case, includes the measure of asymmetry, the normalized commutator norm $\frac{1}{d\left\vert G\right\vert }\sum_{g\in G}\left\Vert \left[  U(g),\rho\right]  \right\Vert _{2}^{2}$. Furthermore, the number of copies of $\rho$ needed to realize the expression in \eqref{eq:DME-expansion} is $O(t^{2}/\delta)=O(1/t^{2})$. 

\section{Hamiltonian Symmetry Measurement with Abelian Groups}
In all of our algorithms, we take as a success only one potential outcome and reject all others. A natural question is whether we can make any use of these other measurement outcomes. To investigate this situation, consider the case of Hamiltonian symmetry described in Chapter~\ref{ch:symmham}. 

Imagine we construct a control register in the manner we have become accustomed to, as a superposition over group elements $g \in G$, and let us consider as input the maximally-entangled state $\ket{\Phi}$ of dimension $d$. Then the initial state of the system is 
\begin{equation}
    \frac{1}{\sqrt{|G|}}\sum_{g \in G} \ket{g}\ket{\Phi} \, .
\end{equation}
Now perform the controlled unitary $\ket{g}\bra{g} \otimes U^{\dag}(g)e^{-i H t}U(g)$ to create the state
\begin{equation}\label{eq:inputstate}
    \frac{1}{\sqrt{|G|}}\sum_{g \in G}  \ket{g} (U^{\dag}(g)e^{-i H t}U(g) \otimes \mathbb{I})\ket{\Phi} \, .
\end{equation}
(As for how to implement $e^{-iHt}$, allow us to borrow arguments given in Chapter~\ref{ch:symmham}.) So far, we have not ventured far from our original algorithm; let's remedy that. Instead of the usual measurement, we will measure in the Fourier basis given by
\begin{equation}\label{eq:fouriermmt}
    \ket{\Tilde{g}} \coloneqq \frac{1}{\sqrt{|G|}}\sum_{g\in G}e^{\frac{2\pi i g \Tilde{g}}{|G|}}\ket{g} \,.
\end{equation}

Now we must pause to address a difficulty. We have unceremoniously assumed that placing group elements into exponents is an unproblematic step. In fact, this step needs to be approached cautiously. Case in point, those familiar with Lie groups and Lie algebras may think by the factor $e^{\frac{2\pi i g \Tilde{g}}{|G|}}$ we are loosely appropriating that structure here---emphatically, this is not the case as we do not assume any sort of Lie group structure. There is an inherent assumption we make in declaring this a measurement, however, that our final outcome will be a real number. We would like to use \eqref{eq:fouriermmt} to employ some phase estimation approach, but this requires that $e^{\frac{2\pi ig\Tilde{g}}{|G|}}$ not be an operator. To facilitate this, we must require that $G$ is Abelian. Then there exists a one-dimensional representation of $G$ as discussed in Section~\ref{sec:representin}. We require that the group operation denoted by addition $g + \Tilde{g}$ be an equivalent representation to the unitary representation $\{U(g)\}_{g \in G}$ such that
\begin{equation}\label{eq:ftrepresentations}
    g + \Tilde{g} \cong U(g)U(\Tilde{g}) \, .
\end{equation}
Since we have assumed a one-dimensional representation in the exponent, the product $g\Tilde{g}$ will arise from the structure of the complex numbers. 

The probability of measuring the state in \eqref{eq:inputstate} to be in the state $\ket{\Tilde{g}}$ is given by
\begin{align}
    & \left \Vert (\bra{\Tilde{g}}_C \otimes \mathbb{I})\left ( \frac{1}{\sqrt{|G|}} \sum_{g \in G} \ket{g}_C (U^{\dag}(g)e^{-iHt}U(g) \otimes \mathbb{I})\ket{\Phi} \right) \right \Vert_2^2 \notag \\
    & =  \left \Vert \left (\frac{1}{\sqrt{|G|}}\sum_{g^{\prime} \in G} e^{-2\pi ig^{\prime}\Tilde{g}/|G|} \bra{g^{\prime}}_C \otimes \mathbb{I} \right)\left ( \frac{1}{\sqrt{|G|}} \sum_{g \in G} \ket{g}_C (U^{\dag}(g)e^{-iHt}U(g) \otimes \mathbb{I})\ket{\Phi} \right) \right \Vert_2^2 \\
    & = \frac{1}{|G|^2} \left \Vert \sum_{g,g^{\prime} \in G} e^{-2\pi ig^{\prime}\Tilde{g}/|G|} \langle g^{\prime}\ket{g}_C  \left ( (U^{\dag}(g)e^{-iHt}U(g) \otimes \mathbb{I}) \ket{\Phi} \right) \right \Vert_2^2 \\
    & = \frac{1}{|G|^2} \left \Vert \sum_{g \in G} e^{-2\pi ig\Tilde{g}/|G|}  \left ( (U^{\dag}(g)e^{-iHt}U(g) \otimes \mathbb{I}) \ket{\Phi} \right) \right \Vert_2^2 \\
    & = \frac{1}{|G|^2} \left ( \sum_{g^\prime \in G} e^{2\pi ig^{\prime}\Tilde{g}/|G|} \bra{\Phi} \left (U^{\dag}(g^\prime)e^{iHt}U(g^\prime) \otimes \mathbb{I} \right) \right ) \left ( \sum_{g \in G} e^{-2\pi ig\Tilde{g}/|G|}  \left ( U^{\dag}(g)e^{-iHt}U(g) \otimes \mathbb{I} \right ) \ket{\Phi} \right ) \\
    & = \frac{1}{|G|^2} \sum_{g,g^\prime \in G} e^{2\pi i \Tilde{g}(g^\prime - g)/|G|} \bra{\Phi} \left ( U^\dag(g^\prime)e^{iHt}U(g^\prime)U^{\dag}(g)e^{-iHt}U(g) \otimes \mathbb{I} \right ) \ket{\Phi} \\
    & = \frac{1}{d|G|^2} \sum_{g,g^\prime \in G} e^{2\pi i \Tilde{g}(g^\prime - g)/|G|} \tr[U^\dag(g^\prime)e^{iHt}U(g^\prime)U^{\dag}(g)e^{-iHt}U(g)] \\
    & = \frac{1}{d|G|^2} \sum_{g,g^\prime \in G} e^{2\pi i \Tilde{g}(g^\prime - g)/|G|} \tr[U^\dag(g^\prime-g)e^{iHt}U(g^\prime- g)e^{-iHt}] \label{eq:ftmessy}
\end{align}
Now we use the assumptions in \eqref{eq:ftrepresentations} to allow us to rewrite \eqref{eq:ftmessy} in terms of $h \coloneqq g^\prime - g$, giving us
\begin{align}
    & \frac{1}{d|G|^2} \sum_{g,h \in G} e^{2\pi i \Tilde{g}h/|G|} \tr[U^\dag(h)e^{iHt}U(h)e^{-iHt}] \notag \\
    & =\frac{1}{d|G|} \sum_{h \in G} e^{2\pi i \Tilde{g}h/|G|} \tr[U^\dag(h)e^{iHt}U(h)e^{-iHt}] \\
    & = \frac{1}{d|G|} \sum_{g \in G} e^{2\pi i \Tilde{g}g/|G|} \tr[U^\dag(g)e^{iHt}U(g)e^{-iHt}] \, .
\end{align}

\noindent Thus, we have derived the probability of observing outcome $\ket{\Tilde{g}}$ as 
\begin{equation}
    \operatorname{Pr}[\Tilde{g}]= \frac{1}{d|G|} \sum_{g \in G} e^{2\pi i \Tilde{g}g/|G|} \tr[U^\dag(g)e^{iHt}U(g)e^{-iHt}] \, .
\end{equation}
This result bears some resemblance to the case where we only consider the outcome $\ket{0}$, which serves as a nice sanity check. We can also note that the inverse Fourier transform of $\operatorname{Pr}[\Tilde{g}]$ is given by
\begin{equation}\label{eq:iftotoc}
    \left \{ \frac{1}{d} \tr[U^{\dag}(h)e^{iHt}U(h)e^{-i H t}] \right \}_{h \in G} \, ,
\end{equation}
where each element of \eqref{eq:iftotoc} is a group-averaged out-of-time-order correlator (OTOC) \cite{de2019spectral,swingle2016measuring,hashimoto2017out}.

Now, how can such a result be used practically? Suppose we perform this measurement many times, say for $N$ trials, and keep track of the number of times each $\ket{\Tilde{g}}$ detection as $N(\Tilde{g})$. Then the empirical distribution of $\frac{N(\Tilde{g})}{N}$ will converge to $\operatorname{Pr}[\Tilde{g}]$ as $N$ becomes large. Of course, we need to determine how large $N$ should be for this to happen. For this purpose, allow us to assign a random variable $Y^h_j$, taken as a value $ e^{-2 \pi i \Tilde{g}h/|G|}$, if the outcome of the $j$-th trial is equal to $\Tilde{g}$. This generates a set of $|G|$ random variables $\{Y^h_j\}_{h \in G}$ for each trial, $N|G|$ in total. The expectation of the random variable $Y^h_j$ is 
\begin{align}
    \mathbb{E}[Y^h_j] & = \sum_{\Tilde{g} \in G}\operatorname{Pr}[\Tilde{g}]e^{-2\pi i\Tilde{g}h/|G|} \notag \\
    & = \frac{1}{d|G|}\sum_{g,\Tilde{g} \in G}e^{-2\pi i\Tilde{g}g/|G|}\tr[U^\dag(g)e^{iHt}U(g)e^{-iHt}] e^{-2\pi i\Tilde{g}h/|G|} \\
    & = \frac{1}{d}\sum_{g \in G}\tr[U^\dag(g)e^{iHt}U(g)e^{-iHt}]\frac{1}{|G|} \sum_{\Tilde{g} \in G} e^{-2\pi i\Tilde{g}(g-h)/|G|} \\
    & = \frac{1}{d}\sum_{g \in G}\tr[U^\dag(g)e^{iHt}U(g)e^{-iHt}] \delta_{g,h} \\
    & = \frac{1}{d} \tr[U^\dag(h)e^{iHt}U(h)e^{-iHt}] \, .
\end{align}
Following the same procedure as in Section~\ref{sec:hamaccept}, this can be expanded explicitly in terms of the commutator to show that
\begin{align}
    \mathbb{E}[Y^h_j] &= 1 + \frac{t^2}{d} (\operatorname{Tr}[H U(h) ^\dag H U(h)] - \operatorname{Tr}[H^2]) \notag \\
    & \ \ \ \ \ + \frac{it^3}{2}(\operatorname{Tr}[U(h)^\dag H^2 U(h) H] - \operatorname{Tr}[U(h) ^\dag H U(h) H^2]) + \mathcal{O}(\tau^4) \\
    & = 1 - \frac{t^2}{2d}\left\Vert [U(h), H]\right\Vert_2^2 + \mathcal{O}(\tau^4) \, ,
\end{align}
with $\tau \coloneqq \left\Vert H \right\Vert_\infty t < 1$.
Then the average $\overline{Y^h_N} \coloneqq \frac{1}{N} \sum_{j=1}^N Y_j^h$ is an unbiased estimator of the corresponding OTOC from \eqref{eq:iftotoc}. The Hoeffding Bound \cite{hoeffding1994probability} tells us that to obtain 
\begin{equation} \label{eq:lagnestimator}
    \operatorname{Pr}[|\overline{Y^h_N}-\mathbb{E}[Y^h_j]| \leq \epsilon ] \geq 1 - \delta
\end{equation}
$N$ must satisfy
\begin{equation} \label{eq:Nconstraint}
    N \geq \frac{4}{\epsilon^2}\ln(\frac{4}{\delta}) \, .
\end{equation}

In such a manner, we can estimate the variables $Y^h_j$. Unlike the results presented in Chapter~\ref{ch:symmham}, this procedure requires a further restriction to Abelian groups, as that allows for a natural one-dimensional representation. However, an interesting future question would be to examine if certain non-Abelian groups (for example, the dihedral groups) might allow for an analogous procedure. 

\section{Block-Encoded Hamiltonian Symmetry}

Once again, let us revisit the Hamiltonian symmetry test given in Chapter~\ref{ch:symmham}. Suppose, instead of a Trotterization, we have a block-encoding of a Hamiltonian into a unitary matrix. This formalism is a useful tool used to generalize how matrices can be implemented for use in quantum algorithms \cite{low2019hamiltonian,gilyen2019quantum}, and so we review it briefly here. 

Block-encoding allows a complex matrix $A$ with $\left\Vert A \right\Vert_\infty \leq 1$ to be represented as the upper-left block of a unitary matrix. That is,
\begin{equation}
        U =
    \begin{pmatrix} 
    A & \cdot\\
    \cdot & \cdot
    \end{pmatrix}\, ,
\end{equation}
or, equivalently, $A = (\bra{0}\otimes \mathbb{I}) U (\ket{0} \otimes \mathbb{I})$. Then $U$ is called a block-encoding of $A$.
The unitary U can be thought of as a probabilistic implementation of the linear map realized by $A$. Suppose $A$ acts on $a$ qubits; then, given an
$a$-qubit input state $\ket{\phi}$, acting with U on the state $\ket{0}\ket{phi}$ and post-selecting on a measurement of $\ket{0}$ on the first system will guarantee that the second system contains a state proportional to $A\ket{\psi}$.

Now, let us consider a unitary $B$ a block-encoding of a Hamiltonian $H$ of the form
\begin{align}
    B =
    \begin{pmatrix} 
    H & \cdot\\
    \cdot & \cdot
    \end{pmatrix}\, ,
\end{align}
such that 
\begin{equation}\label{eq:lblockdef}
    (\bra{0}_A \otimes \mathbb{I})B(\ket{0}_A \otimes H)\, ,
\end{equation}
where we require that $\left\Vert H \right\Vert_\infty \leq 1$.

How can we test the symmetry of H using this construction? Let us begin by creating an analogous block encoding of our unitary representation. This can be done via the new representation $\hat{U}(g)$ such that
\begin{equation}
    \hat{U}(g) \coloneqq \ket{0}\!\bra{0}_A \otimes U(g) + (\mathbb{I}_A - \ket{0}\!\bra{0}_A) \otimes \mathbb{I} \, .
\end{equation}
Next, promote $\hat{U}(g)$ to a controlled unitaries $V$, with the control state $\ket{+}_{C} \coloneqq \frac{1}{\sqrt{|G|}} \sum_{g \in G}\ket{g}$ such that
\begin{equation}\label{eq:lVpromote}
    V \coloneqq \sum_{g \in G} \ket{g}\!\bra{g}_C \otimes (\hat{U}(g)) \, .
\end{equation}

Now we can construct a quantum algorithm to test the symmetry of this approach in much the same vein as in Chapter~\ref{ch:symmham} and achieve familiar results. Once again, we consider the initial state of the system to consist of a control register initialized to $\ket{+}_C$ and an input of the maximally-entangled state $\ket{\Phi}$, but we augment this with an ancillary state $\ket{0}_A$. Then the initial state of the system is
\begin{equation}
    \ket{+}_C \ket{0}_A \ket{\Phi} \, .
\end{equation}
Then we mimic the action of the algorithm in Chapter~\ref{ch:symmham} replacing $U(g)$ with $V$ and $e^{iHt}$ with $\mathbb{I}_C \otimes B$. That is, we act first with $V$ then  $\mathbb{I}_C \otimes B$ then $V^\dag$. Then the state of the system is
\begin{equation}
    V^\dag(\mathbb{I}_C \otimes B) V\ket{+}_C \ket{0}_A \ket{\Phi} \, .
\end{equation}
Finally, measure $C$ and $A$, and accept only if the outcome $\ket{+}_C\ket{0}_A$ is observed. This means the acceptance probability is given by
\begin{align}
    & \left\Vert \bra{+}_C\bra{0}_A V^\dag(\mathbb{I}_C \otimes B) V\ket{+}_C \ket{0}_A \ket{\Phi} \right\Vert_2^2 \notag \\
    & = \left\Vert \frac{1}{\sqrt{|G|}} \sum_{g^\prime \in G} \bra{g^\prime}_C\bra{0}_A V^\dag(\mathbb{I}_C \otimes B) V \left( \frac{1}{\sqrt{|G|}} \sum_{g \in G}\ket{g}_C \ket{0}_A \ket{\Phi} \right) \right\Vert_2^2 \\
    & = \frac{1}{|G|^2} \left\Vert \sum_{g,g^\prime \in G} \bra{g^\prime}_C\bra{0}_A V^\dag(\mathbb{I}_C \otimes B) V\ket{g}_C \ket{0}_A \ket{\Phi} \right\Vert_2^2 \label{eq:lblockstep1} \, .
\end{align}
Now, in \eqref{eq:lblockstep1}, we can substitute in the definition of $V$ from \eqref{eq:lVpromote}. (Respectively, we can also easily implement the Hermitian conjugate of \eqref{eq:lVpromote} for $V^\dag$.) In doing so, we will gain two more summations over the group elements of $G$; however, using the assumption of orthogonality in our basis states $\ket{g}$, these all can be combined and simplified into a single sum over $g$. Furthermore, we can easily collapse the measurement over the $A$ system using \eqref{eq:lblockdef}. This procedure will thus simplify \eqref{eq:lblockstep1} into
\begin{align}
\frac{1}{|G|^2} &  \left\Vert \sum_{g,g^\prime \in G} \bra{g^\prime}_C\bra{0}_A V^\dag(\mathbb{I}_C \otimes B) V\ket{g}_C \ket{0}_A \ket{\Phi}\right\Vert_2^2 \notag \\
    & =  \frac{1}{|G|^2} \left\Vert \sum_{g \in G} U^\dag (g) H U(g) \ket{\Phi} \right\Vert_2^2  \\
    & = \frac{1}{|G|^2} \bra{\Phi} \left( \sum_{g^\prime \in G}U^\dag (g^\prime) H U(g^\prime) \right) \left( \sum_{g \in G}U^\dag (g) H U(g) \right) \ket{\Phi} \\
    & = \frac{1}{|G|^2} \sum_{g,g^\prime \in G} \bra{\Phi} U^\dag (g^\prime) H U(g^\prime)U^\dag (g) H U(g) \ket{\Phi} \\
    & =  \frac{1}{d|G|^2} \sum_{g,g^\prime \in G} \tr[U^\dag (g^\prime) H U(g^\prime)U^\dag (g) H U(g)] \\
    & =\frac{1}{d|G|^2} \sum_{g,g^\prime \in G} \tr[U(g)U^\dag(g^\prime) H U(g^\prime)U^\dag (g) H ] \, ,
\end{align}
where in the last step we use cyclicity of trace. Next, we will use the group homomorphism property of representations and the group operation $g^\prime g^{-1} = h\ \in G$ to continue.
\begin{align}
    & =\frac{1}{d|G|^2} \sum_{g,g^\prime \in G} \tr[U^\dag(g^\prime g^{-1}) H U(g^\prime g^{-1}) H ] \\
    & =\frac{1}{d|G|^2} \sum_{g,h \in G} \tr[U^\dag(h) H U(h) H ] \\
    & =\frac{1}{d|G|} \sum_{h \in G} \tr[U^\dag(h) H U(h) H ] \, .
\end{align}
Finally, we arrive at the result that our acceptance probability $P_{\textrm{acc}}$ is given by
\begin{align}
    P_{\textrm{acc}} &= \frac{1}{d|G|} \sum_{g \in G} \tr[U^\dag(g) H U(g) H ] \label{eq:block-enc-prob1} \, ,
\end{align}
where $d$ is the dimension as usual. This result in \eqref{eq:block-enc-prob1} should look very familiar, as it bares great resemblance to our original result in \eqref{eq:modifiedacc}; however, without the exponential present, this equation only provides the second order term of \ref{eq:accept-prob-comm}, which is the lowest-order term in which the commutator appears. As an additional note, whenever $H$ is unitary, no block-encoding is necessary. Thus, we have given an alternate approach to our algorithm in Chapter~\ref{ch:symmham} when the Hamiltonian simulation is available via block encoding.

\section{Conclusion}

\doublespacing

With this, we conclude the lagniappe section of this work. In this chapter, we have given an additional test for asymmetry of a quantum state and two expansions on the work of Chapter~\ref{ch:symmham}. While not major contributions independently, these results are nonetheless interesting in their own right, and we hope to continue expanding upon these findings in the future. For the time being, however, we can now progress to the conclusion of this work.

\pagebreak
\singlespacing
\chapter{Conclusion}
\doublespacing
     Throughout this thesis, we have demonstrated the intersection of symmetry property testing and quantum computational algorithms. We have introduced various types of symmetry and given their relevant algorithmic tests. We have shown that these tasks can be computationally difficult for classical computers and sometimes quantum computers as well. Nonetheless, we maintain that these results are of interest to both quantum information applications in particular and the field of physics in general.
     
     In Chapter 1 of this thesis, we began with an introduction to background terminology in mathematics and quantum computation. We defined relevant terms in group theory, representation theory, and quantum information in order to make this work self-contained. We further defined some notions of symmetry to be used in future chapters.
     
     In Chapter 2, we gave a quantum computing algorithm to test Hamiltonian dynamics for symmetry with respect to a group. The acceptance probability of this algorithm depended explicitly on the familiar expression of symmetry from quantum mechanics. Furthermore, we were able to show that this algorithm is DQC1-complete, making it unlikely to be efficiently calculable on classical computers. We further expand on these results by giving examples of relevant Hamiltonians calculated using the IBM quantum simulator.
     
     In Chapter 3, we proposed algorithms to test for various notions of symmetry, including $G$-symmetric extendibility and $G$-Bose extendibility. The acceptance probabilities of these algorithms are equal to the maximum symmetric fidelity of their respective symmetry, thus endowing these measures with operational meaning. Furthermore, we established resource theories of asymmetry corresponding to the symmetries we have tested. 
     
     In Chapter 4, we followed up on a specific subject introduced in the previous chapter---that of separability tests for pure bipartite states. We established acceptance probabilities for a general class of separability tests derived from $G$-Bose symmetric extendible tests, specifically showing a reliance on the cycle index polynomial of the group. This result then allowed us to compare the traditional separability test to a simpler cyclic group test, which was shown to be more resource efficient in terms of both depth and number of gates.
     
     Finally, in Chapter 5, we included three additional but related results. We invoked density matrix exponentiation to test the symmetry of a quantum state. We discussed reconsidered the test of Hamiltonian symmetry in which an Abelian group is being used and all measurement outcomes are considered. For the former, we showed that a simple alteration to the permutation symmetry test would suffice to account for this change. For the latter, we demonstrated how such an approach could be used to estimate out-of-time-order correlators. Finally, we gave an alternate approach to the algorithm in Chapter 2 when a block encoding of a Hamiltonian is available.
     
Further detailed calculations for the primary chapters can be found in the appendices, and all code used for the various projects contained within can be found at the appropriate links at the end of the respective chapters.

\pagebreak
\singlespacing
\appendix
\chapter{Supplementary Material for Chapter 2}
\doublespacing

The appendices serve primarily to imprison long proofs. Importantly, these proofs give context and credence to the results presented therein. As such, we include them here.

\section{Acceptance Probability of the First Hamiltonian Symmetry Test}

\label{app:accept}

To see that the acceptance probability of the first Hamiltonian symmetry test in Figure~\ref{fig:originalcircuit}
is given by $\operatorname{Tr}[\Pi^{G}\Phi_{RB}^{t}]$, consider that the state
just before the measurement is as follows:
\begin{equation}
\frac{1}{\sqrt{\left\vert G\right\vert }}\sum_{g\in G}\ket{g}_{C}\left(
\overline{U}_{R}(g)\otimes U_{B}(g)\right)  \ket{\Phi^{t}}_{RB}.
\end{equation}
Then the acceptance probability is given by
\begin{align}
& \left\Vert
\begin{array}
[c]{c}
\left(  \bra{+}_{C}\otimes \mathbb{I}_{RB}\right)  \times\\
\left(  \frac{1}{\sqrt{\left\vert G\right\vert }}\sum_{g\in G}\ket{g}
_{C}\left(  \overline{U}_{R}(g)\otimes U_{B}(g)\right)  \ket{\Phi^{t}}
_{RB}\right)
\end{array}
\right\Vert _{2}^{2}\nonumber\\
& =\left\Vert
\begin{array}
[c]{c}
\left(  \frac{1}{\sqrt{\left\vert G\right\vert }}\sum_{g^{\prime}\in G}\langle
g^{\prime}|_{C}\otimes \mathbb{I}_{RB}\right)  \times\\
\left(  \frac{1}{\sqrt{\left\vert G\right\vert }}\sum_{g\in G}\ket{g}
_{C}\left(  \overline{U}_{R}(g)\otimes U_{B}(g)\right)  \ket{\Phi^{t}}
_{RB}\right)
\end{array}
\right\Vert _{2}^{2}\\
& =\left\Vert \frac{1}{\left\vert G\right\vert }\sum_{g^{\prime},g\in
G}\langle g^{\prime}\ket{g}_{C}\left(  \overline{U}_{R}(g)\otimes
U_{B}(g)\right)  \ket{\Phi^{t}}_{RB}\right\Vert _{2}^{2}\\
& =\left\Vert \frac{1}{\left\vert G\right\vert }\sum_{g\in G}\left(
\overline{U}_{R}(g)\otimes U_{B}(g)\right)  \ket{\Phi^{t}}_{RB}\right\Vert
_{2}^{2}\\
& =\left\Vert \Pi^{G}\ket{\Phi^{t}}_{RB}\right\Vert _{2}^{2}\\
& =\operatorname{Tr}[\Pi^{G}\Phi_{RB}^{t}].
\end{align}

Now we show that 
\begin{equation}\label{eq:start}
    P_{\text{acc}} = \operatorname{Tr}[\Pi^G \Phi_{RB}^{t}]
\end{equation}
is equal to the following expression:
\begin{equation}
    P_{\text{acc}}=\frac{1}{d |G|}\sum_{g \in G}\operatorname{Tr}[U^\dag (g) e^{i H t} U(g) e^{-i H t}]\, .
\end{equation}
To see this, we begin with equation \eqref{eq:start} and note that, using the cyclicity of the trace, it can be rewritten as
\begin{equation}
    \operatorname{Tr}[(\mathbb{I}_R \otimes e^{iHt}) \Pi^G (\mathbb{I}_R \otimes e^{-iHt}) \Phi_{RA}].
\end{equation}
We can now substitute the definition of the projector given in \eqref{eq:projector_app}, giving
\begin{multline}
    \frac{1}{|G|}\operatorname{Tr}[(\mathbb{I}_R \otimes e^{iHt}) \sum_{g \in G} \overline{U}_R(g) \otimes U_A(g) (\mathbb{I}_R \otimes e^{-iHt}) \Phi_{RA}], \\
    = \frac{1}{|G|}\sum_{g \in G} \operatorname{Tr}[(\mathbb{I}_R \otimes e^{iHt}) \overline{U}_R(g) \otimes U_A(g) (\mathbb{I}_R \otimes e^{-iHt}) \Phi_{RA}],
\end{multline}
where the second equality follows from the linearity of the trace. 

We now want to employ the transpose trick:
\begin{equation}
    \mathbb{I}_R \otimes U_A \ket{\Phi}_{RA} = U^{T}_R \otimes \mathbb{I}_A \ket{\Phi}_{RA},
\end{equation}
where $T$ denotes the transpose. The description of this action can be easily interpreted through the language of tensor networks \cite{biamonte2017tensor}. Using this relation, we can rewrite the above as
\begin{equation}
    \frac{1}{|G|}\sum_{g \in G} \operatorname{Tr}[U^\dag_A(g)(\mathbb{I}_R \otimes e^{iHt}) U_A(g) (\mathbb{I}_R \otimes e^{-iHt}) \Phi_{RA}].
\end{equation}

We can now evaluate the trace as a composition of partial traces ($\operatorname{Tr} [\cdot] =\operatorname{Tr}_{RA}[\cdot] = \operatorname{Tr}_A [\operatorname{Tr}_R[\cdot]]$). Computing the trace on $R$ first, we obtain
\begin{equation}
    \frac{1}{d |G|}\sum_{g \in G} \operatorname{Tr}[U^\dag_A(g) e^{iHt} U_A(g) e^{-iHt} ],
\end{equation}
which is exactly  \eqref{eq:modifiedacc}. 

\section{Exact Expansion of the Acceptance Probability of the First (Efficient) Hamiltonian Symmetry Test}

\label{app:exact-expansion}

Here we first prove that the following equality holds:
\begin{equation}
\frac{1}{d\left\vert G\right\vert }\sum_{g\in G}\operatorname{Tr}[U^{\dag}(g)e^{iHt}U(g)e^{-iHt}] = \frac{1}{d}\sum_{n=0}^{\infty}\frac{\left(  -1\right)  ^{n}}{\left(2n \right)  !}t^{2n} f(n,k,H,G),
\end{equation}
where
\begin{equation}
f(n,k,H,G)\coloneqq \sum_{k=0}^{n}\binom{2n}{k}\left(  2-\delta_{k,n}\right)  \left(  -1\right)
^{k}\operatorname{Tr}[\mathcal{T}_{G}(H^{2n-k})H^{k}]
\end{equation}
and the group twirl $\mathcal{T}_{G}$ is defined as
\begin{equation}
\mathcal{T}_{G}(X)\coloneqq \frac{1}{|G|} \sum_{g\in G}U(g)XU^{\dag}(g).
\end{equation}
After that, we establish the expansion in \eqref{eq:acc_prob_beauty}.

Consider that
\begin{align}
\frac{1}{\left\vert G\right\vert }\sum_{g}\operatorname{Tr}[U^{\dag}(g)e^{iHt}U(g)e^{-iHt}] &  =\operatorname{Tr}[\mathcal{T}_{G}(e^{iHt})e^{-iHt}]\\
&  =\sum_{\ell=0}^{\infty}\frac{\left(  it\right)  ^{\ell}}{\ell
!}\operatorname{Tr}[\mathcal{T}_{G}(H^{\ell})e^{-iHt}]\\
&  =\sum_{\ell,m=0}^{\infty}\frac{\left(  it\right)  ^{\ell}\left(
-it\right)  ^{m}}{\ell!m!}\operatorname{Tr}[\mathcal{T}_{G}(H^{\ell})H^{m}]\\
&  =\sum_{\ell,m=0}^{\infty}\frac{\left(  it\right)  ^{\ell+m}\left(
-1\right)  ^{m}}{\ell!m!}\operatorname{Tr}[\mathcal{T}_{G}(H^{\ell})H^{m}]\\
&  =\sum_{n=0}^{\infty}\sum_{k=0}^{n}\frac{\left(  it\right)  ^{n}\left(
-1\right)  ^{k}}{n-k!k!}\operatorname{Tr}[\mathcal{T}_{G}(H^{n-k})H^{k}]\\
&  =\sum_{n=0}^{\infty}\left(  it\right)  ^{n}\sum_{k=0}^{n}\frac{\left(
-1\right)  ^{k}}{n-k!k!}\operatorname{Tr}[\mathcal{T}_{G}(H^{n-k})H^{k}]
\end{align}

Let us consider the term
\begin{equation}
\sum_{k=0}^{n}\frac{\left(  -1\right)  ^{k}}{n-k!k!}\operatorname{Tr}[\mathcal{T}_{G}(H^{n-k})H^{k}].
\end{equation}
Suppose that $n$ is odd. Then consider, with the substitution $\ell=n-k$, that
\begin{align}
  \sum_{k=0}^{n}\frac{\left(  -1\right)  ^{k}}{n-k!k!}\operatorname{Tr} [\mathcal{T}_{G}(H^{n-k})H^{k}] &  =\sum_{k=0}^{\left(  n-1\right)  /2}\frac{\left(  -1\right)  ^{k}}
{n-k!k!}\operatorname{Tr}[\mathcal{T}_{G}(H^{n-k})H^{k}]\nonumber\\
&  \qquad+\sum_{k=\left(  n+1\right)  /2}^{n}\frac{\left(  -1\right)  ^{k}
}{n-k!k!}\operatorname{Tr}[\mathcal{T}_{G}(H^{n-k})H^{k}]\\
&  =\sum_{k=0}^{\left(  n-1\right)  /2}\frac{\left(  -1\right)  ^{k}}
{n-k!k!}\operatorname{Tr}[\mathcal{T}_{G}(H^{n-k})H^{k}]\nonumber\\
&  \qquad+\sum_{\ell=0}^{\left(  n-1\right)  /2}\frac{\left(  -1\right)
^{n-\ell}}{\ell!n-\ell!}\operatorname{Tr}[\mathcal{T}_{G}(H^{\ell})H^{n-\ell
}]\\
&  =\sum_{k=0}^{\left(  n-1\right)  /2}\frac{\left(  -1\right)  ^{k}}
{n-k!k!}\operatorname{Tr}[\mathcal{T}_{G}(H^{n-k})H^{k}]\nonumber\\
&  \qquad+\sum_{\ell=0}^{\left(  n-1\right)  /2}\frac{\left(  -1\right)
^{n-\ell}}{\ell!n-\ell!}\operatorname{Tr}[\mathcal{T}_{G}(H^{n-\ell})H^{\ell
}]\\
&  =\sum_{k=0}^{\left(  n-1\right)  /2}\frac{\left(  -1\right)  ^{k}}
{n-k!k!}\operatorname{Tr}[\mathcal{T}_{G}(H^{n-k})H^{k}]\nonumber\\
&  \qquad+\sum_{k=0}^{\left(  n-1\right)  /2}\frac{\left(  -1\right)  ^{n-k}
}{k!n-k!}\operatorname{Tr}[\mathcal{T}_{G}(H^{n-k})H^{k}]\\
&  =\sum_{k=0}^{\left(  n-1\right)  /2}\frac{\left(  -1\right)  ^{k}+\left(
-1\right)  ^{n-k}}{n-k!k!}\operatorname{Tr}[\mathcal{T}_{G}(H^{n-k})H^{k}]\\
&  =0.
\end{align}
The second-to-last line follows from the fact that the twirl is its own
adjoint and from cyclicity of trace. For the last line, consider that $\left(
-1\right)  ^{k}+\left(  -1\right)  ^{n-k}=0$ for all $k\in\{0,\ldots, (n-1)/2\}$ when $n$ is odd.

Suppose instead that $n$ is even. Then setting $\ell=n-k$ we find that
\begin{align}
 &\sum_{k=0}^{n}\frac{\left(  -1\right)  ^{k}}{n-k!k!}\operatorname{Tr}[\mathcal{T}_{G}(H^{n-k})H^{k}]\nonumber\\
 &=\sum_{k=0}^{n/2}\frac{\left(  -1\right)  ^{k}}{n-k!k!}\operatorname{Tr}[\mathcal{T}_{G}(H^{n-k})H^{k}] +\sum_{k=n/2+1}^{n}\frac{\left(  -1\right)  ^{k}}{n-k!k!} \operatorname{Tr}[\mathcal{T}_{G}(H^{n-k})H^{k}]\\
&  =\sum_{k=0}^{n/2}\frac{\left(  -1\right)  ^{k}}{n-k!k!}\operatorname{Tr}
[\mathcal{T}_{G}(H^{n-k})H^{k}] +\sum_{\ell=0}^{n/2-1}\frac{\left(  -1\right)  ^{n-\ell}}{\ell
!n-\ell!}\operatorname{Tr}[\mathcal{T}_{G}(H^{\ell})H^{n-\ell}]\\
&  =\sum_{k=0}^{n/2}\frac{\left(  -1\right)  ^{k}}{n-k!k!}\operatorname{Tr}
[\mathcal{T}_{G}(H^{n-k})H^{k}] +\sum_{k=0}^{n/2-1}\frac{\left(  -1\right)  ^{n-k}}{k!n-k!}
\operatorname{Tr}[\mathcal{T}_{G}(H^{k})H^{n-k}]\\
&  =\sum_{k=0}^{n/2}\frac{\left(  -1\right)  ^{k}}{n-k!k!}\operatorname{Tr}
[\mathcal{T}_{G}(H^{n-k})H^{k}] +\sum_{k=0}^{n/2-1}\frac{\left(  -1\right)  ^{n-k}}{k!n-k!}
\operatorname{Tr}[\mathcal{T}_{G}(H^{n-k})H^{k}]\\
&  =\frac{\left(  -1\right)  ^{n/2}}{\left(  n/2!\right)  ^{2}}
\operatorname{Tr}[\mathcal{T}_{G}(H^{n/2})H^{n/2}] +\sum_{k=0}^{n/2-1}\frac{\left(  -1\right)  ^{k}}{n-k!k!}\operatorname{Tr}[\mathcal{T}_{G}(H^{n-k})H^{k}]\nonumber\\
&  \qquad+\sum_{k=0}^{n/2-1}\frac{\left(  -1\right)  ^{n-k}}{k!n-k!}
\operatorname{Tr}[\mathcal{T}_{G}(H^{n-k})H^{k}]\\
&  =\frac{\left(  -1\right)  ^{n/2}}{\left(  n/2!\right)  ^{2}}
\operatorname{Tr}[\mathcal{T}_{G}(H^{n/2})H^{n/2}] +\sum_{k=0}^{n/2-1}\frac{\left(  -1\right)  ^{k}+\left(  -1\right)^{n-k}}{n-k!k!}\operatorname{Tr}[\mathcal{T}_{G}(H^{n-k})H^{k}]\\
&  =\frac{\left(  -1\right)  ^{n/2}}{\left(  n/2!\right)  ^{2}} \operatorname{Tr}[\mathcal{T}_{G}(H^{n/2})H^{n/2}] +\sum_{k=0}^{n/2-1}\frac{2\left(  -1\right)  ^{k}}{n-k!k!}
\operatorname{Tr}[\mathcal{T}_{G}(H^{n-k})H^{k}] \\
&=\sum_{k=0}^{n/2}\frac{\left(  2-\delta_{k,n/2}\right)  \left(  -1\right)^{k}}{n-k!k!}\operatorname{Tr}[\mathcal{T}_{G}(H^{n-k})H^{k}]\, .
\end{align}
Then the overall formula is given by
\begin{align}
 & \frac{1}{d\left\vert G\right\vert }\sum_{g\in G}\operatorname{Tr}[U^{\dag}(g)e^{iHt}U(g)e^{-iHt}] \nonumber \\
 &  =\frac{1}{d}\sum_{n=0}^{\infty}\left(  -1\right)  ^{n}t^{2n}
  \sum_{k=0}^{n}\frac{\left(  2-\delta_{k,n}\right)  \left(  -1\right) ^{k}}{2n-k!k!}\operatorname{Tr}[\mathcal{T}_{G}(H^{2n-k})H^{k}]\\
&  =\frac{1}{d}\sum_{n=0}^{\infty}\frac{\left(  -1\right)  ^{n}}{\left(2n\right)  !}t^{2n} \sum_{k=0}^{n}\binom{2n}{k}\left(  2-\delta_{k,n}\right)  \left( -1\right)  ^{k}\operatorname{Tr}[\mathcal{T}_{G}(H^{2n-k})H^{k}].
\end{align}

Now let us establish the expansion in \eqref{eq:acc_prob_beauty}. By applying the Baker--Campbell--Hausdorff formula and the nested commutator
in \eqref{eq:def-nested-comm}, consider that
\begin{align}
\operatorname{Tr}[U^{\dag}(g)e^{iHt}U(g)e^{-iHt}] & =\operatorname{Tr}\left[  U^{\dag}(g)\sum_{n=0}^{\infty}\frac{\left[ \left(  iHt\right)  ^{n},U(g)\right]  }{n!}\right]  \\
& =\sum_{n=0}^{\infty}\frac{\left(  it\right)  ^{n}}{n!}\operatorname{Tr}
\left[  U^{\dag}(g)\left[  \left(  H\right)  ^{n},U(g)\right]  \right]  
\end{align}
As derived above, it is only necessary to consider even powers in $t$ when including the sum over $g\in G$, and so we consider the following:
\begin{equation}
 =\sum_{n=0}^{\infty}\frac{\left(  it\right)  ^{2n}}{2n!}\operatorname{Tr}
\left[  U^{\dag}(g)\left[  \left(  H\right)  ^{2n},U(g)\right]  \right]  
 =\sum_{n=0}^{\infty}\frac{\left(  -1\right)  ^{n}t^{2n}}{2n!}
\operatorname{Tr}\left[  U^{\dag}(g)\left[  \left(  H\right)  ^{2n}
,U(g)\right]  \right]  .
\end{equation}

Then we find that
\begin{align}
 &\operatorname{Tr}\left[  U^{\dag}(g)\left[  \left(  H\right)  ^{2n} ,U(g)\right]  \right] \nonumber  \\
& =\operatorname{Tr}\left[  U^{\dag}(g)\left[  H,\left[  \left(  H\right)
^{2n-1},U(g)\right]  \right]  \right]  \\
& =\operatorname{Tr}\left[  U^{\dag}(g)\left(  H\left[  \left(  H\right)
^{2n-1},U(g)\right]  -\left[  \left(  H\right)  ^{2n-1},U(g)\right]  H\right)
\right]  \\
& =\operatorname{Tr}\left[  \left(  U^{\dag}(g)H-HU^{\dag}(g)\right)  \left[
\left(  H\right)  ^{2n-1},U(g)\right]  \right]  \\
& =\operatorname{Tr}\left[  \left[  U^{\dag}(g),H\right]  \left[  \left(
H\right)  ^{2n-1},U(g)\right]  \right]  \\
& =\operatorname{Tr}\left[  \left[  U^{\dag}(g),H\right]  \left(  H\left[
\left(  H\right)  ^{2n-2},U(g)\right]  -\left[  \left(  H\right)
^{2n-2},U(g)\right]  H\right)  \right]  \\
& =\operatorname{Tr}\left[  \left[  \left[  U^{\dag}(g),H\right]  ,H\right]
\left[  \left(  H\right)  ^{2n-2},U(g)\right]  \right]  \\
& =\operatorname{Tr}\left[  \left[  \left[  U^{\dag}(g),H\right]  ,H\right]
\left(  H\left[  \left(  H\right)  ^{2n-3},U(g)\right]  -\left[  \left(
H\right)  ^{2n-3},U(g)\right]  H\right)  \right]  \\
& =\operatorname{Tr}\left[  \left[  \left[  \left[  U^{\dag}(g),H\right]
,H\right]  ,H\right]  \left[  \left(  H\right)  ^{2n-3},U(g)\right]  \right]
\\
& =\operatorname{Tr}\left[  \left[  U^{\dag}(g),\left(  H\right)  ^{n}\right]
\left[  \left(  H\right)  ^{n},U(g)\right]  \right]  \\
& =\operatorname{Tr}\left[  \left(  \left[  \left(  H\right)  ^{n}
,U(g)\right]  \right)  ^{\dag}\left[  \left(  H\right)  ^{n},U(g)\right]
\right]  \\
& =\Big\Vert \left[  \left(  H\right)  ^{n},U(g)\right]  \Big\Vert _{2}
^{2}.
\end{align}

The third-to-last line follows from induction and the second-to-last from the
fact that
\begin{equation}
\left[  Y^{\dag},\left(  X\right)  ^{n}\right]  ^{\dag}=\left[  \left(
X\right)  ^{n},Y\right]  ,
\label{eq:cascade-dagger}
\end{equation}
for Hermitian $X$ and by using the convention that
\begin{align}
\left[  Y^{\dag},\left(  X\right)  ^{n}\right]    & \equiv[\cdots
\lbrack\lbrack Y^{\dag},\underbrace{X],X]\cdots,X}_{n\text{ times}}],\\
\left[  Y^{\dag},(X)^{0}\right]    & \equiv Y^{\dag}.
\end{align}
Eq.~\eqref{eq:cascade-dagger} follows from applying $[A,B]^\dag = [B^\dag,A^\dag]$ inductively.
Plugging back in above, we find that
\begin{equation}
\frac{1}{d\left\vert G\right\vert }\sum_{g\in G}\operatorname{Tr}[U^{\dag
}(g)e^{iHt}U(g)e^{-iHt}]
=\sum_{n=0}^{\infty}\frac{\left(  -1\right)  ^{n}t^{2n}}{d\left\vert
G\right\vert \left(  2n!\right)  }\sum_{g\in G}\Big\Vert \left[  \left(
H\right)  ^{n},U(g)\right]  \Big\Vert _{2}^{2}.
\end{equation}

\section{Derivation of Acceptance Probability of the Second (Variational) Hamiltonian Symmetry Test}

\label{app:variational-sym-test}

Here we present an alternative derivation of \eqref{eq:modifiedacc}, as well as a derivation of
\eqref{eq:accept-prob-optimized} and \eqref{eq:lower-bnd-est-small-t}. Suppose that the input to the circuit in Figure~\ref{fig:altcircuit} is a pure state
$\ket{\psi}$, rather than the maximally mixed state. Then the initial state
of the algorithm is given by
\begin{equation}
\frac{1}{\sqrt{\left\vert G\right\vert }}\sum_{g\in G}\ket{g}_{C}
\ket{\psi}.
\end{equation}
After the first controlled unitary, the Hamiltonian evolution $e^{-iHt}$, and
the second controlled unitary, the state becomes
\begin{equation}
\frac{1}{\sqrt{\left\vert G\right\vert }}\sum_{g\in G}\ket{g}_{C}
U(g)e^{-iHt}U^{\dag}(g)\ket{\psi}.
\end{equation}
The acceptance probability is then given by
\begin{align}
& \left\Vert
\begin{array}
[c]{c}
\left(  \bra{+}_{C}\otimes \mathbb{I}\right)  \times\\
\left(  \frac{1}{\sqrt{\left\vert G\right\vert }}\sum_{g\in G}\ket{g}
_{C}U(g)e^{-iHt}U^{\dag}(g)\ket{\psi}\right)
\end{array}
\right\Vert _{2}^{2}\nonumber\\
& =\left\Vert
\begin{array}
[c]{c}
\left(  \frac{1}{\sqrt{\left\vert G\right\vert }}\sum_{g^{\prime}\in G}\langle
g^{\prime}|_{C}\otimes \mathbb{I}\right)  \times\\
\left(  \frac{1}{\sqrt{\left\vert G\right\vert }}\sum_{g\in G}\ket{g}
_{C}U(g)e^{-iHt}U^{\dag}(g)\ket{\psi}\right)
\end{array}
\right\Vert _{2}^{2}\\
& =\left\Vert \frac{1}{\left\vert G\right\vert }\sum_{g^{\prime},g\in
G}\langle g^{\prime}\ket{g}_{C}U(g)e^{-iHt}U^{\dag}(g)\ket{\psi}
\right\Vert _{2}^{2}\\
& =\left\Vert \frac{1}{\left\vert G\right\vert }\sum_{g\in G}U(g)e^{-iHt}
U^{\dag}(g)\ket{\psi}\right\Vert _{2}^{2}.
\end{align}

First let us suppose that the maximally mixed state is input. This is
equivalent to picking a pure state $|\psi_{x}\rangle$ from an orthonormal
basis, with probability $1/d$. Then in this case, the acceptance probability
is given by
\begin{align}
& \frac{1}{d}\sum_{x=1}^{d}\left\Vert \frac{1}{\left\vert G\right\vert }
\sum_{g\in G}U(g)e^{-iHt}U^{\dag}(g)|\psi_{x}\rangle\right\Vert _{2}
^{2}\nonumber\\
& =\frac{1}{d}\sum_{x=1}^{d}\left(  \frac{1}{\left\vert G\right\vert }
\sum_{g^{\prime}\in G}\langle\psi_{x}|U(g^{\prime})e^{iHt}U^{\dag}(g^{\prime
})\right)  \left(  \frac{1}{\left\vert G\right\vert }\sum_{g\in G}U(g)e^{-iHt}
U^{\dag}(g)|\psi_{x}\rangle\right)  \\
& =\frac{1}{d\left\vert G\right\vert ^{2}}\sum_{x=1}^{d}\sum_{g^{\prime},g\in
G}\langle\psi_{x}|U(g^{\prime})e^{iHt}U^{\dag}(g^{\prime}) U(g)e^{-iHt}U^{\dag}(g)|\psi_{x}\rangle\\
& =\frac{1}{d\left\vert G\right\vert ^{2}}\sum_{x=1}^{d}\sum_{g^{\prime},g\in
G}\operatorname{Tr}[U^{\dag}(g)|\psi_{x}\rangle\langle\psi_{x}| U(g^{\prime})e^{iHt}U^{\dag}(g^{\prime})U(g)e^{-iHt}]\\
& =\frac{1}{d\left\vert G\right\vert ^{2}}\sum_{g^{\prime},g\in G}
\operatorname{Tr}\left[  U^{\dag}(g)U(g^{\prime})e^{iHt}U^{\dag}(g^{\prime
})U(g)e^{-iHt}\right]  \\
& =\frac{1}{d\left\vert G\right\vert ^{2}}\sum_{g^{\prime},g\in G}
\operatorname{Tr}\left[  U^{\dag}(g^{\prime-1}\circ g)e^{iHt}U(g^{\prime
-1}\circ g)e^{-iHt}\right]  \\
& =\frac{1}{d\left\vert G\right\vert }\sum_{g\in G}\operatorname{Tr}\left[
U^{\dag}(g)e^{iHt}U(g)e^{-iHt}\right]  .
\end{align}
The second-to-last equality follows from the group property and the fact that $U(g)$ is a representation of $g$. So this provides
an alternate proof of \eqref{eq:modifiedacc}.

Now let us prove the expansion in \eqref{eq:accept-prob-fixed-state}. Consider that
\begin{align}
&  \left\Vert \frac{1}{\left\vert G\right\vert }\sum_{g\in G}U(g)e^{-iHt}
U^{\dag}(g)\ket{\psi}\right\Vert _{2}^{2}\nonumber\\
&  =\left\Vert \mathcal{T}_{G}(e^{-iHt})\ket{\psi}\right\Vert _{2}^{2}\\
&  =\langle\psi|\mathcal{T}_{G}(e^{iHt})\mathcal{T}_{G}(e^{-iHt})|\psi
\rangle\\
&  =\langle\psi|\mathcal{T}_{G}\left(  \mathbb{I}+iHt-H^{2}t^{2}/2+O(\tau^{3})\right)
\mathcal{T}_{G}\left(  \mathbb{I}-iHt-H^{2}t^{2}/2+O(\tau^{3})\right)
\ket{\psi}\\
&  =\langle\psi|\left(  \mathbb{I}+it\mathcal{T}_{G}(H)-\left(  t^{2}/2\right)
\mathcal{T}_{G}(H^{2})+O(\tau^{3})\right)  \left(  \mathbb{I}-it\mathcal{T}_{G}(H)-\left(  t^{2}/2\right)  \mathcal{T}
_{G}(H^{2})+O(\tau^{3})\right)  \ket{\psi}\\
&  =1+t^{2}\langle\psi|\left(  \mathcal{T}_{G}(H)\right)  ^{2}|\psi
\rangle-t^{2}\langle\psi|\mathcal{T}_{G}(H^{2})\ket{\psi}+O(\tau^{3})\\
&  =1-t^{2}\langle\psi|\left(  \mathcal{T}_{G}(H^{2})-\left(  \mathcal{T}
_{G}(H)\right)  ^{2}\right)  \ket{\psi}+O(\tau^{3})\\
&  =1-t^{2}\left\langle \mathcal{T}_{G}(H^{2})-\left(  \mathcal{T}
_{G}(H)\right)  ^{2}\right\rangle _{\psi}+O(\tau^{3}).
\end{align}
The Kadison--Schwarz inequality \cite[Theorem~2.3.2]{bhatia07positivedefinitematrices} implies the following operator inequality:
\begin{equation}
    \mathcal{T}_{G}(H^{2}) \geq \left(  \mathcal{T}
_{G}(H)\right)  ^{2}.
\end{equation}
As a consequence, the following inequality holds for every state $\ket{\psi}$:
\begin{equation}
    \left\langle \mathcal{T}_{G}(H^{2})-\left(  \mathcal{T}
_{G}(H)\right)  ^{2}\right\rangle_{\psi} \geq 0.
\end{equation}

If we perform a maximization of the acceptance probability over every
input state $\ket{\psi}$, then it is equal to
\begin{align}
& \max_{\ket{\psi}: \left\Vert \ket{\psi}\right\Vert_2=1}\left\Vert \frac{1}{\left\vert G\right\vert }\sum_{g\in
G}U(g)e^{-iHt}U^{\dag}(g)\ket{\psi}\right\Vert _{2}^{2}\nonumber\\
& =\left\Vert \frac{1}{\left\vert G\right\vert }\sum_{g\in G}U(g)e^{-iHt}
U^{\dag}(g)\right\Vert _{\infty}^{2}\\
& =\left\Vert \frac{1}{\left\vert G\right\vert }\sum_{g\in G}\left(  \left[
U(g),e^{-iHt}\right]  +e^{-iHt}U(g)\right)  U^{\dag}(g)\right\Vert _{\infty
}^{2}\\
& =\left\Vert \frac{1}{\left\vert G\right\vert }\sum_{g\in G}\left(  \left[
U(g),e^{-iHt}\right]  U^{\dag}(g)+e^{-iHt}\right)  \right\Vert _{\infty}
^{2}\\
& =\left\Vert e^{-iHt}+\frac{1}{\left\vert G\right\vert }\sum_{g\in G}\left[
U(g),e^{-iHt}\right]  U^{\dag}(g)\right\Vert _{\infty}^{2}\\
& \geq\left(  \left\Vert e^{-iHt}\right\Vert _{\infty}-\left\Vert \frac
{1}{\left\vert G\right\vert }\sum_{g\in G}\left[  U(g),e^{-iHt}\right]
U^{\dag}(g)\right\Vert _{\infty}\right)  ^{2}\\
& =\left(  1-\left\Vert \frac{1}{\left\vert G\right\vert }\sum_{g\in G}\left[
U(g),e^{-iHt}\right]  U^{\dag}(g)\right\Vert _{\infty}\right)  ^{2}\\
& \geq\left(  1-\frac{1}{\left\vert G\right\vert }\sum_{g\in G}\left\Vert
\left[  U(g),e^{-iHt}\right]  U^{\dag}(g)\right\Vert _{\infty}\right)  ^{2}\\
& =\left(  1-\frac{1}{\left\vert G\right\vert }\sum_{g\in G}\left\Vert \left[
U(g),e^{-iHt}\right]  \right\Vert _{\infty}\right)  ^{2}\\
& \geq1-\frac{2}{\left\vert G\right\vert }\sum_{g\in G}\left\Vert \left[
U(g),e^{-iHt}\right]  \right\Vert _{\infty}.
\end{align}
The first inequality follows from the reverse triangle inequality. The next
equality follows because $\left\Vert e^{-iHt}\right\Vert _{\infty}=1$. The
second inequality follows from the triangle inequality. The final equality
follows from the unitary invariance of the spectral norm. Thus we have established \eqref{eq:accept-prob-optimized}.

Now suppose that $\left\Vert H\right\Vert _{\infty}t<1$. Then we find that
\begin{align}
& \left\Vert \left[  U(g),e^{-iHt}\right]  \right\Vert _{\infty}\notag \\
& =\left\Vert \left[  U(g),\mathbb{I}-iHt+\sum_{n=2}^{\infty}\frac{\left(  -iHt\right)
^{n}}{n!}\right]  \right\Vert _{\infty}\\
& =\left\Vert -it\left[  U(g),H\right]  +\left[  U(g),\sum_{n=2}^{\infty}
\frac{\left(  -iHt\right)  ^{n}}{n!}\right]  \right\Vert _{\infty}\\
& \leq t\left\Vert \left[  U(g),H\right]  \right\Vert _{\infty}+\left\Vert
\left[  U(g),\sum_{n=2}^{\infty}\frac{\left(  -iHt\right)  ^{n}}{n!}\right]
\right\Vert _{\infty}\\
& \leq t\left\Vert \left[  U(g),H\right]  \right\Vert _{\infty}+2\left\Vert
\sum_{n=2}^{\infty}\frac{\left(  -iHt\right)  ^{n}}{n!}\right\Vert _{\infty
}\\
& \leq t\left\Vert \left[  U(g),H\right]  \right\Vert _{\infty}+2\sum
_{n=2}^{\infty}\frac{\left(  \left\Vert H\right\Vert _{\infty}t\right)  ^{n}
}{n!}\\
& \leq t\left\Vert \left[  U(g),H\right]  \right\Vert _{\infty}+2\left(
\left\Vert H\right\Vert _{\infty}t\right)  ^{2}\sum_{n=2}^{\infty}\frac{1}
{n!}\\
& =t\left\Vert \left[  U(g),H\right]  \right\Vert _{\infty}+2\left(
\left\Vert H\right\Vert _{\infty}t\right)  ^{2}\left(  e-2\right)  \\
& \leq t\left\Vert \left[  U(g),H\right]  \right\Vert _{\infty}+2\left\Vert
H\right\Vert _{\infty}^{2}t^{2},
\end{align}
where the second-to-last inequality follows from the assumption that $\left\Vert H\right\Vert _{\infty}t<1$.
This implies that
\begin{equation}
\frac{1}{\left\vert G\right\vert ^{2}}\max_{\ket{\psi}}\left\Vert \sum_{g\in
G}U(g)e^{-iHt}U^{\dag}(g)\ket{\psi}\right\Vert _{2}^{2}
\geq1-\frac{2t}{\left\vert G\right\vert }\sum_{g\in G}\left\Vert \left[
U(g),H\right]  \right\Vert _{\infty}-4\left\Vert H\right\Vert _{\infty}
^{2}t^{2},
\end{equation}
thus establishing \eqref{eq:lower-bnd-est-small-t}.

We now prove \eqref{eq:qma-nested-comm-bnd}. Consider that
\begin{align}
& \left\Vert \frac{1}{\left\vert G\right\vert }\sum_{g\in G}U(g)e^{-iHt}
U^{\dag}(g)\right\Vert _{\infty}^{2}\nonumber\\
& =\left\Vert \frac{1}{\left\vert G\right\vert }\sum_{g\in G}e^{iHt}
U(g)e^{-iHt}U^{\dag}(g)\right\Vert _{\infty}^{2}\\
& =\left\Vert \frac{1}{\left\vert G\right\vert }\sum_{g\in G}\sum
_{n=0}^{\infty}\frac{\left[  \left(  iHt\right)  ^{n},U(g)\right]  }
{n!}U^{\dag}(g)\right\Vert _{\infty}^{2}\\
& =\left\Vert \sum_{n=0}^{\infty}\frac{\left(  it\right)  ^{n}}{n!}\frac
{1}{\left\vert G\right\vert }\sum_{g\in G}\left[  \left(  H\right)
^{n},U(g)\right]  U^{\dag}(g)\right\Vert _{\infty}^{2}\\
& =\left\Vert \mathbb{I}+\sum_{n=1}^{\infty}\frac{\left(  it\right)  ^{n}}{n!}\frac
{1}{\left\vert G\right\vert }\sum_{g\in G}\left[  \left(  H\right)
^{n},U(g)\right]  U^{\dag}(g)\right\Vert _{\infty}^{2}\\
& \geq\left(  \left\Vert \mathbb{I}\right\Vert _{\infty}-\left\Vert \sum_{n=1}^{\infty
}\frac{\left(  it\right)  ^{n}}{n!}\frac{1}{\left\vert G\right\vert }
\sum_{g\in G}\left[  \left(  H\right)  ^{n},U(g)\right]  U^{\dag
}(g)\right\Vert _{\infty}\right)  ^{2}\\
& =\left(  1-\left\Vert \sum_{n=1}^{\infty}\frac{\left(  it\right)  ^{n}}
{n!}\frac{1}{\left\vert G\right\vert }\sum_{g\in G}\left[  \left(  H\right)
^{n},U(g)\right]  U^{\dag}(g)\right\Vert _{\infty}\right)  ^{2}\\
& \geq\left(  1-\sum_{n=1}^{\infty}\frac{t^{n}}{n!}\frac{1}{\left\vert
G\right\vert }\sum_{g\in G}\left\Vert \left[  \left(  H\right)  ^{n}
,U(g)\right]  U^{\dag}(g)\right\Vert _{\infty}\right)  ^{2}\\
& =\left(  1-\sum_{n=1}^{\infty}\frac{t^{n}}{n!}\frac{1}{\left\vert
G\right\vert }\sum_{g\in G}\left\Vert \left[  \left(  H\right)  ^{n}
,U(g)\right]  \right\Vert _{\infty}\right)  ^{2}.
\end{align}
In the above, we employed unitary invariance of the spectral norm, the Baker--Campbell--Hausdorff formula, and the triangle inequality.

\pagebreak
\chapter{Supplementary Material for Chapter 3}
\doublespacing
\section{Proof of Theorem~\ref{thm:acc-prob-g-Bose-sym}}

\label{app:acc-prob-g-Bose-sym}

Let $\psi_{RS}$ be an arbitrary purification of $\rho_{S}$, and consider that
\begin{align}
\operatorname{Tr}[\Pi_{S}^{G}\rho_{S}]  & =\operatorname{Tr}[(I_{R}\otimes \Pi_{S}^{G})\psi_{RS}]\\
& =\left\Vert \left(  I_{R}\otimes\Pi_{S}^{G}\right) \ket{\psi}_{RS}\right\Vert _{2}^{2}.
\end{align}
Recall the following property of the norm of an arbitrary vector $\ket{\varphi}$:
\begin{equation}
\left\Vert \ket{\varphi}\right\Vert _{2}^{2}= \max_{\ket{\phi}:\left\Vert \ket{\phi}\right\Vert _{2}=1}\left\vert \langle\phi\ket{\varphi} \right\vert ^{2}.
\end{equation}
This follows from the Cauchy--Schwarz inequality and the conditions for saturating it. This implies that
\begin{equation}
 \left\Vert \left(  I_{R}\otimes\Pi_{S}^{G}\right)  \ket{\psi}_{RS}\right\Vert _{2}^{2} =\max_{\ket{\phi}:\left\Vert \ket{\phi}\right\Vert _{2}=1}\left\vert \bra{\phi}_{RS}\left(  I_{R}\otimes\Pi_{S}^{G}\right)  \ket{\psi}_{RS}\right\vert ^{2}
\end{equation}
Let us also recall Uhlmann's theorem \cite{U76}: For positive semi-definite operators $\omega_{A}$ and $\tau_{A}$ and corresponding rank-one operators $\psi_{RA}^{\omega}$ and $\psi_{RA}^{\tau}$ satisfying
\begin{align}
\operatorname{Tr}_{R}[\psi_{RA}^{\omega}]  &  =\omega_{A},\label{eq:uhlmann-thm-1}\\
\operatorname{Tr}_{R}[\psi_{RA}^{\tau}]  &  =\tau_{A},
\end{align}
Uhlmann's theorem \cite{U76} states that
\begin{align}
F(\omega_A,\tau_A) & = \left\Vert \sqrt{\omega_{A}}\sqrt{\tau_{A}}\right\Vert _{1}^{2}\\
& =\max_{V_{R}}\left\vert \bra{\psi^{\omega}}_{RA}\left(  V_{R}\otimes I_{A}\right)
\ket{\psi^{\tau}}_{RA}\right\vert ^{2},
\label{eq:uhlmann-thm-last}
\end{align}
where the optimization is over every unitary $V_{R}$ acting on the reference system $R$. We also implicitly defined fidelity more generally for positive semi-definite operators. Considering that
\begin{equation}
\rho_S  = \operatorname{Tr}_R[\psi_{RS}], \qquad  \sigma_S  \coloneqq \operatorname{Tr}_R[  \phi_{RS} ],
\end{equation}
so that 
\begin{equation}
    \Pi^G_S \sigma_S\Pi^G_S = \operatorname{Tr}_R[ \Pi^G_S \phi_{RS} \Pi^G_S],
\end{equation}
we conclude that
\begin{align}
 \max_{\ket{\phi}:\left\Vert \ket{\phi}\right\Vert _{2}=1}\left\vert \bra{\phi}_{RS}\left(  I_{R}\otimes\Pi_{S}^{G}\right)  \ket{\psi}_{RS}\right\vert ^{2} \notag & =  \max_{\ket{\phi}:\left\Vert \ket{\phi}\right\Vert _{2}=1}\max_{U_R}\left\vert \bra{\phi}_{RS}\left(  U_{R}\otimes\Pi_{S}^{G}\right)  \ket{\psi}_{RS}\right\vert ^{2} \\
& =\max_{\sigma_{S}\in\mathcal{D}(\mathcal{H}_{S})}F(\rho_{S},\Pi_{S}^{G}\sigma_{S}\Pi_{S}^{G}).
\end{align}
where the last equality follows from Uhlmann's theorem with the identifications $\ket{\psi^{\omega}} \leftrightarrow (I \otimes \Pi^G ) \ket{\phi}$ and $\ket{\psi^{\tau}} \leftrightarrow  \ket{\psi}$. Clearly, we have that
\begin{align}
\max_{\sigma_{S}\in\mathcal{D}(\mathcal{H}_{S})}F(\rho_{S},\Pi_{S}^{G}
\sigma_{S}\Pi_{S}^{G})  \notag & \geq\max_{\sigma\in\text{B-Sym}_{G}}F(\rho_{S},\Pi_{S}^{G}\sigma_{S}\Pi_{S}^{G})\\
& =\max_{\sigma\in\text{B-Sym}_{G}}F(\rho_{S},\sigma_{S}),
\end{align}
because B-Sym$_{G}\subset\mathcal{D}(\mathcal{H})$. Now let us consider showing the opposite inequality. Let $\sigma\in\mathcal{D}(\mathcal{H})$. If $\Pi^{G}\sigma\Pi^{G}=0$, then this is a suboptimal choice as it follows that the objective function $F(\rho_{S},\Pi_{S}^{G}\sigma_{S}\Pi_{S}^{G})=0$ in this case. So, let us suppose this is not the case. Then define
\begin{align}
\sigma^{\prime}  & \coloneqq \frac{1}{p}\Pi^{G}\sigma\Pi^{G},\\
p  & \coloneqq \operatorname{Tr}[\Pi^{G}\sigma],
\end{align}
and observe that $\sigma_{S}^{\prime}\in$B-Sym$_{G}$. Consider that
\begin{align}
F(\rho_{S},\Pi_{S}^{G}\sigma_{S}\Pi_{S}^{G})  & =pF(\rho_{S},\sigma_{S}^{\prime})\\
& \leq F(\rho_{S},\sigma_{S}^{\prime})\\
& \leq\max_{\sigma_{S}\in\text{B-Sym}_{G}}F(\rho_{S},\sigma_{S}).
\end{align}
We have thus proved the opposite inequality, concluding the proof.

\section{Proof of Theorem~\ref{thm:G-BSE-acc-prob}}

\label{app:proof-thm-g-bse}Following the same reasoning given in \eqref{eq:euclidean-norm-opt}--\eqref{eq:1st-fid-formula-pf}, by using Uhlmann's theorem, we conclude that
\begin{equation}
\max_{V_{S^{\prime}E\rightarrow RE^{\prime}}}\left\Vert \Pi_{RS}^{G}V_{S^{\prime}E\rightarrow RE^{\prime}}\ket{\psi}_{S^{\prime}S} \ket{0}_{E}\right\Vert _{2}^{2} = \max_{\sigma_{RS}} F(\rho_{S},\operatorname{Tr}_{R}[\Pi_{RS}^{G}\sigma_{RS} \Pi_{RS}^{G}]),
\end{equation}
where the optimization is over every state $\sigma_{RS}$ and $\Pi_{RS}^{G}$ is defined in \eqref{eq:Pi_RS-proj-again}. The next part of the proof shows that
\begin{equation}
\max_{\sigma_{RS}}F(\rho_{S},\operatorname{Tr}_{R}[\Pi_{RS}^{G}\sigma_{RS} \Pi_{RS}^{G}])=\max_{\sigma_{S}\in\operatorname*{BSE}_{G}}F(\rho_{S},\sigma_{S})
\end{equation}
and is similar to \eqref{eq:final-eq-steps-1}--\eqref{eq:final-eq-steps-last}. To justify the inequality$~\geq$, let $\sigma_{S}$ be an arbitrary state in $\operatorname*{BSE}_{G}$. Then by Definition~\ref{def:g-bose-sym-ext}, this means that there exists a state $\omega_{RS}$ such that $\operatorname{Tr}_{R}[\omega_{RS}]=\sigma_{S}$ and $\Pi_{RS}^{G}\omega_{RS}\Pi_{RS}^{G} =\omega_{RS}$. This means that
\begin{align}
F(\rho_{S},\sigma_{S})  &  =F(\rho_{S},\operatorname{Tr}_{R}[\omega_{RS}])\\
&  =F(\rho_{S},\operatorname{Tr}_{R}[\Pi_{RS}^{G}\omega_{RS}\Pi_{RS}^{G}])\\
&  \leq\max_{\sigma_{RS}}F(\rho_{S},\operatorname{Tr}_{R}[\Pi_{RS}^{G} \sigma_{RS}\Pi_{RS}^{G}]),
\end{align}
which implies that
\begin{equation}
\max_{\sigma_{RS}}F(\rho_{S},\operatorname{Tr}_{R}[\Pi_{RS}^{G}\sigma_{RS}\Pi_{RS}^{G}]) \geq \max_{\sigma_{S}\in\operatorname*{BSE}_{G}}F(\rho_{S},\sigma_{S})
\end{equation}
To justify the inequality~$\leq$, let $\sigma_{RS}$ be an arbitrary state. If $\Pi_{RS}^{G}\sigma_{RS}\Pi_{RS}^{G}=0$, then the desired inequality trivially follows. Supposing then that this is not the case, let us define
\begin{align}
\sigma_{RS}^{\prime}  &  \coloneqq \frac{1}{p}\Pi_{RS}^{G}\sigma_{RS}\Pi_{RS}^{G},\\
p  &  \coloneqq \operatorname{Tr}[\Pi_{RS}^{G}\sigma_{RS}].
\end{align}
We then find that
\begin{align}
F(\rho_{S},\operatorname{Tr}_{R}[\Pi_{RS}^{G}\sigma_{RS}\Pi_{RS}^{G}])  & =pF(\rho_{S}, \operatorname{Tr}_{R}[\sigma_{RS}^{\prime}])\\
&  \leq F(\rho_{S},\operatorname{Tr}_{R}[\sigma_{RS}^{\prime}]).
\end{align}
Consider that $\sigma_{S}^{\prime}\coloneqq \operatorname{Tr}_{R}[\sigma_{RS}^{\prime}]$ is $G$-Bose symmetric extendible because $\sigma_{RS}^{\prime}$ is an extension of it that satisfies $\Pi_{RS}^{G} \sigma_{RS}^{\prime} \Pi_{RS}^{G} = \sigma_{RS}^{\prime}$. We conclude that
\begin{equation}
F(\rho_{S},\operatorname{Tr}_{R}[\Pi_{RS}^{G}\sigma_{RS}\Pi_{RS}^{G}])\leq \max_{\sigma_{S}\in\operatorname*{BSE}_{G}}F(\rho_{S},\sigma_{S}).
\end{equation}
Since this inequality holds for every state $\sigma_{RS}$, we surmise the desired result
\begin{equation}
\max_{\sigma_{RS}}F(\rho_{S},\operatorname{Tr}_{R}[\Pi_{RS}^{G}\sigma_{RS} \Pi_{RS}^{G}])\leq\max_{\sigma_{S}\in\operatorname*{BSE}_{G}}F(\rho_{S},\sigma_{S}).
\end{equation}

\section{Proof of Theorem~\ref{thm:G-SE-acc-prob}}

\label{app:proof-thm-g-se}Following the same reasoning given in \eqref{eq:euclidean-norm-opt}--\eqref{eq:1st-fid-formula-pf}, by using Uhlmann's theorem, we conclude that
\begin{equation}
\max_{V_{S^{\prime}E\rightarrow R\hat{R}\hat{S}E^{\prime}}}\left\Vert \Pi_{RS\hat{R}\hat{S}}^{G}V_{S^{\prime}E\rightarrow R \hat{R} \hat{S}E^{\prime}} \ket{\psi}_{S^{\prime}S}\ket{0}_{E}\right\Vert _{2}^{2} = \max_{\sigma_{R\hat{R}S\hat{S}}}F(\rho_{S},\operatorname{Tr}_{R\hat{R}\hat{S}}[\Pi_{RS\hat{R}\hat{S}}^{G}\sigma_{R\hat{R}S\hat{S}}\Pi_{RS\hat{R}\hat{S} }^{G}]),
\end{equation}
where the optimization is over every state $\sigma_{RS\hat{R}\hat{S}}$ and $\Pi_{RS\hat{R} \hat{S}}^{G}$ is defined in \eqref{eq:projector-ref-unitaries}. The next part of the proof shows that
\begin{equation}
\max_{\sigma_{R\hat{R}S\hat{S}}}F(\rho_{S},\operatorname{Tr}_{R\hat{R}\hat{S}
}[\Pi_{RS\hat{R}\hat{S}}^{G}\sigma_{R\hat{R}S\hat{S}}\Pi_{RS\hat{R}\hat{S}
}^{G}]) = \max_{\sigma_{S}\in\operatorname*{SymExt}_{G}}F(\rho_{S},\sigma_{S})
\end{equation}
and is similar to \eqref{eq:final-eq-steps-1}--\eqref{eq:final-eq-steps-last}. To justify the inequality $\geq$, let $\sigma_{S}$ be a state in $\operatorname*{SymExt}_{G}$. Then by Theorem~\ref{thm:Bose-sym-purify}, there exists a purification $\varphi_{RS\hat{R}\hat{S}}$ of $\sigma_{S}$\ satisfying $\varphi_{RS\hat{R}\hat{S}} = \Pi_{RS\hat{R}\hat{S}}^{G} \varphi_{RS\hat{R} \hat{S}} \Pi_{RS\hat{R}\hat{S}}^{G}$. We find that
\begin{align}
  F(\rho_{S},\sigma_{S})
&  =F(\rho_{S},\operatorname{Tr}_{R\hat{R}\hat{S}}[\varphi_{RS\hat{R}\hat{S} }])\\
&  =F(\rho_{S},\operatorname{Tr}_{R\hat{R}\hat{S}}[\Pi_{RS\hat{R}\hat{S}} ^{G} \varphi_{RS\hat{R}\hat{S}} \Pi_{RS\hat{R}\hat{S}}^{G}])\\
&  \leq\max_{\sigma_{R\hat{R}S\hat{S}}}F(\rho_{S},\operatorname{Tr}_{R\hat{R}\hat{S}}[\Pi_{RS\hat{R}\hat{S}}^{G}\sigma_{R\hat{R}S\hat{S}}\Pi_{RS\hat{R}\hat{S}}^{G}]).
\end{align}
Since the inequality holds for all $\sigma_{S}\in\operatorname*{SymExt}_{G}$, we conclude that
\begin{equation}
\max_{\sigma_{S}\in\operatorname*{SymExt}_{G}}F(\rho_{S},\sigma_{S})
\leq\max_{\sigma_{R\hat{R}S\hat{S}}}F(\rho_{S},\operatorname{Tr}_{R\hat{R} \hat{S}}[\Pi_{RS\hat{R}\hat{S}}^{G}\sigma_{R\hat{R}S\hat{S}}\Pi_{RS\hat{R} \hat{S}}^{G}]).
\end{equation}
To justify the inequality $\leq$, let $\sigma_{R\hat{R}S\hat{S}}$ be an arbitrary state. If $\Pi_{RS\hat{R}\hat{S}}^{G}\sigma_{R\hat{R}S\hat{S}} \Pi_{RS\hat{R}\hat{S}}^{G}=0$, then the desired inequality follows trivially. Suppose this is not the case, then define
\begin{align}
\sigma_{R\hat{R}S\hat{S}}^{\prime}  &  \coloneqq \frac{1}{p}\Pi_{RS\hat{R}\hat{S}}^{G}\sigma_{R\hat{R}S\hat{S}}\Pi_{RS\hat{R}\hat{S}}^{G},\\
p  &  \coloneqq \operatorname{Tr}[\Pi_{RS\hat{R}\hat{S}}^{G}\sigma_{R\hat{R}S\hat{S}}].
\end{align}
Then we find that
\begin{align}
 F(\rho_{S},\operatorname{Tr}_{R\hat{R}\hat{S}}[\Pi_{RS\hat{R}\hat{S}} ^{G}\sigma_{R\hat{R}S\hat{S}}\Pi_{RS\hat{R}\hat{S}}^{G}])
&  =pF(\rho_{S},\operatorname{Tr}_{R\hat{R}\hat{S}}[\sigma_{R\hat{R}S\hat{S}}^{\prime}])\\
&  \leq F(\rho_{S},\operatorname{Tr}_{R\hat{R}\hat{S}}[\sigma_{R\hat{R}S\hat{S}}^{\prime}])\\
&  =F(\rho_{S},\tau_{S}),
\end{align}
where $\tau_{S}\coloneqq \operatorname{Tr}_{R\hat{R}\hat{S}}[\sigma_{R\hat {R}S\hat{S}}^{\prime}]$. We now aim to show that $\tau_{S}\in \operatorname*{SymExt}_{G}$. To do so, it suffices to prove that $\sigma _{RS}^{\prime} = U_{RS}(g) \sigma_{RS}^{\prime}U_{RS}(g)^{\dag}$ for all $g\in G$. Abbreviating $U\otimes\overline{U}\equiv U_{RS}(g)\otimes\overline {U}_{\hat{R}\hat{S}}(g)$, consider that
\begin{align}
\sigma_{RS}^{\prime} 
&  =\operatorname{Tr}_{\hat{R}\hat{S}}[\sigma_{RS\hat{R}\hat{S}}^{\prime}]\\
&  =\operatorname{Tr}_{\hat{R}\hat{S}}[\Pi_{RS\hat{R}\hat{S}}^{G} \sigma_{RS\hat{R}\hat{S}}^{\prime}\Pi_{RS\hat{R}\hat{S}}^{G}]\\
&  =\operatorname{Tr}_{\hat{R}\hat{S}}[(U\otimes\overline{U})\Pi_{RS\hat {R}\hat{S}}^{G}\sigma_{RS\hat{R}\hat{S}}^{\prime}\Pi_{RS\hat{R}\hat{S}}^{G}(U\otimes\overline{U})^{\dag}]\\
&  =U\operatorname{Tr}_{\hat{R}\hat{S}}[\overline{U}\Pi_{RS\hat{R}\hat{S}} ^{G}\sigma_{RS\hat{R}\hat{S}}^{\prime}\Pi_{RS\hat{R}\hat{S}}^{G}\overline {U}^{\dag}]U^{\dag}\\
&  =U\operatorname{Tr}_{\hat{R}\hat{S}}[\overline{U}^{\dag}\overline{U} \Pi_{RS\hat{R}\hat{S}}^{G}\sigma_{RS\hat{R}\hat{S}}^{\prime}\Pi_{RS\hat{R} \hat{S}}^{G}]U^{\dag}\\
&  =U\operatorname{Tr}_{\hat{R}\hat{S}}[\Pi_{RS\hat{R}\hat{S}}^{G} \sigma_{RS\hat{R}\hat{S}}^{\prime}\Pi_{RS\hat{R}\hat{S}}^{G}]U^{\dag}\\
&  =U\operatorname{Tr}_{\hat{R}\hat{S}}[\sigma_{RS\hat{R}\hat{S}}^{\prime}]U^{\dag}\\
&  =U_{RS}(g)\sigma_{RS}^{\prime}U_{RS}(g)^{\dag}.
\end{align}
It follows that $\tau_{S}\in\operatorname*{SymExt}_{G}$, and we conclude that
\begin{equation}
F(\rho_{S},\operatorname{Tr}_{R\hat{R}\hat{S}}[\Pi_{RS\hat{R}\hat{S}}^{G}\sigma_{R\hat{R}S\hat{S}}\Pi_{RS\hat{R}\hat{S}}^{G}]) \leq \max_{\sigma_{S}\in\operatorname*{SymExt}_{G}}F(\rho_{S},\sigma_{S}).
\end{equation}
Since the inequality holds for every state $\sigma_{R\hat{R}S\hat{S}}$, we conclude that
\begin{equation}
\max_{\sigma_{R\hat{R}S\hat{S}}}F(\rho_{S},\operatorname{Tr}_{R\hat{R}\hat{S}}[\Pi_{RS\hat{R} \hat{S}}^{G}\sigma_{R\hat{R}S\hat{S}}\Pi_{RS\hat{R}\hat{S}}^{G}]) \leq \max_{\sigma_{S}\in\operatorname*{SymExt}_{G}}F(\rho_{S},\sigma_{S}).
\end{equation}

\section{Proof of Proposition~\ref{prop:golden-rule-G-BSE}}
\label{app:proof-prop-G-BSE}

The idea of the proof is similar to that for Proposition~\ref{prop:golden-rule-G-SE}. Since $\rho_{S}$ is a $G$-BSE state, by Definition~\ref{def:g-bose-sym-ext}, there exists an extension state $\omega_{RS}$ satisfying the conditions stated there. Since $\mathcal{N}_{S\rightarrow S^{\prime}}$ is a $G$-BSE channel, by Definition~\ref{def:G-BSE-channels}, there exists an extension channel $\mathcal{M}_{RS\rightarrow R^{\prime}S^{\prime}}$ satisfying the conditions stated there. It follows that $\mathcal{M}_{RS\rightarrow R^{\prime}S^{\prime}}(\omega_{RS})$ is an extension of $\mathcal{N}_{S\rightarrow S^{\prime}}(\rho_{S})$ because
\begin{align}
\operatorname{Tr}_{R^{\prime}}[\mathcal{M}_{RS\rightarrow R^{\prime}S^{\prime}}(\omega_{RS})]  &  =\mathcal{N}_{S\rightarrow S^{\prime}}(\operatorname{Tr}_{R}[\omega_{RS}])\\
&  =\mathcal{N}_{S\rightarrow S^{\prime}}(\rho_{S}),
\end{align}
where the first equality follows from \eqref{eq:channel-ext-bose}. Also, consider that the following holds
\begin{align}
1  &  \geq\operatorname{Tr}[\Pi_{R^{\prime}S^{\prime}}^{G}\mathcal{M}_{RS\rightarrow R^{\prime}S^{\prime}}(\omega_{RS})]\nonumber\\
&  =\operatorname{Tr}[(\mathcal{M}_{RS\rightarrow R^{\prime}S^{\prime}})^{\dag}(\Pi_{R^{\prime}S^{\prime}}^{G})\omega_{RS}]\\
&  \geq\operatorname{Tr}[\Pi_{RS}^{G}\omega_{RS}]\\
&  =1.
\end{align}
The first inequality follows because $\mathcal{M}_{RS\rightarrow R^{\prime}S^{\prime}}(\omega_{RS})$ is a state and $\Pi_{R^{\prime}S^{\prime}}^{G}$ is projection. The first equality follows from the definition of channel adjoint. The second inequality follows from \eqref{eq:BSE-bose-condition}. where the first equality follows from \eqref{eq:covariance-G-sym-ext} and the second from \eqref{eq:G-ext-2}. We conclude that $\operatorname{Tr} [\Pi_{R^{\prime}S^{\prime}}^{G} \mathcal{M}_{RS\rightarrow R^{\prime}S^{\prime}}(\omega_{RS})]=1$, which by \eqref{eq:Bose-symmetric-equiv-cond}, implies that $\mathcal{M}_{RS\rightarrow R^{\prime}S^{\prime}}(\omega_{RS})$ is a $G$-Bose symmetric state. It then follows that $\mathcal{N}_{S\rightarrow S^{\prime}}(\rho_{S})$ is $G$-Bose symmetric extendible.

\pagebreak
\singlespacing
\pagebreak
\chapter{Copyright Information}
\doublespacing
This dissertation uses material from a previously published article in which the dissertation writer was an author. 

\section{Chapter 2}
Chapter 2 contains material from the article ``Quantum Algorithms for Testing Hamiltonian Symmetry" by Margarite L. LaBorde and Mark M. Wilde. This article is published in \textit{Physical Review Letter} by publisher American Physical Society (APS) and is copyrighted by American Physical Society as of 2022. 

Below is the response by APS to a request for written permission to include this work in the dissertation:

\begin{figure*}[h]
\begin{center}
\includegraphics[width=\textwidth]{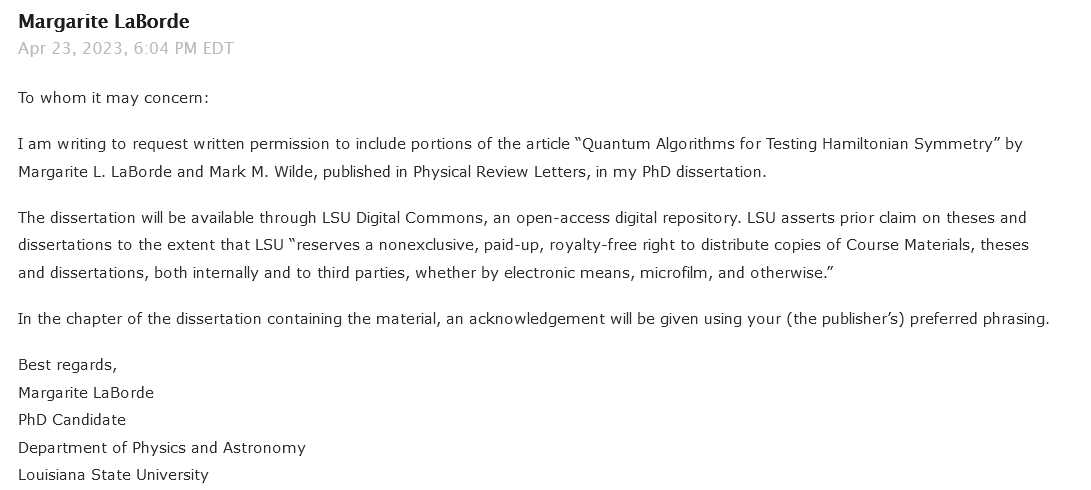}
\end{center}
\caption{}
\label{apscopyrightask}
\end{figure*}
\begin{figure*}[h]
\begin{center}
\includegraphics[]{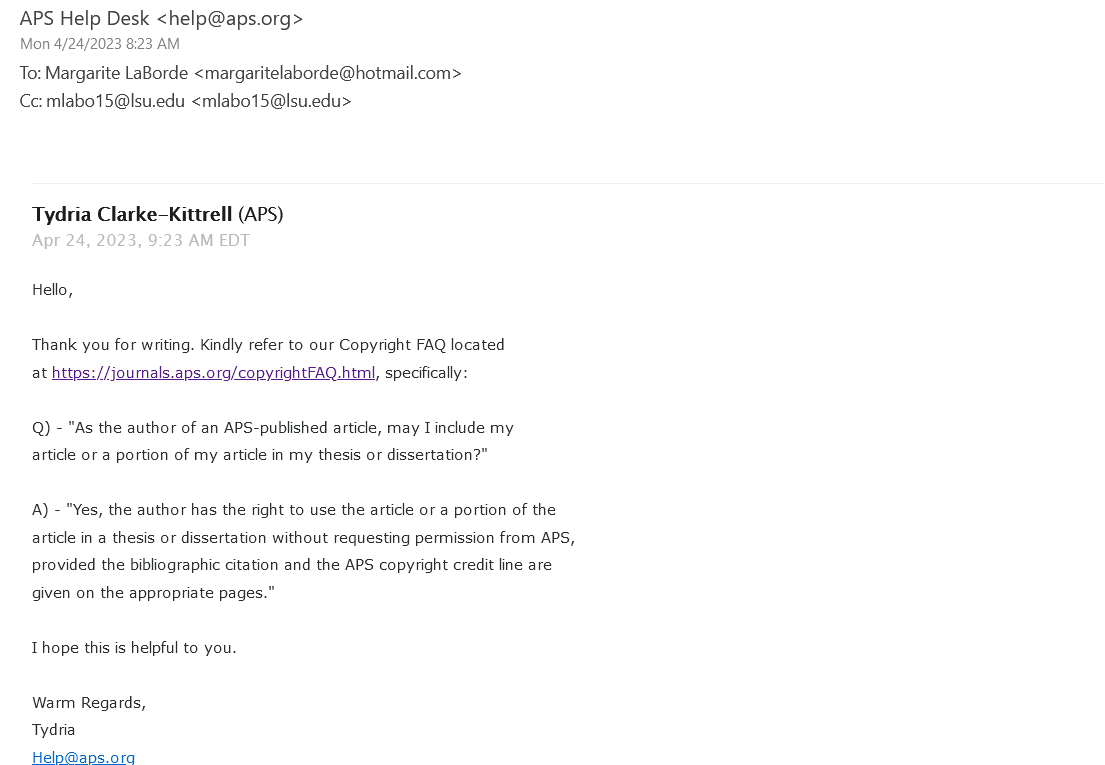}
\end{center}
\caption{}
\label{apscopyrightresponse}
\end{figure*}
\pagebreak
\singlespacing

\backmatter

\bibliographystyle{alpha}
\bibliography{main}
\pagebreak

\chapter{Vita}
\doublespacing
Margarite L. LaBorde was born in 1995 in Baton Rouge, Louisiana. She attended Port Allen High School in Baton Rouge and graduated in May of 2014. She obtained dual degrees in physics and mathematics at Louisiana State University in May 2018, obtaining Latin honors in both and earning the College Honors distinction for her physics degree. She received her Masters of Science in Physics in 2022 and is currently a candidate for the degree of Doctorate of Philosophy in Physics, which is to be awarded in May 2023.
\end{document}